\newtheorem{definition}{Definition}
\newtheorem{theorem}{Theorem}
\newtheorem{condition}{Market Condition}
\newtheorem{corollary}{Corollary}
\newtheorem{proposition}{Proposition}
\newtheorem{lemma}{Lemma}
\DeclareMathOperator*{\argmin}{arg\,min}
\DeclareMathOperator*{\argmax}{arg\,max}
\DeclareMathOperator*{\E}{\mathbb{E}}
\newcommand*\circled[1]{\tikz[baseline=(char.base)]{
            \node[shape=circle,draw,inner sep=2pt] (char) {#1};}}
\newcommand{\ubar}[1]{\underaccent{\bar}{#1}}
\titleclass{\subsubsubsection}{straight}[\subsection]
\newcounter{subsubsubsection}[subsubsection]
\renewcommand\thesubsubsubsection{\thesubsubsection.\arabic{subsubsubsection}}
\renewcommand\paragraph{\@startsection{paragraph}{5}{\z@}%
  {3.25ex \@plus1ex \@minus.2ex}%
  {-1em}%
  {\normalfont\normalsize\bfseries}}
\renewcommand\subparagraph{\@startsection{subparagraph}{6}{\parindent}%
  {3.25ex \@plus1ex \@minus .2ex}%
  {-1em}%
  {\normalfont\normalsize\bfseries}}
\def\toclevel@subsubsubsection{4}
\def\toclevel@paragraph{5}
\def\toclevel@paragraph{6}
\def\l@subsubsubsection{\@dottedtocline{4}{7em}{4em}}
\def\l@paragraph{\@dottedtocline{5}{10em}{5em}}
\def\l@subparagraph{\@dottedtocline{6}{14em}{6em}}
\def\namedlabel#1#2{\begingroup
    #2%
    \def\@currentlabel{#2}%
    \phantomsection\label{#1}\endgroup
}
\begin{document}

\title{1-Dimensional Normal Competitive Market Equilibrium}
\author{Thanawat Sornwanee\\\small Stanford University\\
\tt\small tsornwanee@stanford.edu}
\date{Apr 15, 2024}
\maketitle

\begin{abstract}
    We introduce a new microeconomics foundation of a specific type of competitive market equilibrium that can be used to study several markets with information asymmetry such as commodity market, credit market, and insurance market.
\end{abstract}

    \section*{Introduction}

    Competitive market equilibrium is one of the basis of economic analysis and policy analysis. However, the theory behind the applications of competitive market, such as the analysis of tax effects on credit market~\cite{webb,mahoney}, is incomplete and lacks microeconomics foundation. In this paper, we will equip the competitive market equilibrium analysis with a natural and simple underlying microeconomics mechanism of the market. 

    Competitive market equilibrium analysis is normally based on the seminal work ``Existence of an Equilibrium for a Competitive Economy", Arrow \& Debreu, 1954~\cite{arrow1954} with different regularity condition and extension to the case of large number of agents or continuum of agents~\cite{scarf1962analysis,debreu1963theorem,aumann1964}. However, the main problem formulation used in the paper~\cite{arrow1954} does not have a microeconomics foundation given that it requires the equilibrium to be a Walrasian equilibrium, extrinsically forcing demand and supply to equal~\cite{wald1951some,mckenzie1959existence}. The game theoretic formulation of a market used in the paper~\cite{arrow1954} is done by adding a ``fictitious participant who chooses prices". The Nash equilibrium (with pure strategy) of such system will then entail that demand equals to supply. Moreover, this framework, which provides a positive result of efficient allocation, cannot be conveniently extended to the market with information asymmetry.

    In the case with information asymmetry, the paper ``The Market for "Lemons": Quality Uncertainty and the Market Mechanism", Akerlof, 1970~\cite{akerlof1970lemon} suggests the use of Walrasian equilibrium. Since the model focus on the case when there exists only one type of commodity (but with different unobserved quality), the demand and supply system will be characterized via only a single price, allowing the convenient use of the law of supply and demand.\footnote{However, since the demand as a function of price is no longer monotone from the adverse selection effect, there can be multiple Walrasian equilibria.~\cite{wilson1979equilibrium}} The paper ``The Nature of Equilibrium in Markets with Adverse Selection", Wilson~\cite{wilson1980nature} later suggests that other types of market equilibrium should be considered, and give the example a market when both sellers and buyers are active in the price creation and matching with transfer mechanism.\footnote{The mechanism suggested in the paper~\cite{wilson1980nature} is informally the order book auction where the sellers, who have the private information about the product, privately submits the price willing to sell to the market mechanism, and the buyer, who lacks the private information, proposes a price willing to buy. The mechanism will select the price where the transaction occurs to be the current highest price quoted by the buyer if there still exists a unit of commodity quoted by the seller at the price lower.}

    \begin{figure}
      \centering
      \includegraphics[width=0.6\linewidth]{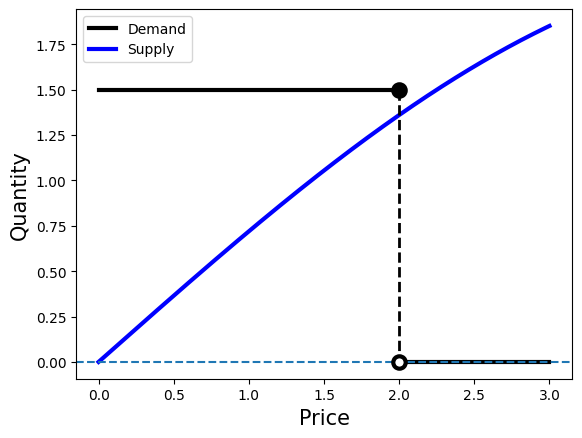}
    \caption{A simple example of double auction based market where Walrasian equilibrium does not exist, because every consumer has the same price willing to buy. Note that, under general equilibrium framework, the equilibrium will exist but it cannot be conveniently explained in the obviously dominant strategy as used in the auction based system used in the paper~\cite{wilson1979equilibrium}.}
    \end{figure}

    Several subsequent papers explored a non-commodity markets with information asymmetry such as credit market~\cite{jaffee1976, stiglitz, webb, riley, mahoney} and insurance market~\cite{rothschild1978equilibrium, mahoney}. However, these papers lack microeconomics foundation on how the price is selected, given that they merely uses the result of either price-taking (Walrasian) equilibrium or zero-profit condition~\cite{hess1984imperfect, jaffee1984imperfect}. 
    
    In this paper, we will reformulate the microeconomics foundation of competitive market equilibrium, addressing game-theoretic formulation, pricing, and asymmetric information simultaneously.

    To do so, in the part~\ref{part:marketasmechanism} (normal market: market as a mechanism), we introduce a game theoretical formulation of a finite ``normal" market with information asymmetry as a sequence of double auctions connected through the right (or the possession) of the ``scarcity", which will be the commodity in a commodity market, the loanable fund in the credit market, and insurance payment in the insurance market. Similar to the double auction approach used in~\cite{wilson1980nature,stiglitz}, in such market, the pricing would be the price quoted by a group of finitely many agents called the mediators, who do not have any private information. 

    In the part~\ref{part:competitive} (competitive market: equilibrium \& analysis), we first consider a competitive market as a natural extension of the finite market introduced in the part~\ref{part:marketasmechanism} into the case when the set of agents reach a continuum limit (section~\ref{section:competitivemarket}).

    We will also introduce ``betrayal-free-collusion-free equilibrium", which is a novel equilibrium notion for a normal form game with continuum of agents suitable for the analysis of competitive market, and show that this equilibrium notion is required even for a standard analysis of a competitive market without production and information asymmetry (section~\ref{section:equilibrium with continuum of agents}).

    We, then, define ``competitive equilibrium" for a ``well-behaved competitive market", which is a continuum limit of a normal market satisfying the 14 conditions. The necessary and sufficient conditions as well as a simple graphical/algorithmic approach to find the collection of all possible competitive equilibria are given (section~\ref{section:competitiveequilibirum}). 

    Although this formalization will be applied to a competitive market satisfying a certain type of conditions, in the part~\ref{part:application}, we will show that the framework and the graphical approach (proposed in the subsection~\ref{subsection:graph}) can be conveniently applied to analyze commodity market and credit market\footnote{with an easy generalization to insurance markets}.

\tableofcontents



\part{Normal Market: Market as a Mechanism}
\label{part:marketasmechanism}

We recall that the application of interest is from commodity market and credit market.

In a commodity market, which will be denoted as a trading market from now on, we assume that there exist finitely many ``producers", ``retails", and ``consumers". The producers possess the right to produce the ``commodity" with possibly some cost. The retails can purchase the produced (or scheduled to be produced) commodity from the producers with ``cash". After successfully acquiring some units of commodity, each retail can then offers them to the consumers in exchange with some ``cash" specified price. For notational simplicity, we will use the retails as a reference point, and denote
\begin{itemize}
    \item The cash amount per unit in the transaction between the producers and the retails as the ``buying price";
    \item The cash amount per unit in the transaction between the retails and the consumers as the ``selling price".
\end{itemize}

A simple observation of this informal model is that, if a transaction occurs, every producer prefers higher ``buying price", and every consumer prefers lower ``selling price".

Similarly, in a credit market, we assume that there exist finitely many ``depositors", ``banks", and ``entrepreneurs". The depositors has some available cash at the current time. The banks can acquire such available cash, and promise the depositors to give their principal money as well as an additional amount of money specified by the ``depositor interest rate" (and the principal amount) back to the depositors in the future time period. After successfully acquiring some deposits, each bank will have some loanable fund that can be offered as a principal amount of loan to the entrepreneurs with such loan contract specifying the maximum\footnote{Due to the limited liability, the entrepreneur does not have to repay the loan principal and the interest back to the bank if the company of such entrepreneur does not have enough valuation.} amount that the entrepreneur is legally obliged to return to the bank via the ``loan interest rate" and loan principal amount. Similarly, we will have that, if a transaction occurs, every depositor prefers higher ``depositor interest rate", and every consumer prefers lower ``loan interest rate".

We will see that the market is formulated in the way that the conventional formulation in general equilibrium literature~\cite{arrow1954, mckenzie1959existence} cannot be applied (even when there is no issue with information asymmetry). In the framework of general equilibrium, the ``buying price" and ``selling price" faced by each agent need to be the same to prevent the arbitrage. However, the arbitrage exists because every market participants can engage in the market as both buyer and seller simultaneously. However, in the economic markets of interest, each participant has a specialized role preventing such arbitrage to happen even when the prices differ. Moreover, we can see that, in a credit market, there does not even exists a buying price or a selling price in a conventional sense. Even when we consider the ``depositor interest rate" to serve as a ``buying price", and the ``loan interest rate" to serve as a ``selling price", it is obvious that the two interest rates are often different, since the bank is obliged to pay the depositor back the principal value and the interest, while the entrepreneur is protected via limited liability and obliged to pay less than the promised amount in the state of the world where the entrepreneur cash is not sufficiently high. Thus, in such market, it is not possible to use the price-taking equilibrium as a market equilibrium given that the two types of prices specifies different interactions (transaction of the scarcity and money) within the same market.

\begin{figure} [H]
    \begin{subfigure}{.5\textwidth}
      \centering
      \includegraphics[width=.8\linewidth]{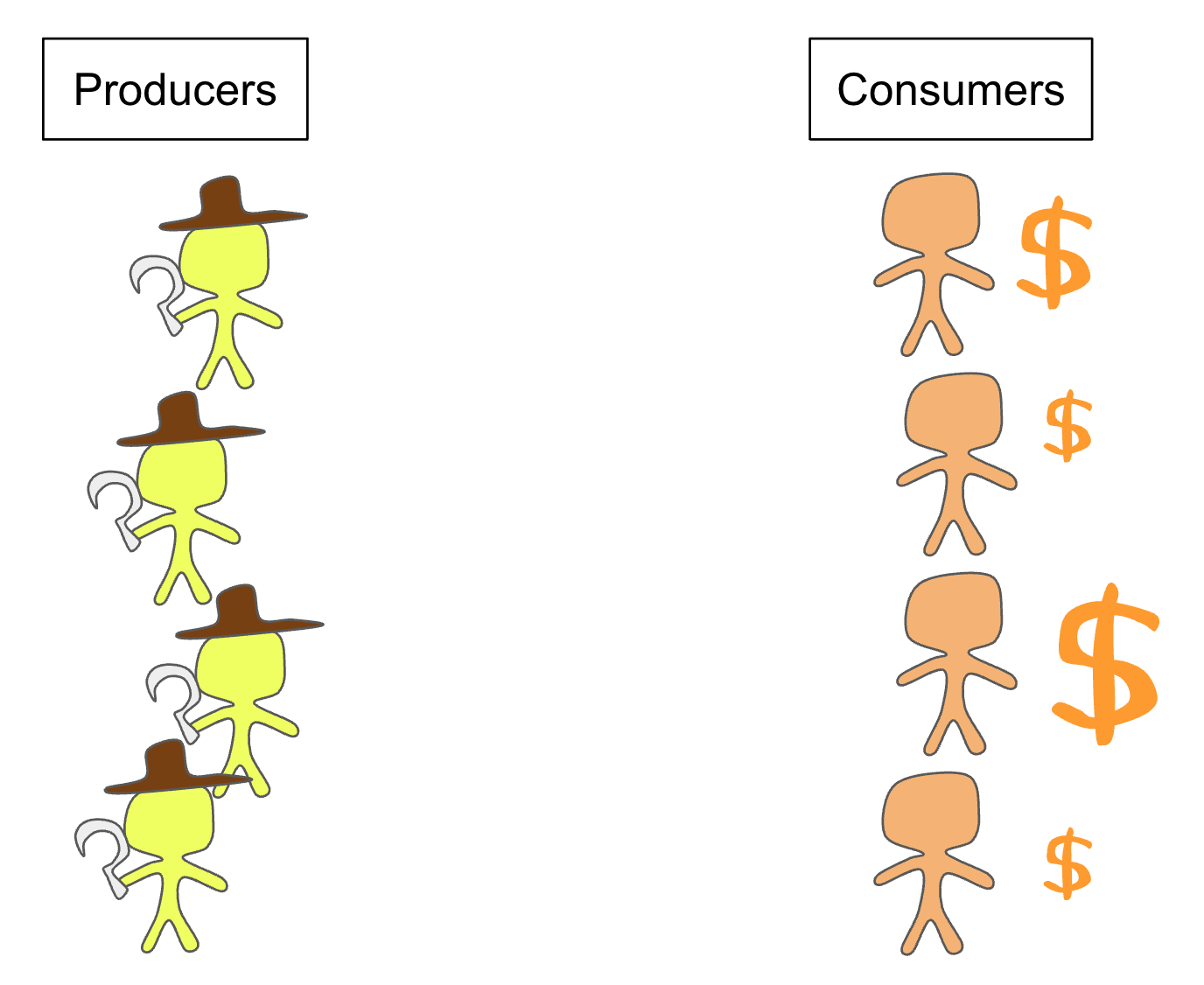}
      \caption{Trading Market as a General Equilibrium Market}
    \end{subfigure}%
    \begin{subfigure}{.5\textwidth}
      \centering
      \includegraphics[width=1\linewidth]{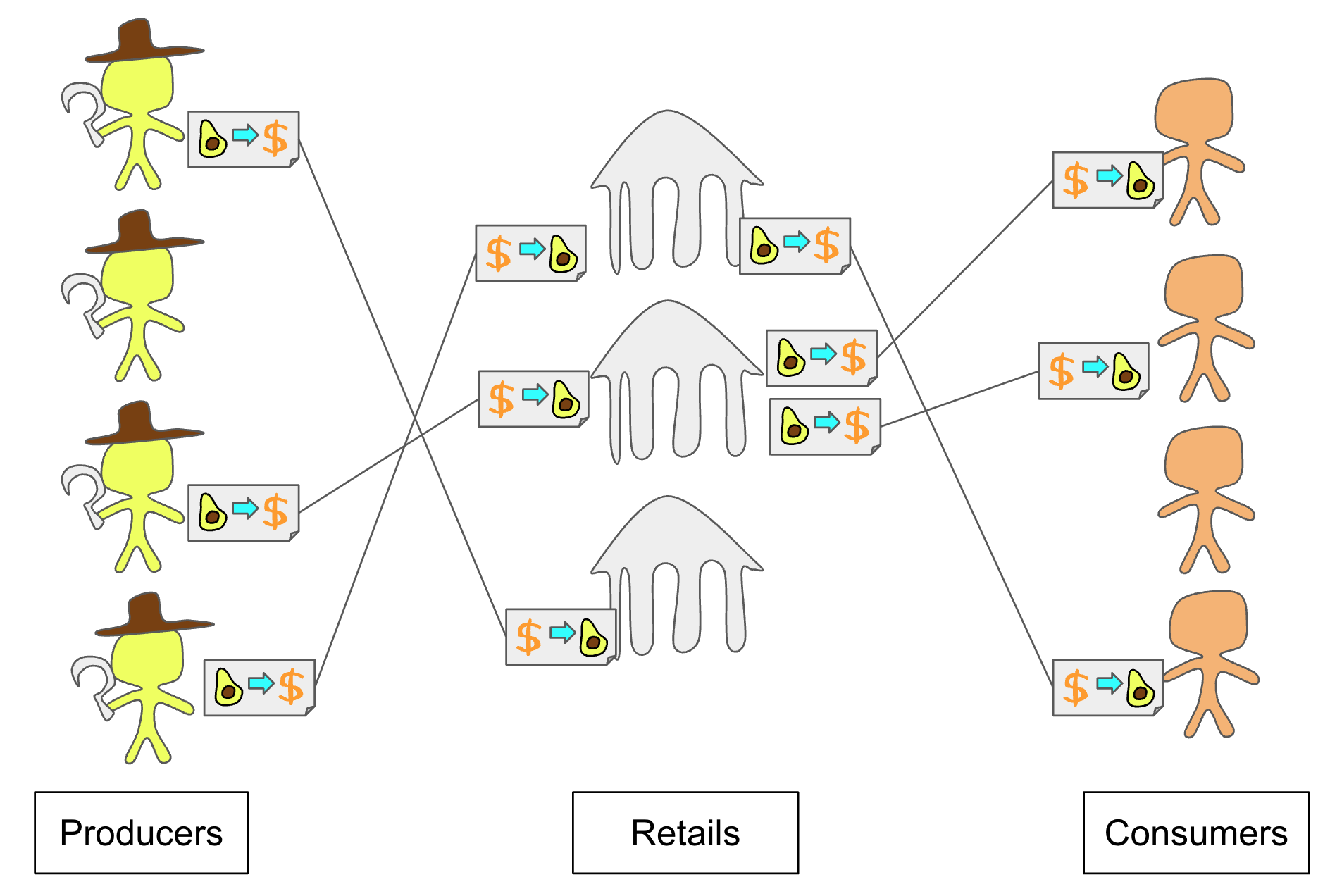}
      \caption{Trading Market as a Normal Market}
    \end{subfigure}
    \caption{If we extend the setting used in the general equilibrium in the paper~\cite{arrow1954}, we will have that we can have 2 groups of agents, ``the producers" who has the production capability, and ``the consumers" who has money. In our formulation as a ``normal market", we introduce intermediaries called ``the retails", which will manage the contract for the exchange of scarcity and money to be enforced after the market is finalized. Although this framework may make the model look more complicated, it turns out (as can be seen in the part~\ref{part:application}) that most character of the normal competitive market for a commodity market will coincide with the traditional characterization of competitive equilibrium dictated via the law of supply and demand~\cite{arrow1954}. The benefit of consider such types of markets with the existence of the intermediaries may not be apparent in the case of a trading market. However, this is crucial in the modelling of a credit market. This is because the scale of the loan and the deposit are usually highly differed. Moreover. the existence of the intermediaries makes the depositor faces an easier decision problem, given that the risk of the entrepreneurial project will be absorbed by the intermediaries (the banks) instead.}
\end{figure}

We will consider a class of markets that has similar characteristics as the formalization of trading market and credit market as a ``normal market".

    Informally, a finite normal market is a market with finitely many ``suppliers", ``mediators", and ``demanders". The suppliers are endowed with the capability to produce ``the scarcity" in the market. Each mediator is endowed with the capability to create ``supply contracts" with a group of suppliers, and are allowed to create ``demand contracts" with a group of demanders if the amount of scarcity acquired by such mediator is enough. Both types of contract will identify the physical transactions from the suppliers to the mediators and from the mediators to the demanders, respectively. After the market is finalized, the contracts generated within the market can be enforced by any market participant.

    The main characterization of a supply contract is
    \begin{enumerate}
        \item The volume, which will specify the amount of the scarcity to be physically produced and transferred to the mediator;
        \item The supply price, which will be such that, given the same amount of volume, the supplier will always prefer a higher supply price (Note that the mediator does not have to always prefers a lower supply price\footnote{Although this can be intuitive, this has been considered in the credit market model in the paper~\cite{webb}. One possible explanation is through the taxation and subsidization done by the government, which is an external entity outside of the market.});
        \item The supplier;
        \item The mediator.
    \end{enumerate}

    Similarly, the main characterization of a demand contract is
    \begin{enumerate}
        \item The volume, which will specify the amount of the scarcity to be physically transferred to the demander;
        \item The demand price, which will be such that, given the same amount of volume, the demander will always prefer a lower demand price (Note that the mediator does not have to always prefers a higher demand price under equilibrium\footnote{This can be seen in the credit market modelling in the paper~\cite{stiglitz}}.);
        \item The demander;
        \item The mediator.
    \end{enumerate}

    These suggest that each contract can only be a bipartisan contract.

    The scarcity is defined with respect to the demand of the demanders and is not available to any demanders and mediators without the transfer from the suppliers. Moreover, each supplier has a private information about one's own utility to produce a specific unit of scarcity under a specific supply contract, and each demander has a private information about one's own utility to receive a specific unit of scarcity under a specific demand contract.

    In different markets, the contracts and the scarcity will be different.

    For example, in a trading market,
    \begin{itemize}
        \item The suppliers are the producers;
        \item The mediators are the retails;
        \item The demanders are the consumers;
        \item The scarcity is the physical commodity;
        \item The supply contract will specify the amount of money that the retail has to pay to the producer in exchange of a certain amount of the physical commodity;
        \item The demand contract will specify the amount of money that the consumer has to pay to the retail in exchange of a certain amount of the physical commodity.
    \end{itemize}
    In a credit market,
    \begin{itemize}
        \item The suppliers are the depositors;
        \item The mediators are the banks;
        \item The demanders are the entrepreneurs;
        \item The scarcity is the loanable fund;
        \item The supply contract is the deposit contract, specifying the amount of money the bank has to pay to the depositor in exchange of a certain amount of deposit;
        \item The demand contract will specify the amount of money that the consumer has to pay to the retail in exchange of a certain amount of the physical commodity.
    \end{itemize}

In this part, we will introduce a formal notion of ``normal market". This includes our formal definition of market as a composition of different elements: ``market participants", ``market mechanism", and ``market context". For a holistic understanding, the ``on-market physics" and ``off-market physics" will also be introduced in order to ground the abstract market into a physical setting.
    
\section{Normal Market}
\label{section:normal}
    
    A normal market is characterized as
    \begin{align*}
        \circled{$\hat{M}$}\left(
        \mathcal{S}, \mathcal{M}, \mathcal{D}, 
        \bar{v}, \mu, \ubar{\mu}, \mathcal{X}
        \right)
        \text{.}
    \end{align*}
    The meaning of each element will be described throughout this section.

    \subsection{Normal Market Participants}

        Each normal market participant can be either a supplier, a demander, or a mediator, and is endowed with agency to perform actions both inside the market and outside of the market as well as the capability to have knowledge and belief.
    
        There exist finite suppliers, finite demanders, and finite mediators.

        Let $\mathcal{S}$ be the collection of the suppliers, $\mathcal{D}$ be the collection of the demanders, and $\mathcal{M}$ be the collection of the mediators. 
        
        The sets $\mathcal{S}$, $\mathcal{D}$, and $\mathcal{M}$ have to be disjoint, suggesting that each agent can only has one role.
        
        The normal market will finalize the physically allowed interaction between 
        \begin{enumerate}
            \item [1.)] pairs of mediator and the supplier;
            \item [2.)] pairs of mediator and demander.
        \end{enumerate}

    \subsection{Normal Market Mechanism}

        The interaction between the mediators and the other participants will be specified by a mechanism. The normal market mechanism will take some voluntary report from the market participants, and returns a physically feasible transaction, which is in the form of the collection of bipartisan ``contract", which is specific to each market mechanism.
        \subsubsection{Normal Market Contract}
            Each contract initiated via the normal market mechanism will be represented as an element of $\mathbb{R}_0^{+} \times \mathbb{R} \times \left(\mathcal{S} \cup \mathcal{D}\right) \times \mathcal{M}$. For any represented contract $c$, if $c_3 \in \mathcal{S}$, we will denote it as a ``supply contract". Otherwise, $c_3 \in \mathcal{D}$, and it will be denoted as a ``demand contract".

            A supply transaction $\mathcal{C}$ is the collection of supply contracts. A demand transaction $\mathcal{Z}$ is the collection of demand contracts.
            
            A market transaction is the tuple of a supply transaction and a demand transaction.

        \subsubsection{On-Market Physics}

            The on-market physics is specified by the production capability function $\bar{v}:\mathcal{S} \to \mathbb{R}^+$, which specify the maximum amount of scarcity that can be promised to be supplied by a supplier. This physics will ensures that the flow of scarcity from the suppliers to the mediators to the demanders is physically feasible.

            \begin{definition}[Physically Feasible Market Transaction]
                A market transaction $\left(\mathcal{C}, \mathcal{Z}\right) \in 
                2^{\left(\mathbb{R}_0^{+} \times \mathbb{R} \times \mathcal{S} \times \mathcal{M}\right)} \times 
                2^{\left(\mathbb{R}_0^{+} \times \mathbb{R} \times \mathcal{D} \times \mathcal{M}\right)}$ is physically feasible if
                \begin{itemize}
                    \item (Single Contract Physics) For any non-mediator participant $p \in \mathcal{S} \cup \mathcal{D}$,
                    \begin{align*}
                        \left\vert\left\{
                        c \in \mathcal{C} \cup \mathcal{Z}: c_3 = p \right\}\right\vert &\in \{0, 1\}
                        \text{;}
                    \end{align*}
                    \item (Supplier Physics) For any supplier $s \in \mathcal{S}$,
                    \begin{align*}
                        \sum_{c \in \mathcal{C}} 
                        \left(
                        c_1\mathbf{1}_{c_3 = s}
                        \right)
                        &\le \bar{v}(s)
                        \text{;}
                    \end{align*}
                    \item (Mediator Physics) For any mediator $m \in \mathcal{M}$,
                    \begin{align*}
                        \sum_{c \in \mathcal{C}} 
                        \left(
                        c_1 \mathbf{1}_{c_4 = m}
                        \right)
                        \ge
                        \sum_{\zeta \in \mathcal{Z}} 
                        \left(
                        \zeta_1 \mathbf{1}_{\zeta_4 = m}
                        \right)
                        \text{;}
                    \end{align*}
                \end{itemize}
            \end{definition}

            We denote the set of physically feasible market transaction
            \begin{align*}
                \mathcal{F}^{\left(\mathcal{S}, \mathcal{M}, \mathcal{D}, \bar{v}\right)}
                \text{.}
            \end{align*}
        
        \subsubsection{Normal Market Mechanism Frame}
        \label{subsection:mechanismframe}
            
            The normal market mechanism frame receives the market participants and the on-market physics as inputs, and returns the normal market mechanism.
            
            The generated normal market mechanism will then receive the voluntary report from the market participants and return the physically feasible allocation.

            Moreover, the mechanism also requires the ``capacity measure" $\mu: 2^\mathcal{M} \to \mathbb{R}^+_0$, and the threshold value $\ubar{\mu} \in \mathbb{R}^+_0$. This measure is a proxy to measure the level of monopoly/oligopoly within the market, which can be explained informally together with the mechanism frame as the following.  

            Each supplier submits a bid for volume willing to contract and lowest supply price willing to take. Each demander submits a bid for volume willing to contract and highest demand price willing to take.

            The mediators bid the supply price. The highest price will be the market supply price. Any supplier is willing to take the market price will get to contract with the volume being one's own declared volume and the supply price being the market price, which is the highest supply price offered by the mediators. 

            If there exists some producers who are indifferent between producing (and cont arcing) and not producing (and not contracting) at the market price, the group of winning mediators can select the target quantity of suppliers who are indifferent with the market price by an additional round of auction. After the target volume is set, the mechanism keeps adding another random indifferent supplier until the target volume is reached. Each supplier who is selected to get a contract will get a contract randomly with the mediators who have won both auctions (price and target volume).

            Each mediator has to place all acquired scarcity into an order book. The lowest price of the order book will be eliminated by matching the scarcity bulk at the price with any mediator whose willing to take demand price is higher \textbf{or equal} at random until the residue at such price cannot be matched or there is no demander who is willing to take such price left. Afterward, the order will be removed from the order book. The order elimination process is continue until the order book is cleared.

            The market can then measure whether the set of the winning mediators (from the supply side auction) has a large enough measure. If so, the market will end.\footnote{For example, if $\mu$ is a counting measure, and $\bar{\mu} = 1$, the market will end if there is at least two mediators with non-zero scarcity.}

            Otherwise, a supply resale will be opened for the non-winning mediators.\footnote{This can be thought as a process to avoid oligopoly or monopoly who may over-acquire the scarcity from the supply side and cannot transfer all scarcity to the demand side.} The resale is to allow the unmatched scarcity to be place on an order book again. The mediator who promises to take up the largest amount of supply and place it on the lowest price (lexicographically) will win the resale auction.

            Note that this informal description of the market mechanism is in line with that informally described in the papers~\cite{wilson1980nature,stiglitz} and is similar to the conventional understanding of competitive market (if the number of agents is increased into the continuum limit). The only point where this specification may not be intuitive is the allowance for resale, as well as the possibility that the resale period can be extended for any arbitrary number of times. 

            In defense of this choice of modelling, we will see that the resale has to be allowed to prevent a monopoly. If a single mediator bids the highest supply price bid, every produced scarcity will be acquired by such mediator, allowing one to effectively become a monopoly in the second stage of the game. To prevent this problem, there are naturally two ways of doing so. The first way is to enforce some measure of monopoly/oligopoly of the market and allow the resale to occur if the distribution of the scarcity is too wasteful. Another solution is to enforce a strict capacity limit for each mediator, ensuring that a single mediator cannot have too much scarcity to be sold to the demanders in the next market stage. This solution will make the finite market well-defined, and can be finalized within a finite number of interactions. However, the strict capacity limit will extrinsically force the number of winning mediators to be high, making it not possible for a threat of inefficient pricing (under a conventional understanding of competitive market) to remove such occurrence from an equilibrium. Both examples will be more closely and formally investigated in the definition of competitive equilibrium given in the section~\ref{section:competitiveequilibirum}.
                
            We consider the following specific mechanism frame.
            \begin{itemize}
            
                \item (Supply Introduction) At time $t = 0$, each supplier $s \in \mathcal{S}$ submits a single blind bid $\left(v_s, \rho_s \right) \in \mathbb{R}^{+} \times \mathbb{R}$ where $v_s$ and $\rho_s$ are denoted as ``volume" and ``supply price", respectively. The volume of the bid has to be such that
                \begin{align*}
                    v_s \le \bar{v}(s) \text{.}
                \end{align*}
                
                \item (Demand Introduction) At time $t=1$, each demander $d \in \mathcal{D}$ submits a single blind bid $\left(v_d, r_d\right) \in \mathbb{R}^{+} \times \mathbb{R}$ where $v_d$ and $r_d$ are denoted as ``volume" and ``demand price", respectively.
                
                \item (Supply Price Bidding) At time $t=2$, each mediator $m \in \mathcal{M}$ submits a counter-supply-price bid $\tilde{\rho}_m \in \mathbb{R} \cup \{-\infty\}$.

                \item (Supply Price Auction) At time $t = 3$, the mechanism computes
                \begin{enumerate}
                    \item [1.)] the  highest counter-supply-price bid 
                    \begin{align*}
                        \bar{\rho} = \max_{m \in \mathcal{M}} \tilde{\rho}_m
                        \text{;}
                    \end{align*}
                    \item [2.)] the intermediately winning mediators set
                    \begin{align*}
                        \tilde{\mathcal{M}} = \argmax_{m \in \mathcal{M}} \tilde{\rho}_m
                        \text{;}
                    \end{align*}
                    \item [3.)] the strictly winning suppliers set
                    \begin{align*}
                        \tilde{\mathcal{S}} = \left\{s\in\mathcal{S}:\rho_s < \bar{\rho}
                        \right\}
                        \text{;}
                    \end{align*}
                    \item [4.)] the weakly winning suppliers set
                    \begin{align*}
                        \tilde{\tilde{\mathcal{S}}} = \left\{s \in \mathcal{S}:\rho_s = \bar{\rho}\right\}
                        \text{;}
                    \end{align*}
                    \item [5.)] the strictly winning supply volume
                    \begin{align*}
                        \tilde{v}
                        =
                        \sum_{s \in \tilde{\mathcal{S}}}
                        v_s
                        \text{;}
                    \end{align*}
                    \item [6.)] the weakly winning supply volume combinations
                    \begin{align*}
                        \tilde{\tilde{V}}
                        =
                        \left\{\sum_{s \in S}
                        v_s
                        \right\}_{S \subseteq \tilde{\tilde{\mathcal{S}}}}
                        \text{.}
                    \end{align*}
                \end{enumerate}

                If the highest counter-supply-price bid  $\bar{\rho} = -\infty$, the market will be finalized and return $\left(\mathcal{C}, \mathcal{Z}\right) = \left(\emptyset, \emptyset\right)$.

                Otherwise, the mechanism discloses
                \begin{align*}
                    \left(\tilde{v}, \tilde{\tilde{V}}\right)
                \end{align*}
                to every mediator $m \in \mathcal{M}$.

                \item (Supply Residual Bidding) At time $t=4$, each intermediately winning mediator $m \in \tilde{\mathcal{M}}$ submits a residual-volume bid  $\bar{v}_m \in \tilde{\tilde{V}}$.

                \item (Supply Residual Auction) At time $t=5$, the mechanism randomizes a bijective function $f: \mathbb{N} \cap \left(-\infty, \left\vert\tilde{\tilde{S}}\right\vert\right] \to \tilde{\tilde{S}}$ uniformly at random from a collection of every bijective function with domain $\mathbb{N} \cap \left(-\infty, \left\vert\tilde{\tilde{S}}\right\vert\right]$ and range $\tilde{\tilde{S}}$, and computes
                \begin{enumerate}
                    \item [1.)] the highest feasible counter-volume bid
                    \begin{align*}
                        \bar{\bar{v}}
                        = \max_{m \in \tilde{\mathcal{M}}} \bar{v}_m
                        \text{;}
                    \end{align*}
                    \item [2.)] the winning mediators set
                    \begin{align*}
                        \tilde{\tilde{\mathcal{M}}} = 
                        \max_{m \in \tilde{\mathcal{M}}} \bar{v}_m
                        \text{;}
                    \end{align*}
                    \item [3.)] the winning suppliers set
                    \begin{align*}
                        \hat{\mathcal{S}} = \tilde{\mathcal{S}} \cup \left\{f(i)\right\}_{i=1}^{\min\left(\left\{j \in \mathbb{N}_0 \cap \left\vert\tilde{\tilde{S}}\right\vert: \sum_{k=1}^{j} v_{f(k)} \ge \bar{\bar{v}} \right\}\right)}
                        \text{.}
                    \end{align*}
                \end{enumerate}

                The mechanism generates a supply transaction
                \begin{align*}
                    \mathcal{C}^{(0)} = 
                    \left\{
                        \left(
                            v_s, 
                            \bar{\rho}^{\left(\hat{y}(s)\right)}, 
                            s, 
                            m_s
                        \right)
                    \right\}_{s \in \hat{\mathcal{S}}}
                \end{align*}
                where, for each $s \in \hat{\mathcal{S}}$,
                \begin{align*}
                    m_s 
                    \overset{\text{iid}}{\sim} \text{Unif}\left( \hat{\mathcal{M}}^{\left(\hat{y}(s)\right)}
                    \right)
                    \text{.}
                \end{align*}
                
                The mechanism discloses a scarcity quota function $s: \mathcal{M} \to \mathbb{R}_0^+$ such that, for any $m \in \mathcal{M}$,
                \begin{align*}
                    s(m) = 
                    \sum_{c \in \mathcal{C}^{(0)}}
                    \left( c_1
                    \mathbf{1}_{c_4=m} 
                    \right)
                    \text{.}
                \end{align*}

                \item (Demand Bulk Bidding) At time $t=6$, each mediator $m \in \mathcal{M}$ submits a counter-demand-price bid  $\tilde{r}_m \in \mathbb{R}$.

                \item (Demand Bulk Auction) At time $t=7$, the mechanism computes 
                \begin{enumerate}
                    \item [1.)] the initial order book
                    \begin{align*}
                        O_0^{(0)} = \left\{\left(s(m),\tilde{r}_m,m\right)\right\}_{m \in \mathcal{M}}
                        \text{;}
                    \end{align*}
                    \item [3.)] the initial contracts book
                    \begin{align*}
                        \mathcal{Z}^{(0)}_0 = \emptyset
                        \text{.}
                    \end{align*}
                \end{enumerate}

                For $i \in \mathbb{N} \cap \left(-\infty, \bar{i}^{(0)}\right]$, at time $t=7.6-2^{-i}$, 
                the mechanism randomizes the $i^{\text{th}}$-iteration winning order
                \begin{align*}
                    o^{(0)}_{i} \sim \text{Unif}
                    \left(\argmin_{o \in O^{(0)}_{i-1}} o_2\right)
                    \text{,}
                \end{align*}
                and computes the $i^{\text{th}}$-iteration feasible matchings set
                \begin{align*}
                    \tilde{D}^{(0)}_{i} = \left\{d \in \mathcal{D}
                    - 
                    \left\{\zeta_3\right\}_{\zeta \in \mathcal{Z}^{(0)}_{i-1}}
                    : v_d \le \left(o^{(0)}_{i}\right)_1, r_d \ge \left(o^{(0)}_{i}\right)_2 \right\}
                    \text{.}
                \end{align*}

                \begin{itemize}
                    \item If $\tilde{D}^{(0)}_{i} = \emptyset$, then the mechanism computes 
                    \begin{enumerate}
                        \item [1.)] the $i^{\text{th}}$-iteration order book
                        \begin{align*}
                            O^{(0)}_i = O^{(0)}_{i-1} - \left\{o^{(0)}_{i}\right\}
                            \text{;}
                        \end{align*}
                        \item [2.)] the $i^{\text{th}}$-iteration contracts book
                        \begin{align*}
                            \mathcal{Z}^{(0)}_{i} = \mathcal{Z}^{(0)}_{i-1}
                            \text{.}
                        \end{align*}
                    \end{enumerate}
                    
                    \item If $\tilde{D}^{(0)}_{i} \ne \emptyset$, then the mechanism randomizes the signal-specific $i^{\text{th}}$-iteration match
                    \begin{align*}
                        d^{(0)}_{i} \sim \text{Unif}
                        \left(
                        \tilde{D}^{(0)}_{i}
                        \right)
                        \text{,}
                    \end{align*}
                    and computes 
                    \begin{enumerate}
                        \item [1.)] the $i^{\text{th}}$-iteration order book
                        \begin{align*}
                            O^{(0)}_i = \left(O^{(0)}_{i-1} - \left\{o^{(0)}_{i}\right\}\right)
                            \cup \left\{\left(\left(o^{(0)}_{i}\right)_1 - v_{d^{(0)}_{i}}, \left(o^{(0)}_{i}\right)_2, \left(o^{(0)}_{i}\right)_3\right)\right\}
                            \text{;}
                        \end{align*}
                        \item [2.)] the $i^{\text{th}}$-iteration contracts book
                        \begin{align*}
                            \mathcal{Z}^{(0)}_{i} = 
                            \mathcal{Z}^{(0)}_{i-1}
                            \sqcup \left\{\left(
                            v_{d^{(0)}_i}, \left(o^{(0)}_i\right)_2,
                            d^{(0)}_i,
                            \left(o^{(0)}_i\right)_3
                            \right)\right\}
                            \text{.}
                        \end{align*}
                    \end{enumerate}
                \end{itemize}

                If $O^{(0)}_i = \emptyset$, then $\bar{i}^{(0)} = i$.

                At time $t=7.6$, the mechanism computes
                \begin{enumerate}
                    \item [1.)] the demand transaction
                    \begin{align*}
                        \mathcal{Z}^{(0)} = 
                        \mathcal{Z}^{(0)}_{\bar{i}^{(0)}}
                        \text{;}
                    \end{align*}
                    \item [2.)] the maximum resale supply volume 
                    \begin{align*}
                        R^{(0)}
                        =
                        \sum_{m \in \mathcal{M}} \max\left(
                        \left\{
                            \sum_{c \in C} 
                            c_1 \mathbf{1}_{c_4 = m}
                        \right\}_{C \subseteq \mathcal{C}^{0}}
                        \cap \left(-\infty, s(m) - \sum_{\zeta \in \mathcal{Z}^{(0)}}
                            \zeta_1 \mathbf{1}_{\zeta_4 = m}\right]
                        \right)
                        \text{;}
                    \end{align*}
                \end{enumerate}

                The mechanism discloses $\left(\mathcal{Z}^{(0)}, R^{(0)}\right)$ to every mediator $m \in \mathcal{M}$.

                If $\mu\left(\tilde{\tilde{\tilde{\mathcal{M}}}}\right) > \ubar{\mu}$ or $\tilde{\tilde{\tilde{\mathcal{M}}}} = \mathcal{M}$, the market is finalized and returns
                \begin{align*}
                    \left(\mathcal{C}, \mathcal{Z}\right) 
                    =
                    \left(\mathcal{C}^{(0)}, \mathcal{Z}^{(0)}\right)
                    \text{.}
                \end{align*}
                
                \item (Resale Bidding) For each $j \in \mathbb{N}\cap\left(-\infty,\bar{j}\right]$, at time $t=9-2^{1-j}$, each mediator $m \in \mathcal{M}$ submits a $(j+1)^{\text{th}}$ resale supply volume bid $\left(\tilde{v}^{(j)}_m, \tilde{r}^{(j)}_m\right) \in \left[0, R^{(0)}\right] \times \mathbb{R}$;

                \item (Demand Resale Auction) For each $j \in \mathbb{N}\cap\left(-\infty,\bar{j}\right]$, at time $t=9-(1.5)2^{-j}$, the market computes the $(j+1)^{\text{th}}$ resale supply winning volume
                \begin{align*}
                    \bar{v}^{(j)} = \max_{m \in \mathcal{M} - \tilde{\tilde{\tilde{\mathcal{M}}}}} \left\{ \tilde{v}^{(j)}_m \right\}
                    \text{.}
                \end{align*}
                
                \begin{itemize}
                    \item If $\bar{v}^{(j)} = 0$, then $\bar{j} = j$;
                    \item If $\bar{v}^{(j)} > 0$, then the mechanism randomizes the $(j+1)^{\text{th}}$ winning mediator 
                    \begin{align*}
                        \bar{m}^{(j)} \sim \text{Unif}\left(
                        \argmin_{m 
                            \in 
                            \argmax_{m' \in \mathcal{M} - \tilde{\tilde{\tilde{\mathcal{M}}}}} \left\{ \tilde{v}^{(j)}_{m'} \right\}
                            }
                            \tilde{r}^{(j)}_m
                        \right)
                        \text{,}
                    \end{align*}
                    and computes the incumbent $(j+1)^{\text{th}}$ recontracted contracts package
                    \begin{align*}
                        C^{(j)}_{0} = \mathcal{C}^{(j-1)}
                        \text{.}
                    \end{align*}

                    For $i \in \mathbb{N} \cap \left(-\infty, I_j\right]$, at time $t = 9 - (0.3)2^{-j}(4+2^{-i})$, the mechanism computes
                    \begin{align*}
                        \tilde{C}^{(j)}_i = \left\{c \in C^{(j)}_{i-1}: 
                        \sum_{\zeta \in \mathcal{Z}^{(j-1)}} \zeta_1 \mathbf{1}_{\zeta_4 = c_4} \le 
                        \sum_{c \in \mathcal{C}^{(j)}_{i-1}}
                        \left( c_1
                        \mathbf{1}_{c_4=m} 
                        \right) - c_1
                        \right\}
                        \text{.}
                    \end{align*}

                    \begin{itemize}
                        \item If $\tilde{C}^{(j)}_i \ne \emptyset$, the mechanism randomizes a recontracted contract
                        \begin{align*}
                            \tilde{c}^{(j)}_i \sim \text{Unif}\left(\tilde{C}^{(j)}_i\right)
                            \text{,}
                        \end{align*}
                        and computes
                        \begin{align*}
                            C^{(j)}_{i}
                            =
                            \left(C^{(j)}_{i-1} - \left\{ \tilde{c}^{(j)}_i \right\}\right) \cup \left\{
                            \left(c_1, c_2, c_3, \bar{m}^{(j)}\right)
                            \right\}
                            \text{.}
                        \end{align*}
                        
                        If $\sum_{c \in C^{(j)}_{i} - \mathcal{C}^{(j-1)}} c_1 \mathbf{1}_{c_4 = \bar{m}^{(j)}}$, then $I_j = i$. 
                        
                        \item If $\tilde{C}^{(j)}_i = \emptyset$, then $I_j = i$, and the mechanism computes
                        \begin{align*}
                            C^{(j)}_{i} = C^{(j)}_{i-1}
                            \text{.}
                        \end{align*}
                    \end{itemize}

                    At time $t = 9 - (0.3)2^{-j}(4+2^{-1-I_j})$, the mechanism computes 
                    \begin{enumerate}
                        \item [1.)] the $(j+1)^{\text{th}}$ recontracted contracts package
                        \begin{align*}
                            \mathcal{C}^{(j)} = C^{(j)}_{I_j }
                            \text{;}
                        \end{align*}
                        \item [2.)] the $(j+1)^{\text{th}}$ initial order book
                        \begin{align*}
                            O^{(j)}_0 = 
                            \left\{\left(
                            \sum_{c \in \mathcal{C}^{(j)} - \mathcal{C}^{(j-1)}} \left(c_1\mathbf{1}_{c_4 = \bar{m}^{(j)}}\right), \tilde{\rho}^{(j)}_{\bar{m}^{(j)}}, \bar{m}^{(j)}\right)\right\}
                            \text{;}
                        \end{align*}
                        \item [3.)] the $(j+1)^{\text{th}}$ initial contracts book
                        \begin{align*}
                            \mathcal{Z}^{(j)}_0 = \mathcal{Z}^{(j-1)}_{\bar{i}^{(j-1)}}
                            \text{;}
                        \end{align*}
                    \end{enumerate}
    
                    For each $i \in \mathbb{N} \cap \left(-\infty,\bar{i}^{(j)}\right]$, at time $t = 9 - (0.3)2^{-j}(4+2^{-i-1-I_j})$, the mechanism randomizes the $(j+1)^{\text{th}}$ $i^{\text{th}}$-iteration winning order
                    \begin{align*}
                        o^{(j)}_{i} \sim \text{Unif}\left(\argmin_{o \in O^{(j)}_{i-1}} o_2\right)
                        \text{,}
                    \end{align*}
                    and computes the $(j+1)^{\text{th}}$ $i^{\text{th}}$-iteration feasible matchings set
                    \begin{align*}
                        \tilde{D}^{(j)}_{i} &= \left\{d \in \mathcal{D}^{(j)}
                        - 
                        \left\{\zeta_3\right\}_{\zeta \in \mathcal{Z}^{(j)}_{i-1}}
                        : v_d \le \left(o^{(j)}_{i}\right)_1, r_d \ge \left(o^{(j)}_{i}\right)_2 \right\}
                        \cup 
                        \left\{\zeta_3\right\}_{\zeta \in
                            \left\{\zeta' \in \mathcal{Z}^{(j)}_{i-1}: \zeta'_1 \le  \left(o^{(0)}_{i}\right)_1, \zeta'_2 > \left(o^{(j)}_{i}\right)_2\right\}
                        }
                        \text{.}
                    \end{align*}
    
                    \begin{itemize}
                        \item If $\tilde{D}^{(j)}_{i} = \emptyset$, then the mechanism computes 
                        \begin{enumerate}
                            \item [1.)] the $(j+1)^{\text{th}}$ $i^{\text{th}}$-iteration order book
                            \begin{align*}
                                O^{(j)}_i = O^{(j)}_{i-1} - \left\{o^{(j)}_{i}\right\}
                                \text{;}
                            \end{align*}
                            \item [2.)] the $(j+1)^{\text{th}}$ $i^{\text{th}}$-iteration contracts book
                            \begin{align*}
                                \mathcal{Z}^{(j)}_{i} = \mathcal{Z}^{(j)}_{i-1}
                                \text{.}
                            \end{align*}
                        \end{enumerate}
                        
                        \item If $\tilde{D}^{(j)}_{i} \ne \emptyset$, then the mechanism randomizes the $(j+1)^{\text{th}}$ $i^{\text{th}}$-iteration match
                        \begin{align*}
                            d^{(j)}_{i} \sim \text{Unif}
                            \left(
                            \tilde{D}^{(j)}_{i}
                            \right)
                            \text{,}
                        \end{align*}
                        and computes 
                        \begin{enumerate}
                            \item [1.)] the $(j+1)^{\text{th}}$ $i^{\text{th}}$-iteration order book
                            \begin{align*}
                                O^{(j)}_i = \left(O^{(j)}_{i-1} - \left\{o^{(j)}_{i}\right\}\right)
                                \cup \left\{\left(\left(o^{(j)}_{i}\right)_1 - v_{d^{(j)}_{i}}, \left(o^{(j)}_{i}\right)_2, \left(o^{(j)}_{i}\right)_3\right)\right\}
                                \text{.}
                            \end{align*}
                            \item [2.)] the $(j+1)^{\text{th}}$ $i^{\text{th}}$-iteration contracts book
                            \begin{align*}
                                \mathcal{Z}^{(j)}_{i} = 
                                \mathcal{Z}^{(j)}_{i-1}
                                \sqcup \left\{\left(
                                v_{d^{(j)}_i}, \left(o^{(j)}_i\right)_2,
                                d^{(j)}_i,
                                \left(o^{(j)}_i\right)_3
                                \right)\right\}
                                \text{.}
                            \end{align*}
                        \end{enumerate}
                    \end{itemize}
    
                    If $O^{(j)}_i = \emptyset$, then $\bar{i}^{(j)} = i$.
    
                    At time $t=9-(1.1)2^{-j}$, the mechanism computes
                    \begin{enumerate}
                        \item [1.)] the $(j+1)^{\text{th}}$ demand transaction
                        \begin{align*}
                            \mathcal{Z}^{(j)} = \mathcal{Z}^{(j)}_{\bar{i}^{(j)}}
                            \text{.}
                        \end{align*}
                        \item [2.)] the $(j+1)^{\text{th}}$ maximum resale supply volume 
                        \begin{align*}
                            R^{(j)}
                            =
                            \sum_{m \in \mathcal{M}} \max\left(
                            \left\{
                                \sum_{c \in C} 
                                c_1 \mathbf{1}_{c_4 = m}
                            \right\}_{C \subseteq \mathcal{C}^{j}}
                            \cap \left(-\infty, \sum_{c \in \mathcal{C}^{(j)}} c_1 \mathbf{1}_{c_4 = m}- \sum_{\zeta \in \mathcal{Z}^{(j)}}
                                \zeta_1 \mathbf{1}_{\zeta_4 = m}\right]
                            \right)
                            \text{;}
                        \end{align*}
                    \end{enumerate}
    
                    The mechanism discloses $\left(\mathcal{Z}^{(j)}, R^{(j)}\right)$ to every mediator $m \in \mathcal{M}$.
                    
                    \item If $\bar{S}^{(j)} = 0$, then $\bar{j} = j$, and the mechanism returns
                    \begin{align*}
                        \left(\mathcal{C}, \mathcal{Z}\right)
                        =
                        \left(\mathcal{C}^{\left(\bar{j}-1\right)}, \mathcal{Z}^{\left(\bar{j}-1\right)}\right)
                        \text{.}
                    \end{align*}
                \end{itemize}

            \end{itemize}  

            We denote this market frame as a function $M$, such that, for any suppliers set $\mathcal{S}$, any mediators set $\mathcal{M}$, any demanders set $\mathcal{D}$, any production capability function $\bar{v}:\mathcal{S} \to \mathbb{R}_0^+$, any capacity measure $\mu: 2^{\mathcal{M}} \to \mathbb{R}^+_0$, any capacity threshold $\ubar{\mu}\in  \mathbb{R}^+_0$
            the normal market mechanism is
            \begin{align*}
                M\left(\mathcal{S}, \mathcal{M}, \mathcal{D}, \bar{v}, \mu, \ubar{\mu}\right)
                \text{.}
            \end{align*}

            The market mechanism is known to every market participant $p \in \mathcal{S} \cup \mathcal{M} \cup \mathcal{D}$.

            \begin{figure} [H]
                \begin{subfigure}{1\textwidth}
                  \centering
                  \includegraphics[width=1\linewidth]{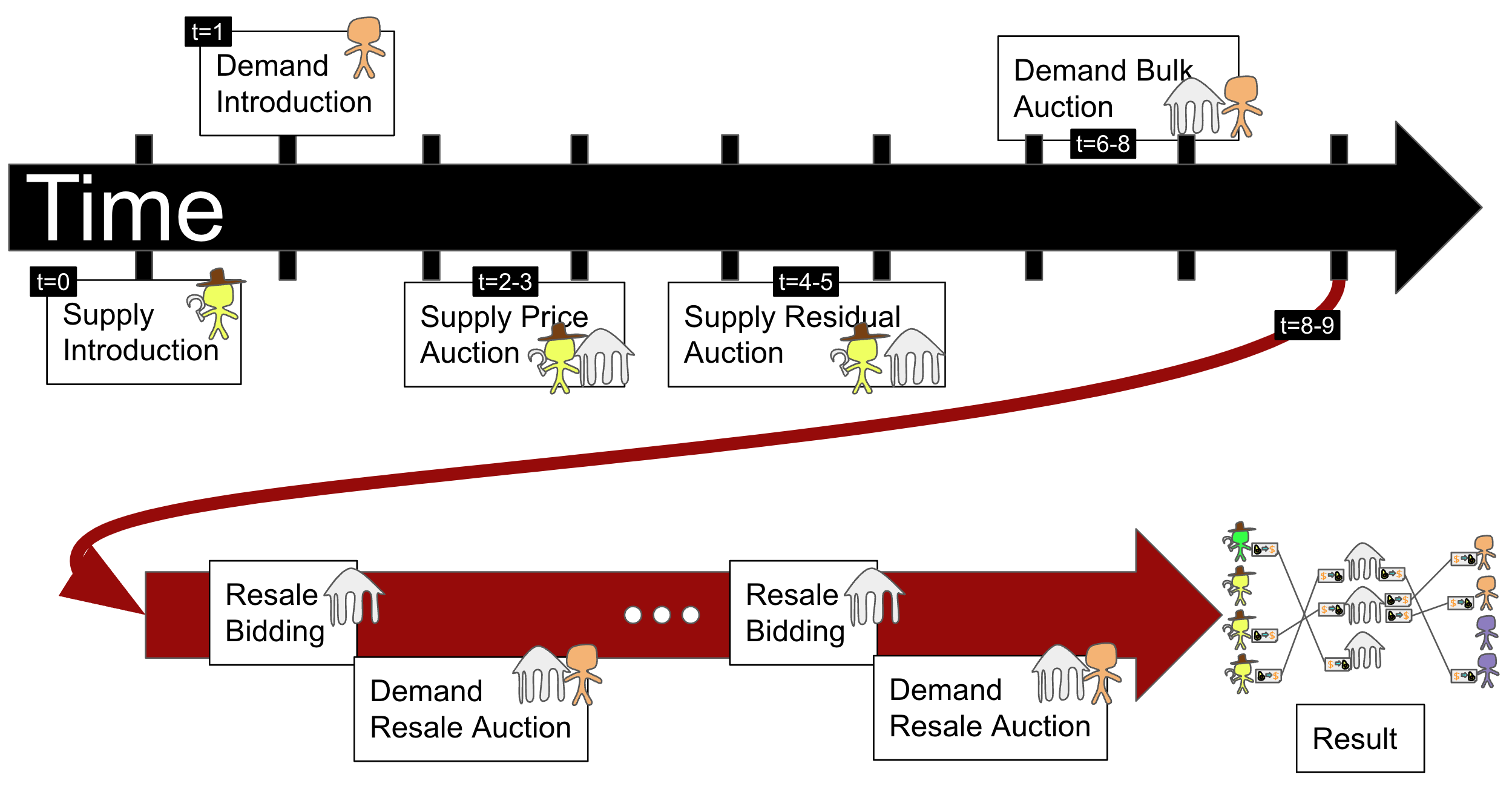}
                  \caption{Timeline of Normal Market Mechanism}
                \end{subfigure}
                \caption{At time $t=0$ and $t=2$ to $5$, the market is said to be in the first stage (the supply market). This stage will specify the market supply price and the set of suppliers who will get a supply contract under such market price. The mechanism governs the supply price auction is done in the first price auction sense, where the mediator who bids the highest supply price will win the auction.
                \\At time $t=1$ and $t=6$ to $8$, the market is said to be in the second stage (the demand market). This stage will specify the demanders who will get a demand contract. However, the demand bulk auction is done in a double auction manner (order book auction), so different demanders can get demand contracts under different demand prices.
                \\At time $t=8$ to $9$, if the market is under a monopoly or oligopoly case (under the measure and threshold provided externally), then the resale bidding and demand resale auction will be implemented iteratively. Otherwise, nothing will be done during such time. Eventually, the market will return the collection of demand contracts and the collection of supply contracts.}
            \end{figure}

    \subsection{Normal Market Context}

        The normal market context $\mathcal{X}$ is the one determined 
        \begin{enumerate}
            \item [1.)] The correct knowledge on $\mathcal{S}, \mathcal{M}, \mathcal{D}, \bar{v}$;
            \item [2.)] The correct knowledge on the physical enforcement of the market transaction $\left(\mathcal{C}, \mathcal{Z}\right)$;
            \item [3.)] The knowledge and belief structure of the off-market physics.
        \end{enumerate}
        
    \section{Off-Market Physics}

        The off-market physics is the one determined the actions and realization of utility after the market transaction is determined from the market activities. Thus, the structure of the off-market will affect the strategy used in the market of each market participant.
            
        The off-market has a degree of freedom $N \in \mathbb{N}$ and the off-market random variable 
        \begin{align*}
            Z \overset{\Delta}{=} \text{Unif}\left([0,1)^N\right)
            \text{,}
        \end{align*}
        and is independent from any randomization inside the market.

        There exists a set of the total admissible actions $\mathcal{A} \ne \emptyset$, and a admissible actions set $\hat{A}: \mathcal{S}\sqcup \mathcal{M}\sqcup \mathcal{D} \to 2^{\mathcal{A}} - \{\emptyset\}$.

        Each market participant $p \in \mathcal{S}\sqcup \mathcal{M}\sqcup \mathcal{D}$ simultaneously chooses the action 
        \begin{align*}
            a_p \in A_p = \hat{A}\left(p\right)
        \end{align*}
        with the knowledge of the realization of the physically-feasible market transaction $\left(\mathcal{C}, \mathcal{Z}\right)$.

        Since the physical enforcement of the contract is deterministic, we have that 
        There exists a measurable utility function $u:\left(\mathcal{S}\sqcup \mathcal{M}\sqcup \mathcal{D}\right) \times \left(\bigotimes_{p \in \mathcal{S}\sqcup \mathcal{M}\sqcup \mathcal{D}} A_p \right)\times \mathcal{F}^{\left(\mathcal{S}, \mathcal{M}, \mathcal{D}, \bar{v}\right)} \times [0,1)^N \to [-\infty,\infty)$ such that, for each market participant $p \in \mathcal{S}\sqcup \mathcal{M}\sqcup \mathcal{D}$, such participant will receive a realized utility
        \begin{align*}
            u_p
            =
            u
            \left(
                p, \left(a_{p'}\right)_{p' \in \mathcal{S}\sqcup \mathcal{M}\sqcup \mathcal{D}}, \left(\mathcal{C}, \mathcal{Z}\right), Z
            \right)
        \end{align*}
        conditioned on 
        \begin{enumerate}
            \item [1.)] The actions $\left(\left(a_{p'}\right)_{p' \in \mathcal{S}\sqcup \mathcal{M}\sqcup \mathcal{D}}\right)$;
            \item [2.)] The physically-feasible market transaction $\left(\mathcal{C}, \mathcal{Z}\right)$;
            \item [3.)] The realization of the off-market random variable $Z$ being $z \in [0,1)^N$.
        \end{enumerate}
        
        \subsection{Independent Off-Market Physics}
        \label{subsection:independent off-market}
    
            An independent off-market physics is an off-market physics that satisfies the following properties:
            \begin{enumerate}
            
                \item [1.)] (Individual Randomness) The degree of freedom $N = \left\vert\mathcal{S}\right\vert+\left\vert\mathcal{D}\right\vert$ and the off-market random vector $Z = \left(Z_p\right)_{p \in \mathcal{S}\sqcup \mathcal{D}}$;
                
                \item [2.)] (Identical Admissible Actions) The admissible actions set function $\hat{A}$ is such that, for each market participant $p \in \mathcal{S}\sqcup\mathcal{M}\sqcup \mathcal{D}$, the admissible actions set is
                \begin{align*}
                    A_p = \hat{A}(p)
                    =
                    \begin{cases}
                        \mathcal{A}_{\text{supply}} & \text{ if } p \in \mathcal{S}\\
                        \{0\} & \text{ if } p \in \mathcal{M}\\
                        \mathcal{A}_{\text{demand}} & \text{ if } p \in \mathcal{D}
                    \end{cases}
                    \text{,}
                \end{align*}
                with the supplier admissible actions set $\mathcal{A}_{\text{supply}} \ne \emptyset$ and the demander admissible actions set$\mathcal{A}_{\text{demand}} \ne \emptyset$;
                
                \item [3.)] (Supplier Utility)
                Each supplier $s \in \mathcal{S}$ realizes the utility 
                \begin{align*}
                    h^{(1)}(s) 
                    w\left(
                        h^{(0)}(s),
                        a_s,
                        \frac{\sum_{c \in \mathcal{C}} \left(c_1 \mathbf{1}_{c_3 = s}\right)}{h^{(1)}(s)},
                        \sum_{c \in \mathcal{C}} \left(c_2 \mathbf{1}_{c_3 = s}\right), z_s
                    \right)
                \end{align*}
                conditioned on one's own action $a_s \in \mathcal{A}_{\text{supply}}$, the market transaction $\left(\mathcal{C}, \mathcal{Z}\right)$, and the realized personal random variable $Z_s$ being $z_s$, with
                \begin{itemize}
                    \item the supplier intensive characteristics space $\Omega_{\text{supply}}$;
                    \item the supplier intensive characteristics function $h^{(0)}_{\text{}}:\mathcal{S} \to \Omega_{\text{supply}}$;
                    \item the supplier extensive characteristics function $h^{(1)}:\mathcal{S} \to \mathbb{R}^+$;
                    \item the measurable function $w:\Omega_{\text{supply}} \times \mathcal{A}_{\text{supply}}
                    \times \mathbb{R}^+_0
                    \times \mathbb{R}
                    \times [0,1)
                    \to [-\infty, \infty)$;
                \end{itemize}
                
                \item [4.)] (Demander Utility)
                Each demander $d \in \mathcal{D}$ realizes the utility 
                \begin{align*}
                    \eta^{(1)}(d) 
                    \omega\left(
                        \eta^{(0)}(d),
                        a_d,
                        \frac{\sum_{\zeta \in \mathcal{Z}} \left(\zeta_1 \mathbf{1}_{\zeta_3 = d}\right)}{\eta^{(1)}(d)},
                        \sum_{\zeta \in \mathcal{Z}} \left(\zeta_2 \mathbf{1}_{\zeta_3 = d}\right), z_d
                    \right)
                \end{align*}
                conditioned on one's own action $a_d \in \mathcal{A}_{\text{demand}}$, the market transaction $\left(\mathcal{C}, \mathcal{Z}\right)$, and the realized personal random variable $Z_d$ being $z_d$, with
                \begin{itemize}
                    \item the demander intensive characteristics space $\Omega_{\text{demand}}$;
                    \item the demander intensive characteristics function $\eta^{(0)}:\mathcal{D} \to \Omega_{\text{demand}}$;
                    \item the demander extensive characteristics function $\eta^{(1)}:\mathcal{D} \to \mathbb{R}^+$;
                    \item the function measurable $\omega:\Omega_{\text{demand}} \times \mathcal{A}_{\text{demand}}
                    \times \mathbb{R}^+_0
                    \times \mathbb{R}
                    \times [0,1)
                    \to [-\infty, \infty)$;
                \end{itemize}

                \item [5.)] (Mediator Utility)
                Each mediator $m \in \mathcal{M}$ realizes the utility 
                \begin{align*}
                    &\sum_{c\in \mathcal{C}}
                    \left(
                        h^{(1)}(c_3)
                        \tilde{w}
                        \left(
                            h^{(0)}(c_3),
                            a_{c_3},
                            \frac{c_1}{h^{(1)}(c_3)},
                            c_2, 
                            z_{c_3}
                        \right)
                        \mathbf{1}_{c_4=m}
                    \right)
                    +
                    \sum_{\zeta \in \mathcal{Z}}
                    \left(
                        \eta^{(1)}(\zeta_3)
                        \tilde{\omega}
                        \left(
                            \eta^{(0)}(\zeta_3),
                            a_{\zeta_3},
                            \frac{\zeta_1}{\eta^{(1)}(\zeta_3)},
                            \zeta_2, 
                            z_{\zeta_3}
                        \right)
                        \mathbf{1}_{\zeta_4=m}
                    \right)
                \end{align*}
                conditioned on the actions $\left(a_p\right)_{p \in \mathcal{S} \sqcup \mathcal{D}}$, the market transaction $\left(\mathcal{C}, \mathcal{Z}\right)$, and the realized random variable being $\left(z_p\right)_{p \in \mathcal{S} \sqcup \mathcal{D}}$, with
                \begin{itemize}
                    \item the function $\tilde{w}:\Omega_{\text{supply}} \times \mathcal{A}_{\text{supply}}
                    \times \mathbb{R}^+_0
                    \times \mathbb{R}
                    \times [0,1)
                    \to [-\infty, \infty)$;
                    \item the function $\tilde{\omega}:\Omega_{\text{demand}} \times \mathcal{A}_{\text{demand}}
                    \times \mathbb{R}^+_0
                    \times \mathbb{R}
                    \times [0,1)
                    \to [-\infty, \infty)$.
                \end{itemize}
            \end{enumerate}
        
            The independent off-market physics is characterized as 
            \begin{align*}
                \Phi\left(
                    \mathcal{S}, \mathcal{M}, \mathcal{D},
                    \mathcal{A}_{\text{supply}}, \mathcal{A}_{\text{demand}},
                    \Omega_{\text{supply}}, \Omega_{\text{demand}},
                    h^{(0)}, h^{(1)}, \eta^{(0)}, \eta^{(1)},
                    w,\omega,\tilde{w},\tilde{\omega}
                    \right)
                \text{.}
            \end{align*}

            In such case, we will see that we do not require the random variable $Z$ to be uniformly distributed on $[0,1)^N$ and it is sufficient to consider the case when, for any non-mediator market participant $p \in \mathcal{S} \sqcup \mathcal{D}$, 
            \begin{align*}
                Z_s \overset{\Delta}{=} \text{Unif}([0,1))\text{.}
            \end{align*}

    \section{Meta-Physics}

        Meta-physics is the rule that governs the generation of both in-market and off-market physics. For example, it can determine the correlation between the extensive supplier characteristics and the production capability.

        We will consider a specific structure that allows the tractability of the analysis yet still general enough to apply the abstract market to the markets of interest, which are trading market and credit market.
        
        \subsection{Scalable Market Generator}
        
            A scalable market generator is characterized as
            \begin{align*}
                \text{G}\left(
                    \mathcal{A}_{\text{supply}}, \mathcal{A}_{\text{demand}},
                    \bar{\Omega}_{\text{supply}}, \bar{\Omega}_{\text{demand}},
                    \hat{\Pi}_{\text{supply}},
                    \hat{\Pi}_{\text{demand}},
                    w, \omega, \tilde{w}, \tilde{\omega}
                \right)
            \end{align*}
            with
            \begin{enumerate}
                \item [1.)] The possible supplier characteristics set $\bar{\Omega}_{\text{supply}} \subseteq \Omega_{\text{supply}} \times \mathbb{R}^+ \times \mathbb{R}^+$;
                \item [2.)] The possible demander characteristics set $\bar{\Omega}_{\text{demand}} \subseteq \Omega_{\text{demand}} \times \mathbb{R}^+$
                \item [3.)] The individual supplier joint probability distribution
                \begin{align*}
                    \hat{\Pi}_{\text{supply}}
                    \in \Delta\left(
                        \bar{\Omega}_{\text{supply}}
                    \right)
                    \text{;}
                \end{align*}
                \item [4.)] The individual demander joint probability distribution
                \begin{align*}
                    \hat{\Pi}_{\text{demand}}
                    \in \Delta\left(
                        \bar{\Omega}_{\text{demand}}
                    \right)
                    \text{.}
                \end{align*}
            \end{enumerate}            
    
            By applying the generator $G\left(\mathcal{A}_{\text{supply}}, \mathcal{A}_{\text{demand}}, \bar{\Omega}_{\text{supply}}, \bar{\Omega}_{\text{demand}}, \hat{\Pi}_{\text{supply}}, \hat{\Pi}_{\text{demand}}, w, \omega, \tilde{w}, \tilde{\omega}\right)$ upon the market participants sets tuple $\left(\mathcal{S}, \mathcal{M}, \mathcal{D}\right)$, the scale tuple $\left(\lambda_{\text{supply}}, \lambda_{\text{demand}}\right) \in \left(\mathbb{R}^+\right)^2$, and the capacity threshold $\ubar{\mu} \in [0,1]$, the generator can stochastically generate a normal market in the following procedures:
            \begin{enumerate}
            
                \item [\namedlabel{condition:supplier}{A.}] (Supplier Introduction) 
                The supplier intensive characteristics function $h^{(0)}: \mathcal{S} \to \Omega_{\text{supply}}$, the supplier extensive characteristics function $h^{(1)}: \mathcal{S} \to \mathbb{R}^+$, and the production capability function $\hat{v}: \mathcal{S} \to \mathbb{R}^+_0$ are specified such that, for each supplier $s \in \mathcal{S}$,
                \begin{align*}
                    \left(h^{(0)}(s), \frac{h^{(1)}(s)}{\lambda_{\text{supply}}}, \frac{\hat{v}(s)}{\lambda_{\text{supply}}}\right)
                    \overset{\text{iid.}}{\sim}
                    \Pi_{\text{supply}}
                    \text{;}
                \end{align*}
                
                \item [\namedlabel{condition:demander}{B.}] (Demander Introduction) 
                The demander intensive characteristics function $\eta^{(0)}: \mathcal{D} \to \Omega_{\text{demand}}$ and the demander extensive characteristics function $\eta^{(1)}: \mathcal{D} \to \mathbb{R}^+$ are specified such that, for each demander $d \in \mathcal{D}$,
                \begin{align*}
                    \left(\eta^{(0)}(d), \frac{\eta^{(1)}(d)}{\lambda_{\text{demand}}}\right)
                    \overset{\text{iid.}}{\sim}
                    \Pi_{\text{demand}}
                    \text{;}
                \end{align*}

                \item [\namedlabel{condition:offmarket}{C.}] (Off-Market Generation)
                The off-market are specified to be 
                \begin{align*}
                    \Phi\left(
                        \mathcal{S}, \mathcal{M}, \mathcal{D},
                        \mathcal{A}_{\text{supply}}, \mathcal{A}_{\text{demand}},
                        \bar{\Omega}_{\text{supply}}, \bar{\Omega}_{\text{demand}},
                        h^{(0)}, h^{(1)}, \eta^{(0)}, \eta^{(1)},
                        w,\omega,\tilde{w},\tilde{\omega}
                        \right)
                    \text{;}
                \end{align*}

                \item [\namedlabel{condition:marketcontext}{D.}] (Normal Market Context)
                The normal market context $\mathcal{X}$ is specified such that the knowledge and belief structure of the off-market is such that
                \begin{enumerate}
                    \item [d.0)] There is a correct common knowledge of $\left(\mathcal{A}_{\text{supply}}, \mathcal{A}_{\text{demand}}, \Omega_{\text{supply}}, \Omega_{\text{demand}}, w,\omega,\tilde{w},\tilde{\omega}, \hat{\Pi}_{\text{supply}}, \hat{\Pi}_{\text{demand}}\right)$, $\left(\lambda_{\text{supply}}, \lambda_{\text{demand}}\right)$, $\ubar{\mu}$, and that the capacity measure $\mu: 2^{\mathcal{M}} \to [0,1]$ is defined such that $\mu(M) = \frac{\vert M\vert}{\vert \mathcal{M}\vert}$ for any $M \subseteq \mathcal{M}$;
                    
                    \item [d.1)] There is a correct common prior that 
                    \begin{itemize}
                        \item The suppliers characteristics are stochastically generated as in the process~\ref{condition:supplier};
                        \item The demanders characteristics are stochastically generated as in the process~\ref{condition:demander};
                        \item The off-market physics is deterministically generated as in the process~\ref{condition:demander} conditioned on the realization of the suppliers characteristics and the demanders characteristics;
                    \end{itemize}

                    \item [d.2)] (Supplier Private Observation) 
                    There is a correct knowledge that each supplier $s \in \mathcal{S}$ observes the realization $\left(h^{(0)}(s), h^{(1)}(s), \bar{v}(s)\right)$;

                    \item [d.3)] (Demander Private Observation) 
                    There is a correct knowledge that each demander $d \in \mathcal{D}$ observes the realization $\left(\eta^{(0)}(d), \eta^{(1)}(d)\right)$;
                \end{enumerate}

                \item [E.] The market is specified to be
                \begin{align*}
                    \circled{$\hat{M}$}\left(
                    \mathcal{S}, \mathcal{M}, \mathcal{D}, \bar{v}, \mu, \ubar{\mu}, \mathcal{X}
                    \right)
                    \text{.}
                \end{align*}

            \end{enumerate}
            
            This scalable market generator suggests that a sequence of the randomized market will be well-behave. For example, the goal number of demanded scarcity will not exploding. Although the definition of the scalable market generator only suggests that it can be applied to a sequence of finite normal markets, we can see in the next section that the normal market can be conveniently extended to the case when there are continuum of agents and the exact law of large numbers (The distribution of the agents according to the population distribution) holds.

\part{Competitive Market: Equilibrium \& Analysis}
\label{part:competitive}

\section{Competitive Market}
\label{section:competitivemarket}

    A competitive market as characterized as
    \begin{align*}
        \circled{C}^{ \left(\mathcal{A}_{\text{supply}}, \mathcal{A}_{\text{demand}},
        \bar{\Omega}_{\text{supply}}, \bar{\Omega}_{\text{demand}},
        w,\omega,\tilde{w},\tilde{\omega}\right)}
        \left(
            \hat{\Pi}_{\text{supply}}, \hat{\Pi}_{\text{demand}}
        \right)
    \end{align*}
    is the continuum limit of the market generated by a sequence of market generated by applying the scalable market generator
    \begin{align*}
        G\left(
            \mathcal{A}_{\text{supply}}, \mathcal{A}_{\text{demand}},
            \bar{\Omega}_{\text{supply}}, \bar{\Omega}_{\text{demand}},
            \hat{\Pi}_{\text{supply}},
            \hat{\Pi}_{\text{demand}}
            w, \omega, \tilde{w}, \tilde{\omega}
        \right)
    \end{align*}
    upon a sequence of $\left(\left(\mathcal{S}^{(i)}, \mathcal{M}^{(i)}, \mathcal{D}^{(i)}\right), \left(\lambda_{\text{supply}}^{(i)}, \lambda_{\text{demand}}^{(i)}\right)\right)$ with $\ubar{c} = 0$ such that
    \begin{enumerate}
        \item [1.)] 
        $\lim_{i \to \infty} \frac{\min\left(\left\{\left\vert \mathcal{S}^{(i)}\right\vert, \left\vert \mathcal{D}^{(i)}\right\vert\right\}\right)}{\left\vert \mathcal{M}^{(i)}\right\vert} = \infty$;
        \item [2.)] 
        $\lim_{i \to \infty} \left\vert \mathcal{M}^{(i)}\right\vert = \infty$;
        \item [3.)] For all $i$,
        $\lambda_{\text{supply}}^{(i)} = \left\vert \mathcal{S}^{(i)}\right\vert$ and $\lambda_{\text{demand}}^{(i)} = \left\vert \mathcal{D}^{(i)}\right\vert$.
    \end{enumerate}
    
    However, the competitive limit of the game will be the case when there exists a continuum, therefore having an uncountably infinite number, of suppliers, mediators, and suppliers.

    We assume that the continuum counter part of the capacity measure $\mu$, which is defined for a finite market case such that $\mu(M) = \frac{\vert M \vert}{\vert \mathcal{M} \vert}$ for any $M \subseteq \mathcal{M}$, is the Lebesgue measure. Thus, even when the threshold capacity $\ubar{c}$ is set to be $0$, we will have that the resale period can be enabled if (and only if) the set of winning mediators $\tilde{\tilde{\tilde{\mathcal{M}}}}$ has zero Lebesgue measure.

    \paragraph{Supplier, Demander, \& Differential Utility}

        We will model the collection of the suppliers $\mathcal{S} = \bar{\Omega}_{\text{supply}} \times [0,1) \times [0,1)$. This suggests there can be infinitely many copies of suppliers with the same supplier characteristics in the characteristics space. In this model, the ratio between the population size of a specific supplier characteristics and that of another characteristics is not well-defined. The effect of each supplier action is then has to be re-weighted by the probability distribution of the supplier characteristics $\hat{\Pi}_\text{supply}$. Specifically, the distribution over $\mathcal{S}$ will be equivalent to the distribution of $(\theta, z)$ where $\theta \sim \hat{\Pi}_\text{supply}$ and $z \sim \text{Unif}\left([0,1)^2\right)$, independently. This will also suggests that, for any supplier $s \in \mathcal{S}$, its probability of getting sampled is $0$, making any mediator cannot alter the market significantly.
    
        The supplier bid strategy is restricted to a joint deterministic strategy represented as a measurable\footnote{The measurability is an additional requirement, which is vacuously enforced in a non-limit setting.} function $f: \bar{\Omega}_{\text{supply}} \times [0,1) \to \mathbb{R}^+ \times \mathbb{R}$ such that, for any supplier $s \in \mathcal{S}$, the ``differential" volume bid and the supply price bid $(v_s, \rho_s) = f(s_1, s_2)$. The volume $v_s$ is denoted as differential, since the rescaling of the market, together with a condition on a well-behaved market to be defined in the following section, will enforce that the bid of a single supplier will become $0$. However, the total bid volume of an uncountably large number of supplier will be non-zero. Thus, we use differential volume to denote the density of the volume distribution, if it exists and is well-defined. Moreover, for each bid, there will be infinitely many other agents who use it. 
    
        The supplier off-market strategy will also be restricted to be represented by a measurable function $f:\bar{\Omega}_{\text{supply}} \times [0,1) \to \mathcal{A}_{\text{supply}}$, such that, for any supplier $s \in \mathcal{S}$, the off-market action selected is $f(s_1, s_2)$, meaning that each off-market action, if selected, will be selected by infinitely many agents.
    
        For a supplier $s \in \mathcal{S}$, if a contract is acquired via the market mechanism, it will also have a volume of $0$, but a non-zero differential volume of $v_s$. If a supply price associated with the contract is $\rho \in \mathbb{R}$, we will have that the expected utility, if it is well-defined, under an off-market action $a \in \mathcal{A}_{\text{supply}}$, for a supplier $s = \left(h_0, h_1, v, z\right)$ is,
        \begin{align*}
            \lim_{\lambda \to 0^+} \lambda h_1 \mathbb{E}\left[w\left(h_0, \frac{\lambda v_s}{\lambda h_1}, \rho, a,
            \epsilon\right)\right] = 0
        \end{align*}
        if $\mathbb{E}\left[w\left(h_0, \frac{v_s}{h_1}, \rho, a,
            \epsilon\right)\right] \in \mathbb{R}$.
    
        However, the corresponding differential utility will not be vacuous, since
        \begin{align*}
            \lim_{\lambda \to 0^+} 
            \frac{\lambda h_1}{\lambda} \mathbb{E}\left[w\left(h_0, \frac{\lambda v_s}{\lambda h_1}, \rho, a,
            \epsilon\right)\right] = h_1 \mathbb{E}\left[w\left(h_0, \frac{v_s}{h_1}, \rho, a,
            \epsilon\right)\right]
            \text{,}
        \end{align*}
        which is also the same as that of a non-limit market when the external characteristics $h_1$ and the volume bid is not rescaled.
    
        Thus, the utility evaluated for each supplier will be done with the differential utility\footnote{Note that this is in line with the analysis using utility as a non-atomic measure treated in ``Values of Non-Atomic Games. I: The Axiomatic Approach", Aumann \& Shapley, 1968,~\cite{aumann1968values}. The differential utility is equivalent to the distribution (density) function associated with the utility measure.} in order to provide a notion of rationality as (differential) utility maximization. 
        
        Another important technical issue is that, we can no longer randomize $\left(\epsilon_{s}\right)_{s \in \mathcal{S}}$ to be a collection of independent and identical random variable. As the expected utility for each supplier does not rely on the independence of the $\left(\epsilon_{s}\right)_{s \in \mathcal{S}}$ in the non-limit market, we will then only assume that $\left(\epsilon_{s}\right)_{s \in \mathcal{S}}$ such that, for any $s \in \mathcal{S}$, $\epsilon_{s} = \epsilon_{\mathcal{S}} \overset{\Delta}{=} \text{Unif}([0,1))$ 
    
        Similarly, the collection of demander $\mathcal{D}= \bar{\Omega}_{\text{supply}} \times [0,1) \times [0,1)$, the demander bid strategy is restricted to a joint deterministic strategy represented as a measurable function $f: \bar{\Omega}_{\text{supply}} \times [0,1) \to \mathbb{R}^+ \times \mathbb{R}$ such that, for any demander $d \in \mathcal{D}$, the differential volume bid and the demand price bid $(v_d, r_d) = f(d_1, d_2)$, and the demander off-market strategy will also be restricted to be represented by a measurable function $f:\bar{\Omega}_{\text{demand}} \times [0,1) \to \mathcal{A}_{\text{demand}}$, such that, for any demander $d \in \mathcal{D}$, the off-market action selected is $f(d_1, d_2)$.

        The differential utility system for the demand side will define in the same manner as that of the supply side, since every demander, under a technical assumption on the expected utility, will also have $0$ demand volume bid and has $0$ utility.

    \paragraph{Mediator Supply Bid}

        The collection of the mediators $\mathcal{M} = [0,1)$, since there is no hidden characteristics for any mediator. 
    
        The supply-side action of the mediators can be compressed into a measurable function $f: [0,1) \to [0,1] \times \left(\mathbb{R} \times \{-\infty\}\right)$ where, for any mediator $m \in [0,1)$, $\left(f(m)\right)_2$ represents the counter supply price bid made by the mediator, and $\left(f(m)\right)_1$ represents the proportion of supply residual acquired by the winning mediator groups. Thus, the winning mediator group should be restricted to be a member of a collection of subsets of $[0,1)$ such that uniform distribution is well-defined. In this case, we will use $\mathcal{G}$ (as defined in the definition~\ref{definition:goodborel}), and, for any $G \in \mathcal{G}$, the uniform distribution can be defined according to the probability measure $I_G$ (as defined in the definition~\ref{definition:goodboreluniform}). Thus, the bidding strategy $f$ is restricted to be a collection of function whose lexicographic maximizer is a set $G \in \mathcal{G}$.

        Assume that there is no supplier whose lowest supply price willing to take is exactly equal to the market supply price. Another technical issue is that a set of winning supplier will be in the form of $W \times [0,1)$, where $W \subseteq \bar{\Omega}_{\text{supply}} \times [0,1)$. Thus, if $W \ne \emptyset$, then we cannot match each supplier uniformly at random according to $I_G$, when the set of winning mediators is $G \in \mathcal{G}$ with $\vert G \vert > 1$. Moreover, we cannot relax the independence condition as done in the analysis of the off-market utility of each supplier. To circumvent this problem, if $W$ is finite, we will assume that each mediator in $G$ will get the identical measure of suppliers with the same characteristics measure and the same strategy measure. Formally, we consider a measurable surjective function $f: [0,1) \to G$ such that, for any $g \in G$, $\lambda\left(\left\{z \in [0,1): f(z) = g\right\}\right) = \frac{1}{\vert G\vert}$, and we can assume that a winning mediator $g \in G$ will be matched to a subset of mediators $W \times \left\{z \in [0,1): f(z) = g\right\}$. In the case when $G$ has non-zero Lebesgue measure, we have to consider a measurable bijective function $f:G \to [0,1)$ such that, for any $A \in \mathcal{B}([0,1))$, $I_G(A) = \lambda\left(\left\{f(a)\right\}_{a \in A\cap G}\right)$, and we can assume that a subset of winning mediators $G' \in \mathcal{B}$ with $G' \subseteq G$ will get matched a subset of mediators $W \times \left\{f(a)\right\}_{a \in G'}$.

        In the case when there may exist some supplier whose lowest supply price willing to take is exactly equal to the market supply price, we have to change the family of the winning suppliers to be $\left(W_1 \times [0,1) \right) \cup \left(W_2 \times [0,q)\right)$, where $q$ is the highest proportion of supply residual acquired by the winning mediator groups determined by the bid, which will be analog to the ratio between the target residual volume and the total residual volume. Different from that of the non-limit market, the target volume in the supply residual auction would be filled perfectly, since the willing to contract volume placed by each mediator is infinitesimal.

        An important point is that the randomness in the supply side will vanish due to the continuum limit consideration. Therefore, if the strategy used by the suppliers is known, the amount of globally acquired supply can be perfectly calculated.

    \paragraph{Mediator Demand Bid}

        The action done by the mediator in the demand side is the placement of the quote demand prices, so it will be represented by a measurable function $f: [0,1) \to \mathbb{R}$, where for any mediator $m \in [0,1)$, the counter demand price bid is $f(m)$. Since the market clearing process is done in a sequential gamer through the increasing sequence of price, the range of $f$ should be such that there exists a finite collection of non-decreasing real sequence $\left\{\left(r_i^{(j)}\right)_{i=1}^{\infty}\right\}_{j=1}^{m}$ for some $m \in \mathbb{N}$ such that $\text{Range}\left(f\right) = \bigcup_{j=1}^N \left\{r^{(j)}_i\right\}_{i=1}^{\infty}$.

        Similar to the supply side randomization, the demand side randomization will become deterministic and equally distributed among the mediators who place the same bid price.
        
        Unlike a supplier and a demander, a mediator may be capable of contracting non-zero amount of scarcity volume. This suggests that, under a technical assumption on the finiteness of expected utility per unit of volume, a mediator may receive non-zero utility. Therefore, the utility of mediator in the analysis will be represented as a $2$-vector instead, where the first element represents the utility mass and the second element represents the differential utility (utility density).   

    \paragraph{Resale}

        In a non-limit market, a resale is allowed if and only if the number of mediators who wins the auction in the supply side auction is less than half. In this case, we will replace the condition to be that the Lebesgue measure of the set of winning mediators is less than $\frac{1}{2}$.

        In each resale auction, only one mediator will win the entire bid of non-zero volume, so the randomness in the winning mediator can be taken into account by restricting the bidding strategy to be such that the set of (lexicographic) winners is finite or has non-zero Lebesgue measure.

\section{Equilibrium with Continuum of Agents}
\label{section:equilibrium with continuum of agents}

    In this part, we will consider an equilibrium concept in a limit market where the number of agencies explodes to a continuum limit. In this section, we will then introduce some relevant equilibrium notion in the model.

    \subsection{Normal Form Game with Continuum of Agents}

        The agents set is $[0,1)$. Each agent is endowed with the admissible action sets $\mathcal{A} \subseteq \mathbb{R}^d$ for some $d \in \mathbb{N}$.

        The admissible joint strategies set is a collection of functions $\mathcal{S}$, where for each $s \in \mathcal{S}$, 
        \begin{itemize}
            \item $s: [0,1) \to \mathcal{A}$ is measurable;
            \item For any $a \in \mathcal{A}$, $y \in [0,1)$, there exists $s' \in \mathcal{S}$ such that, for any $x \in [0,1)$
            \begin{align*}
                s'(x)
                =
                \begin{cases}
                    s(x) &\text{ if } x \ne y
                    \\
                    a &\text{ if } x = y
                \end{cases}
                \text{.}
            \end{align*}
        \end{itemize}

        Note that, unlike the (pure strategy) game characterization introduced in Nash's seminal papers~\cite{nash1950non, nash1950equilibrium}, we have that the actions that a group of agents can take is not completely independent of the action of other agents in the system.\footnote{This is an inherent problem in a continuous formulation for game theoretic decision analysis~\cite{aumann1964, aumann1968values}. However, we can see that this is not an unreasonable assumption, since the restriction of action based on the action of others have been used in the game-theoretic formulation of general equilibrium in the paper~\cite{arrow1954}.} 

        For each admissible joint strategy $s \in S$, a measurable conditional utility function $U^{(s)}: [0,1) \to \mathbb{R}^n$ for some $n \in \mathbb{N}$ is given, meaning that, if every agent $x \in [0,1)$ selects action $s(x) \in \mathcal{A}$, then each agent $y \in [0,1)$ will receive a lexicographical utility of $U^{(s)}(y)$.

        The game is characterized by $\left(\mathcal{A}, \mathcal{S}, n, \left(U^{(s)}\right)_{s \in \mathcal{S}}\right)$.

        \subsection{Nash Equilibrium}

            The utility comparison will be done in a lexicographic manner.

            \begin{definition}
                (Lexicographic Order Operator)
                
                For any $n \in \mathbb{N}$,
                \begin{itemize}
                    \item 
                    The weak lexicographical relationship operator $\succcurlyeq_n$ is defined such that, for any $x, y \in \mathbb{R}^n$,
                    \begin{align*}
                        [x \succcurlyeq_n y]
                        \iff
                        [
                        \min\left(\{i \in \mathbb{N} \cap (-\infty,n]: x_i > y_i\} \cup \{n+1\}\right) \le \min\left(\{i \in \mathbb{N} \cap (-\infty,n]: x_i < y_i\} \cup \{n+1\}\right)
                        ]
                        \text{;}
                    \end{align*}
                    \item 
                    The strict lexicographical relationship operator $\succ_n$ is defined such that, for any $x, y \in \mathbb{R}^n$,
                    \begin{align*}
                        [x \succ_n y]
                        \iff
                        [
                        x \succcurlyeq y \land x \ne y
                        ]
                        \text{.}
                    \end{align*}
                \end{itemize}
            \end{definition}

            A Nash equilibrium is an equilibrium, where any agent $y \in [0,1)$ will not benefit from deviating from the equilibrium strategy.

            \begin{definition} (Nash Equilibrium)
            
                For any normal form game with continuum of agents $\left(\mathcal{A}, \mathcal{S}, n, \left(U^{(s)}\right)_{s \in \mathcal{S}}\right)$, 
                the strategy $s \in \mathcal{S}$ is a Nash equilibrium strategy of the game, if, for any $y \in [0,1)$, $a \in \mathcal{A}$, by defining the altered strategy $\hat{s}^{(y,a)} \in \mathcal{S}$ such that, for any $x \in [0,1)$,
                \begin{align*}
                    \hat{s}^{(y,a)}(x)
                    =
                    \begin{cases}
                        s(x) &\text{ if } x \ne y
                        \\
                        a &\text{ if } x = y
                    \end{cases}
                    \text{,}
                \end{align*}
                we have that
                \begin{align*}
                    U^{\left(s\right)}(y)
                    \succcurlyeq
                    U^{\left(\hat{s}^{(y,a)}\right)}(y)
                    \text{.}
                \end{align*}
            \end{definition}

        \subsection{Betrayal-Free-Collusion-Free Equilibrium}

            In several applications, a collection of Nash equilibrium strategies can be too big. Since the utility of an agent in most game can often be model as a function of individual action and an aggregated macroscopic result, which cannot be altered by a single individual action.\footnote{Some examples will be provided to show how this new notion of equilibrium can produce more intuitive results even under simple circumstances.} Thus, it is beneficial to consider a notion of equilibrium when more than $1$ agents can agree to deviate.

            \begin{definition}
            \label{definition:goodborel}
                A collection of good Borel subsets of $[0,1)$ is
                \begin{align*}
                    \mathcal{G}
                    =
                    \left\{
                    B \in \mathcal{B}\left([0,1)\right)
                    :
                    \vert B \vert \in \mathbb{N} \lor \lambda(B) \in (0,1]
                    \right\}
                    \text{.}
                \end{align*}
            \end{definition}

            \begin{definition}
            \label{definition:goodboreluniform}
                For any $G \in \mathcal{G}$, a conditional uniform measure $I_G$ is a measure over $\mathcal{B}([0,1))$ such that, for any $A \in \mathcal{B}([0,1))$,
                \begin{align*}
                    I_G (A)
                    =
                    \begin{cases}
                        \frac{\vert A \cap G \vert}{\vert G \vert}
                        &\text{ if }
                        \vert G \vert \ne \infty\\
                        \frac{\lambda\left(A \cap G\right)}{\lambda\left(G\right)}
                        &\text{ if }
                        \vert G \vert = \infty
                    \end{cases}
                    \text{.}
                \end{align*}
            \end{definition}

            \begin{definition} (Collusion-Free Equilibrium)
            
                For any normal form game with continuum of agents $G = \left(\mathcal{A}, \mathcal{S}, n, \left(U^{(s)}\right)_{s \in \mathcal{S}}\right)$, 
                the strategy $s \in \mathcal{S}$ is a collusion-free equilibrium strategy of the game, if, by defining, for any $G \in \mathcal{G}$, $f \in \mathcal{S}$, an admissible collusion deviated strategies set
                \begin{align*}
                    \mathcal{S}^{(f,G)}
                    =
                    \left\{s' \in \mathcal{S}:
                    \forall x \in [0,1) - G [s'(x) = f(x)] 
                    \right\}
                    \text{,}
                \end{align*}
                we will have that, for any $G \in \mathcal{G}$, $s' \in \mathcal{S}^{(s,G)}$,
                \begin{align*}
                    I_G\left(\left\{
                        x \in G :
                        U^{(s)}(x) \succ_n U^{(s')}(x)
                    \right\}\right)
                    >
                    0\text{.}
                \end{align*}
            \end{definition}

            Since, for any $x \in [0,1)$, $\{x\} \in \mathcal{G}$, we have that the collection of collusion-free equilibrium strategies is a subset of the collection of Nash equilibrium strategies.

            However, the collusion-free equilibrium strategies can be too small in many applications, since the collusion can force every constituents of the group to commit to the planned deviation. In conventional Cournot competition, this means that all producers can commit to deviate to the choice of production as if they were a monopoly.

            Thus, we will only consider the colluded deviation when each subgroup cannot further deviate to benefit more while hurting the rest of the group.

            \begin{definition} (Betrayal-Free-Collusion-Free Equilibrium)
            
                For any normal form game with continuum of agents $\left(\mathcal{A}, \mathcal{S}, n, \left(U^{(s)}\right)_{s \in \mathcal{S}}\right)$, 
                the strategy $s \in \mathcal{S}$ is a collusion-free equilibrium strategy of the game, if, by defining, for any $G \in \mathcal{G}$, $f \in \mathcal{S}$, an admissible collusion deviated strategies set
                \begin{align*}
                    \mathcal{S}^{(f,G)}
                    =
                    \left\{s' \in \mathcal{S}:
                    \forall x \in [0,1) - G [s'(x) = f(x)] 
                    \right\}
                    \text{,}
                \end{align*}
                we will have that, for any $G \in \mathcal{G}$, $s' \in \mathcal{S}^{(s,G)}$,
                \begin{align*}
                    I_G\left(\left\{
                        x \in G :
                        U^{(s)}(x) \succ_n U^{(s')}(x)
                    \right\}\right)
                    >
                    0
                    \text{,}
                \end{align*}
                \textbf{or} there exists some $G' \subseteq G$ with $G', \left(G - G'\right) \in \mathcal{G}$ such that there  exists some $s'' \in \mathcal{S}^{\left(s', G'\right)}$ such that
                \begin{align*}
                    I_{G'}\left(\left\{
                        x \in G' :
                        U^{(s'')}(x) \succ_n U^{(s')}(x)
                    \right\}\right)
                    =
                    1
                    \text{,}
                \end{align*}
                and
                \begin{align*}
                    I_{\left(G-G'\right)}\left(\left\{
                        x \in G-G' :
                        U^{(s)}(x) 
                        \succcurlyeq_n 
                        U^{(s'')}(x)
                    \right\}\right)
                    >
                    0
                    \text{.}
                \end{align*}
            \end{definition}

            Note that, for any $x \in [0,1)$, $\{x\} \in \mathcal{G}$, there does not exists any $G \subseteq \{x\}$ such that $\{x\} - G$ and $G$ are elements of $\mathcal{G}$. Therefore, the collection of betrayal-free-collusion-free equilibrium strategies is a subset of the collection of Nash equilibrium strategies.

            \begin{figure} [H]
                \begin{subfigure}{1\textwidth}
                  \centering
                  \includegraphics[width=0.85\linewidth]{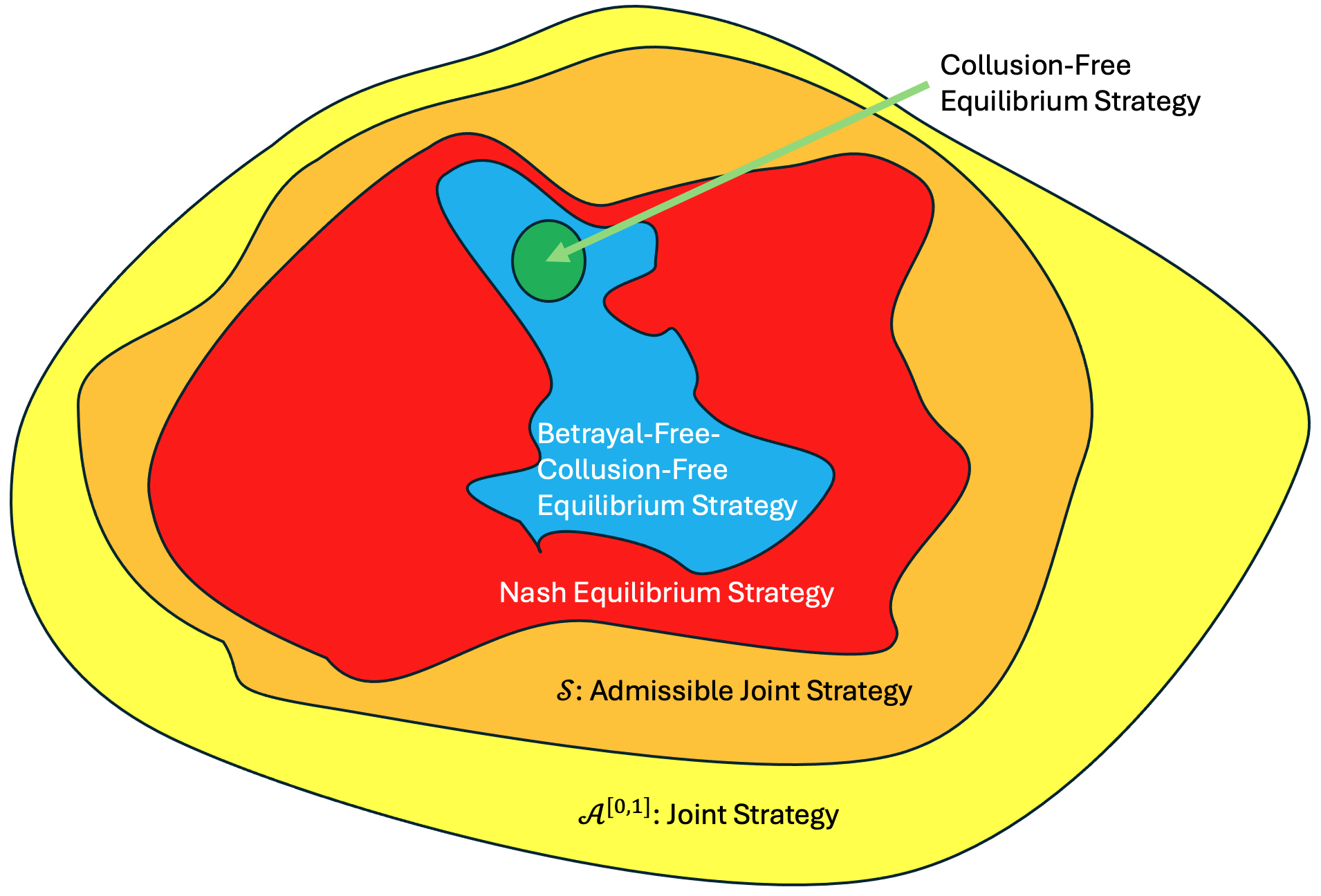}
                  \caption{Schematic Diagram comparing Different Types of Equilibria}
                \end{subfigure}
                \caption{The admissible joint strategy $\mathcal{S}$ does not contain every function $f: [0,1] \to \mathcal{A}$, since it only contains the joint strategy that the utility for each agent is well-defined. Within the $3$ notions of equilibria introduced in this section, Nash equilibria set is the biggest. The betrayal-free-collusion-free equilibria set is a subset of the Nash equilibria set, and the collusion-free equilibria set is a subset of the betrayal-free-collusion-free equilibria set.}
            \end{figure}

            \paragraph{Examples on Oligopoly Pricing}

                Below we pose a simple example when the adoption of equilibrium as the betrayal-free-collusion-free equilibrium produces a more intuitive result comparing to that provided by the Nash equilibrium, and a simple example when the use of the betrayal-free-collusion-free equilibrium provides a more intuitive result comparing to that provided by the collusion-free equilibrium. 
                
                The examples draw an intuition from the case when the scarcity is distributed uniformly to each retail $m \in [0,1]$. The problem faced by the retails is then to set a selling price in order to maximize the revenue conditioned on the prices set by other retails. This is in line with the residual demand analysis used extensively in industrial organization literature~\cite{mahoney}. However, since we have a continuum of agents, the residual demand by having one agent removed is equivalent to the residual demand as observed by a possible market entree. To make the residual demand  by having a group of agents removed, we then required to make such group (collusion) to have a non-zero Lebesgue measure. The supply shown in the graph is the aggregated amount of supply.
                \begin{figure} [H]
                    \begin{subfigure}{0.5\textwidth}
                      \centering
                      \includegraphics[width=0.85\linewidth]{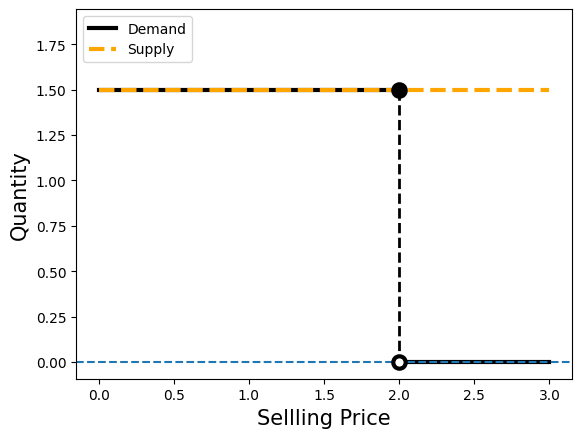}
                      \caption{Too Big Nash Equilibria}
                    \end{subfigure}%
                    \begin{subfigure}{0.5\textwidth}
                      \centering
                      \includegraphics[width=0.85\linewidth]{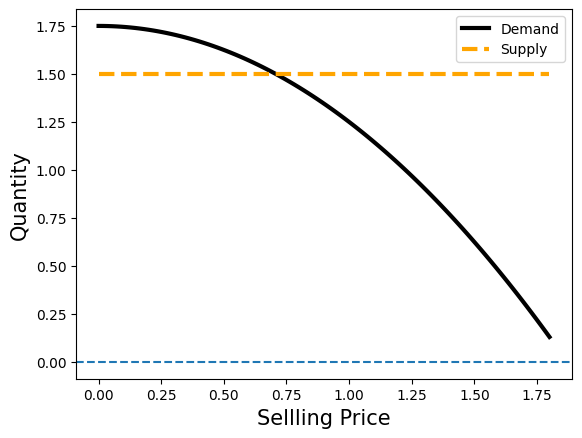}
                      \caption{Too Small (empty) Collusion-Free Equilibria}
                    \end{subfigure}
                    \caption{(a) It is easy to show that in this framework, every Walrasian equilibrium will also be a Nash equilibrium, so there exists a continuum of Nash equilibrium by having every agent set the selling price to be any value in $(0,2]$. However, the betrayal-free-collusion-free uniquely exists and is when every retail set the selling price of $2$, which is the profit maximising price.\\(b) If every market clearing price (Walrasian equilibrium) is lower than the monopoly price, we will have that, we will have that there cannot be a collusion-free equilibrium. However, in this example, the betrayal-free-collusion-free uniquely exists and coincide with the unique Walrasian equilibrium as expected.}
                \end{figure}

        \subsection{Pseudo-Equilibrium}

            The market equilibrium will be analyzed as a sub-game equilibrium. However, since the utility can be discontinuous and the action space is not necessarily closed or compact, a Nash equilibrium may not exist. However, in some cases, we still have that there exists a convergence of strategies and economic results\footnote{In a sub-game notation, the economic results can be the relevant state variables.}, and its limit of economic result, including the utility, can be used as if it is an equilibrium utility, even when the limit strategy is not an equilibrium strategy and does not support such utility profile.

            We define an augmented game to be game that includes an economic result.

            Let the set of economic results be $\mathcal{R} \subseteq \mathbb{R}^d$ for some $d \in \mathbb{N}$. 

            We are endowed with an economic results function $E: \mathcal{S} \to \mathcal{R}$.

            The augmented game is characterized by $\left(\mathcal{A}, \mathcal{S}, n, \left(U^{(s)}\right)_{s \in \mathcal{S}}, \mathcal{R}, E\right)$.

            \begin{definition}
                (Pseudo-Equilibrium) 
                
                For any augmented normal form game with continuum of agents $\left(\mathcal{A}, \mathcal{S}, n, \left(U^{(s)}\right)_{s \in \mathcal{S}}, \mathcal{R}, E\right)$, 
                a tuple $\left(\hat{u}, \hat{r} \right)$ with $\hat{u}: [0,1) \to \mathbb{R}^n$ being a measurable function and $\hat{r} \in \mathcal{R}$ is a pseudo-equilibrium of the game, if, there exists an infinite sequence $\left(s_i\right)_{i=1}^{\infty}$ with $\left\{s_i\right\}_{i=1}^{\infty} \subseteq \mathcal{S}$ such that
                \begin{itemize}
                    \item (Individually Better Response)
                    For any $x \in [0,1)$, for any $i \in \mathbb{N}$, by defining $s^{(x)}_i \in \mathcal{S}$ such that, for any $y \in [0,1)$,
                    \begin{align*}
                        s^{(x)}_i(y) = 
                        \begin{cases}
                            s_{i+1}(x) &\text{ if } y = x
                            \\
                            s_i(y) &\text{ if } y \ne x
                        \end{cases}
                        \text{,}
                    \end{align*}
                    we have that
                    \begin{align*}
                        U^{\left(s^{(x)}_i\right)}(x) \succcurlyeq_n
                        U^{\left(s_i\right)}(x)
                        \text{;}
                    \end{align*}

                    \item (Convergence in Result)
                    $U^{\left(s_i\right)}$ converges elementwise to $\hat{u}$ and $E\left(s_i\right)$ converges to $\hat{r}$;

                    \item (No Stagnation)
                    For any $a \in \mathcal{A}$, $x \in [0,1)$, $f \in \mathcal{S}$, by defining $f^{(x,a)} \in \mathcal{S}$ such that, for any $y \in [0,1)$, 
                    \begin{align*}
                        f^{(x,a)}(y) = 
                        \begin{cases}
                            a &\text{ if } y = x
                            \\
                            f(y) &\text{ if } y \ne x
                        \end{cases}
                        \text{,}
                    \end{align*}
                    we have that $\mathbf{1}_{\hat{u}(x) \succcurlyeq_n U^{\left(s_i^{(x,a)}\right)}(x)}$ converges to $1$.
                \end{itemize}
            \end{definition}

            We have that, if $s \in \mathcal{S}$ is a Nash equilibrium strategy, then $\left(U^{(s)}, E(s)\right)$ is a pseudo-equilibrium by consider a sequence $(s_i)_{i=1}^{\infty}$ with $\{s_i\}_{i=1}^\infty = \{s\}$.

            \subsubsection{Pseudo-Equilibrium for Known Common Value Auction}
            \label{subsubsection:commonlyknownvalue}

                Consider an auction system such that, there exists some value function $v: \mathbb{R} \to \mathbb{R}$ such that
                \begin{itemize}
                    \item $v$ is left-continuous;
                    \item There exists some $c \in \mathbb{R}$ such that, for any $x \le c$, $v(x) < 0$;
                    \item There exists some $b \in \mathbb{R}$ such that $v(b) > 0$.
                \end{itemize}
                
                Let $\mathcal{S}$ be a collection of measurable functions containing every function $f:[0,1) \to \mathbb{R}$ such that $\argmin_{x \in [0,1)} f(x) \in \mathcal{G}$.

                Let $n = 2$, and, for any $s \in \mathcal{S}$, $U^{(s)}: [0,1) \to \mathbb{R}^2$ is defined such that, for any $x \in [0,1)$,
                \begin{align*}
                    U^{(s)}(x)
                    =
                    \begin{cases}
                        (0,0) &\text{ if } x \not\in \argmin_{x \in [0,1)} s(x)\\
                        \begin{cases}
                            \left( \frac{v\left(\min_{y \in [0,1)} s(y)\right)}{\left\vert \argmin_{y \in [0,1)} s(y)
                            \right\vert},
                            \frac{v\left(\min_{y \in [0,1)} s(y)\right)}{\left\vert \argmin_{y \in [0,1)} s(y)
                            \right\vert}
                            \right)
                            &\text{ if }
                            \left\vert \argmin_{y \in [0,1)} s(y)
                            \right\vert \ne \infty
                            \\
                            \left(0, \frac{v\left(\min_{y \in [0,1)} s(y)\right)}{\lambda\left( \argmin_{y \in [0,1)} s(y)
                            \right)}
                            \right)
                            &\text{ if }  
                            \left\vert \argmin_{y \in [0,1)} s(y)
                            \right\vert = \infty
                        \end{cases}
                        &\text{ if } x \in \argmin_{y \in [0,1)} s(y)
                    \end{cases}
                    \text{.}
                \end{align*}

                Let $\mathcal{R} = \mathbb{R}$, and the function $E:\mathcal{S} \to \mathcal{R}$ is defined such that, for any $s \in \mathcal{S}$,
                \begin{align*}
                    E(s) = \min_{y \in [0,1)} s(y)
                    \text{.}
                \end{align*}

                We will have that a pseudo-equilibrium exists by considering a sequence of strategies $\left(s_i\right)_{i=1}^{\infty}$ such that, for any $i \in \mathbb{N}$,
                \begin{align*}
                    s_i(x) = y_i
                \end{align*}
                for all $x \in [0,1)$, where the sequence $\left(y_i\right)_{i=1}^\infty$ is such that,
                \begin{itemize}
                    \item there exists some $y \in \mathbb{R}$ with $\lim_{z \to y^+} v(z) \ge 0 \land \left\{v(z)\right\}_{z \in (-\infty,y]} \subseteq (-\infty, 0]$ such that $y_i \downarrow y$ as $i \to \infty$, \textbf{or},
                    \item there exists some $y \in \mathbb{R}$ with $v(y) \ge 0 \land \left\{v(z)\right\}_{z \in (-\infty,y)} \subseteq (-\infty, 0]$ such that $y_i = y$ for all $i \in \mathbb{N}$.
                \end{itemize}

                \begin{figure} [H]
                    \begin{subfigure}{0.5\textwidth}
                      \centering
                      \includegraphics[width=0.8\linewidth]{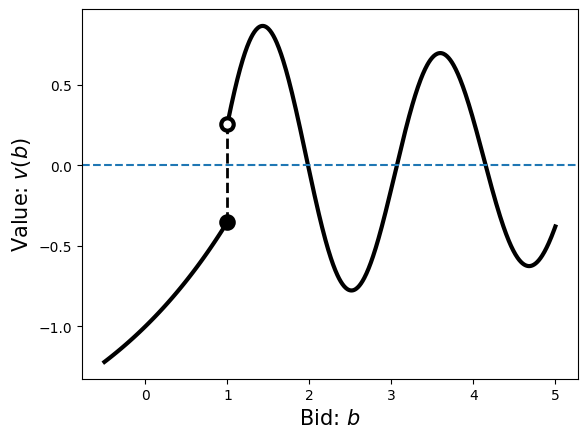}
                      \caption{Left Continuous Value Function $v$}
                    \end{subfigure}%
                    \begin{subfigure}{0.5\textwidth}
                      \centering
                      \includegraphics[width=0.8\linewidth]{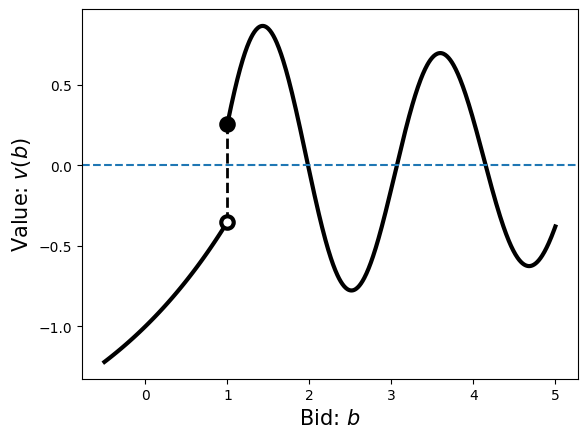}
                      \caption{Right Continuous Value Function $v$}
                    \end{subfigure}
                    \caption{In both cases, we can see that a pseudo-equilibrium exists, and correspond to the point where every agent updates the bid to be $1$ in the right-continuous function case and to be arbitrary close to $1$ from above in the left-continuous function case. For the right continuous case, it is easy to see that a Nash equilibrium uniquely exists and is when every agent bids $1$. However, a Nash equilibrium does not exist for the example above when the value function $v$ is left-continuous.}
                \end{figure}

            \subsection{Dominant Strategy}

                To simplify the analysis, we will restrict the decision making process of both suppliers and demanders to be a dominant strategy.

                \begin{definition}
                    (Dominant Strategy)
                    An action $a \in \mathcal{A}$ is a dominant strategy of an agent $x \in [0,1)$ if, for any $s\in \mathcal{S}$,
                    \begin{align*}
                        U^{\left(s^{(x,a)}\right)}(x) \succcurlyeq_n
                        U^{\left(s\right)}(x) 
                        \text{,}
                    \end{align*}
                    where $s^{(x,a)} \in \mathcal{S}$ is defined such that, for any $y \in [0,1)$
                    \begin{align*}
                        s^{(x,a)}(y)
                        =
                        \begin{cases}
                            a &\text{ if } y = x
                            \\
                            s(y) &\text{ if } y \ne x
                        \end{cases}
                        \text{.}
                    \end{align*}
                \end{definition}

\section{Competitive Market Equilibrium}
\label{section:competitiveequilibirum}

    In this section, we will first consider some regularity conditions for the competitive market equilibrium to be well-defined.

    A competitive market equilibrium, in this section\footnote{The enforcement of a dominant strategy or a conditional dominant strategy is to facilitate the analysis and can be realized to a more general notion of competitive equilibrium in the future.}, will be when
    \begin{itemize}
        \item Every supplier uses a dominant strategy in the market and off the market;
        \item Every demander uses a dominant strategy in the market and off the market;
        \item Every mediator uses a sub-game equilibrium, such that, at each state, the mediators employ a betrayal-free-collusion-free equilibrium given that the next stage would be played with a betrayal-free-collusion-free equilibrium or plays some preferable action if only one agent has the agency in the next step;
        \item If resale occurs, then the non-winning mediator then the possible results set is assumed to be the set of all pseudo-equilibrium result;
        \item Every mediator gets the same quantity of scarcity from the suppliers.
    \end{itemize}

    \subsection{Competitive Market Conditions}

        \subsubsection{Supply Side}

            We will consider conditions for the supply side, in order to
            \begin{itemize}
                \item Ensure a unique existence of a dominant strategy for each supplier;
                \item Define the collection of the amount of supply that can be acquired by mediators offering the counter supply price if every supplier uses a dominant strategy within the market.
            \end{itemize}

            \paragraph{Supplier Dominant Strategy}
            We will first consider conditions for the unique existence of a dominant strategy for each supplier. 
    
            \begin{condition}
            \label{condition:1}
                (Supplier Individual Cutoff)
                
                A competitive market $\circled{C}^{ \left(\mathcal{A}_{\text{supply}}, \mathcal{A}_{\text{demand}},
            \bar{\Omega}_{\text{supply}}, \bar{\Omega}_{\text{demand}}, w,\omega,\tilde{w},\tilde{\omega}\right)}
            \left(
                \hat{\Pi}_{\text{supply}}, \hat{\Pi}_{\text{demand}}
            \right)$ has $\left(\mathcal{A}_{\text{supply}}, \bar{\Omega}_{\text{supply}}, w\right)$ such that, 
                there exists some measurable function $\ubar{\rho}: \bar{\Omega}_{\text{supply}} \to \mathbb{R}$ such that, for any $\left(h_0, h_1, v\right) \in \bar{\Omega}_{\text{supply}}$,
                \begin{align*}
                    &\ubar{\rho}(h_0, h_1, v)
                    =\\
                    &\inf\left(\left\{
                    c_2 \in \mathbb{R}
                    :
                    \sup_{c_1 \in (0,v]} \max_{a \in \mathcal{A}_{\text{supply}}}
                    \mathbb{E}_{\epsilon \sim \text{Unif}([0,1))}\left[w\left(h_0, a, \frac{c_1}{h_1}, c_2, \epsilon\right)
                    \right] > \max_{a \in \mathcal{A}_{\text{supply}}} \mathbb{E}_{\epsilon \sim \text{Unif}([0,1))}\left[w\left(h_0, a, 0, 0, \epsilon\right)
                    \right]
                    \right\}\right)
                    \text{,}
                \end{align*}
                \begin{align*}
                    \sup_{c_1 \in (0,v]} \max_{a \in \mathcal{A}_{\text{supply}}}
                    \mathbb{E}_{\epsilon \sim \text{Unif}([0,1))}\left[w\left(h_0, a, \frac{c_1}{h_1}, \ubar{\rho}(h_0, h_1, v), \epsilon\right)
                    \right] = \max_{a \in \mathcal{A}_{\text{supply}}} \mathbb{E}_{\epsilon \sim \text{Unif}([0,1))}\left[w\left(h_0, a, 0, 0, \epsilon\right)
                    \right]
                    \text{,}
                \end{align*}
                and
                \begin{align*}
                    \left\{\max_{a \in \mathcal{A}_{\text{supply}}}
                    \mathbb{E}_{\epsilon \sim \text{Unif}([0,1))}\left[w\left(h_0, a, \frac{c_1}{h_1}, c_2, \epsilon\right)
                    \right] 
                    \right\}_{
                    \substack{c_2 \in \left(-\infty,\ubar{\rho}\left(h_0, h_1, v\right)\right)\\c_1 \in (0,v]}
                    }
                    \subseteq \left(-\infty,\max_{a \in \mathcal{A}_{\text{supply}}} \mathbb{E}_{\epsilon \sim \text{Unif}([0,1))}\left[w\left(h_0, a, 0, 0, \epsilon\right)
                    \right]\right)
                    \text{.}
                \end{align*}
            \end{condition}

            \begin{corollary} (Lowest Supply Price Willing to Take)
                For a competitive market with the condition~\ref{condition:1}, the lowest supply price willing to take function $\ubar{\rho}: \bar{\Omega}_{\text{supply}} \to \mathbb{R}$ as defined in the condition statement is unique. We will use this as the definition for $\ubar{\rho}$.
            \end{corollary}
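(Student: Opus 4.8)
The plan is to exploit the fact that the first displayed equation of Condition~\ref{condition:1} is already an explicit pointwise formula for $\ubar{\rho}$ whose right-hand side makes no reference to $\ubar{\rho}$ itself. First I would fix an arbitrary supplier characteristic triple $(h_0, h_1, v) \in \bar{\Omega}_{\text{supply}}$ and isolate the threshold set
\begin{align*}
    A(h_0, h_1, v)
    =
    \biggl\{
    c_2 \in \mathbb{R}
    :
    \sup_{c_1 \in (0,v]} \max_{a \in \mathcal{A}_{\text{supply}}}
    &\mathbb{E}_{\epsilon \sim \text{Unif}([0,1))}\left[w\left(h_0, a, \frac{c_1}{h_1}, c_2, \epsilon\right)\right]\\
    &>
    \max_{a \in \mathcal{A}_{\text{supply}}} \mathbb{E}_{\epsilon \sim \text{Unif}([0,1))}\left[w\left(h_0, a, 0, 0, \epsilon\right)\right]
    \biggr\}
    \text{.}
\end{align*}
The key observation is that $A(h_0, h_1, v)$ is determined solely by the fixed market primitives $\mathcal{A}_{\text{supply}}$ and $w$ together with the chosen triple $(h_0, h_1, v)$; it contains no dependence on any candidate cutoff function.

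Next I would note that the infimum of a fixed subset of $\mathbb{R}$ is a uniquely determined element of $[-\infty, +\infty]$. Hence any function satisfying the first displayed equation of Condition~\ref{condition:1} must take the value $\inf A(h_0, h_1, v)$ at the point $(h_0, h_1, v)$. Consequently, if $\ubar{\rho}_1$ and $\ubar{\rho}_2$ both satisfy Condition~\ref{condition:1}, then for every $(h_0, h_1, v) \in \bar{\Omega}_{\text{supply}}$ one has $\ubar{\rho}_1(h_0, h_1, v) = \inf A(h_0, h_1, v) = \ubar{\rho}_2(h_0, h_1, v)$, so the two functions coincide. This already establishes the claimed uniqueness.

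Since Condition~\ref{condition:1} is assumed to hold, at least one such function exists, which forces $\inf A(h_0, h_1, v) \in \mathbb{R}$ so that the prescribed codomain $\mathbb{R}$ is respected. The second and third displayed requirements of the condition then merely confirm the threshold structure: at the cutoff the best attainable expected payoff exactly equals the autarky value $\max_{a \in \mathcal{A}_{\text{supply}}} \mathbb{E}_{\epsilon \sim \text{Unif}([0,1))}[w(h_0, a, 0, 0, \epsilon)]$, and strictly below the cutoff it falls short, so that $A(h_0, h_1, v) \subseteq (\ubar{\rho}(h_0, h_1, v), \infty)$. One could therefore re-derive the same infimum through this monotone-threshold picture, but these facts are reassuring rather than logically necessary for the uniqueness statement.

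The only point to guard against is the illusion that the first displayed property is an implicit fixed-point characterization demanding a separate contraction or monotonicity argument; the entire substance of the proof is verifying that the defining set $A(h_0, h_1, v)$ is genuinely independent of $\ubar{\rho}$, after which uniqueness follows at once from uniqueness of infima. I therefore expect no substantive obstacle, and the argument should be short.
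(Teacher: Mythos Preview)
Your proposal is correct and takes essentially the same approach as the paper, which simply states ``This follows directly from the definition.'' Your expansion makes explicit the reason: the first displayed equation of Condition~\ref{condition:1} is an explicit pointwise formula whose right-hand side is independent of $\ubar{\rho}$, so uniqueness is immediate from uniqueness of infima.
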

            \begin{proof}
                This follows directly from the definition.
            \end{proof}
    
            \begin{condition}
            \label{condition:2}
                (Supplier Stationary Optimal Volume)
                
                A competitive market $\circled{C}^{ \left(\mathcal{A}_{\text{supply}}, \mathcal{A}_{\text{demand}},
                \bar{\Omega}_{\text{supply}}, \bar{\Omega}_{\text{demand}},
                w,\omega,\tilde{w},\tilde{\omega}\right)}
                \left(
                    \hat{\Pi}_{\text{supply}}, \hat{\Pi}_{\text{demand}}
                \right)$ with the condition~\ref{condition:1} has $\left(\mathcal{A}_{\text{supply}}, \bar{\Omega}_{\text{supply}}, w\right)$ such that, there exists a measurable function $\bar{v}: \bar{\Omega}_{\text{supply}} \to \mathbb{R}^+$ such that, for any $\left(h_0, h_1, v\right) \in \bar{\Omega}_{\text{supply}}$,
                \begin{align*}
                    \left\{\bar{v}\left(h_0, h_1, v\right)\right\}
                    =
                    \bigcap_{c_2 \in \left[\ubar{\rho}\left(h_0, h_1, v\right),\infty\right)}
                    \argmax_{c_1 \in (0,v]} \max_{a \in \mathcal{A}_{\text{supply}}}
                    \mathbb{E}_{\epsilon \sim \text{Unif}([0,1))}\left[w\left(h_0, a, \frac{c_1}{h_1}, c_2, \epsilon\right)
                    \right]
                    \text{.}
                \end{align*}
            \end{condition}
            \begin{corollary} (Willing to Contract Supply Volume)
                For a competitive market with the conditions~\ref{condition:1},\ref{condition:2}, the willing to contract differential volume function $\bar{v}: \bar{\Omega}_{\text{supply}} \to \mathbb{R}^+$ as defined in the condition statement is unique. We will use this as the definition for $\bar{v}$.
            \end{corollary}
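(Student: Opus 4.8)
The plan is to observe that uniqueness is forced by the pointwise nature of the defining equation in Market Condition~\ref{condition:2}, exactly as in the preceding corollary for $\ubar{\rho}$. First I would fix an arbitrary point $(h_0, h_1, v) \in \bar{\Omega}_{\text{supply}}$ and consider the set
\begin{align*}
    A(h_0, h_1, v)
    =
    \bigcap_{c_2 \in [\ubar{\rho}(h_0, h_1, v),\infty)}
    \argmax_{c_1 \in (0,v]} \max_{a \in \mathcal{A}_{\text{supply}}}
    \mathbb{E}_{\epsilon \sim \text{Unif}([0,1))}\left[w\left(h_0, a, \tfrac{c_1}{h_1}, c_2, \epsilon\right)\right]
    \text{.}
\end{align*}
The key observation is that $A(h_0, h_1, v)$ depends only on the market primitives $(\mathcal{A}_{\text{supply}}, \bar{\Omega}_{\text{supply}}, w)$ and on the function $\ubar{\rho}$, and not on any candidate function $\bar{v}$; moreover, by the corollary following Market Condition~\ref{condition:1}, the function $\ubar{\rho}$ is itself already unique, so $A(h_0, h_1, v)$ is a well-defined object determined entirely by the market.

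Next I would invoke Market Condition~\ref{condition:2} itself, which asserts that for every admissible function $\bar{v}$ of the stated form we have $\{\bar{v}(h_0, h_1, v)\} = A(h_0, h_1, v)$; in particular $A(h_0, h_1, v)$ is a singleton. Hence, if $\bar{v}_1$ and $\bar{v}_2$ both satisfy the condition, then $\{\bar{v}_1(h_0, h_1, v)\} = A(h_0, h_1, v) = \{\bar{v}_2(h_0, h_1, v)\}$, forcing $\bar{v}_1(h_0, h_1, v) = \bar{v}_2(h_0, h_1, v)$. Since the point was arbitrary, $\bar{v}_1 = \bar{v}_2$ as functions on $\bar{\Omega}_{\text{supply}}$, which gives uniqueness.

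There is essentially no analytic obstacle here, since the condition hands us the singleton property directly; the only thing worth checking is the logical independence noted above, namely that the defining set $A$ does not itself reference the function being characterized and that it rests on the already-established uniqueness of $\ubar{\rho}$. Measurability of the unique $\bar{v}$ is likewise guaranteed by the condition, which quantifies only over measurable candidate functions, so no separate measurability argument is required for the uniqueness claim.
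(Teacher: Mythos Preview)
Your proposal is correct and is essentially the same as the paper's approach: the paper's proof is the single line ``This follows directly from the definition,'' and what you have written is precisely the unpacking of that sentence---Condition~\ref{condition:2} pins down $\bar{v}(h_0,h_1,v)$ as the unique element of a singleton set determined by the (already unique) $\ubar{\rho}$ and the market primitives, so any two admissible $\bar{v}$'s agree pointwise.
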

            \begin{proof}
                This follows directly from the definition.
            \end{proof}
        
            These $2$ conditions suggests that, for any supplier $s$ characterized as $\left(h_0, h_1, v\right) \in \bar{\Omega}_{\text{supply}}$, a dominant strategy uniquely exists to be the strategy of
            \begin{itemize}
                \item bidding volume $v_s = \bar{v}\left(h_0, h_1, v\right)$;
                \item bidding supply price $\rho_s = \ubar{\rho}\left(h_0, h_1, v\right)$.
            \end{itemize}

            \paragraph{Real Supply}
    
            \begin{condition}
            \label{condition:3}
                (Finite Supply)
        
                A competitive market $\circled{C}^{ \left(\mathcal{A}_{\text{supply}}, \mathcal{A}_{\text{demand}},
            \bar{\Omega}_{\text{supply}}, \bar{\Omega}_{\text{demand}}, w,\omega,\tilde{w},\tilde{\omega}\right)}
            \left(
                \hat{\Pi}_{\text{supply}}, \hat{\Pi}_{\text{demand}}
            \right)$ with the conditions~\ref{condition:1},\ref{condition:2} has $\left(\mathcal{A}_{\text{supply}}, \bar{\Omega}_{\text{supply}}, w, \hat{\Pi}_{\text{supply}}\right)$ such that
                \begin{align*}
                    \mathbb{E}_{\left(h_0, h_1, v\right) \sim \hat{\Pi}_{\text{supply}}}\left[
                    \bar{v}\left(h_0, h_1, v\right)
                    \right]
                    \in \mathbb{R}^+
                    \text{.}
                \end{align*}

            \end{condition}
    
            We will define a ``real supply" to be a correspondence mapping a supply price to a supply volume.

            \begin{definition} (Real Supply)
                For a competitive market with the conditions~\ref{condition:1},\ref{condition:2},\ref{condition:3}, the real supply correspondence $\tilde{S}: \mathbb{R} \rightrightarrows \mathbb{R}^+_0$ is defined such that, for any $\rho \in \mathbb{R}$,
                \begin{align*}
                    \tilde{S}(\rho)
                    =
                    \left[
                    \mathbb{E}_{\left(h_0, h_1, v\right) \sim \hat{\Pi}_{\text{supply}}}\left[
                        \bar{v}\left(h_0, h_1, v\right)
                        \mathbf{1}_{\ubar{\rho}\left(h_0, h_1, v\right) < \rho}
                        \right],
                    \mathbb{E}_{\left(h_0, h_1, v\right) \sim \hat{\Pi}_{\text{supply}}}\left[
                        \bar{v}\left(h_0, h_1, v\right)
                        \mathbf{1}_{\ubar{\rho}\left(h_0, h_1, v\right) \le \rho}
                        \right]
                    \right]
                    \text{.}
                \end{align*}
            \end{definition}

            It can also be shown that the real supply correspondence $\tilde{S}$ is a maximal monotone operator and the limits of $\tilde{S}$ exist and are finite.

            \begin{corollary}
            \label{corollary:supplymaximalmonotone}
                For a competitive market with the conditions~\ref{condition:1},\ref{condition:2},\ref{condition:3}, the real supply correspondence $\tilde{S}: \mathbb{R} \rightrightarrows \mathbb{R}^+_0$ is maximal monotone, meaning that, for any $\rho \in \mathbb{R}$
                \begin{enumerate}
                    \item $\tilde{S}(\rho)$ is a closed compact interval on $\mathbb{R}^+_0$;
                    \item $
                            \lim_{\rho' \to \rho^+} \min\left(\tilde{S}\left(\rho'\right)\right) 
                            =
                            \max\left(\tilde{S}\left(\rho\right)\right) 
                        $;
                    \item $
                            \lim_{\rho' \to \rho^-} \max\left(\tilde{S}\left(\rho'\right)\right) 
                            =
                            \min\left(\tilde{S}\left(\rho\right)\right) 
                         $.
                \end{enumerate}
                Moreover, $\lim_{\rho \to -\infty} \max\left(\tilde{S}\left(\rho\right)\right) = 0$, and there exists a unique $S_\text{max} \in \mathbb{R}^+$ such that $\lim_{\rho \to \infty} \min\left(\tilde{S}\left(\rho\right)\right) = S_\text{max}$.
                We will define the supremum supply $S_\text{max}$ as such.
            \end{corollary}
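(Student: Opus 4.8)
The plan is to recognize both endpoints of $\tilde{S}(\rho)$ as the left- and right-continuous cumulative distribution functions of a single finite positive measure, after which every assertion reduces to elementary continuity-of-measure arguments. Concretely, using that $\bar{v}$ and $\ubar{\rho}$ are measurable (Conditions~\ref{condition:1} and~\ref{condition:2}) and that $\bar{v} > 0$, I would introduce the pushforward measure $\nu$ on $\left(\mathbb{R}, \mathcal{B}(\mathbb{R})\right)$ defined by $\nu(A) = \mathbb{E}_{(h_0,h_1,v)\sim\hat{\Pi}_{\text{supply}}}\left[\bar{v}(h_0,h_1,v)\,\mathbf{1}_{\ubar{\rho}(h_0,h_1,v)\in A}\right]$, i.e.\ the image of the weighted measure $\bar{v}\,d\hat{\Pi}_{\text{supply}}$ under the map $\ubar{\rho}$. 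Condition~\ref{condition:3} guarantees that $\nu$ is a \emph{finite} measure with total mass $\nu(\mathbb{R}) = \mathbb{E}[\bar{v}] \in \mathbb{R}^+$, and this finiteness is precisely what makes continuity from above available for the decreasing set families below. With this notation, the two endpoints of the defining interval become $\min\tilde{S}(\rho) = \nu\big((-\infty,\rho)\big)$ and $\max\tilde{S}(\rho) = \nu\big((-\infty,\rho]\big)$.

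Property (1) is then immediate: since $(-\infty,\rho)\subseteq(-\infty,\rho]\subseteq\mathbb{R}$, both endpoints lie in $[0,\nu(\mathbb{R})]$ and the lower is at most the upper, so $\tilde{S}(\rho)$ is a closed bounded subinterval of $\mathbb{R}^+_0$, its length being the atom $\nu(\{\rho\})$. For (2) and (3) I would invoke continuity of the finite measure $\nu$ along monotone families of sets. As $\rho'\downarrow\rho$ the sets $(-\infty,\rho')$ decrease to $(-\infty,\rho]$, so continuity from above gives $\lim_{\rho'\to\rho^+}\min\tilde{S}(\rho') = \nu\big((-\infty,\rho]\big) = \max\tilde{S}(\rho)$; as $\rho'\uparrow\rho$ the sets $(-\infty,\rho']$ increase to $(-\infty,\rho)$, so continuity from below gives $\lim_{\rho'\to\rho^-}\max\tilde{S}(\rho') = \nu\big((-\infty,\rho)\big) = \min\tilde{S}(\rho)$. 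The only point requiring care is the bookkeeping of which half-line each monotone family converges to, which I would pin down by the elementary set identities $\bigcap_{\rho'>\rho}(-\infty,\rho')=(-\infty,\rho]$ and $\bigcup_{\rho'<\rho}(-\infty,\rho']=(-\infty,\rho)$.

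The two limit statements follow the same template. As $\rho\to-\infty$ the sets $(-\infty,\rho]$ decrease to $\emptyset$, so continuity from above yields $\lim_{\rho\to-\infty}\max\tilde{S}(\rho)=\nu(\emptyset)=0$; as $\rho\to\infty$ the sets $(-\infty,\rho)$ increase to $\mathbb{R}$, so continuity from below yields $\lim_{\rho\to\infty}\min\tilde{S}(\rho)=\nu(\mathbb{R})=\mathbb{E}[\bar{v}]$, which Condition~\ref{condition:3} places in $\mathbb{R}^+$, and uniqueness of this limit $S_{\text{max}}$ is automatic. Finally, ordinary monotonicity of the correspondence, which justifies the term ``maximal monotone'', is the observation that for $\rho_1<\rho_2$ one has $(-\infty,\rho_1]\subseteq(-\infty,\rho_2)$, hence $\max\tilde{S}(\rho_1)=\nu\big((-\infty,\rho_1]\big)\le\nu\big((-\infty,\rho_2)\big)=\min\tilde{S}(\rho_2)$; combined with the closed-interval values and the matching one-sided limits of (1)--(3), this says exactly that the graph of $\tilde{S}$ admits no proper monotone extension. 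I do not anticipate a genuine obstacle: the whole content is that $\nu$ is finite (Condition~\ref{condition:3}) so that both directions of measure continuity apply, and the argument is essentially a translation of the standard facts about distribution functions of finite Borel measures.
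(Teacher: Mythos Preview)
Your proposal is correct and follows essentially the same line as the paper: both identify the endpoints of $\tilde{S}(\rho)$ with the values $\mathbb{E}[\bar{v}\,\mathbf{1}_{\ubar{\rho}<\rho}]$ and $\mathbb{E}[\bar{v}\,\mathbf{1}_{\ubar{\rho}\le\rho}]$ and then appeal to continuity of these expressions in $\rho$, with Condition~\ref{condition:3} supplying finiteness. Your explicit packaging via the pushforward measure $\nu$ and the set identities $\bigcap_{\rho'>\rho}(-\infty,\rho')=(-\infty,\rho]$ and $\bigcup_{\rho'<\rho}(-\infty,\rho']=(-\infty,\rho)$ makes the continuity-of-measure step cleaner and more self-contained than the paper's terse sketch, but the underlying idea is the same.
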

            \begin{proof}
                For any $\rho \in \mathbb{R}$, we have that $\lim_{\rho' \to \rho^+} \mathbb{E}_{\left(h_0, h_1, v\right) \sim \hat{\Pi}_{\text{supply}}}\left[\bar{v}\left(h_0, h_1, v\right)\mathbf{1}_{\ubar{\rho}\left(h_0, h_1, v\right) < \rho'}\right]
                = \\ \mathbb{E}_{\left(h_0, h_1, v\right) \sim \hat{\Pi}_{\text{supply}}}\left[\bar{v}\left(h_0, h_1, v\right)\mathbf{1}_{\ubar{\rho}\left(h_0, h_1, v\right) \ge \rho}\right]$. Similarly, for any $\rho \in\mathbb{R}$, $\lim_{\rho' \to \rho^+} \min\left(\tilde{S}\left(\rho'\right)\right) =
                \max\left(\tilde{S}\left(\rho\right)\right)$. Thus, the correspondence $\tilde{S}$ is a maximal monotone operator.

                From the condition\ref{condition:1},
                $\lim_{\rho \to -\infty} \max\left(\tilde{S}\left(\ubar{\rho}\right)\right) = 0$, and 
                $\lim_{\rho \to -\infty} \min\left(\tilde{S}\left(\ubar{\rho}\right)\right)= \mathbb{E}_{\left(h_0, h_1, v\right) \sim \hat{\Pi}_{\text{supply}}}\left[\bar{v}\left(h_0, h_1, v\right)\right] \in \mathbb{R}^+$, because of the condition~\ref{condition:3}.
            \end{proof}

        \subsubsection{Demand}

            Similarly, we want to both ensure the unique existence of a dominant strategy for each mediator and well define the demand volume at different demand price.

            \paragraph{Demander Dominant Strategy}
            The conditions will be analogous to those of the suppliers due to the continuum limit. However, the difference will be more apparent when a non-limit market is considered. In a non-limit market, a supplier can get a dominant strategy by further impose that there exists a supply price where the corresponding supplier will be indifferent between securing and not securing a contract.\footnote{However, there is no dominant strategy in a non-limit market without imposing demanding conditions on the demander utility  or without imposing a restriction on the strategies employed by other agents in the system, such as to not allow the resale bidding.}
            
            \begin{condition}
            \label{condition:4}
                (Demander Individual Cutoff)
                
                A competitive market $\circled{C}^{ \left(\mathcal{A}_{\text{supply}}, \mathcal{A}_{\text{demand}},
            \bar{\Omega}_{\text{supply}}, \bar{\Omega}_{\text{demand}}, w,\omega,\tilde{w},\tilde{\omega}\right)}
            \left(
                \hat{\Pi}_{\text{supply}}, \hat{\Pi}_{\text{demand}}
            \right)$ has $\left(\mathcal{A}_{\text{demand}}, \bar{\Omega}_{\text{demand}}, \omega\right)$ such that, 
                there exists some $\bar{r}: \bar{\Omega}_{\text{demand}} \to \mathbb{R}$ such that, for any $\left(\eta_0, \eta_1\right) \in \bar{\Omega}_{\text{demand}}$,
                \begin{align*}
                    &\bar{r}\left(\eta_0, \eta_1\right)
                    =\\
                    &\sup\left(\left\{\zeta_2 \in \mathbb{R}: \sup_{\zeta_1 \in \mathbb{R}^+} \max_{a \in \mathcal{A}_{\text{demand}}}
                    \mathbb{E}_{\epsilon \sim \text{Unif}([0,1))}\left[\omega\left(\eta_0, a, \frac{\zeta_1}{\eta_1}, \zeta_2, \epsilon\right)
                    \right] > \max_{a \in \mathcal{A}_{\text{demand}}} \mathbb{E}_{\epsilon \sim \text{Unif}([0,1))}\left[\omega\left(\eta_0, a, 0, 0, \epsilon\right)
                    \right]
                    \right\}\right)
                    \text{,}
                \end{align*}
                \begin{align*}
                    \sup_{\zeta_1 \in \mathbb{R}^+} \max_{a \in \mathcal{A}_{\text{demand}}}
                    \mathbb{E}_{\epsilon \sim \text{Unif}([0,1))}\left[\omega\left(\eta_0, a, \frac{\zeta_1}{\eta_1}, \bar{r}\left(\eta_0, \eta_1\right), \epsilon\right)
                    \right] = \max_{a \in \mathcal{A}_{\text{demand}}} \mathbb{E}_{\epsilon \sim \text{Unif}([0,1))}\left[\omega\left(\eta_0, a, 0, 0, \epsilon\right)
                    \right]
                    \text{,}
                \end{align*}
                and
                \begin{align*}
                    \left\{\max_{a \in \mathcal{A}_{\text{demand}}}
                    \mathbb{E}_{\epsilon \sim \text{Unif}([0,1))}\left[\omega\left(\eta_0, a, \frac{\zeta_1}{\eta_1}, \zeta_2, \epsilon\right)
                    \right] 
                    \right\}_{
                    \substack{\zeta_2 \in \left( \bar{r}\left(\eta_0, \eta_1\right), \infty\right)\\ \zeta_1 \in \mathbb{R}^+}
                    }
                    \subseteq \left(-\infty,\max_{a \in \mathcal{A}_{\text{supply}}} \mathbb{E}_{\epsilon \sim \text{Unif}([0,1))}\left[w\left(h_0, a, 0, 0, \epsilon\right)
                    \right]\right)
                    \text{.}
                \end{align*}
            \end{condition}

            \begin{corollary} (Highest Demand Price Willing to Take)
                For a competitive market with the condition~\ref{condition:4}, the highest demand price willing to take function $\bar{r}: \bar{\Omega}_{\text{demand}} \to \mathbb{R}$ as defined in the condition statement is unique. We will use this as the definition for $\bar{r}$.
            \end{corollary}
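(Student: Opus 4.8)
The plan is to read off uniqueness directly from the first of the three defining relations in Condition~\ref{condition:4}, which already exhibits the value $\bar{r}(\eta_0,\eta_1)$, for each $(\eta_0,\eta_1) \in \bar{\Omega}_{\text{demand}}$, as the explicit supremum
\begin{align*}
    \bar{r}(\eta_0,\eta_1)
    =
    \sup\left(\left\{\zeta_2 \in \mathbb{R}:
    \sup_{\zeta_1 \in \mathbb{R}^+} \max_{a \in \mathcal{A}_{\text{demand}}}
    \mathbb{E}_{\epsilon \sim \text{Unif}([0,1))}\left[\omega\left(\eta_0, a, \tfrac{\zeta_1}{\eta_1}, \zeta_2, \epsilon\right)\right]
    > \max_{a \in \mathcal{A}_{\text{demand}}} \mathbb{E}_{\epsilon \sim \text{Unif}([0,1))}\left[\omega\left(\eta_0, a, 0, 0, \epsilon\right)\right]
    \right\}\right).
\end{align*}
The key observation is that the set whose supremum is taken depends only on the market primitives $\eta_0$, $\eta_1$, the admissible action set $\mathcal{A}_{\text{demand}}$, and the utility function $\omega$; it makes no reference to $\bar{r}$ itself. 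Hence, once $(\eta_0,\eta_1)$ is fixed, the right-hand side is a single fixed value.

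First I would suppose that $\bar{r}_1$ and $\bar{r}_2$ both satisfy the conditions of Condition~\ref{condition:4}. Applying the displayed relation to each at an arbitrary $(\eta_0,\eta_1) \in \bar{\Omega}_{\text{demand}}$, both $\bar{r}_1(\eta_0,\eta_1)$ and $\bar{r}_2(\eta_0,\eta_1)$ are forced to equal the same supremum; since $(\eta_0,\eta_1)$ was arbitrary, $\bar{r}_1 = \bar{r}_2$ as functions on $\bar{\Omega}_{\text{demand}}$. This is exactly parallel to the argument for the supply-side Lowest Supply Price Willing to Take corollary, with $\omega$, $\eta$, and $\bar{r}$ playing the roles of $w$, $h$, and $\ubar{\rho}$, and with the supremum replacing the infimum.

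The remaining two defining relations (the equality of the supremum-utility with the no-contract baseline exactly at $\zeta_2 = \bar{r}(\eta_0,\eta_1)$, and the strict inequality for all $\zeta_2 > \bar{r}(\eta_0,\eta_1)$) play no role in the uniqueness argument; they are consistency constraints whose satisfiability is part of the \emph{existence} content of Condition~\ref{condition:4}, not its uniqueness content. Consequently I expect no genuine obstacle here: the only point worth verifying is that the supremum is single-valued and well-defined as written, which holds because it is the supremum of a fixed subset of $\mathbb{R}$ (and is real-valued under the standing existence hypothesis $\bar{r}: \bar{\Omega}_{\text{demand}} \to \mathbb{R}$). Thus the ``follows directly from the definition'' slogan used for the supply-side corollary applies verbatim.
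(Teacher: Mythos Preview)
Your proposal is correct and matches the paper's approach exactly: the paper's proof is the single line ``This follows directly from the definition,'' and your argument is precisely the spelled-out version of that observation --- the first displayed relation in Condition~\ref{condition:4} already pins $\bar{r}(\eta_0,\eta_1)$ to an explicit supremum depending only on the primitives, so uniqueness is immediate.
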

            \begin{proof}
                This follows directly from the definition.
            \end{proof}
        
            \begin{condition}
            \label{condition:5}
                (Demander Stationary Optimal Volume)
                
                A competitive market $\circled{C}^{ \left(\mathcal{A}_{\text{supply}}, \mathcal{A}_{\text{demand}},
            \bar{\Omega}_{\text{supply}}, \bar{\Omega}_{\text{demand}}, w,\omega,\tilde{w},\tilde{\omega}\right)}
            \left(
                \hat{\Pi}_{\text{supply}}, \hat{\Pi}_{\text{demand}}
            \right)$ with the condition~\ref{cdomntion:4} has $\left(\mathcal{A}_{\text{demand}}, \bar{\Omega}_{\text{demand}}, \omega\right)$ such that 
                there exists a measurable function $\ubar{v}: \bar{\Omega}_{\text{demand}} \to \mathbb{R}^+$ such that, for any $\left(\eta_0, \eta_1\right) \in \bar{\Omega}_{\text{demand}}$,
                \begin{align*}
                    \left\{\ubar{v}\left(\eta_0, \eta_1\right)\right\}
                    =
                    \bigcap_{\zeta_2 \in \left(-\infty, \bar{r}\left(\eta_0, \eta_1\right)\right)}
                    \argmax_{\zeta_1 \in \mathbb{R}^+} \max_{a \in \mathcal{A}_{\text{demand}}}
                    \mathbb{E}_{\epsilon \sim \text{Unif}([0,1))}\left[\omega\left(\eta_0, a, \frac{\zeta_1}{\eta_1}, \zeta_2, \epsilon\right)
                    \right]
                    \text{.}
                \end{align*}
            \end{condition}
            \begin{corollary} (Willing to Contract Demand Volume)
                For a competitive market with the conditions~\ref{condition:4},\ref{condition:5}, the willing to contract differential volume function $\ubar{v}: \bar{\Omega}_{\text{demand}} \to \mathbb{R}^+$ as defined in the condition statement is unique. We will use this as the definition for $\ubar{v}$.
            \end{corollary}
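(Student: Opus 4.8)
The plan is to mirror the argument already used for the supply-side analogue (the Corollary on Willing to Contract Supply Volume, whose proof is ``This follows directly from the definition''): uniqueness is immediate once one observes that the right-hand side of the defining identity in Condition~\ref{condition:5} is completely determined by the fixed market primitives and makes no reference to the function $\ubar{v}$ being characterized. As a preliminary bookkeeping step I would record that $\bar{r}$ is already uniquely pinned down, since the preceding corollary (Highest Demand Price Willing to Take) fixes $\bar{r}: \bar{\Omega}_{\text{demand}} \to \mathbb{R}$ uniquely from Condition~\ref{condition:4}; hence the index set $\left(-\infty, \bar{r}(\eta_0,\eta_1)\right)$ in the intersection is unambiguous.

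Next I would fix an arbitrary $(\eta_0,\eta_1) \in \bar{\Omega}_{\text{demand}}$ and isolate the set
\[
    T(\eta_0,\eta_1) = \bigcap_{\zeta_2 \in \left(-\infty, \bar{r}(\eta_0,\eta_1)\right)} \argmax_{\zeta_1 \in \mathbb{R}^+} \max_{a \in \mathcal{A}_{\text{demand}}} \mathbb{E}_{\epsilon \sim \text{Unif}([0,1))}\left[\omega\left(\eta_0, a, \frac{\zeta_1}{\eta_1}, \zeta_2, \epsilon\right)\right].
\]
This set is built purely from $\left(\mathcal{A}_{\text{demand}}, \bar{\Omega}_{\text{demand}}, \omega\right)$ and the already-unique $\bar{r}$; it does not invoke any candidate for $\ubar{v}$. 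Condition~\ref{condition:5} asserts exactly that $T(\eta_0,\eta_1)$ equals the singleton $\{\ubar{v}(\eta_0,\eta_1)\}$, and a singleton has precisely one element, so the value $\ubar{v}(\eta_0,\eta_1)$ is forced to be that element.

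Finally, to conclude uniqueness I would take any two functions $\ubar{v}_1, \ubar{v}_2$ satisfying Condition~\ref{condition:5}: for every $(\eta_0,\eta_1)$ we get $\{\ubar{v}_1(\eta_0,\eta_1)\} = T(\eta_0,\eta_1) = \{\ubar{v}_2(\eta_0,\eta_1)\}$, whence $\ubar{v}_1(\eta_0,\eta_1) = \ubar{v}_2(\eta_0,\eta_1)$; since the input was arbitrary, $\ubar{v}_1 = \ubar{v}_2$. There is essentially no real obstacle here — the only point worth checking is that the defining set $T$ is genuinely independent of the function it characterizes, which is manifest from the way the intersection is written. Measurability plays no role in the uniqueness claim, since two measurable functions agreeing pointwise are equal as functions; so, as on the supply side, the statement follows directly from the definition.
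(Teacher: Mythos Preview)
Your proposal is correct and takes essentially the same approach as the paper, which dispatches the corollary with ``This follows directly from the definition.'' You have simply made explicit the one-line observation that the right-hand side of the identity in Condition~\ref{condition:5} is a singleton determined solely by the primitives (and the already-unique $\bar{r}$), so there is nothing to add.
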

            \begin{proof}
                This follows directly from the definition.
            \end{proof}

            These $2$ conditions suggests that, conditional on that there is no resale in the market, for any demander $d$ characterized as $\left(\eta_0, \eta_1\right) \in \bar{\Omega}_{\text{demand}}$, a dominant strategy uniquely exists to be the strategy of
            \begin{itemize}
                \item bidding volume $v_d = \ubar{v}\left(\eta_0, \eta_1\right)$;
                \item bidding demand price $r_d = \bar{r}\left(\eta_0, \eta_1\right)$.
            \end{itemize}

            \paragraph{Real Demand}

            We will impose an additional condition for the total aggregated demand to be finite in order to define a ``real demand" to be a correspondence being a convex hull of the amount of the demand volume from demanders that will weakly prefer securing a contract at demand price $r$ and the individual optimal volume bid over getting zero contract and the amount of the demand volume from demanders that will strictly prefer securing a contract at demand price $r$ and the individual optimal volume bid over getting zero contract.

            \begin{condition}
            \label{condition:6}
                (Finite Demand)
        
                A competitive market $\circled{C}^{ \left(\mathcal{A}_{\text{supply}}, \mathcal{A}_{\text{demand}},
            \bar{\Omega}_{\text{supply}}, \bar{\Omega}_{\text{demand}}, w,\omega,\tilde{w},\tilde{\omega}\right)}
            \left(
                \hat{\Pi}_{\text{supply}}, \hat{\Pi}_{\text{demand}}
            \right)$ with the conditions~\ref{condition:4},\ref{condition:5} has $\left(\mathcal{A}_{\text{demand}}, \bar{\Omega}_{\text{demand}}, \omega, \hat{\Pi}_{\text{demand}}\right)$ such that
                \begin{align*}
                    \mathbb{E}_{\left(\eta_0, \eta_1\right) \sim \hat{\Pi}_{\text{demand}}}
                    \left[\ubar{v}\left(\eta_0, \eta_1\right) 
                    \right]
                    \in \mathbb{R}^+
                    \text{.}
                \end{align*}
            \end{condition}

            \begin{definition} 
            \label{definition:realdemand}
            (Real Demand)
                For a competitive market with the conditions~\ref{condition:4},\ref{condition:5},\ref{condition:6}, the real demand correspondence $\tilde{D}: \mathbb{R} \rightrightarrows \mathbb{R}^+_0$ is defined such that, for any $r \in \mathbb{R}$,
                \begin{align*}
                    \tilde{D}(r)
                    =
                    \left[\mathbb{E}_{\left(\eta_0, \eta_1\right) \sim \hat{\Pi}_{\text{demand}}}
                    \left[\ubar{v}\left(\eta_0, \eta_1\right) 
                    \mathbf{1}_{\bar{r}\left(\eta_0, \eta_1\right) > r} \right], \mathbb{E}_{\left(\eta_0, \eta_1\right) \sim \hat{\Pi}_{\text{demand}}}
                    \left[\ubar{v}\left(\eta_0, \eta_1\right) 
                    \mathbf{1}_{\bar{r}\left(\eta_0, \eta_1\right) \ge r} \right]\right]
                    \text{.}
                \end{align*}
            \end{definition}

            Similar to the real supply, the negative of the real demand will be maximal monotone and bounded.

            \begin{corollary}
            \label{corollary:demandmaximalmonotone}
                For a competitive market with the conditions~\ref{condition:4},\ref{condition:5},\ref{condition:6}, the negative of the real demand correspondence $\tilde{D}: \mathbb{R} \rightrightarrows \mathbb{R}^+_0$ is maximal monotone, $\lim_{r \to \infty} \max\left(\tilde{D}\left(r\right)\right) = 0$, and there exists a unique $D_\text{max} \in \mathbb{R}^+$ such that $\lim_{r \to -\infty} \min\left(\tilde{D}\left(r\right)\right) = D_\text{max}$.
                We will define the supremum demand $D_\text{max}$ as such.
            \end{corollary}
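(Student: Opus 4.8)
The plan is to mirror the proof of Corollary~\ref{corollary:supplymaximalmonotone}, exploiting the fact that the demand side is structurally symmetric to the supply side with the direction of monotonicity reversed: demanders prefer lower prices, so a higher quoted price $r$ weakly shrinks the pool of willing demanders. Writing $\min(\tilde{D}(r)) = \mathbb{E}_{(\eta_0,\eta_1)\sim\hat{\Pi}_{\text{demand}}}[\ubar{v}(\eta_0,\eta_1)\mathbf{1}_{\bar{r}(\eta_0,\eta_1) > r}]$ and $\max(\tilde{D}(r)) = \mathbb{E}_{(\eta_0,\eta_1)\sim\hat{\Pi}_{\text{demand}}}[\ubar{v}(\eta_0,\eta_1)\mathbf{1}_{\bar{r}(\eta_0,\eta_1) \ge r}]$, I would first observe that both indicators are non-increasing in $r$ and that $\ubar{v}$ is nonnegative and integrable under Condition~\ref{condition:6}; hence each endpoint is a well-defined, finite, non-increasing function of $r$, and $\tilde{D}(r)$ is a nonempty compact subinterval of $\mathbb{R}^+_0$. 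This establishes that $\tilde{D}$ takes closed interval values and that $-\tilde{D}$ is monotone.

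For the one-sided continuity (gluing) conditions, the key step is to pass limits through the expectation by dominated convergence, with $\ubar{v}$ serving as the integrable dominating envelope. As $r' \downarrow r$, the set $\{\bar{r} > r'\}$ increases to $\{\bar{r} > r\}$ and the set $\{\bar{r} \ge r'\}$ also increases to $\{\bar{r} > r\}$ (a point with $\bar{r} = r$ never satisfies $\bar{r} \ge r' > r$), so $\lim_{r'\to r^+}\max(\tilde{D}(r')) = \mathbb{E}[\ubar{v}\mathbf{1}_{\bar{r} > r}] = \min(\tilde{D}(r))$. Symmetrically, as $r' \uparrow r$ both $\{\bar{r} > r'\}$ and $\{\bar{r} \ge r'\}$ decrease to $\{\bar{r} \ge r\}$ (a point with $\bar{r} = r$ satisfies $\bar{r} > r'$ for every $r' < r$), giving $\lim_{r'\to r^-}\min(\tilde{D}(r')) = \mathbb{E}[\ubar{v}\mathbf{1}_{\bar{r} \ge r}] = \max(\tilde{D}(r))$. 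These are exactly the maximal-monotonicity conditions for $-\tilde{D}$, with the roles of the two endpoints interchanged relative to the supply case because of the reversed monotonicity.

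Finally, for the boundary behaviour I would invoke Condition~\ref{condition:4}, which guarantees that $\bar{r}(\eta_0,\eta_1)$ is real-valued (finite) almost surely. Then $\mathbf{1}_{\bar{r} \ge r} \downarrow 0$ pointwise as $r \to \infty$, so dominated convergence yields $\lim_{r\to\infty}\max(\tilde{D}(r)) = 0$; and $\mathbf{1}_{\bar{r} > r} \uparrow 1$ pointwise as $r \to -\infty$, so monotone convergence yields $\lim_{r\to -\infty}\min(\tilde{D}(r)) = \mathbb{E}[\ubar{v}]$, which lies in $\mathbb{R}^+$ precisely by Condition~\ref{condition:6}; this value is the unique $D_\text{max}$. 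The only delicate point throughout is the bookkeeping of strict versus weak inequalities in the indicators when taking one-sided limits, namely identifying which endpoint the limit attaches to. Since $\bar{r}$ is finite and $\ubar{v}$ is integrable, there is no analytic obstacle, and the argument is a direct transcription of the supply-side proof with the inequalities flipped.
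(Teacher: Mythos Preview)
Your proposal is correct and follows exactly the approach the paper intends: the paper's own proof simply reads ``The proof is equivalent to that of the corollary~\ref{corollary:supplymaximalmonotone},'' and you have carried out precisely that transcription with the inequalities reversed. In fact you supply more detail than the paper does, including the careful bookkeeping of strict versus weak indicators under one-sided limits and the explicit invocation of dominated/monotone convergence against the integrable envelope $\ubar{v}$ furnished by Condition~\ref{condition:6}.
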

            \begin{proof}
                The proof is equivalent to that of the corollary~\ref{corollary:supplymaximalmonotone}.
            \end{proof}

        \subsubsection{Relations}

            So far, we have considered the ``real supply" and ``real demand" as correspondences of the supply price and the demand price, respectively. However, in the equilibrium analysis, the mediators are utility maximizing, so it is important to compare the supply and demand in the space that is related to the mediator utility. 

            From the subsection~\ref{subsection:independent off-market}, we restrict the attention of the mediator utility realization to be a sum of the utility created by each contract. Thus, we will consider 
            \begin{itemize}
                \item the aggregated supply volume that can be secured by incurring a specific value of the ``supply cost" per unit of supply;
                \item the aggregated demand volume that can be secured by getting a specific value of the ``demand revenue" per unit of demand.
            \end{itemize}

            \paragraph{Off-Market Irrelevance}

            After the market has finalized the physically feasible transactions, each supplier and demander will perform an additional action outside of the market. Since we consider the restriction where every supplier and demander employs dominant strategy both inside the market and outside the market, we have that every supplier and demander will select an optimal action conditional on whether a contract is secured and how the secured contract is specified. In some cases, there can be more than one optimal choices, and they can affect the utility of the counter party mediator differently. Therefore, an additional is required to ensure the uniqueness of the relevant outcome of the mediators.\footnote{The additional assumption can be placed directly upon the aggregated level, such as by having that the difference between any pair of optimal off-market choices will happen with a specific type of supplier that can be drawn with probability $0$.}

            \begin{condition}
            \label{condition:7}
                (Supplier Off-Market Irrelevance)

                A competitive market $\circled{C}^{ \left(\mathcal{A}_{\text{supply}}, \mathcal{A}_{\text{demand}},
            \bar{\Omega}_{\text{supply}}, \bar{\Omega}_{\text{demand}}, w,\omega,\tilde{w},\tilde{\omega}\right)}
            \left(
                \hat{\Pi}_{\text{supply}}, \hat{\Pi}_{\text{demand}}
            \right)$ with conditions~\ref{condition:1},\ref{condition:2} has $\left(\mathcal{A}_{\text{supply}}, \bar{\Omega}_{\text{supply}}, w, \tilde{w}\right)$ such that, for any $\left(h_0, h_1, v\right) \in \bar{\Omega}_{\text{supply}}$,
                there exists some measurable function $\bar{w}^{\left(h_0, h_1, v\right)}:  \left[\ubar{\rho}\left(h_0, h_1, v\right),\infty\right) \to \mathbb{R}$ such that, for any $\rho \in \left[\ubar{\rho}\left(h_0, h_1, v\right),\infty]\right)$,
                \begin{align*}
                    \bar{w}^{\left(h_0, h_1, v\right)}(\rho)
                    =\max_{a \in
                    \argmax_{a' \in \mathcal{A}_{\text{supply}}}
                    \mathbb{E}_{\epsilon \sim \text{Unif}([0,1))}\left[w\left(h_0, a', \frac{\ubar{v}}{h_1}\left(h_0, h_1, v\right), \rho, \epsilon\right)
                    \right] 
                    }
                    \mathbb{E}_{\epsilon \sim \text{Unif}([0,1))}\left[\tilde{w}\left(h_0, a, \frac{\ubar{v}}{h_1}\left(h_0, h_1, v\right), \rho, \epsilon\right)
                    \right]
                    \text{,}
                \end{align*}
                and
                \begin{align*}
                    \bar{w}^{\left(h_0, h_1, v\right)}(\rho)
                    =
                    \min_{a \in
                    \argmax_{a' \in \mathcal{A}_{\text{supply}}}
                    \mathbb{E}_{\epsilon \sim \text{Unif}([0,1))}\left[w\left(h_0, a', \frac{\ubar{v}\left(h_0, h_1, v\right)}{h_1}, \rho, \epsilon\right)
                    \right] 
                    }
                    \mathbb{E}_{\epsilon \sim \text{Unif}([0,1))}\left[\tilde{w}\left(h_0, a, \frac{\ubar{v}\left(h_0, h_1, v\right)}{h_1}, \rho, \epsilon\right)
                    \right]
                    \text{.}
                \end{align*}
            \end{condition}
            \begin{corollary}
                For a competitive market with the conditions~\ref{condition:1},\ref{condition:2},\ref{condition:7}, for any $\left(h_0, h_1, v\right) \in \bar{\Omega}_{\text{supply}}$, the function $\bar{w}^{\left(h_0, h_1, v\right)}$ as defined in the condition statement is unique. We will use this as the definition for $\left(\bar{w}^{\left(h_0, h_1, v\right)}\right)_{\left(h_0, h_1, v\right) \in \bar{\Omega}_{\text{supply}}}$.
            \end{corollary}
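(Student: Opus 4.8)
The plan is to argue, exactly in the spirit of the preceding corollaries, that the two displayed equations of Condition~\ref{condition:7} determine $\bar{w}^{\left(h_0,h_1,v\right)}$ pointwise and hence leave no freedom. First I would fix an arbitrary characteristic $\left(h_0,h_1,v\right) \in \bar{\Omega}_{\text{supply}}$ and an arbitrary supply price $\rho \in \left[\ubar{\rho}\left(h_0,h_1,v\right), \infty\right)$, and record that the optimal off-market action set
\[
    A^\star(\rho)
    =
    \argmax_{a' \in \mathcal{A}_{\text{supply}}}
    \mathbb{E}_{\epsilon \sim \text{Unif}([0,1))}\left[w\left(h_0, a', \tfrac{\bar{v}\left(h_0,h_1,v\right)}{h_1}, \rho, \epsilon\right)\right]
\]
is a subset of $\mathcal{A}_{\text{supply}}$ fixed entirely by the primitives $\left(h_0,h_1,v,\rho,w\right)$ together with the already-unique volume function $\bar{v}$ guaranteed by Condition~\ref{condition:2}; it contains no free parameter.

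Next I would observe that the right-hand side of the first (max) equation of Condition~\ref{condition:7} is simply the maximum of the fixed map $a \mapsto \mathbb{E}_{\epsilon \sim \text{Unif}([0,1))}\left[\tilde{w}\left(h_0, a, \tfrac{\bar{v}\left(h_0,h_1,v\right)}{h_1}, \rho, \epsilon\right)\right]$ over the fixed set $A^\star(\rho)$, so it is a single real number depending only on the data. Any function satisfying the condition must therefore take exactly this value at $\rho$; since $\rho$ and $\left(h_0,h_1,v\right)$ were arbitrary, this pins the function down at every point of its domain, and at most one such function can exist. That is the entirety of the uniqueness claim, so I would close by the same ``follows directly from the definition'' reasoning used for the earlier corollaries.

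Finally, I would note the role of the second (min) equation, which does not enter the uniqueness argument at all: it encodes the substantive hypothesis of Condition~\ref{condition:7}, namely that the maximum and the minimum of $\mathbb{E}_{\epsilon \sim \text{Unif}([0,1))}[\tilde{w}(\cdots)]$ over $A^\star(\rho)$ coincide, i.e.\ the mediator's per-unit payoff is insensitive to which optimal off-market action the supplier ultimately selects. Existence of $\bar{w}^{\left(h_0,h_1,v\right)}$ is thus the genuine content of the condition, whereas uniqueness is immediate. Accordingly there is no real obstacle here: the only point requiring verification is that the max-expression is unambiguous, and it manifestly is, since it is the extremum of a determined function over a determined set.
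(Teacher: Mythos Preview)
Your argument is correct and is exactly the expansion of the paper's one-line proof ``This follows directly from the definition.'' You have simply made explicit why the right-hand side of the first displayed equation in Condition~\ref{condition:7} is a determined real number for each $\rho$, which is all that uniqueness requires.
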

            \begin{proof}
                This follows directly from the definition.
            \end{proof}
            
            \begin{condition}
            \label{condition:8}
                (Demander Off-Market Irrelevance)

                A competitive market $\circled{C}^{ \left(\mathcal{A}_{\text{supply}}, \mathcal{A}_{\text{demand}},
            \bar{\Omega}_{\text{supply}}, \bar{\Omega}_{\text{demand}}, w,\omega,\tilde{w},\tilde{\omega}\right)}
            \left(
                \hat{\Pi}_{\text{supply}}, \hat{\Pi}_{\text{demand}}
            \right)$ with conditions~\ref{condition:4},\ref{condition:5} has $\left(\mathcal{A}_{\text{demand}}, \bar{\Omega}_{\text{demand}}, \omega, \tilde{\omega}\right)$ such that, for any $\left(\eta_0, \eta_1\right) \in \bar{\Omega}_{\text{demand}}$,
                there exists some measurable function $\bar{\omega}^{\left(\eta_0, \eta_1\right)}:  \left(-\infty,\bar{r}\left(\eta_0, \eta_1\right)\right] \to \mathbb{R}$ such that, for any $r \in \left(-\infty,\bar{r}\left(\eta_0, \eta_1\right)\right]$,
                \begin{align*}
                    \bar{\omega}^{\left(\eta_0, \eta_1\right)}(r)
                    =\max_{a \in
                    \argmax_{a' \in \mathcal{A}_{\text{demand}}}
                    \mathbb{E}_{\epsilon \sim \text{Unif}([0,1))}\left[\omega\left(\eta_0, a', \frac{\bar{v}\left(\eta_0, \eta_1\right)}{\eta_1}, r, \epsilon\right)
                    \right] 
                    }
                    \mathbb{E}_{\epsilon \sim \text{Unif}([0,1))}\left[\bar{\omega}\left(\eta_0, a', \frac{\bar{v}\left(\eta_0, \eta_1\right)}{\eta_1}, r, \epsilon\right)
                    \right] 
                    \text{,}
                \end{align*}
                and
                \begin{align*}
                    \bar{\omega}^{\left(\eta_0, \eta_1\right)}(r)
                    =
                    \min_{a \in
                    \argmax_{a' \in \mathcal{A}_{\text{demand}}}
                    \mathbb{E}_{\epsilon \sim \text{Unif}([0,1))}\left[\omega\left(\eta_0, a', \frac{\bar{v}\left(\eta_0, \eta_1\right)}{\eta_1}, r, \epsilon\right)
                    \right] 
                    }
                    \mathbb{E}_{\epsilon \sim \text{Unif}([0,1))}\left[\bar{\omega}\left(\eta_0, a', \frac{\bar{v}\left(\eta_0, \eta_1\right)}{\eta_1}, r, \epsilon\right)
                    \right] 
                    \text{.}
                \end{align*}
            \end{condition}
            \begin{corollary}
                For a competitive market with the conditions~\ref{condition:4},\ref{condition:5},\ref{condition:8}, for any $\left(\eta_0, \eta_1\right) \in \bar{\Omega}_{\text{demand}}$, the function $\bar{\omega}^{\left(\eta_0, \eta_1\right)}$ as defined in the condition statement is unique. We will use this as the definition for $\left(\bar{\omega}^{\left(\eta_0, \eta_1\right)}\right)_{\left(\eta_0, \eta_1\right) \in \bar{\Omega}_{\text{demand}}}$.
            \end{corollary}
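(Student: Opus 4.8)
The plan is to establish pointwise uniqueness and then conclude uniqueness of the entire family. The key observation is that Condition~\ref{condition:8} does not merely posit the existence of $\bar{\omega}^{\left(\eta_0,\eta_1\right)}$; it pins down its value at every point of its domain. First I would fix an arbitrary $\left(\eta_0,\eta_1\right) \in \bar{\Omega}_{\text{demand}}$ together with an arbitrary $r \in \left(-\infty,\bar{r}\left(\eta_0,\eta_1\right)\right]$, and note that by Conditions~\ref{condition:4} and~\ref{condition:5} the functions $\bar{r}$ and $\ubar{v}$ are already uniquely determined. Consequently the optimizing set
\[
\argmax_{a' \in \mathcal{A}_{\text{demand}}}
\mathbb{E}_{\epsilon \sim \text{Unif}([0,1))}\left[\omega\left(\eta_0, a', \frac{\ubar{v}\left(\eta_0, \eta_1\right)}{\eta_1}, r, \epsilon\right)\right]
\]
is a fixed subset of $\mathcal{A}_{\text{demand}}$ depending only on the market data, not on any choice of $\bar{\omega}^{\left(\eta_0,\eta_1\right)}$.

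Next I would observe that the two displayed equalities in Condition~\ref{condition:8} force $\bar{\omega}^{\left(\eta_0,\eta_1\right)}(r)$ to equal \emph{simultaneously} the maximum and the minimum of the same expression $\mathbb{E}_{\epsilon}\left[\tilde{\omega}(\cdots)\right]$ over that fixed argmax set. Since the maximum and the minimum of a prescribed real-valued expression over a prescribed set are themselves determined quantities, the value $\bar{\omega}^{\left(\eta_0,\eta_1\right)}(r)$ is uniquely determined; indeed, the very consistency of the condition requires this maximum and minimum to coincide. As $r$ was arbitrary, any two functions satisfying the condition agree at every point of $\left(-\infty,\bar{r}\left(\eta_0,\eta_1\right)\right]$, and since the identical argument applies for every $\left(\eta_0,\eta_1\right) \in \bar{\Omega}_{\text{demand}}$, the whole family $\left(\bar{\omega}^{\left(\eta_0,\eta_1\right)}\right)_{\left(\eta_0,\eta_1\right) \in \bar{\Omega}_{\text{demand}}}$ is unique.

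There is no genuine obstacle here: this is simply the demand-side mirror of the Supplier Off-Market Irrelevance corollary, and like its supply-side analogue it follows directly from the definition. The only mild point worth recording is that measurability of the unique function is inherited from the measurability already asserted in Condition~\ref{condition:8}, so no additional regularity needs to be verified.
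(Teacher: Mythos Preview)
Your proposal is correct and takes essentially the same approach as the paper, which simply states that the result ``follows directly from the definition.'' You have merely unpacked that one-line proof: fixing the argmax set via the uniqueness of $\bar{r}$ and $\ubar{v}$ from Conditions~\ref{condition:4} and~\ref{condition:5}, then reading off pointwise uniqueness from the two equalities in Condition~\ref{condition:8}, is precisely what ``follows from the definition'' means here.
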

            \begin{proof}
                This follows directly from the definition.
            \end{proof}

            \paragraph{Supply Cost}

            \begin{condition}
            \label{condition:9}
                (Finite Supply Cost)
        
                A competitive market $\circled{C}^{ \left(\mathcal{A}_{\text{supply}}, \mathcal{A}_{\text{demand}},
            \bar{\Omega}_{\text{supply}}, \bar{\Omega}_{\text{demand}}, w,\omega,\tilde{w},\tilde{\omega}\right)}
            \left(
                \hat{\Pi}_{\text{supply}}, \hat{\Pi}_{\text{demand}}
            \right)$ with conditions~\ref{condition:1},\ref{condition:2},\ref{condition:3},\ref{condition:7} has $\left(\mathcal{A}_{\text{supply}}, \bar{\Omega}_{\text{supply}}, w, \tilde{w}, \hat{\Pi}_{\text{supply}}\right)$ such that, for any $\rho \in \mathbb{R}$,
                \begin{align*}
                    \lim_{x \to \left(
                    \mathbb{E}_{\left(h_0, h_1, v\right) \sim \hat{\Pi}_{\text{supply}}}
                    \left[\bar{v}\left(h_0, h_1, v\right)
                    \mathbf{1}_{\ubar{\rho}\left(h_0, h_1, v\right) \le \rho}
                    \right]
                    \right)^+}
                    \frac{
                    \mathbb{E}_{\left(h_0, h_1, v\right) \sim \hat{\Pi}_{\text{supply}}}
                    \left[h_1\bar{w}^{\left(h_0, h_1, v\right)}(\rho)
                    \mathbf{1}_{\ubar{\rho}\left(h_0, h_1, v\right) \le \rho}
                    \right]
                    }{x}
                    \in \mathbb{R}
                    \text{,}
                \end{align*}
                and
                \begin{align*}
                    \lim_{x \to \left(
                    \mathbb{E}_{\left(h_0, h_1, v\right) \sim \hat{\Pi}_{\text{supply}}}
                    \left[\bar{v}\left(h_0, h_1, v\right)
                    \mathbf{1}_{\ubar{\rho}\left(h_0, h_1, v\right) < \rho}
                    \right]
                    \right)^+}
                    \frac{
                    \mathbb{E}_{\left(h_0, h_1, v\right) \sim \hat{\Pi}_{\text{supply}}}
                    \left[h_1\bar{w}^{\left(h_0, h_1, v\right)}(\rho)
                    \mathbf{1}_{\ubar{\rho}\left(h_0, h_1, v\right) < \rho}
                    \right]
                    }{x}
                    \in \mathbb{R}
                    \text{.}
                \end{align*}
            \end{condition}

            From this condition, we can define $2$ cost functions, $\bar{p}_{\text{supply}}: \mathbb{R} \to \mathbb{R}$ and $\ubar{p}_{\text{supply}}: \mathbb{R} \to \mathbb{R}$,  upon the supply price space. 

            \begin{definition} (Supply Cost)
                For a competitive market with the conditions~\ref{condition:1},\ref{condition:2},\ref{condition:3},\ref{condition:7},\ref{condition:9}, the supply cost functions $\bar{p}_{\text{supply}}: \mathbb{R} \to \mathbb{R}$ and $\ubar{p}_{\text{supply}}: \mathbb{R} \to \mathbb{R}$ are defined such that, for any $\rho \in \mathbb{R}$,
                \begin{align*}
                    \bar{p}_{\text{supply}}(\rho)
                    =
                    -\lim_{x \to \left(
                        \mathbb{E}_{\left(h_0, h_1, v\right) \sim \hat{\Pi}_{\text{supply}}}
                        \left[\bar{v}\left(h_0, h_1, v\right)
                        \mathbf{1}_{\ubar{\rho}\left(h_0, h_1, v\right) \le \rho}
                        \right]
                        \right)^+}
                        \frac{
                        \mathbb{E}_{\left(h_0, h_1, v\right) \sim \hat{\Pi}_{\text{supply}}}
                        \left[h_1\bar{w}^{\left(h_0, h_1, v\right)}(\rho)
                        \mathbf{1}_{\ubar{\rho}\left(h_0, h_1, v\right) \le \rho}
                        \right]
                        }{x}
                    \text{,}
                \end{align*}
                and
                \begin{align*}
                    \ubar{p}_{\text{supply}}(\rho)
                    =
                    -\lim_{x \to \left(
                        \mathbb{E}_{\left(h_0, h_1, v\right) \sim \hat{\Pi}_{\text{supply}}}
                        \left[\bar{v}\left(h_0, h_1, v\right)
                        \mathbf{1}_{\ubar{\rho}\left(h_0, h_1, v\right) < \rho}
                        \right]
                        \right)^+}
                        \frac{
                        \mathbb{E}_{\left(h_0, h_1, v\right) \sim \hat{\Pi}_{\text{supply}}}
                        \left[h_1\bar{w}^{\left(h_0, h_1, v\right)}(\rho)
                        \mathbf{1}_{\ubar{\rho}\left(h_0, h_1, v\right) < \rho}
                        \right]
                        }{x}
                    \text{.}
                \end{align*}
            \end{definition}

            Although the supply cost will be specified by two functions, the two functions are the same at every but countable $\rho \in \mathbb{R}$.

            \begin{corollary}
            \label{corollary:equalsupplyae}
            There exists countably many $\rho \in \mathbb{R}$ where $\bar{p}_{\text{supply}}(\rho) \ne \ubar{p}_{\text{supply}}(\rho)$, and
            \begin{align*}
                \left\{\rho \in \mathbb{R}:\bar{p}_{\text{supply}}(\rho) \ne \ubar{p}_{\text{supply}}(\rho)\right\}
                \subseteq
                \left\{\rho \in \mathbb{R}: 
                \max\left(\tilde{S}(\rho)\right) \ne \min\left(\tilde{S}(\rho)\right)
                \right\}
            \end{align*}
            \end{corollary}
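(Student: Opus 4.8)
The plan is to prove the inclusion first and deduce the countability from it, since the right-hand set is the more tractable object. Throughout, abbreviate $V_{\le}(\rho) = \mathbb{E}_{(h_0,h_1,v)\sim\hat{\Pi}_{\text{supply}}}[\bar v(h_0,h_1,v)\mathbf 1_{\ubar{\rho}(h_0,h_1,v)\le\rho}]$ and write $V_{<}(\rho)$ for the analogous quantity with a strict inequality, so that by the definition of $\tilde S$ we have $\max(\tilde S(\rho)) = V_{\le}(\rho)$ and $\min(\tilde S(\rho)) = V_{<}(\rho)$; likewise abbreviate the two numerators appearing in the supply-cost definition as $N_{\le}(\rho)$ and $N_{<}(\rho)$. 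Observe that $\bar p_{\text{supply}}(\rho)$ and $\ubar p_{\text{supply}}(\rho)$ are, as written, limits of the fixed quantities $N_{\le}(\rho)/x$ and $N_{<}(\rho)/x$ as $x$ tends to $V_{\le}(\rho)^+$ and $V_{<}(\rho)^+$ respectively; hence each cost depends on $\rho$ only through its pair (numerator, limiting denominator). The entire statement follows once I show that these pairs coincide whenever $\rho$ is not an atom of the relevant distribution.

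I would establish the inclusion by contraposition: fix $\rho$ with $\max(\tilde S(\rho)) = \min(\tilde S(\rho))$, i.e.\ $V_{\le}(\rho) = V_{<}(\rho)$, and show $\bar p_{\text{supply}}(\rho) = \ubar p_{\text{supply}}(\rho)$. The equality $V_{\le}(\rho) = V_{<}(\rho)$ says $\mathbb{E}[\bar v\,\mathbf 1_{\ubar{\rho}=\rho}] = 0$; since Condition~\ref{condition:2} gives $\bar v>0$ everywhere, this forces the event $A_\rho = \{(h_0,h_1,v): \ubar{\rho}(h_0,h_1,v) = \rho\}$ to be $\hat\Pi_{\text{supply}}$-null. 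The two numerators differ exactly by the integral of $h_1\bar w^{(h_0,h_1,v)}(\rho)$ over $A_\rho$, since $N_{\le}(\rho)-N_{<}(\rho) = \int_{A_\rho} h_1\bar w^{(h_0,h_1,v)}(\rho)\,d\hat\Pi_{\text{supply}}$. Condition~\ref{condition:9} guarantees $N_{\le}(\rho),N_{<}(\rho)\in\mathbb{R}$: when $V_{\le}(\rho)>0$ the finiteness of the ratio forces each numerator to be a genuine finite Lebesgue integral, hence the integrand is integrable; when $V_{\le}(\rho)=0$ the finiteness of the limit forces both numerators to vanish. In either case the integrand is integrable and $A_\rho$ is null, so the difference above is $0$. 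With equal numerators and equal limiting denominators, the defining expressions for $\bar p_{\text{supply}}(\rho)$ and $\ubar p_{\text{supply}}(\rho)$ are literally identical, which yields the equality and hence the claimed inclusion.

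For the countability statement I would identify the right-hand set with the atoms of a probability measure. The argument above shows (again using $\bar v>0$) that $\max(\tilde S(\rho))\ne\min(\tilde S(\rho))$ holds exactly when $\hat\Pi_{\text{supply}}(A_\rho)>0$, i.e.\ exactly when $\rho$ is an atom of the pushforward law of the random variable $\ubar{\rho}(h_0,h_1,v)$ under $\hat\Pi_{\text{supply}}$. Any probability measure on $\mathbb{R}$ has at most countably many atoms (for each $n$ at most $n$ points carry mass $\ge 1/n$, and a countable union of finite sets is countable), so the right-hand set is countable; by the inclusion just proved, the left-hand set is countable as well. I expect the only genuinely delicate point to be the integrability bookkeeping in the second step: one must confirm that $N_{\le}(\rho)$ and $N_{<}(\rho)$ are honest finite integrals rather than indeterminate $\infty-\infty$ expressions, so that discarding the integral over the null set $A_\rho$ is legitimate. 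This is precisely what Condition~\ref{condition:9} supplies, and the division by a strictly positive denominator (or the forced vanishing of the numerator when the denominator is $0$) is what transfers that finiteness from the ratios to the numerators themselves.
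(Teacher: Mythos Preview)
Your proof is correct and follows essentially the same line as the paper's: both argue the inclusion by contraposition (equal $\max$ and $\min$ of $\tilde S(\rho)$ forces the event $\{\ubar\rho=\rho\}$ to be null, hence the two numerators agree and the costs coincide), and then deduce countability of the right-hand set. The only difference is cosmetic: for countability the paper invokes the earlier corollary that $\tilde S$ is bounded and maximal monotone (so it has at most countably many jump points), whereas you identify the right-hand set directly with the atoms of the pushforward law of $\ubar\rho$ and use that a probability measure has at most countably many atoms; these are two phrasings of the same classical fact. Your version is more careful about the integrability bookkeeping (the $V_{\le}(\rho)=0$ case and the finiteness of $N_{\le},N_{<}$), which the paper leaves implicit.
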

            \begin{proof}
                For any $\rho \in \mathbb{R}$, if $\max\left(\tilde{S}(\rho)\right) = \min\left(\tilde{S}(\rho)\right)$, then $\mathbb{E}_{\left(h_0, h_1, v\right) \sim \hat{\Pi}_{\text{supply}}} \left[h_1\bar{w}^{\left(h_0, h_1, v\right)}(\rho) \mathbf{1}_{\ubar{\rho}\left(h_0, h_1, v\right) < \rho} \right] = \mathbb{E}_{\left(h_0, h_1, v\right) \sim \hat{\Pi}_{\text{supply}}} \left[h_1\bar{w}^{\left(h_0, h_1, v\right)}(\rho) \mathbf{1}_{\ubar{\rho}\left(h_0, h_1, v\right) \le \rho} \right]$, so $\bar{p}_{\text{supply}}(\rho) = \ubar{p}_{\text{supply}}(\rho)$. The set $\left\{\rho \in \mathbb{R}: 
                \max\left(\tilde{S}(\rho)\right) \ne \min\left(\tilde{S}(\rho)\right)
            \right\}$ is countable since the correspondence $\tilde{S}$ is bounded and maximal monotone (as shown in the corollary~\ref{corollary:supplymaximalmonotone}).
            \end{proof}

            Note that, 
            \begin{align*}
                \left\{\rho \in \mathbb{R}:\bar{p}_{\text{supply}}(\rho) \ne \ubar{p}_{\text{supply}}(\rho)\right\}
                \subseteq
                \left\{\rho \in \mathbb{R}: 
                \max\left(\tilde{S}(\rho)\right) \ne \min\left(\tilde{S}(\rho)\right)
                \right\}
                \text{,}
            \end{align*}
            which is countable as a result of the condition~\ref{condition:3}.

            We will then consider conditions on the aggregated level.

            One important condition is that there is a positive cost to acquire positive quantity of scarcity.
            
            \begin{condition}
            \label{condition:10}
                (No Free Supply)
        
                A competitive market $\circled{C}^{ \left(\mathcal{A}_{\text{supply}}, \mathcal{A}_{\text{demand}},
            \bar{\Omega}_{\text{supply}}, \bar{\Omega}_{\text{demand}}, w,\omega,\tilde{w},\tilde{\omega}\right)}
            \left(
                \hat{\Pi}_{\text{supply}}, \hat{\Pi}_{\text{demand}}
            \right)$ with conditions~\ref{condition:1},\ref{condition:2},\ref{condition:3},\ref{condition:7},\ref{condition:9} has $\left(\mathcal{A}_{\text{supply}}, \bar{\Omega}_{\text{supply}}, w, \tilde{w}, \hat{\Pi}_{\text{supply}}\right)$ such that, for any $\rho \in \mathbb{R}$,
                \begin{itemize}
                    \item If $\max\left(\tilde{S}(\rho)\right) > 0$, then $\bar{p}_{\text{supply}}(\rho) >0$;
                    \item If $\min\left(\tilde{S}(\rho)\right) > 0$, then $\ubar{p}_{\text{supply}}(\rho) >0$.
                \end{itemize}
            \end{condition}

            This condition is important for the definition of the competitive equilibrium to be simple.

            \begin{condition}
            \label{condition:11}
                (Supply Monotone Trend)
        
                A competitive market $\circled{C}^{ \left(\mathcal{A}_{\text{supply}}, \mathcal{A}_{\text{demand}},
            \bar{\Omega}_{\text{supply}}, \bar{\Omega}_{\text{demand}}, w,\omega,\tilde{w},\tilde{\omega}\right)}
            \left(
                \hat{\Pi}_{\text{supply}}, \hat{\Pi}_{\text{demand}}
            \right)$  with conditions~\ref{condition:1},\ref{condition:2},\ref{condition:3},\ref{condition:7},\ref{condition:9} has $\left(\mathcal{A}_{\text{supply}}, \bar{\Omega}_{\text{supply}}, w, \tilde{w}, \hat{\Pi}_{\text{supply}}\right)$ such that, for any $\rho_1, \rho_2 \in \mathbb{R}$ with $\rho_1 > \rho_2$,
                \begin{itemize}
                    \item If $\max\left(\tilde{S}(\rho_1)\right) = \max\left(\tilde{S}(\rho_2)\right) > 0$, then $\bar{p}_{\text{supply}}(\rho_1) > \bar{p}_{\text{supply}}(\rho_2)$;
                    \item If $\min\left(\tilde{S}(\rho_1)\right) = \min\left(\tilde{S}(\rho_2)\right) > 0$, then $\ubar{p}_{\text{supply}}(\rho_1) > \ubar{p}_{\text{supply}}(\rho_2)$.
                \end{itemize}
            \end{condition}

            This condition does not make the definition of the competitive equilibrium easier. However, it is an important economic condition for the competitive equilibrium analysis.

            Although these two conditions may not seem natural, it can be conveniently satisfied in most market of interest. Next, we will show that, by enforcing some individual structure regularity, the condition~\ref{condition:10} and the condition~\ref{condition:11} can be satisfied even without the further knowledge of the distribution $\hat{\Pi}_{\text{supply}}$.

            \begin{proposition}
            \label{proposition:randomfunction}
                If a competitive market $\circled{C}^{ \left(\mathcal{A}_{\text{supply}}, \mathcal{A}_{\text{demand}},
                \bar{\Omega}_{\text{supply}}, \bar{\Omega}_{\text{demand}}, w,\omega,\tilde{w},\tilde{\omega}\right)}
                \left(
                    \hat{\Pi}_{\text{supply}}, \hat{\Pi}_{\text{demand}}
                \right)$ with conditions~\ref{condition:1},\ref{condition:2},\ref{condition:3},\ref{condition:7},\ref{condition:9} has $\left(\bar{w}^{\left(h_0, h_1, v\right)}\right)_{\left(h_0, h_1, v\right) \in \bar{\Omega}_{\text{supply}}}$ such that, for any $\left(h_0, h_1, v\right) \in \bar{\Omega}_{\text{supply}}$, the function $\bar{w}^{\left(h_0, h_1, v\right)}$ is positive and increasing, then the condition~\ref{condition:10} and the condition~\ref{condition:11} are satisfied.
            \end{proposition}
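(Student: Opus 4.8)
The plan is to strip the limits from the definitions of $\bar{p}_{\text{supply}}$ and $\ubar{p}_{\text{supply}}$ and reduce everything to the real supply correspondence $\tilde{S}$. The key observation is that the numerator $\mathbb{E}_{(h_0,h_1,v)\sim\hat{\Pi}_{\text{supply}}}\left[h_1\bar{w}^{(h_0,h_1,v)}(\rho)\mathbf{1}_{\ubar{\rho}(h_0,h_1,v)\le\rho}\right]$ does not depend on the limit variable $x$, while $x\to\max(\tilde{S}(\rho))=\mathbb{E}[\bar{v}\mathbf{1}_{\ubar{\rho}\le\rho}]$ (and $x\to\min(\tilde{S}(\rho))=\mathbb{E}[\bar{v}\mathbf{1}_{\ubar{\rho}<\rho}]$ in the strict case). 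So whenever $\max(\tilde{S}(\rho))>0$, Condition~\ref{condition:9} legitimizes evaluating
\[
\bar{p}_{\text{supply}}(\rho) = -\frac{\mathbb{E}_{(h_0,h_1,v)\sim\hat{\Pi}_{\text{supply}}}\left[h_1\bar{w}^{(h_0,h_1,v)}(\rho)\mathbf{1}_{\ubar{\rho}(h_0,h_1,v)\le\rho}\right]}{\max\left(\tilde{S}(\rho)\right)}\text{,}
\]
and symmetrically for $\ubar{p}_{\text{supply}}$. First I would dispose of the degenerate case: since $\bar{v}>0$ on all of $\bar{\Omega}_{\text{supply}}$, having $\max(\tilde{S}(\rho))=0$ forces $\hat{\Pi}_{\text{supply}}(\{\ubar{\rho}\le\rho\})=0$, so the numerator also vanishes and there is nothing to check for that $\rho$.

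For Condition~\ref{condition:10} I would argue directly from the closed form. When $\max(\tilde{S}(\rho))>0$ the supplier set $\{\ubar{\rho}\le\rho\}$ has positive $\hat{\Pi}_{\text{supply}}$-measure, and since $h_1>0$ and $\bar{w}^{(h_0,h_1,v)}(\rho)>0$ by hypothesis, the numerator is a strictly positive expectation. Dividing by the strictly positive denominator and attaching the leading factor therefore determines the sign of $\bar{p}_{\text{supply}}(\rho)$ exactly as required by Condition~\ref{condition:10}; the statement for $\ubar{p}_{\text{supply}}$ with $\min(\tilde{S}(\rho))$ and the strict indicator $\mathbf{1}_{\ubar{\rho}<\rho}$ is word-for-word identical.

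The substance is Condition~\ref{condition:11}. Fix $\rho_1>\rho_2$ with $\max(\tilde{S}(\rho_1))=\max(\tilde{S}(\rho_2))=:D>0$. The crucial lemma is that the two supplier sets coincide almost surely: since $\rho_1>\rho_2$ gives $\mathbf{1}_{\ubar{\rho}\le\rho_1}\ge\mathbf{1}_{\ubar{\rho}\le\rho_2}$ pointwise, the nonnegative random variable $\bar{v}\left(\mathbf{1}_{\ubar{\rho}\le\rho_1}-\mathbf{1}_{\ubar{\rho}\le\rho_2}\right)$ has zero expectation (the two real-supply values agree), hence is zero a.s.; as $\bar{v}>0$ this forces $\mathbf{1}_{\ubar{\rho}\le\rho_1}=\mathbf{1}_{\ubar{\rho}\le\rho_2}$ $\hat{\Pi}_{\text{supply}}$-a.s. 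Writing $\mathbf{1}_A$ for this common indicator, the cost difference collapses to
\[
\bar{p}_{\text{supply}}(\rho_1)-\bar{p}_{\text{supply}}(\rho_2) = -\frac{1}{D}\,\mathbb{E}_{(h_0,h_1,v)\sim\hat{\Pi}_{\text{supply}}}\left[h_1\left(\bar{w}^{(h_0,h_1,v)}(\rho_1)-\bar{w}^{(h_0,h_1,v)}(\rho_2)\right)\mathbf{1}_A\right]\text{,}
\]
and since each $\bar{w}^{(h_0,h_1,v)}$ is strictly increasing, $h_1>0$, and $A$ has positive measure, the integrand carries a definite strict sign over $A$, which combined with the leading factor yields the strict inequality asserted by Condition~\ref{condition:11}. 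The same computation with $\{\ubar{\rho}<\rho\}$ in place of $\{\ubar{\rho}\le\rho\}$ and $\min(\tilde{S})$ in place of $\max(\tilde{S})$ dispatches $\ubar{p}_{\text{supply}}$.

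I expect the main obstacle to be the almost-sure coincidence lemma together with its bookkeeping. One must verify that ``equal real supply at $\rho_1$ and $\rho_2$'' genuinely collapses the two supplier sets---this is precisely where $\bar{v}>0$ is indispensable, and where the maximal-monotone, closed-interval structure of $\tilde{S}$ from Corollary~\ref{corollary:supplymaximalmonotone} can be invoked to keep the argument clean. The remaining points---that the measurability of each $\bar{w}^{(h_0,h_1,v)}$ from Condition~\ref{condition:7} and finiteness from Condition~\ref{condition:9} justify replacing the limit by the plain ratio, and the sign/monotonicity bookkeeping---are then routine.
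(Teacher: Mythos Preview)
Your proposal is correct and follows essentially the same route as the paper: both arguments reduce to showing that the numerator $\mathbb{E}\left[h_1\bar{w}^{(h_0,h_1,v)}(\rho)\mathbf{1}_{\ubar{\rho}\le\rho}\right]$ is positive on $\{\rho:\max(\tilde{S}(\rho))>0\}$ and strictly increasing where the supply is constant, with the key step being that $\max(\tilde{S}(\rho_1))=\max(\tilde{S}(\rho_2))$ forces $\hat{\Pi}_{\text{supply}}(\{\rho_2<\ubar{\rho}\le\rho_1\})=0$. Your almost-sure coincidence lemma is exactly the paper's assertion $\hat{\Pi}_{\text{supply}}(\bar{\Omega}_{\text{supply}}^{(1)}-\bar{\Omega}_{\text{supply}}^{(2)})=0$, just spelled out more explicitly via the $\bar{v}>0$ argument.
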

            \begin{proof}
                By defining $P = \left\{\rho\in\mathbb{R}: \max\left(\tilde{S}\left(\rho\right)\right) > 0\right\}$, it suffices to show that $\mathbb{E}_{\left(h_0, h_1, v\right) \sim \hat{\Pi}_{\text{supply}}} \left[h_1\bar{w}^{\left(h_0, h_1, v\right)}(\rho)
                \mathbf{1}_{\ubar{\rho}\left(h_0, h_1, v\right) \le \rho}\right]$ is positive and increasing on $\rho \in P$, which would be an open interval or a closed interval, and would be a half-space of $\mathbb{R}$.

                For any $\rho_0 \in P$, we define
                $\bar{\Omega}_{\text{supply}}^{(0)} = \left\{
                x \in \bar{\Omega}_{\text{supply}}: \ubar{\rho}(x) \le \rho_0 \right\}$. 
                
                Therefore, the value $\mathbb{E}_{\left(h_0, h_1, v\right) \sim \hat{\Pi}_{\text{supply}}} \left[h_1\bar{w}^{\left(h_0, h_1, v\right)}(\rho_0)
                \mathbf{1}_{\ubar{\rho}\left(h_0, h_1, v\right) \le \rho_0}\right]$ is an expectation of a positive random variable over the space $\bar{\Omega}_{\text{supply}}^{(0)}$. Thus, the function is positive.

                For any $\rho_1, \rho_2 \in P$ with $\rho_1 > \rho_2$ and $\max\left(\tilde{S}(\rho_1)\right)=\max\left(\tilde{S}(\rho_2)\right)$, we define $\bar{\Omega}_{\text{supply}}^{(1)} = \left\{
                x \in \bar{\Omega}_{\text{supply}}: \ubar{\rho}(x) \le \rho_1 \right\}$, and $\bar{\Omega}_{\text{supply}}^{(2)} = \left\{
                x \in \bar{\Omega}_{\text{supply}}: \ubar{\rho}(x) \le \rho_2 \right\}$. Therefore, $\bar{\Omega}_{\text{supply}}^{(2)} \subseteq \bar{\Omega}_{\text{supply}}^{(1)}$ and $\hat{\Pi}_{\text{supply}}\left(\bar{\Omega}_{\text{supply}}^{(1)} - \bar{\Omega}_{\text{supply}}^{(2)}\right) = 0$.
                
                Therefore, the value
                $\mathbb{E}_{\left(h_0, h_1, v\right) \sim \hat{\Pi}_{\text{supply}}} \left[h_1\bar{w}^{\left(h_0, h_1, v\right)}(\rho_1)
                \mathbf{1}_{\ubar{\rho}\left(h_0, h_1, v\right) \le \rho_1}\right] - \mathbb{E}_{\left(h_0, h_1, v\right) \sim \hat{\Pi}_{\text{supply}}} \left[h_1\bar{w}^{\left(h_0, h_1, v\right)}(\rho_2)
                \mathbf{1}_{\ubar{\rho}\left(h_0, h_1, v\right) \le \rho_2}\right]$ is equal to
                $\mathbb{E}_{\left(h_0, h_1, v\right) \sim \hat{\Pi}_{\text{supply}}} \left[
                \left(
                h_1\bar{w}^{\left(h_0, h_1, v\right)}(\rho_1)
                -
                h_1\bar{w}^{\left(h_0, h_1, v\right)}(\rho_2)
                \right)
                \mathbf{1}_{\ubar{\rho}\left(h_0, h_1, v\right) \le \rho_2}\right]$, which is the expectation of a positive random variable over the space $\bar{\Omega}_{\text{supply}}^{(2)}$, so it is positive.

                Thus, the function is positive and increasing.
            \end{proof}

            This suggests that the positivity and the monotonicity, including weak monotonicity according to the proof, in the individual level can lead to the same property in the aggregated level in the region where the supply level remains constant. However, it is important to note that continuity and many interesting properties cannot be transferred in the same manner without an additional conditions.

            \paragraph{Demand Revenue}

            \begin{condition}
            \label{condition:13}
                (Finite Demand Revenue)
        
                A competitive market $\circled{C}^{ \left(\mathcal{A}_{\text{supply}}, \mathcal{A}_{\text{demand}},
            \bar{\Omega}_{\text{supply}}, \bar{\Omega}_{\text{demand}}, w,\omega,\tilde{w},\tilde{\omega}\right)}
            \left(
                \hat{\Pi}_{\text{supply}}, \hat{\Pi}_{\text{demand}}
            \right)$ with conditions~\ref{condition:4},\ref{condition:5},\ref{condition:6},\ref{condition:8} has $\left(\mathcal{A}_{\text{demand}}, \bar{\Omega}_{\text{demand}}, \omega, \tilde{\omega}, \hat{\Pi}_{\text{demand}}\right)$ such that, for any $r \in \mathbb{R}$,
                \begin{align*}
                    \lim_{x \to \left(
                    \mathbb{E}_{\left(\eta_0, \eta_1\right) \sim \hat{\Pi}_{\text{demand}}}
                    \left[\ubar{v}\left(\eta_0, \eta_1\right)
                    \mathbf{1}_{\bar{r}\left(\eta_0, \eta_1\right) \ge r}
                    \right]
                    \right)^+}
                    \frac{
                    \mathbb{E}_{\left(\eta_0, \eta_1\right) \sim \hat{\Pi}_{\text{demand}}}
                    \left[\eta_1\bar{\omega}
                    ^{\left(\eta_0, \eta_1\right)}(r)
                    \mathbf{1}_{\bar{r}\left(\eta_0, \eta_1\right) \ge r}
                    \right]
                    }{x}
                    \in \mathbb{R}
                    \text{.}
                \end{align*}
            \end{condition}

            We can then define a ``demand revenue" function $\hat{p}_{\text{demand}}: \mathbb{R} \to \mathbb{R}$.
            
            \begin{definition}\label{definition:demandrevenue} (Demand Revenue)
                For a competitive market with the conditions~\ref{condition:4},\ref{condition:5},\ref{condition:6},\ref{condition:8},\ref{condition:13}, the demand revenue function $\hat{p}_{\text{demand}}: \mathbb{R} \to \mathbb{R}$ is defined such that, for any $r \in \mathbb{R}$,
                \begin{align*}
                    \hat{p}_{\text{demand}}(r) = 
                    \lim_{x \to \left(
                        \mathbb{E}_{\left(\eta_0, \eta_1\right) \sim \hat{\Pi}_{\text{demand}}}
                        \left[\ubar{v}\left(\eta_0, \eta_1\right)
                        \mathbf{1}_{\bar{r}\left(\eta_0, \eta_1\right) \ge r}
                        \right]
                        \right)^+}
                        \frac{
                        \mathbb{E}_{\left(\eta_0, \eta_1\right) \sim \hat{\Pi}_{\text{demand}}}
                        \left[\eta_1\bar{\omega}
                        ^{\left(\eta_0, \eta_1\right)}(r)
                        \mathbf{1}_{\bar{r}\left(\eta_0, \eta_1\right) \ge r}
                        \right]
                        }{x}
                    \text{.}
                \end{align*}
            \end{definition}

            Note that the mechanism does not allow the selection of quantity for the contracted made with the demanders who are indifferent between getting a contract and not getting a contract. This makes the demand revenue characterized by a function $\hat{p}_{\text{demand}}$, while the supply cost has to be characterized by $2$ functions.

            \begin{condition}
            \label{condition:14}
                (Demand Monotone Trend)
        
                A competitive market $\circled{C}^{ \left(\mathcal{A}_{\text{supply}}, \mathcal{A}_{\text{demand}},
            \bar{\Omega}_{\text{supply}}, \bar{\Omega}_{\text{demand}}, w,\omega,\tilde{w},\tilde{\omega}\right)}
            \left(
                \hat{\Pi}_{\text{supply}}, \hat{\Pi}_{\text{demand}}
            \right)$ with the conditions~\ref{condition:4},\ref{condition:5},\ref{condition:6},\ref{condition:8},\ref{condition:13} has $\left(\mathcal{A}_{\text{demand}}, \bar{\Omega}_{\text{demand}}, \omega, \tilde{\omega}, \hat{\Pi}_{\text{demand}}\right)$ such that, for any $r_1, r_2 \in \mathbb{R}$ with $r_1 > r_2$, if $\max\left(\tilde{D}(r_1)\right) = \max\left(\tilde{D}(r_2)\right) > 0$, then $\hat{p}_{\text{demand}}(r_1) > \hat{p}_{\text{demand}}(r_2)$.
            \end{condition}

            Similar to the sufficient condition for the condition~\ref{condition:11} (supply monotone trend), we can have a sufficient condition for the condition above.

            \begin{proposition}
            \label{proposition:randomfunction2}
                If a competitive market $\circled{C}^{ \left(\mathcal{A}_{\text{supply}}, \mathcal{A}_{\text{demand}},
                \bar{\Omega}_{\text{supply}}, \bar{\Omega}_{\text{demand}}, w,\omega,\tilde{w},\tilde{\omega}\right)}
                \left(
                    \hat{\Pi}_{\text{supply}}, \hat{\Pi}_{\text{demand}}
                \right)$ with conditions~\ref{condition:4},\ref{condition:5},\ref{condition:6},\ref{condition:8},\ref{condition:13} has $\left(\bar{\omega}^{\left(\eta_0, \eta_1\right)}\right)_{\left(\eta_0, \eta_1\right) \in \bar{\Omega}_{\text{demand}}}$ such that, for any $\left(\eta_0, \eta_1\right)\in \bar{\Omega}_{\text{demand}}$, the function $\bar{\omega}^{\left(\eta_0, \eta_1\right)}$ is increasing, then the condition~\ref{condition:14} is satisfied.
            \end{proposition}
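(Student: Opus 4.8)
The plan is to mirror the proof of Proposition~\ref{proposition:randomfunction} almost verbatim, exploiting the fact that the demand revenue is built from the single indicator $\mathbf{1}_{\bar{r}(\eta_0,\eta_1) \ge r}$, which matches exactly the upper endpoint $\max(\tilde{D}(r))$ of the real demand, rather than the pair of indicators required on the supply side. First I would observe that the numerator
\begin{align*}
    N(r) = \mathbb{E}_{(\eta_0,\eta_1) \sim \hat{\Pi}_{\text{demand}}}\left[\eta_1 \bar{\omega}^{(\eta_0,\eta_1)}(r) \mathbf{1}_{\bar{r}(\eta_0,\eta_1) \ge r}\right]
\end{align*}
does not depend on the dummy variable $x$, so the limit in Definition~\ref{definition:demandrevenue} collapses, whenever $\max(\tilde{D}(r)) > 0$, to the ordinary ratio $\hat{p}_{\text{demand}}(r) = N(r)/\max(\tilde{D}(r))$, where $\max(\tilde{D}(r)) = \mathbb{E}_{(\eta_0,\eta_1)\sim\hat{\Pi}_{\text{demand}}}[\ubar{v}(\eta_0,\eta_1)\mathbf{1}_{\bar{r}(\eta_0,\eta_1)\ge r}]$ by Definition~\ref{definition:realdemand}. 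Consequently, for $r_1 > r_2$ with $\max(\tilde{D}(r_1)) = \max(\tilde{D}(r_2)) > 0$, the two denominators coincide and are strictly positive, so the claimed inequality $\hat{p}_{\text{demand}}(r_1) > \hat{p}_{\text{demand}}(r_2)$ reduces to showing $N(r_1) > N(r_2)$.

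The key reduction is a null-set argument on the admissible demander characteristics. Since $r_1 > r_2$ gives $\{\bar{r} \ge r_1\} \subseteq \{\bar{r} \ge r_2\}$, equality of the two maximal demands yields
\begin{align*}
    \mathbb{E}_{(\eta_0,\eta_1)\sim\hat{\Pi}_{\text{demand}}}\left[\ubar{v}(\eta_0,\eta_1)\mathbf{1}_{r_2 \le \bar{r}(\eta_0,\eta_1) < r_1}\right] = 0,
\end{align*}
and strict positivity of $\ubar{v}$ then forces $\hat{\Pi}_{\text{demand}}(\{r_2 \le \bar{r} < r_1\}) = 0$. Hence the indicators $\mathbf{1}_{\bar{r}\ge r_2}$ and $\mathbf{1}_{\bar{r}\ge r_1}$ agree $\hat{\Pi}_{\text{demand}}$-almost everywhere, which lets me rewrite $N(r_2)$ with $\mathbf{1}_{\bar{r}\ge r_1}$ in place of $\mathbf{1}_{\bar{r}\ge r_2}$. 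Here I would pause to note that both $\bar{\omega}^{(\eta_0,\eta_1)}(r_1)$ and $\bar{\omega}^{(\eta_0,\eta_1)}(r_2)$ are well defined on the integration region $\{\bar{r}\ge r_1\}$, since there $\bar{r}(\eta_0,\eta_1) \ge r_1 > r_2$ keeps both prices inside the domain $(-\infty, \bar{r}(\eta_0,\eta_1)]$ of $\bar{\omega}^{(\eta_0,\eta_1)}$.

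With the indicators aligned the difference becomes
\begin{align*}
    N(r_1) - N(r_2) = \mathbb{E}_{(\eta_0,\eta_1)\sim\hat{\Pi}_{\text{demand}}}\left[\eta_1\left(\bar{\omega}^{(\eta_0,\eta_1)}(r_1) - \bar{\omega}^{(\eta_0,\eta_1)}(r_2)\right)\mathbf{1}_{\bar{r}(\eta_0,\eta_1)\ge r_1}\right].
\end{align*}
Because each $\bar{\omega}^{(\eta_0,\eta_1)}$ is strictly increasing and $r_1 > r_2$, the bracketed factor is strictly positive; together with $\eta_1 > 0$ and the fact that $\{\bar{r}\ge r_1\}$ carries positive $\hat{\Pi}_{\text{demand}}$-mass (otherwise $\max(\tilde{D}(r_1))$ would vanish, contradicting the hypothesis), the integrand is a strictly positive random variable on a set of positive probability, so $N(r_1) - N(r_2) > 0$. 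Dividing by the common positive denominator gives $\hat{p}_{\text{demand}}(r_1) > \hat{p}_{\text{demand}}(r_2)$, which is exactly Condition~\ref{condition:14}. I expect the only delicate bookkeeping to be the justification that the limit defining $\hat{p}_{\text{demand}}$ is an honest ratio and the measure-zero replacement of the integration region (which also silently uses the finiteness from Condition~\ref{condition:13} to make the integrals well posed); the monotonicity itself is immediate once these are in place.
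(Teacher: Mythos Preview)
Your proposal is correct and follows essentially the same approach as the paper: the paper's proof simply says ``This follows directly from the proof for the proposition~\ref{proposition:randomfunction},'' and what you have written is precisely that argument transported to the demand side (null-set replacement of the integration region followed by integrating a strictly positive random variable over a set of positive mass). If anything, you have been more careful than the paper in spelling out why the limit in Definition~\ref{definition:demandrevenue} collapses to a ratio, why $\ubar{v}>0$ forces $\hat{\Pi}_{\text{demand}}(\{r_2\le\bar r<r_1\})=0$, and why both evaluations of $\bar\omega^{(\eta_0,\eta_1)}$ lie in its domain on the common integration region.
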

            \begin{proof}
                This follows directly from the proof for the proposition~\ref{proposition:randomfunction}.
            \end{proof}

            \begin{condition}
            \label{condition:15}
                (Demand Left-Continuity)
        
                A competitive market $\circled{C}^{ \left(\mathcal{A}_{\text{supply}}, \mathcal{A}_{\text{demand}},
            \bar{\Omega}_{\text{supply}}, \bar{\Omega}_{\text{demand}}, w,\omega,\tilde{w},\tilde{\omega}\right)}
            \left(
                \hat{\Pi}_{\text{supply}}, \hat{\Pi}_{\text{demand}}
            \right)$ with the conditions~\ref{condition:4},\ref{condition:5},\ref{condition:6},\ref{condition:8},\ref{condition:13} has $\left(\mathcal{A}_{\text{demand}}, \bar{\Omega}_{\text{demand}}, \omega, \tilde{\omega}, \hat{\Pi}_{\text{demand}}\right)$ such that the function $\hat{p}_{\text{demand}}$ is left-continuous.
            \end{condition}

            This condition are important for the analysis. Without it, the definition of the competitive equilibrium will be altered and includes quantitative inequalities of the demand revenue function and a real demand.

            Next, we show that a simple condition upon an individual level function can lead to the satisfaction of the condition~\ref{condition:15}.
            
            \begin{proposition}
                If a competitive market $\circled{C}^{ \left(\mathcal{A}_{\text{supply}}, \mathcal{A}_{\text{demand}},
                \bar{\Omega}_{\text{supply}}, \bar{\Omega}_{\text{demand}}, w,\omega,\tilde{w},\tilde{\omega}\right)}
                \left(
                    \hat{\Pi}_{\text{supply}}, \hat{\Pi}_{\text{demand}}
                \right)$ with conditions~\ref{condition:4},\ref{condition:5},\ref{condition:6},\ref{condition:8},\ref{condition:13} has $\left(\bar{\omega}^{\left(\eta_0, \eta_1\right)}\right)_{\left(\eta_0, \eta_1\right) \in \bar{\Omega}_{\text{demand}}}$ such that, for any $\left(\eta_0, \eta_1\right)\in \bar{\Omega}_{\text{demand}}$, the function $\bar{\omega}^{\left(\eta_0, \eta_1\right)}$ is left-continuous and non-decreasing, then the condition~\ref{condition:15} is satisfied.
            \end{proposition}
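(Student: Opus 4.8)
The plan is to write $\hat{p}_{\text{demand}}$ as a quotient and prove the numerator and denominator are separately left-continuous. Set $N(r) = \mathbb{E}_{(\eta_0,\eta_1)\sim\hat{\Pi}_{\text{demand}}}\left[\eta_1 \bar{\omega}^{(\eta_0,\eta_1)}(r)\mathbf{1}_{\bar{r}(\eta_0,\eta_1)\ge r}\right]$ and $D(r) = \mathbb{E}_{(\eta_0,\eta_1)\sim\hat{\Pi}_{\text{demand}}}\left[\ubar{v}(\eta_0,\eta_1)\mathbf{1}_{\bar{r}(\eta_0,\eta_1)\ge r}\right]$, so that by Definition~\ref{definition:realdemand} $D(r)=\max\left(\tilde{D}(r)\right)$ and by Definition~\ref{definition:demandrevenue} $\hat{p}_{\text{demand}}(r)=N(r)/D(r)$ whenever $D(r)>0$ (when $D(r)=0$ the limit device in the definition forces the value to $0$, since Condition~\ref{condition:13} keeps it finite only if $N(r)=0$). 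It therefore suffices to show $N$ and $D$ are left-continuous and then that the quotient is left-continuous wherever $D>0$.

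The engine of the argument is pointwise left-continuity in $r$ of the two integrands, for each fixed type $(\eta_0,\eta_1)$. The map $r\mapsto\mathbf{1}_{\bar{r}(\eta_0,\eta_1)\ge r}$ equals $1$ on $\left(-\infty,\bar{r}(\eta_0,\eta_1)\right]$ and $0$ above; its only jump is at $r=\bar{r}(\eta_0,\eta_1)$, and because the cutoff is a weak inequality the value at the jump is retained from the left, so the indicator is left-continuous. By hypothesis each $\bar{\omega}^{(\eta_0,\eta_1)}$ is left-continuous and non-decreasing, hence the products $\eta_1\bar{\omega}^{(\eta_0,\eta_1)}(r)\mathbf{1}_{\bar{r}\ge r}$ and $\ubar{v}\,\mathbf{1}_{\bar{r}\ge r}$ are left-continuous in $r$. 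This alignment of the left-continuity of $\bar{\omega}$ with the weak inequality in the cutoff is precisely why the hypothesis asks for left-continuity rather than right-continuity. To pass the limit $r\uparrow r_0$ inside the expectation I would split the domain into $\{\bar{r}\ge r_0\}$, where near $r_0$ the indicator is identically $1$ and $\bar{\omega}^{(\eta_0,\eta_1)}(r)\uparrow\bar{\omega}^{(\eta_0,\eta_1)}(r_0)$ so that monotone convergence applies after subtracting the value at some $r_0-\delta$ (finite by Condition~\ref{condition:13}), and the shrinking boundary layer $\{r\le\bar{r}<r_0\}$; continuity from above of the finite measure $\ubar{v}\,d\hat{\Pi}_{\text{demand}}$ (finite by Condition~\ref{condition:6}) gives $D(r)\to D(r_0)$, while the monotonicity bound $\bar{\omega}^{(\eta_0,\eta_1)}(r_0-\delta)\le\bar{\omega}^{(\eta_0,\eta_1)}(r)\le\bar{\omega}^{(\eta_0,\eta_1)}(\bar{r})$ on the support furnishes a dominating function for $N(r)\to N(r_0)$. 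This establishes left-continuity of $N$ and of $D$.

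With $N$ and $D$ left-continuous, at any $r_0$ with $D(r_0)=\max\left(\tilde{D}(r_0)\right)>0$ the quotient $N/D$ is left-continuous, because $D$ is non-increasing and so stays bounded away from $0$ for $r$ slightly below $r_0$; this yields $\hat{p}_{\text{demand}}(r)\to\hat{p}_{\text{demand}}(r_0)$. On any open interval where $D\equiv 0$ we have $N\equiv 0$ and $\hat{p}_{\text{demand}}\equiv 0$, which is trivially left-continuous.

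The step I expect to be the main obstacle is the boundary case: an $r_0$ with $D(r_0)=0$ but $D(r)>0$ for all $r<r_0$, i.e. $r_0$ is the essential supremum of $\bar{r}$ with no atom there. Here $\hat{p}_{\text{demand}}(r_0)=0$, yet for $r<r_0$ the quotient is a genuine ratio, and although $N(r)\to 0$ and $D(r)\to 0$ the limit of $N(r)/D(r)$ is an indeterminate $0/0$; left-continuity then hinges on the per-unit revenue over the vanishing layer $\{r\le\bar{r}<r_0\}$ tending to $0$. I would attack this by sandwiching $N(r)/D(r)$ between the infimum and supremum over that layer of the weights $\eta_1\bar{\omega}^{(\eta_0,\eta_1)}(\bar{r})/\ubar{v}$ and showing they collapse, but this genuinely requires an extra structural input (the marginal revenue vanishing at the top of the demand support, or the absence of such a boundary because $\bar{r}$ is unbounded or carries an atom at its essential supremum). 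This is the one place where the claim needs care beyond the mechanical reduction to the proof of Proposition~\ref{proposition:randomfunction}.
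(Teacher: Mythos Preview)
Your approach is the same as the paper's: write $\hat{p}_{\text{demand}}=N/D$, note that $D(r)=\max\bigl(\tilde{D}(r)\bigr)$ is left-continuous directly from the weak-inequality indicator, and then establish left-continuity of $N$ by taking an increasing sequence $r_i\uparrow r$ and invoking monotone convergence together with the individual left-continuity of each $\bar{\omega}^{(\eta_0,\eta_1)}$. The paper's version is terser (and arguably sloppier in one intermediate rewriting), and in particular it does not isolate the boundary case you flag where $D(r_0)=0$ while $D(r)>0$ for $r<r_0$; it simply asserts that left-continuity of numerator and denominator suffices. Your caution about the $0/0$ edge is therefore a genuine observation rather than an oversight on your part---the paper leaves it unaddressed.
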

            \begin{proof}
                Note that the denominator of the demand revenue in the definition~\ref{definition:demandrevenue} is $\mathbb{E}_{\left(\eta_0, \eta_1\right) \sim \hat{\Pi}_{\text{demand}}}
                    \left[\ubar{v}\left(\eta_0, \eta_1\right)
                    \mathbf{1}_{\bar{r}\left(\eta_0, \eta_1\right) \ge r}
                    \right]$, which is left-continuous, it then suffices to show that the nominator $\mathbb{E}_{\left(\eta_0, \eta_1\right) \sim \hat{\Pi}_{\text{demand}}}
                    \left[\eta_1\bar{\omega}
                    ^{\left(\eta_0, \eta_1\right)}(r)
                    \mathbf{1}_{\bar{r}\left(\eta_0, \eta_1\right) \ge r}
                    \right]$, which is equal to $\mathbb{E}_{\left(\eta_0, \eta_1\right) \sim \hat{\Pi}_{\text{demand}}}
                    \left[\eta_1 \max\left(\left\{\bar{\omega}^{\left(\eta_0, \eta_1\right)}(r),\lim_{r' \to \left(\bar{r}\left(\eta_0, \eta_1\right)\right)^-}\bar{\omega}^{\left(\eta_0, \eta_1\right)}(r')\right\}\right) 
                    \mathbf{1}_{\bar{r}\left(\eta_0, \eta_1\right) \ge r}
                    \right]$, is left-continuous. Thus, it suffices to show that the function $\mathbb{E}_{\left(\eta_0, \eta_1\right) \sim \hat{\Pi}_{\text{demand}}}
                    \left[\eta_1 \bar{\omega}^{\left(\eta_0, \eta_1\right)}(r)
                    \right]$ is left-continuous. Consider any $r \in \mathbb{R}$, and an arbitrary increasing sequence $\left(r_i\right)_{i=1}^\infty$ where $r_i \uparrow r$ as $i \to \infty$. From the individual non-decreasing premise, for any $\left(\eta_0, \eta_1\right)\in \bar{\Omega}_{\text{demand}}$, the sequence $\left(\eta_1 \bar{\omega}^{\left(\eta_0, \eta_1\right)}(r_i)\right)_{i=1}^\infty$ is a non-decreasing sequence. From monotone convergence theorem, we will then have that, the limit and expectation can be interchanged, yeilding $\lim_{i \to \infty}\mathbb{E}_{\left(\eta_0, \eta_1\right) \sim \hat{\Pi}_{\text{demand}}}
                    \left[\eta_1 \bar{\omega}^{\left(\eta_0, \eta_1\right)}(r_i)
                    \right] = \mathbb{E}_{\left(\eta_0, \eta_1\right) \sim \hat{\Pi}_{\text{demand}}}
                    \left[\eta_1 \lim_{i \to \infty}\bar{\omega}^{\left(\eta_0, \eta_1\right)}(r_i)
                    \right]$. From the individual left-continuous premise, we will have that $\mathbb{E}_{\left(\eta_0, \eta_1\right) \sim \hat{\Pi}_{\text{demand}}}
                    \left[\eta_1 \lim_{i \to \infty}\bar{\omega}^{\left(\eta_0, \eta_1\right)}(r_i)
                    \right] = \mathbb{E}_{\left(\eta_0, \eta_1\right) \sim \hat{\Pi}_{\text{demand}}}
                    \left[\eta_1 \bar{\omega}^{\left(\eta_0, \eta_1\right)}(r)
                    \right]$. Since this holds, for any increasing seqeunce, we will then have that $\lim_{r' \to r^-}\mathbb{E}_{\left(\eta_0, \eta_1\right) \sim \hat{\Pi}_{\text{demand}}}
                    \left[\eta_1 \lim_{i \to \infty}\bar{\omega}^{\left(\eta_0, \eta_1\right)}(r')
                    \right] = \mathbb{E}_{\left(\eta_0, \eta_1\right) \sim \hat{\Pi}_{\text{demand}}}
                    \left[\eta_1 \bar{\omega}^{\left(\eta_0, \eta_1\right)}(r)
                    \right]$, implying the left-continuity.\footnote{The author thanks Rajat Dwaraknath for the suggestion of the use of monotone convergence theorem.}
            \end{proof}

        From the 14 conditions we have posited, we can define a class of competitive market that will be considered for the ``competitive market equilibrium".

        \begin{definition}
            (Well-Behaved Competitive Market)
            For any 
            \begin{itemize}
                \item finite supplier admissible actions set $\mathcal{A}_{\text{supply}} \ne \emptyset$,
                
                \item finite demander admissible actions set $\mathcal{A}_{\text{demand}} \ne \emptyset$,
                
                \item supplier intensive characteristics space $\Omega_{\text{supply}} \ne \emptyset$,
    
                \item demander intensive characteristics space $\Omega_{\text{demand}}  \ne \emptyset$,
            
                \item measurable function $w:\Omega_{\text{supply}}
                \times \mathcal{A}_{\text{supply}}
                        \times \mathbb{R}^+_0
                        \times \mathbb{R}
                        \times [0,1)
                        \to [-\infty, \infty)$,

                \item measurable function $\omega:\Omega_{\text{demand}}
                \times \mathcal{A}_{\text{demand}}
                        \times \mathbb{R}^+_0
                        \times \mathbb{R}
                        \times [0,1)
                        \to [-\infty, \infty)$,
    
                \item measurable function $\tilde{w}:\Omega_{\text{supply}}              \times \mathcal{A}_{\text{supply}}
                        \times \mathbb{R}^+_0
                        \times \mathbb{R}
                        \times [0,1)
                        \to \mathbb{R}$,
    
                \item measurable function $\tilde{\omega}:\Omega_{\text{demand}}         \times \mathcal{A}_{\text{demand}}
                        \times \mathbb{R}^+_0
                        \times \mathbb{R}
                        \times [0,1)
                        \to \mathbb{R}$,     
    
                \item individual supplier joint probability distribution $\hat{\Pi}_{\text{supply}} \in \Delta\left(\Omega_{\text{supply}} \times \mathbb{R}^+ \times \mathbb{R}^+_0\right)$,
    
                \item individual demander joint probability distribution $\hat{\Pi}_{\text{demand}} \in \Delta\left(\Omega_{\text{demand}} \times \mathbb{R}^+\right)$,   
            \end{itemize}
            a competitive market 
            \begin{align*}
                \circled{C}^{ \left(\mathcal{A}_{\text{supply}}, \mathcal{A}_{\text{demand}},
            \bar{\Omega}_{\text{supply}}, \bar{\Omega}_{\text{demand}}, w,\omega,\tilde{w},\tilde{\omega}\right)}
            \left(
                \hat{\Pi}_{\text{supply}}, \hat{\Pi}_{\text{demand}}
            \right)
            \end{align*}
            is a well-behaved competitive market if the conditions~\ref{condition:1},~\ref{condition:2},~\ref{condition:3},~\ref{condition:4},~\ref{condition:5},~\ref{condition:6},~\ref{condition:7},~\ref{condition:8},~\ref{condition:9},~\ref{condition:10},~\ref{condition:11},~\ref{condition:13},~\ref{condition:14}, and~\ref{condition:15} are satisfied.
        \end{definition}
        
    \subsection{Competitive Market Equilibrium Definition}
        
        Since every mediator has an option to not participate in the market, the equilibrium will require every mediator to achieve a non-negative utility. This suggests that the any winning mediator will place a demand price $r \in \mathbb{R}$ such that the demand revenue scaled by the probability of getting contracted is higher or equal to the supply cost, which is positive from the condition~\ref{condition:10} (no free supply). Therefore, if a resale opportunity is provided and if the highest demand price $r \in \mathbb{R}$ yields the corresponding demand revenue higher than the supply price, which is globally shared, and has some matched scarcity and some unmatched scarcity quoted at the price $r$, then the subsequent resales bidding in pseudo-equilibrium will occur with non-zero volume and a promised demand price $r' < r$, because of the condition~\ref{condition:15} (demand left-continuity). Note that this result is similar to the example of commonly known value auction analyzed in the subsubsection~\ref{subsubsection:commonlyknownvalue}. After the first resale, the demand matched at the original demand price $r$ will be reduced, since a non-zero mass of demanders will recontract with the newly offered price $r'$. Therefore, a (possibly infinite) sequence of resales will lead to the matched demand at $r$ converging to some non-negative value. If such value is $0$, then the utility of the original winning mediator who quotes price $r$ non-positive. Otherwise, the profit per unit will be positive. From the maximal monotonicity of the negative of the real demand correspondence~\ref{corollary:demandmaximalmonotone}, we have that, \textbf{if} there exists a deviation where the matched demand at the highest matched price $r$ converging to some positive value, \textbf{then}, by having every winning mediator quoting the price $r$ and having every resale bid price to be the same, the profit per unit of every mediator will be non-zero. 
        
        Note that a monopoly can replicate this bidding strategy, since the demand-side bid only consists of $1$ value. Thus, we will have that a monopoly will get a positive utility, which is better than getting any value of differential utility that one will otherwise get in the market equilibrium, which will be when every mediator has only secured differential volume of scarcity. 
        
        This suggests that the analysis can be done with the abstraction of the resale bidding strategy, which can be an infinite-time strategy, into only the limit matched quantity at the single originally quoted price $r$.

        \begin{definition}
            (Quasi-Well-Ordered)
            A set $R \subseteq \mathbb{R}$ is a quasi-well-ordered set if there exists a finite collection of non-decreasing real sequence $\left\{\left(r_i^{(j)}\right)_{i=1}^{\infty}\right\}_{j=1}^{m}$ for some $m \in \mathbb{N}$ such that
            \begin{align*}
                R = \bigcup_{j=1}^m \left\{r^{(j)}_i\right\}_{i=1}^{\infty}
                \text{.}
            \end{align*}
        \end{definition}

        \begin{corollary}
            For any quasi-well-ordered set $R$, for any $R' \subseteq R$, if $R' \ne \emptyset$, then $R'$ is quasi-well-ordered.
        \end{corollary}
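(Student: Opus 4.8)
The plan is to reduce the statement to a single fact about individual non-decreasing sequences: the range of a non-decreasing sequence is closed under passing to nonempty subsets, in the sense that any nonempty subset of such a range is again the range of some non-decreasing sequence. Granting this, the corollary follows immediately by distributing the intersection $R' \cap R$ over the finite union appearing in the definition of a quasi-well-ordered set.

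First I would establish this key lemma. Suppose $(r_i)_{i=1}^\infty$ is a non-decreasing real sequence with range $A = \{r_i\}_{i=1}^\infty$, and let $B \subseteq A$ be nonempty. I would construct a non-decreasing sequence with range exactly $B$ by restricting to the indices $i$ with $r_i \in B$, keeping their original order. A subsequence of a non-decreasing sequence is non-decreasing; every retained term lies in $B$ by construction; and every element of $B$ occurs, since $B \subseteq A$ forces each $b \in B$ to equal some $r_i$, and that index is retained because $r_i = b \in B$. The only case needing extra care is when only finitely many indices survive: here I would pad the finite subsequence by repeating its final term to obtain an infinite non-decreasing sequence, which leaves the range unchanged because the repeated value already lies in $B$. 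This exhibits $B$ as the range of a non-decreasing sequence, as required by the definition.

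With the lemma available, I would write $R = \bigcup_{j=1}^m \{r_i^{(j)}\}_{i=1}^\infty$ as in the definition, so that $R' = R' \cap R = \bigcup_{j=1}^m \left(R' \cap \{r_i^{(j)}\}_{i=1}^\infty\right)$. For each index $j$ with $R' \cap \{r_i^{(j)}\}_{i=1}^\infty \neq \emptyset$, the lemma produces a non-decreasing sequence whose range is exactly this intersection. I would then discard the indices $j$ for which the intersection is empty; since $R' \neq \emptyset$, at least one index survives, so the surviving sequences form a nonempty finite collection of size between $1$ and $m$ whose ranges union to $R'$. This collection is precisely a witness that $R'$ is quasi-well-ordered.

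I do not anticipate a genuine obstacle here. The only points requiring care are the finite-survival case in the lemma, where padding is needed so that the witnessing sequences are genuinely infinite as the definition demands, and the bookkeeping of discarding the empty intersections while retaining at least one piece, which is guaranteed precisely by the hypothesis $R' \neq \emptyset$.
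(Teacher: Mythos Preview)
Your proposal is correct and follows essentially the same approach as the paper: both arguments intersect $R'$ with each of the $m$ sequence ranges, discard the empty pieces (using $R'\ne\emptyset$ to guarantee at least one survives), and for each surviving piece extract the subsequence of terms lying in $R'$, padding by repetition of the last (largest) term when only finitely many indices remain. The only difference is cosmetic---you isolate the single-sequence step as a preliminary lemma, while the paper carries it out inline.
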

        \begin{proof}
            Since $R \in W$, there exists a finite collection of non-decreasing real sequence $\left\{\left(r_i^{(j)}\right)_{i=1}^{\infty}\right\}_{j=1}^{m}$ for some $m \in \mathbb{N}$ such that
            $R = \bigcup_{j=1}^m \left\{r^{(j)}_i\right\}_{i=1}^{\infty}$. Let $J = \left\{j\in \mathbb{N} \cap (-\infty,m]:\left\{r^{(j)}_i\right\}_{i=1}^{\infty} \cap R' \ne \emptyset\right\}$, so $\vert J\vert \in \mathbb{N}$. For each $j \in J$, we can find a non-empty subsequence of $\left(r^{(j)}_i\right)$ whose every element is in $R'$. If such subsequence is infinite, we denote its as $\tilde{r}^{(j)}$. Otherwise, we can append the biggest element into it infinitely, and denote the result sequence as $\tilde{r}^{(j)}$. Therefore, $R' = \bigcup_{j \in J}\left\{\tilde{r}^{(j)}_i\right\}_{i=1}^{\infty}$.
        \end{proof}

        \begin{definition}
            (Quasi-Well-Ordered Collection)
            For any $A \subseteq \mathbb{R}$ with $A \ne \emptyset$, the collection $W^{(A)} \in \mathcal{B}\left(\mathbb{R}\right)$ is defined to be a collection containing every quasi-well-ordered subset of $A$.
        \end{definition}

        We can then define a special case of equilibrium called ``competitive market equilibrium" by restricting the attention to the equilibrium with the set of winning mediators (from the supply market) is $[0,1]$.
        
        \begin{definition}
        \label{definition:competitiveequilibrium}
            (Competitive Market Equilibrium)

            A tuple
            \begin{align*}
                \left(
                    \mu_{\text{supply}}, 
                    \mu_{\text{demand}}
                \right)
            \end{align*}
            is a competitive market equilibrium of a well-behaved competitive market
            \begin{align*}
                \circled{C}^{ \left(\mathcal{A}_{\text{supply}}, \mathcal{A}_{\text{demand}},
                \bar{\Omega}_{\text{supply}}, \bar{\Omega}_{\text{demand}}, w,\omega,\tilde{w},\tilde{\omega}\right)}
                \left(
                    \hat{\Pi}_{\text{supply}}, \hat{\Pi}_{\text{demand}}
                \right)
            \end{align*}
            if, by defining
            \begin{enumerate}

                \item [a.] The admissible joint mediator supply bid strategies collection $\mathcal{B}_{\text{supply}}$ to be a collection containing every measurable function $f: [0,1) \to [0,1] \times \left(\mathbb{R} \cup \{-\infty\}\right)$ such that 
                \begin{itemize}
                    \item $\argmax_{z \in [0,1)} \left(f(z)\right)_2$ is well-defined;
                    \item $\argmax_{z \in \argmax_{z' \in [0,1)} \left(f(z')\right)_2} \left(f(z)\right)_1 \in \mathcal{B}\left([0,1)\right)$, and it either has finitely many elements or has a non-zero Lebesgue measure;
                \end{itemize}

                \item [b.] For any $G \in \mathcal{G}$, the conditional admissible joint mediator demand bid strategies collection $\mathcal{B}_{\text{demand}}^{(G)}$ to be a collection containing every measurable function $f:
                G \to \mathbb{R}$ such that $\text{Range}\left(f\right) \in W^{\left(\mathbb{R}\right)}$,
            \end{enumerate}
            
            there exists a tuple of functions 
            \begin{align*}
                \left(B_{\text{supply}}, B_{\text{demand}}\right)
                \in \mathcal{B}_{\text{supply}} \times \mathcal{B}_{\text{demand}}^{([0,1))}
            \end{align*}
            such that, by defining
            \begin{enumerate}

                \item [0.)] The conditional supply cost function $\hat{c}: [0,1] \times \mathbb{R} \to \mathbb{R}^+_0$ such that, for any $q \in [0,1]$, $\rho \in \mathbb{R}$,
                \begin{align*}
                    \hat{c}(q,\rho)
                    =
                    \lim_{x \to \left((1-q)\min\left(\tilde{S}(\rho)\right)+q\max\left(\tilde{S}(\rho)\right)\right)^+} \frac{(1-q)\min\left(\tilde{S}(\rho)\right)\ubar{p}_{\text{supply}}(\rho) +q\max\left(\tilde{S}(\rho)\right)\bar{p}_{\text{supply}}(\rho)}{x}
                    \text{.}
                \end{align*}

                \item [1.)] The wining supply price function $\bar{\rho}: \mathcal{B}_{\text{supply}} \to \mathbb{R} \cup \{-\infty\}$ such that, for any $\tilde{B}_{\text{supply}} \in \mathcal{B}_{\text{supply}}$,
                \begin{align*}
                    \bar{\rho}\left(\tilde{B}_{\text{supply}}\right) = \max_{z \in [0,1)} \left(\tilde{B}_{\text{supply}}(z)\right)_2
                    \text{;}
                \end{align*}

                \item [2.)] The wining residual supply proportion function $\bar{q}: \mathcal{B}_{\text{supply}} \to [0,1]$ such that, for any $\tilde{B}_{\text{supply}} \in \mathcal{B}_{\text{supply}}$,
                \begin{align*}
                    \bar{q}\left(\tilde{B}_{\text{supply}}\right) = \max_{z \in \argmax_{z' \in [0,1)} \left({\tilde{B}}_{\text{supply}}(z')\right)_2} \left({\tilde{B}}_{\text{supply}}(z)\right)_1
                    \text{;}
                \end{align*}

                \item [3.)] The wining supply mediators set function $\bar{M}: \mathcal{B}_{\text{supply}} \to \mathcal{G}$ such that, for any $\tilde{B}_{\text{supply}} \in \mathcal{B}_{\text{supply}}$,
                \begin{align*}
                    \bar{M}\left(\tilde{B}_{\text{supply}}\right) =
                    \argmax_{z \in \argmax_{z' \in [0,1)} \left(\tilde{B}_{\text{supply}}(z')\right)_2} \left(\tilde{B}_{\text{supply}}(z)\right)_1
                \text{;}
                \end{align*}

                \item [4.)] The acquired supply volume function $Q: \mathcal{B}_{\text{supply}} \to \mathbb{R}^+_0$ such that, for any $\tilde{B}_{\text{supply}} \in \mathcal{B}_{\text{supply}}$,
                \begin{align*}
                    &Q\left(\tilde{B}_{\text{supply}}\right) \\
                    &=
                    \begin{cases}
                        \left(1- \bar{q}\left(\tilde{B}_{\text{supply}}\right) \right)\min\left(\tilde{S}\left(\bar{\rho}\left(\tilde{B}_{\text{supply}}\right)\right)\right) + \bar{q}\left(\tilde{B}_{\text{supply}}\right) \max\left(\tilde{S}\left(\bar{\rho}\left(\tilde{B}_{\text{supply}}\right)\right)\right)
                        &\text{ if } \bar{\rho}\left(\tilde{B}_{\text{supply}}\right) \ne -\infty
                        \\
                        0 &\text{ if } \bar{\rho}\left(\tilde{B}_{\text{supply}}\right) = -\infty
                    \end{cases}
                    \text{;}
                \end{align*}

                \item [5.)] The unit cost function $c: \mathcal{B}_{\text{supply}} \to \mathbb{R}^+_0$ such that, for any $\tilde{B}_{\text{supply}} \in \mathcal{B}_{\text{supply}}$,
                \begin{align*}
                    c\left(\tilde{B}_{\text{supply}}\right)
                    =
                    \begin{cases}
                        \hat{c}\left(\bar{q}\left(\tilde{B}_{\text{supply}}\right), \bar{\rho}\left(\tilde{B}_{\text{supply}}\right)\right)
                        &\text{ if } \bar{\rho}\left(\tilde{B}_{\text{supply}}\right) \ne -\infty
                        \\
                        0 &\text{ if } \bar{\rho}\left(\tilde{B}_{\text{supply}}\right) = -\infty
                    \end{cases}
                    \text{;}
                \end{align*}

                \item [6.)] The divisor function $d: \mathcal{B}_{\text{supply}} \to \mathbb{R}^+$ such that, for any $\tilde{B}_{\text{supply}} \in \mathcal{B}_{\text{supply}}$,
                \begin{align*}
                    d\left(\tilde{B}_{\text{supply}}\right)
                    =
                    \begin{cases}
                        \left\vert \bar{M}\left(\tilde{B}_{\text{supply}}\right) \right\vert
                        &\text{ if } \left\vert \bar{M}\left(\tilde{B}_{\text{supply}}\right) \right\vert < \infty
                        \\
                        \lambda\left( \bar{M}\left(\tilde{B}_{\text{supply}}\right) \right)
                        &\text{ if } \left\vert \bar{M}\left(\tilde{B}_{\text{supply}}\right) \right\vert = \infty
                    \end{cases}
                    \text{;}
                \end{align*}

                \item [7.)] For any $\tilde{B}_{\text{supply}} \in \mathcal{B}_{\text{supply}}$, for any $\tilde{B}_{\text{demand}} \in \mathcal{B}_{\text{demand}}^{\left(\bar{M}\left(\tilde{B}_{\text{supply}}\right)\right)}$, 
                \begin{itemize}
                    \item 
                    The demand bid measure $\hat{\mu}_{\text{demand}}^{\left(\tilde{B}_{\text{supply}}, \tilde{B}_{\text{demand}}\right)}$ over $\mathcal{B}(\mathbb{R})$,
                    such that, for any $A \in \mathcal{B}(\mathbb{R})$,
                    \begin{align*}
                        \hat{\mu}_{\text{demand}}^{\left(\tilde{B}_{\text{supply}}, \tilde{B}_{\text{demand}}\right)}(A)
                        =
                        Q\left(\tilde{B}_{\text{supply}}\right)
                        \int_{z \in \bar{M}^{\left(\tilde{B}_{\text{supply}}\right)}}
                        \left(\tilde{B}_{\text{demand}}(z)\right)_1
                        \mathbf{1}_{ \left(\tilde{B}_{\text{demand}}(z)\right)_2 \in A}
                        I_{\bar{M}^{\left(\tilde{B}_{\text{supply}}\right)}}\left(dz\right)
                        \text{;}
                    \end{align*}

                    \item The number of enumerable sequences $N^{\left(\tilde{B}_{\text{supply}}, \tilde{B}_{\text{demand}}\right)}$ to be the smallest $n \in \mathbb{N}$ such that there exists a $n$-collection of non-decreasing infinite sequences $\left\{\left(r_i^{(j)}\right)_{i=1}^{\infty}\right\}_{j=1}^{n}$ such that $\text{support}\left(\hat{\mu}_{\text{demand}}^{\left(\tilde{B}_{\text{supply}}, \tilde{B}_{\text{demand}}\right)}\right) 
                        = \bigcup_{j=1}^n \left\{r^{(j)}_i\right\}_{i=1}^{\infty}$;

                    \item The supremum enumerable price function $\hat{r}^{\left(\tilde{B}_{\text{supply}}, \tilde{B}_{\text{demand}}\right)}: \mathbb{N}_{0} \cap \left(-\infty, N^{\left(\tilde{B}_{\text{supply}}, \tilde{B}_{\text{demand}}\right)}\right] \to \mathbb{R} \cup \{-\infty\}$ such that $\hat{r}^{\left(\tilde{B}_{\text{supply}}, \tilde{B}_{\text{demand}}\right)}(0) = -\infty$;
                    
                    \item For any $i \in \mathbb{N}_0\cap \left(-\infty, N^{\left(\tilde{B}_{\text{supply}}, \tilde{B}_{\text{demand}}\right)}\right)$, and, for any $i \in \mathbb{N}_0\cap \left(-\infty, N^{\left(\tilde{B}_{\text{supply}}, \tilde{B}_{\text{demand}}\right)}\right)$,
                    \begin{align*}
                        &\hat{r}^{\left(\tilde{B}_{\text{supply}}, \tilde{B}_{\text{demand}}\right)}(i+1)
                        =\\
                        &
                        \sup\left(\left\{x \in \left(\hat{r}^{\left(\tilde{B}_{\text{supply}}, \tilde{B}_{\text{demand}}\right)}(i),\infty\right) : \left\vert \text{support}\left(\hat{\mu}_{\text{demand}}^{\left(\tilde{B}_{\text{supply}}, \tilde{B}_{\text{demand}}\right)}\right) \cap \left(\hat{r}^{\left(\tilde{B}_{\text{supply}}, \tilde{B}_{\text{demand}}\right)}(i), x\right] \right\vert \in \mathbb{N}\right\}\right)
                        \text{;}
                    \end{align*}

                    \item For any $i \in \mathbb{N} \cap \left(-\infty, N^{\left(\tilde{B}_{\text{supply}}, \tilde{B}_{\text{demand}}\right)}\right]$, the demand price iteration function $R_i^{\left(\tilde{B}_{\text{supply}}, \tilde{B}_{\text{demand}}\right)}: \mathbb{N} \cap \left(-\infty,\left\vert \text{support}\left(\hat{\mu}_{\text{demand}}^{\left(\tilde{B}_{\text{supply}}, \tilde{B}_{\text{demand}}\right)}\right)
                        \cap \left[\hat{r}^{\left(\tilde{B}_{\text{supply}}, \tilde{B}_{\text{demand}}\right)}(i-1), \hat{r}^{\left(\tilde{B}_{\text{supply}}, \tilde{B}_{\text{demand}}\right)}(i)\right)
                    \right\vert\right]$ $\to \text{support}\left(\hat{\mu}_{\text{demand}}^{\left(\tilde{B}_{\text{supply}}, \tilde{B}_{\text{demand}}\right)}\right)
                        $ $\cap \left[\hat{r}^{\left(\tilde{B}_{\text{supply}}, \tilde{B}_{\text{demand}}\right)}(i-1), \hat{r}^{\left(\tilde{B}_{\text{supply}}, \tilde{B}_{\text{demand}}\right)}(i)\right)$ that is bijective and increasing;

                    \item For any $i \in \mathbb{N} \cap \left(-\infty, N^{\left(\tilde{B}_{\text{supply}}, \tilde{B}_{\text{demand}}\right)}\right]$, the clearing probability function $P_i^{\left(\tilde{B}_{\text{supply}}, \tilde{B}_{\text{demand}}\right)}:\\ \text{Domain}\left(R_i^{\left(\tilde{B}_{\text{supply}}, \tilde{B}_{\text{demand}}\right)}\right)$ $ \to [0,1]$ such that, for any $j \in \text{Domain}\left(R_i^{\left(\tilde{B}_{\text{supply}}, \tilde{B}_{\text{demand}}\right)}\right)$,
                    \begin{align*}
                        &P_i^{\left(\tilde{B}_{\text{supply}}, \tilde{B}_{\text{demand}}\right)}
                        (j)
                        =\\
                        &
                        \min\left(\left\{
                        \left(\bar{P}_i^{\left(\tilde{B}_{\text{supply}}, \tilde{B}_{\text{demand}}\right)} - \sum_{k=1}^{j-1} P_i^{\left(\tilde{B}_{\text{supply}}, \tilde{B}_{\text{demand}}\right)}(k)
                        \right),
                        \lim_{
                        x \to 
                        \left(
                        \frac{
                        \max\left(\tilde{D}\left(R_i^{\left(\tilde{B}_{\text{supply}}, \tilde{B}_{\text{demand}}\right)}(j)\right)\right)
                        }
                        {
                        \hat{\mu}_{\text{demand}}^{\left(\hat{\tilde{B}}_{\text{supply}}, \hat{\tilde{B}}_{\text{demand}}\right)}
                        \left(\left\{R_i^{\left(\tilde{B}_{\text{supply}}, \tilde{B}_{\text{demand}}\right)}(j)\right\}\right)
                        }
                        \right)^+
                        }
                        x^{-1}
                        \right\}\right)
                        \text{,}
                    \end{align*}
                    and the sequence $\left(\bar{P}_i^{\left(\tilde{B}_{\text{supply}}, \tilde{B}_{\text{demand}}\right)}\right)_{i=1}^{N^{\left(\tilde{B}_{\text{supply}}, \tilde{B}_{\text{demand}}\right)}}$ is defined such that, for any $i \in \mathbb{N} \cap \left(-\infty, N^{\left(\tilde{B}_{\text{supply}}, \tilde{B}_{\text{demand}}\right)}\right]$,
                    \begin{align*}
                        \bar{P}_i^{\left(\tilde{B}_{\text{supply}}, \tilde{B}_{\text{demand}}\right)} = 1 - \sum_{j=1}^{i-1} 
                        \left(
                        \sum_{k=1}^{\left\vert \text{Domain}\left(R_j^{\left(\tilde{B}_{\text{supply}}, \tilde{B}_{\text{demand}}\right)}\right)
                        \right\vert} P_j^{\left(\tilde{B}_{\text{supply}}, \tilde{B}_{\text{demand}}\right)}(k)
                        \right)
                        \text{;}
                    \end{align*}

                    \item The possible highest price
                    \begin{align*}
                        &\bar{r}^{\left(\tilde{B}_{\text{supply}}, \tilde{B}_{\text{demand}}\right)}
                        =\\
                        &\min\left(
                        \left(
                        \bigcup_{i=1}^{N^{\left(\tilde{B}_{\text{supply}}, \tilde{B}_{\text{demand}}\right)}} \left(A_i^{\left(\tilde{B}_{\text{supply}}, \tilde{B}_{\text{demand}}\right)}
                        \cup B_i^{\left(\tilde{B}_{\text{supply}}, \tilde{B}_{\text{demand}}\right)}\right)
                        \right)
                         \cup \left\{ \sup\left(\left\{r \in \mathbb{R}: \max\left(\tilde{D}(r)\right) > 0\right\}\right) \right\}\right)
                        \text{,}
                    \end{align*}
                    where, for any $i \in \mathbb{N} \cap \left(-\infty, N^{\left(\tilde{B}_{\text{supply}}, \tilde{B}_{\text{demand}}\right)}\right]$,
                    \begin{align*}
                        A_i^{\left(\tilde{B}_{\text{supply}}, \tilde{B}_{\text{demand}}\right)}
                        =
                        \left(
                        \left\{R_i^{\left(\tilde{B}_{\text{supply}}, \tilde{B}_{\text{demand}}\right)}(j)\right\}_{
                        \substack{j \in \text{Domain}\left(R_i^{\left(\tilde{B}_{\text{supply}}, \tilde{B}_{\text{demand}}\right)}\right) :\\ 
                        \sum_{k=1}^{j} P_i^{\left(\tilde{B}_{\text{supply}}, \tilde{B}_{\text{demand}}\right)}(k) = \bar{P}_i^{\left(\tilde{B}_{\text{supply}}, \tilde{B}_{\text{demand}}\right)}}
                         }
                         \right)
                        \text{,}
                    \end{align*}
                    and
                    \begin{align*}
                        &B_i^{\left(\tilde{B}_{\text{supply}}, \tilde{B}_{\text{demand}}\right)}
                        =\\
                        &
                        \begin{cases}
                            \emptyset
                            &\text{ if } \bar{P}_i^{\left(\tilde{B}_{\text{supply}}, \tilde{B}_{\text{demand}}\right)} \ne
                            \sum_{j \in \text{Domain}\left(R_i^{\left(\tilde{B}_{\text{supply}}, \tilde{B}_{\text{demand}}\right)}\right)} P_i^{\left(\tilde{B}_{\text{supply}}, \tilde{B}_{\text{demand}}\right)}(j)\\
                            \left\{\sup\left(\text{Range}\left(R_i^{\left(\tilde{B}_{\text{supply}}, \tilde{B}_{\text{demand}}\right)}\right)\right)\right\}
                            &\text{ if } \bar{P}_i^{\left(\tilde{B}_{\text{supply}}, \tilde{B}_{\text{demand}}\right)} =
                            \sum_{j \in \text{Domain}\left(R_i^{\left(\tilde{B}_{\text{supply}}, \tilde{B}_{\text{demand}}\right)}\right)} P_i^{\left(\tilde{B}_{\text{supply}}, \tilde{B}_{\text{demand}}\right)}(j)
                        \end{cases}
                        \text{;}
                    \end{align*}

                    \item The probability of the possible highest price getting matched
                    \begin{align*}
                        &p_{\text{high}}^{\left(\tilde{B}_{\text{supply}}, \tilde{B}_{\text{demand}}\right)}
                        =\\
                        &
                        \min \left(\left\{1,
                                \left(\lim_{x \to
                                \left(
                                \hat{\mu}_{\text{demand}}^{\left(\hat{\tilde{B}}_{\text{supply}}, \hat{\tilde{B}}_{\text{demand}}\right)}
                                \left(\left\{ \bar{r}^{\left(\tilde{B}_{\text{supply}}, \tilde{B}_{\text{demand}}\right)} \right\}\right)
                                \right)^+}
                                \frac{
                                    \max\left(\tilde{D}\left(\bar{r}^{\left(\tilde{B}_{\text{supply}}, \tilde{B}_{\text{demand}}\right)}\right)\right)
                                    \bar{p}^{\left(\tilde{B}_{\text{supply}}, \tilde{B}_{\text{demand}}\right)}
                                }
                                {
                                    x
                                }
                                \right)
                            \right\}\right)
                            \text{,}
                    \end{align*}
                    where
                    \begin{align*}
                        \bar{p}^{\left(\tilde{B}_{\text{supply}}, \tilde{B}_{\text{demand}}\right)}
                        =
                        1 -
                        \sum_{i=1}^{N^{\left(\tilde{B}_{\text{supply}}, \tilde{B}_{\text{demand}}\right)}}
                        \left(
                        \sum_{
                        \substack{
                        j \in \text{Domain}\left(R_i^{\left(\tilde{B}_{\text{supply}}, \tilde{B}_{\text{demand}}\right)}\right)\\
                        R_i^{\left(\tilde{B}_{\text{supply}}, \tilde{B}_{\text{demand}}\right)}(j) < \bar{r}^{\left(\tilde{B}_{\text{supply}}, \tilde{B}_{\text{demand}}\right)}
                        }}
                        P_i^{\left(\tilde{B}_{\text{supply}}, \tilde{B}_{\text{demand}}\right)}
                        (j)
                        \right)
                        \text{;}
                    \end{align*}

                    \item The resalable volume
                    \begin{align*}
                        &Q_{\text{resale}}^{\left(\tilde{B}_{\text{supply}}, \tilde{B}_{\text{demand}}\right)}
                        =
                        \\
                        &
                        \left(1 - p_{\text{high}}^{\left(\tilde{B}_{\text{supply}}, \tilde{B}_{\text{demand}}\right)}\right)
                        \hat{\mu}_{\text{demand}}^{\left(\tilde{B}_{\text{supply}}, \tilde{B}_{\text{demand}}\right)}\left(
                            \left\{\bar{r}^{\left(\tilde{B}_{\text{supply}}, \tilde{B}_{\text{demand}}\right)}\right\}
                        \right) +\hat{\mu}_{\text{demand}}^{\left(\tilde{B}_{\text{supply}}, \tilde{B}_{\text{demand}}\right)}\left(
                            \left( \bar{r}^{\left(\tilde{B}_{\text{supply}}, \tilde{B}_{\text{demand}}\right)}, \infty\right)
                        \right)
                        \text{;}
                    \end{align*}

                    \item The weighted revenue function $e_1^{\left(\tilde{B}_{\text{supply}}, \tilde{B}_{\text{demand}}\right)}: \mathbb{R} \to \mathbb{R}$ such that, for any $r \in \mathbb{R}$, 
                    \begin{align*}
                        e_1^{\left(\tilde{B}_{\text{supply}}, \tilde{B}_{\text{demand}}\right)}(r)
                        =
                        \begin{cases}
                            \hat{p}_{\text{demand}}(r)
                            &\text{ if } 
                            r < \bar{r}^{\left(\tilde{B}_{\text{supply}}, \tilde{B}_{\text{demand}}\right)}
                            \\
                            0
                            &\text{ if } 
                            r > \bar{r}^{\left(\tilde{B}_{\text{supply}}, \tilde{B}_{\text{demand}}\right)}
                            \\
                            p_{\text{high}}^{\left(\tilde{B}_{\text{supply}}, \tilde{B}_{\text{demand}}\right)} 
                            \hat{p}_{\text{demand}}(r)
                            &\text{ if } 
                            r = \bar{r}^{\left(\tilde{B}_{\text{supply}}, \tilde{B}_{\text{demand}}\right)}
                        \end{cases}
                        \text{;}
                    \end{align*}

                    \item The weighted revenue for resale function $e_2^{\left(\tilde{B}_{\text{supply}}, \tilde{B}_{\text{demand}}\right)}: \mathbb{R} \to \mathbb{R}$ such that, for any $r \in \mathbb{R}$, 
                    \begin{align*}
                        e_2^{\left(\tilde{B}_{\text{supply}}, \tilde{B}_{\text{demand}}\right)}(r)
                        =
                        \hat{p}_{\text{demand}}(r)\mathbf{1}_{r<\bar{r}^{\left(\tilde{B}_{\text{supply}}, \tilde{B}_{\text{demand}}\right)}}
                        \text{;}
                    \end{align*}

                    \item The profitable resale region
                    \begin{align*}
                        R^{\left(\tilde{B}_{\text{supply}}, \tilde{B}_{\text{demand}}\right)}_{\text{resale}}
                        =
                        \left\{r\in\mathbb{R}: 
                        \mathbf{1}_{\lambda\left(\bar{M}\left(\tilde{B}_{\text{supply}}\right)\right) = 0} 
                        \mathbf{1}_{Q_{\text{resale}}^{\left(\tilde{B}_{\text{supply}}, \tilde{B}_{\text{demand}}\right)} > 0}
                        e_2^{\left(\tilde{B}_{\text{supply}}, \tilde{B}_{\text{demand}}
                        \right)}(r) > c\left(\tilde{B}_{\text{supply}}\right)\right\}
                        \text{;}
                    \end{align*}

                    \item The conditional value function without resale $U^{\left(\tilde{B}_{\text{supply}}, \tilde{B}_{\text{demand}}\right)}: [0,1) \to \mathbb{R}^2$ such that, for any $m \in [0,1)$,
                    \begin{align*}
                        &U^{\left(\tilde{B}_{\text{supply}}, \tilde{B}_{\text{demand}}\right)}(m)
                        =\\
                        &
                        \begin{cases}
                            \frac{Q\left(\tilde{B}_{\text{supply}}\right)
                                \left[e_1^{\left(\tilde{B}_{\text{supply}}, \tilde{B}_{\text{demand}}\right)}
                                \left(\tilde{B}_{\text{demand}}(m)\right)
                                -c\left(\tilde{B}_{\text{supply}}\right)
                                \right]
                                }{d\left(\tilde{B}_{\text{supply}}\right)}
                                \left(
                                \mathbf{1}_{\left\vert \bar{M}\left(\tilde{B}_{\text{supply}}\right)\right\vert = \infty}, 1
                            \right)
                            &\text{ if } m \in \bar{M}\left(\tilde{B}_{\text{supply}}\right)
                            \\
                            (0,0) &\text{ if } m \not\in \bar{M}\left(\tilde{B}_{\text{supply}}\right)
                        \end{cases}
                        \text{;}
                    \end{align*}

                    \item For any $G \in \mathcal{G}$ and $G \subseteq \bar{M}\left(\tilde{B}_{\text{supply}}\right)$, the collection of deviated demand bid functions $\mathcal{B}_{\text{demand}}^{\left(\tilde{B}_{\text{supply}}, \tilde{B}_{\text{demand}}, G\right)}$ to contain every $f \in \mathcal{B}_{\text{demand}}^{\left(\bar{M}\left(\tilde{B}_{\text{supply}}\right)\right)}$ such that, for any $m \not\in G$,
                    \begin{align*}
                        f(m) = \tilde{B}_{\text{demand}}(m)
                        \text{,}
                    \end{align*}
                \end{itemize}      

                \item [8.)] For any $\tilde{B}_{\text{supply}} \in \mathcal{B}_{\text{supply}}$, 
                \begin{itemize}

                    \item A collection of non-resale demand bid function
                    \begin{align*}
                        \mathcal{C}_{\text{demand}}^{\left(\tilde{B}_{\text{supply}}\right)}
                        =
                            \left\{f\in \mathcal{B}_{\text{demand}}^{\left(\bar{M}\left(\tilde{B}_{\text{supply}}\right)\right)} :
                            R^{\left(\tilde{B}_{\text{supply}}, f\right)}_{\text{resale}} = \emptyset
                            \right\}
                        \text{;}
                    \end{align*}

                    \item The collection of possible subsequent equilibria with non-negative outcome $\mathcal{D}_{\text{demand}}^{\left(\tilde{B}_{\text{supply}}\right)}$ to contain every function $f \in \mathcal{C}_{\text{demand}}^{\left(\tilde{B}_{\text{supply}}\right)}$ such that
                    \begin{itemize}
                        \item (Non-Negative Outcome)
                        \begin{align*}
                            \left\{
                        U^{\left(\tilde{B}_{\text{supply}}, f\right)}(m)\right\}_{m \in [0,1)} \subseteq \left(\mathbb{R}^+_0 \right)^2
                        \text{;}
                        \end{align*}

                        \item (No Betrayal-Free Deviation) For any $G \in \mathcal{G}$ with $G \subseteq \bar{M}\left(\tilde{B}_{\text{supply}}\right)$, for any  $f_1 \in \mathcal{B}_{\text{demand}}^{\left(\tilde{B}_{\text{supply}}, f, G\right)} \cap \mathcal{C}_{\text{demand}}^{\left(\tilde{B}_{\text{supply}}\right)}$, if
                        \begin{align*}
                            I_G\left(\left\{
                                m \in G:
                                U^{\left(\tilde{B}_{\text{supply}}, f_1\right)}(m)
                                \succ_{2}
                                U^{\left(\tilde{B}_{\text{supply}},f\right)}(m)
                            \right\}\right)
                            = 1
                            \text{,}
                        \end{align*}
                        then there exists some $G' \subseteq G$ with $G', G-G' \in \mathcal{G}$ such that there exists some $f_2 \in \mathcal{B}_{\text{demand}}^{\left(\tilde{B}_{\text{supply}}, f_1, G'\right)} \cap \mathcal{C}_{\text{demand}}^{\left(\tilde{B}_{\text{supply}}\right)}$ such that
                        \begin{align*}
                            I_{G'}\left(\left\{
                                m \in G':
                                U^{\left(\tilde{B}_{\text{supply}}, f_2\right)}(m)
                                \succ_{2}
                                U^{\left(\tilde{B}_{\text{supply}}, f_1\right)}(m)
                            \right\}\right)
                            = 1
                            \text{,}
                        \end{align*}
                        and
                        \begin{align*}
                            I_{(G-G')}\left(\left\{
                                m \in G-G':
                                U^{\left(\tilde{B}_{\text{supply}}, f\right)}(m)
                                \succcurlyeq_{2}
                                U^{\left(\tilde{B}_{\text{supply}}, f_2\right)}(m)
                            \right\}\right)
                            > 0
                            \text{;}
                        \end{align*}

                    \end{itemize}
                
                    \item For any $G \in \mathcal{G}$, the collection of deviated supply bid functions $\mathcal{B}_{\text{supply}}^{\left(\tilde{B}_{\text{supply}}, G\right)}$ to contain every $f \in \mathcal{B}_{\text{supply}}$ such that, for any $m \not\in G$,
                    \begin{align*}
                        f(m) = \tilde{B}_{\text{supply}}(m)
                        \text{,}
                    \end{align*}
                \end{itemize}

                \item [9.)]  For any $\tilde{B}_{\text{supply}} \in \mathcal{B}_{\text{supply}}$ with $\bar{M}\left(\tilde{B}_{\text{supply}}\right) = \{0\}$, for any $\tilde{B}_{\text{demand}} \in \mathcal{B}_{\text{demand}}^{(\{0\})} - \mathcal{C}_{\text{demand}}^{(\{0\})}$,
                \begin{itemize}
                    \item The demand at resale set
                    \begin{align*}
                        D^{\left(\tilde{B}_{\text{supply}}, \tilde{B}_{\text{demand}}\right)} = &
                        \left(\left\{\lim_{z' \to z^+} \max\left(\tilde{D}(z')\right)\right\}_{
                        \substack{
                        z \in \mathbb{R} \cup \{-\infty\}:\\ \lim_{z' \to z^+} e_2^{\left(\tilde{B}_{\text{supply}}, \tilde{B}_{\text{demand}}\right)}(z') \ge c\left(\tilde{B}_{\text{supply}}\right);\\ \left\{e_2^{\left(\tilde{B}_{\text{supply}}, \tilde{B}_{\text{demand}}\right)}(z')\right\}_{z' \in (-\infty,z]} \subseteq \left(-\infty, c\left(\tilde{B}_{\text{supply}}\right)\right]}}
                        \right)
                        \\
                        &\cup
                        \left(
                        \left\{\max\left(\tilde{D}(z)\right)\right\}_{
                        \substack{
                        z \in \mathbb{R}:\\ e_2^{\left(\tilde{B}_{\text{supply}}, \tilde{B}_{\text{demand}}\right)}(z) \ge c\left(\tilde{B}_{\text{supply}}\right);\\ \left\{e_2^{\left(\tilde{B}_{\text{supply}}, \tilde{B}_{\text{demand}}\right)}(z')\right\}_{z' \in (-\infty,z)} \subseteq \left(-\infty, c\left(\tilde{B}_{\text{supply}}\right) \right]}}
                        \right)
                        \text{;}
                    \end{align*}

                    \item The monopoly utility after resale function $u^{\left(\tilde{B}_{\text{supply}}, \tilde{B}_{\text{demand}}\right)}:D^{\left(\tilde{B}_{\text{supply}}, \tilde{B}_{\text{demand}}\right)} \to \mathbb{R}$ such that, for any $d \in D^{\left(\tilde{B}_{\text{supply}}, \tilde{B}_{\text{demand}}\right)}$,
                    \begin{align*}
                        &u^{\left(\tilde{B}_{\text{supply}}, \tilde{B}_{\text{demand}}\right)}(d) 
                        =\\
                        &
                        \begin{cases}
                            -\left(Q\left(\tilde{B}_{\text{supply}}\right) - d\right)c\left(\tilde{B}_{\text{supply}}\right)
                            &\text{ if }
                            d \le Q\left(\tilde{B}_{\text{supply}}\right)
                            \\
                            \max\left(\tilde{D}\left(\tilde{B}_{\text{demand}}(0)\right))\right)
                            \frac{d - Q\left(\tilde{B}_{\text{supply}}\right)}{d - \max\left(\tilde{D}\left(\tilde{B}_{\text{demand}}(0)\right))\right)}
                            \left(\hat{p}_{\text{demand}}\left(\tilde{B}_{\text{demand}}(0)\right)-c\left(\tilde{B}_{\text{supply}}\right)\right)
                            &\text{ if }
                            d > Q\left(\tilde{B}_{\text{supply}}\right)
                        \end{cases}
                        \text{,}
                    \end{align*}
                    
                \end{itemize}
                
            \end{enumerate}
     
            we will have that
            \begin{enumerate}

                \item [0.)] (Equal Supply Bid)
                \begin{align*}
                    \left\vert \text{Range}\left(B_{\text{supply}}\right) \right\vert = 1
                \end{align*}

                \item [1.)] (Non-Negative Utility \& No Demand Bid Deviation) 
                \begin{align*}
                    B_{\text{demand}}
                    \in \mathcal{D}_{\text{demand}}^{\left(B_{\text{supply}}\right)}
                    \text{;}
                \end{align*}

                \item [2.)] (No Supply Bid Deviation) For any $G \in \mathcal{G}$, for any $f_1 \in \mathcal{B}_{\text{supply}}^{\left(B_{\text{supply}}, G\right)}$ such that there exists $g_1 \in \mathcal{D}_{\text{demand}}^{\left(f_1\right)}$ and there exists $g \in \mathcal{D}_{\text{demand}}^{\left(B_{\text{supply}}\right)}$ such that
                \begin{align*}
                    I_G\left(\left\{
                        m \in G:
                        U^{\left(f_1, g_1\right)}(m)
                        \succ_{2}
                        U^{\left(B_{\text{supply}},g\right)}(m)
                    \right\}\right)
                    = 1
                    \text{,}
                \end{align*}
                there exists some $G' \subseteq G$ with $G', G-G' \in \mathcal{G}$ such that there exists some $f_2 \in \mathcal{B}_{\text{supply}}^{\left(f_1, G'\right)}$ such that, for any $g_2 \in \mathcal{D}_{\text{demand}}^{\left(f_1\right)}$,
                \begin{align*}
                    I_{G'}\left(\left\{
                        m \in G':
                        U^{\left(f_2, g_2\right)}(m)
                        \succ_{2}
                        U^{\left(f_1, g_1\right)}(m)
                    \right\}\right)
                    = 1
                    \text{,}
                \end{align*}
                and, for any $g \in \mathcal{D}_{\text{demand}}^{\left(f\right)}$,
                \begin{align*}
                    I_{(G-G')}\left(\left\{
                        m \in G-G':
                        U^{\left(f, g\right)}(m)
                        \succcurlyeq_{2}
                        U^{\left(f_2, g_2\right)}(m)
                    \right\}\right)
                    > 0
                    \text{;}
                \end{align*}

                \item [3.)] (No Supply Bid Deviation: Monopoly with Optimistic Resale) For any $f \in \mathcal{B}_{\text{supply}}^{\left(B_{\text{supply}}, \{0\}\right)}$ such that 
                $\bar{M}\left(f\right) = \{0\}$, for any $g \in \mathcal{B}_{\text{demand}}^{(\{0\})} - \mathcal{C}_{\text{demand}}^{(\{0\})}$, 
                \begin{align*}
                    \sup_{d \in D^{(f,g)}} u^{(f,g)}(d) \le 0
                    \text{,}
                \end{align*}
            \end{enumerate}
            and
            \begin{enumerate}
                \item [A.)] $\mu_{\text{supply}}$ is a measure on $\mathcal{B}\left(\mathbb{R}\right)$ defined such that 
                \begin{align*}
                    \mu_{\text{supply}}\left(\mathbb{R}\right) = Q\left(B_{\text{supply}}\right)
                    \text{,}
                \end{align*}
                and 
                \begin{align*}
                    \text{support}\left(\mu_{\text{supply}}\right) \subseteq 
                    \left\{\bar{\rho}\left(B_{\text{supply}}\right)\right\}
                    \text{;}
                \end{align*}   
                \item [B.)] 
                $\mu_{\text{demand}}$ is a measure on $\mathcal{B}\left(\mathbb{R}\right)$ defined to be $\hat{\mu}^{\left(B_{\text{supply}}, B_{\text{demand}}\right)}$.
            \end{enumerate}

            We will further denote the tuple $\left(B_{\text{supply}}, B_{\text{demand}}\right)$ as a competitive strategy of the competitive market.
        \end{definition}

        Although this definition can be lengthy, it is important to note that this is not the general form of equilibrium since we have assumed the use of the strictly dominant strategy for the suppliers (both inside and outside of the market), the use of the weakly dominant strategy for the demanders (both inside and outside of the market), the joint decision of supply price and supply residual bidding, the specifically-treated instance of deviation of a single agent, and the converging employment of pseudo-equilibrium strategy in the resale bidding in the case of collusion deviation into the system with $0$ Lebesgue measure of winning mediators. However, a more general definition can be easily written but with the use of more complex notations. 

        Moreover, if more regularity is provided as in the informal description in the credit rationing model of the paper ``Credit Rationing in Markets with Imperfect Information", Stiglitz \& Weiss, 1981~\cite{stiglitz}, we can see that our reduced definition is analogous to the informal definition of competitive equilibrium in the paper.
        
    \subsection{Necessary \& Sufficient Conditions}

        From the definition~\ref{definition:competitiveequilibrium}, by considering the lack of a profitable deviation of a single mediator or the lack of a profitable deviation (collusion) of all mediators that if betrayal-free, we will have the following results.

        \subsubsection{Necessary Conditions}
        \label{subsubsection:necessary}

        \paragraph{Demand Side}

            First, we can consider a criteria for a possible demand-side strategy given that every mediator has secured the same differential volume of scarcity from the supply side.
            
            \begin{lemma}
            \label{lemma:nounmacthedsupply}
                (No Unmatched Supply)
                A well-behaved competitive market will have $\left(B_{\text{supply}}, B_{\text{demand}}\right)$ being a competitive strategy only if 
                \begin{align*}
                    Q_{\text{resale}}^{\left(B_{\text{supply}}, B_{\text{demand}}\right)}
                    =
                    0
                    \text{,}
                \end{align*}
                and, for any $r \in \text{Range}\left(B_{\text{demand}}\right)$, \begin{align*}
                    e_1^{\left(B_{\text{supply}}, B_{\text{demand}}\right)}(r) = \hat{p}_{\text{demand}}(r) \ge c\left(B_{\text{supply}}\right)
                    \text{.}
                \end{align*}
            \end{lemma}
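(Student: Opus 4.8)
The plan is to read off the two conclusions directly from the two substantive demand-side requirements bundled into ``competitive strategy'': the restriction $\bar{M}\left(B_{\text{supply}}\right) = [0,1)$ (so that $d\left(B_{\text{supply}}\right) = 1$, $\lvert\bar{M}\left(B_{\text{supply}}\right)\rvert = \infty$, and, by Condition~\ref{condition:10}, $c\left(B_{\text{supply}}\right) > 0$ as soon as $Q\left(B_{\text{supply}}\right) > 0$), together with clause~1 of Definition~\ref{definition:competitiveequilibrium}, namely $B_{\text{demand}} \in \mathcal{D}_{\text{demand}}^{\left(B_{\text{supply}}\right)}$. The degenerate case $\bar{\rho}\left(B_{\text{supply}}\right) = -\infty$ makes $Q\left(B_{\text{supply}}\right) = 0$ and $\hat{\mu}_{\text{demand}}^{\left(B_{\text{supply}}, B_{\text{demand}}\right)}$ the zero measure, so both assertions hold trivially; I assume $\bar{\rho}\left(B_{\text{supply}}\right) \neq -\infty$ and hence $Q\left(B_{\text{supply}}\right) > 0$ throughout.

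First I would unpack the Non-Negative Outcome clause. Because $d = 1$ and $\lvert\bar{M}\rvert = \infty$, for every $m \in [0,1)$ the conditional value is $U^{\left(B_{\text{supply}}, B_{\text{demand}}\right)}(m) = Q\left(B_{\text{supply}}\right)\left[e_1^{\left(B_{\text{supply}}, B_{\text{demand}}\right)}\left(B_{\text{demand}}(m)\right) - c\left(B_{\text{supply}}\right)\right](1,1)$, so non-negativity is equivalent to $e_1^{\left(B_{\text{supply}}, B_{\text{demand}}\right)}(r) \ge c\left(B_{\text{supply}}\right) > 0$ for every $r \in \text{Range}\left(B_{\text{demand}}\right)$. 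Inspecting the three branches of $e_1$: a bid $r > \bar{r}^{\left(B_{\text{supply}}, B_{\text{demand}}\right)}$ would give $e_1 = 0 < c$, so no bid exceeds $\bar{r}$ and $\hat{\mu}_{\text{demand}}^{\left(B_{\text{supply}}, B_{\text{demand}}\right)}\left((\bar{r}, \infty)\right) = 0$; a bid $r < \bar{r}$ gives $e_1(r) = \hat{p}_{\text{demand}}(r) \ge c$ directly; and a bid $r = \bar{r}$ gives $p_{\text{high}}\hat{p}_{\text{demand}}(\bar{r}) \ge c > 0$, whence $p_{\text{high}} > 0$, $\hat{p}_{\text{demand}}(\bar{r}) > 0$, and (using $p_{\text{high}} \le 1$) $\hat{p}_{\text{demand}}(\bar{r}) \ge c$. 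This already yields $\hat{p}_{\text{demand}}(r) \ge c\left(B_{\text{supply}}\right)$ for all $r$ in the range and reduces the resale volume to $Q_{\text{resale}}^{\left(B_{\text{supply}}, B_{\text{demand}}\right)} = \left(1 - p_{\text{high}}\right)\hat{\mu}_{\text{demand}}^{\left(B_{\text{supply}}, B_{\text{demand}}\right)}\left(\{\bar{r}\}\right)$.

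The core step is to show this is zero, i.e.\ to rule out $p_{\text{high}} < 1$ together with $\hat{\mu}_{\text{demand}}\left(\{\bar{r}\}\right) > 0$. Here the decisive observation is that a singleton coalition cannot be betrayed: as the paper notes after the definition of betrayal-free-collusion-free equilibrium, for $m \in [0,1)$ there is no partition of $\{m\}$ into two members of $\mathcal{G}$, so the ``No Betrayal-Free Deviation'' clause forces $B_{\text{demand}}$ to survive every profitable single-mediator deviation, reducing to a Nash-type first-order condition on the demand side. I would therefore assume $p_{\text{high}} < 1$ and $\hat{\mu}_{\text{demand}}\left(\{\bar{r}\}\right) > 0$ for contradiction, pick $m$ with $B_{\text{demand}}(m) = \bar{r}$ (possible since the set $\{B_{\text{demand}} = \bar{r}\}$ has positive measure), and let $f_1$ agree with $B_{\text{demand}}$ except $f_1(m) = r'$ for a price $r' < \bar{r}$. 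Since $f_1$ alters $B_{\text{demand}}$ only on the $\hat{\mu}_{\text{demand}}$-null point $\{m\}$, the aggregate $\hat{\mu}_{\text{demand}}$, and hence $\bar{r}$, $p_{\text{high}}$, and the whole function $e_1$, are unchanged, so $m$'s value becomes $Q\left(B_{\text{supply}}\right)\left[\hat{p}_{\text{demand}}(r') - c\left(B_{\text{supply}}\right)\right](1,1)$ through the $r < \bar{r}$ branch. By the left-continuity of $\hat{p}_{\text{demand}}$ (Condition~\ref{condition:15}) and $\left(1 - p_{\text{high}}\right)\hat{p}_{\text{demand}}(\bar{r}) > 0$, choosing $r'$ close enough to $\bar{r}$ gives $\hat{p}_{\text{demand}}(r') > p_{\text{high}}\hat{p}_{\text{demand}}(\bar{r})$, so $U^{\left(f_1\right)}(m) \succ_2 U^{\left(B_{\text{supply}}, B_{\text{demand}}\right)}(m)$; one checks $f_1$ is admissible and lies in $\mathcal{C}_{\text{demand}}^{\left(B_{\text{supply}}\right)}$ (the profitable-resale region is empty because $\lambda\left(\bar{M}\right) = 1 \neq 0$). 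With $G = \{m\}$ admitting no sub-partition in $\mathcal{G}$, this profitable deviation has no betrayal, contradicting $B_{\text{demand}} \in \mathcal{D}_{\text{demand}}^{\left(B_{\text{supply}}\right)}$. Hence $p_{\text{high}} = 1$ or $\hat{\mu}_{\text{demand}}\left(\{\bar{r}\}\right) = 0$, giving $Q_{\text{resale}} = 0$; and since $Q_{\text{resale}} = 0$ forces $p_{\text{high}} = 1$ whenever $\bar{r} \in \text{Range}\left(B_{\text{demand}}\right)$, the $r = \bar{r}$ branch of $e_1$ collapses to $\hat{p}_{\text{demand}}(\bar{r})$, so $e_1^{\left(B_{\text{supply}}, B_{\text{demand}}\right)}(r) = \hat{p}_{\text{demand}}(r)$ on the whole range, completing both claims.

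The step I expect to require the most care is the reduction itself: one must resist the tempting positive-measure ``coalition undercut,'' whose effect on the clearing order (resorting the demanders with threshold in $[r', \bar{r})$ and those rationed at $\bar{r}$) is genuinely delicate and, as a full collusion, is typically \emph{not} betrayal-free, so it gives no contradiction. The singleton deviation sidesteps all of this, but its validity hinges on two bookkeeping facts that I would verify explicitly: that perturbing $B_{\text{demand}}$ on a single point leaves every aggregate quantity (in particular $\bar{r}$ and $p_{\text{high}}$) invariant, so that the deviating mediator inherits the full-matching value $\hat{p}_{\text{demand}}(r')$ encoded in the $r < \bar{r}$ branch of $e_1$; and that left-continuity (Condition~\ref{condition:15}), rather than mere monotonicity, is what guarantees the strict gain $\hat{p}_{\text{demand}}(r') > p_{\text{high}}\hat{p}_{\text{demand}}(\bar{r})$ for $r'$ just below $\bar{r}$. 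The boundedness and maximal monotonicity of $-\tilde{D}$ (Corollary~\ref{corollary:demandmaximalmonotone}) enter only to ensure $\bar{r}$ is finite and well-defined so that these manipulations make sense.
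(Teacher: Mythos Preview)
Your argument is correct and follows essentially the same route as the paper: exploit non-negative utility to exclude bids above $\bar r$, then use a single-mediator undercut from $\bar r$ to some $r'<\bar r$ together with the left-continuity Condition~\ref{condition:15} to obtain a profitable, betrayal-immune deviation when $p_{\text{high}}<1$. Your write-up is in fact more careful than the paper's (explicit treatment of the degenerate $\bar\rho=-\infty$ case, the verification that $f_1\in\mathcal{C}_{\text{demand}}^{(B_{\text{supply}})}$ via $\lambda(\bar M)=1$, and the explicit invariance of aggregate quantities under a null-set perturbation), but the substance is the same.
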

            \begin{proof}
                Let $\left(B_{\text{supply}}, B_{\text{demand}}\right)$ be such that $Q_{\text{resale}}^{\left(B_{\text{supply}}, B_{\text{demand}}\right)} > 0$, so there exists some $m \in [0,1)$ such that $B_{\text{demand}}(m) = \bar{r}^{\left(B_{\text{supply}}, B_{\text{demand}}\right)}$, and $p_{\text{high}}^{\left(B_{\text{supply}}, B_{\text{demand}}\right)} < 1$. From the condition~\ref{condition:15} (demand left-continuity), $\lim_{r \to \left(B_{\text{demand}}(m)\right)^-} e_1^{\left(\tilde{B}_{\text{supply}}, \tilde{B}_{\text{demand}}\right)}\left(B_{\text{demand}}(m)\right)
                =
                \lim_{r \to \left(B_{\text{demand}}(m)\right)^-} \hat{p}_{\text{demand}}\left(B_{\text{demand}}(m)\right)
                =
                \hat{p}_{\text{demand}}\left(B_{\text{demand}}(m)\right)
                >
                e_1^{\left(\tilde{B}_{\text{supply}}, \tilde{B}_{\text{demand}}\right)}\left(B_{\text{demand}}(m)\right)$, so there exists some $r' < B_{\text{demand}}(m)$ such that $m$ will profit from changing the quote price to $r'$. Thus, $\left(B_{\text{supply}}, B_{\text{demand}}\right)$ is not a competitive strategy.
    
                If there exists some $m$ such that $e_1^{\left(\tilde{B}_{\text{supply}}, \tilde{B}_{\text{demand}}\right)}\left(B_{\text{demand}}(m)\right) \ne \hat{p}_{\text{demand}}\left(B_{\text{demand}}(m)\right)$, then $B_{\text{demand}}(m) > \bar{r}^{\left(B_{\text{supply}}, B_{\text{demand}}\right)}$, making $e_1^{\left(\tilde{B}_{\text{supply}}, \tilde{B}_{\text{demand}}\right)}\left(B_{\text{demand}}(m)\right) = 0 < c\left(B_{\text{supply}}\right)$, contradicting the non-negative utility criteria.
            \end{proof}  

            \begin{lemma} (Sandwiched Demand Prices)
            \label{lemma:sandwich}
                A well-behaved competitive market will have $\left(B_{\text{supply}}, B_{\text{demand}}\right)$ being a competitive strategy with $Q\left(B_{\text{supply}}\right) > 0$ only if
                \begin{align*}
                    \lambda\left(\left\{m \in [0,1): \max\left(\tilde{D}\left(B_{\text{demand}}(m)\right)\right) \ge Q\left(B_{\text{supply}}\right)\right\}\right) > 0
                    \text{,}
                \end{align*}
                and
                \begin{align*}
                    \lambda\left(\left\{m \in [0,1): \max\left(\tilde{D}\left(B_{\text{demand}}(m)\right)\right) \le Q\left(B_{\text{supply}}\right)\right\}\right) > 0
                    \textbf{, or }
                    \text{Range}\left(B_{\text{demand}}\right) \subseteq
                    \argmax_{r \in \mathbb{R}} \hat{p}_{\text{demand}}(r)
                    \text{.}
                \end{align*}
            \end{lemma}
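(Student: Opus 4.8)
The plan is to prove both displayed conditions by contraposition, keeping fixed a competitive strategy $\left(B_{\text{supply}}, B_{\text{demand}}\right)$ with $Q\left(B_{\text{supply}}\right) > 0$. Write $Q = Q\left(B_{\text{supply}}\right)$ and $c = c\left(B_{\text{supply}}\right)$; by the no-free-supply condition~\ref{condition:10} we have $c > 0$, and by Lemma~\ref{lemma:nounmacthedsupply} (No Unmatched Supply) every acquired unit clears, so $Q_{\text{resale}}^{\left(B_{\text{supply}}, B_{\text{demand}}\right)} = 0$, the matched volume equals $Q$, and $\hat{p}_{\text{demand}}\left(B_{\text{demand}}(m)\right) \ge c$ for every winning $m$. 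I will use repeatedly that $\tilde{D}$ is a bounded decreasing (maximal monotone) correspondence by Corollary~\ref{corollary:demandmaximalmonotone}, so $\max\left(\tilde{D}(r)\right)$ is non-increasing in $r$, and that $\text{Range}\left(B_{\text{demand}}\right)$ is quasi-well-ordered, so any of its nonempty subsets attains a minimum.

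For the first inequality, suppose for contradiction that $\lambda\left(\left\{m : \max\left(\tilde{D}\left(B_{\text{demand}}(m)\right)\right) \ge Q\right\}\right) = 0$; equivalently, every bid price placed by a positive-measure set of mediators satisfies $\max\left(\tilde{D}(r)\right) < Q$. Since a measure-zero set of mediators carries no scarcity, all of the supply $Q$ is placed at positive-measure bid prices, whose minimum $r_*$ is attained and satisfies $\max\left(\tilde{D}(r_*)\right) < Q$. Any demander matching any order must be willing to pay at least $r_*$, and the total volume of demanders with willingness $\ge r_*$ is exactly $\max\left(\tilde{D}(r_*)\right) < Q$. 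Hence strictly fewer than $Q$ units can clear, forcing $Q_{\text{resale}}^{\left(B_{\text{supply}}, B_{\text{demand}}\right)} > 0$ and contradicting Lemma~\ref{lemma:nounmacthedsupply}.

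For the second condition, assume its first disjunct fails, i.e.\ $\lambda\left(\left\{m : \max\left(\tilde{D}\left(B_{\text{demand}}(m)\right)\right) \le Q\right\}\right) = 0$, so a.e.\ bid satisfies $\max\left(\tilde{D}\left(B_{\text{demand}}(m)\right)\right) > Q$ (strict aggregate slack). I claim the second disjunct then holds. If not, a positive-measure set of mediators bids some $r_0$ with $\hat{p}_{\text{demand}}(r_0) < \sup_r \hat{p}_{\text{demand}}(r)$, and I will exhibit a profitable unilateral deviation, contradicting the No-Demand-Bid-Deviation requirement $B_{\text{demand}} \in \mathcal{D}_{\text{demand}}^{\left(B_{\text{supply}}\right)}$ (for a singleton $G = \{m\}$ the betrayal branch of that condition is vacuous, so no single-mediator strict improvement is permitted). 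Since only $Q$ units clear while willing demand strictly exceeds $Q$ at every bid price, a positive residual mass of unmatched demanders survives; in the continuum limit the uniform low-to-high matching consumes demand proportionally across willingness levels, leaving a positive mass of unmatched demanders with willingness above any price $r'$ with $\max\left(\tilde{D}(r')\right) > 0$. An infinitesimal mediator re-bidding to a revenue-maximizing such $r'$ therefore still clears fully and earns $\hat{p}_{\text{demand}}(r') - c > \hat{p}_{\text{demand}}(r_0) - c$ per unit, where conditions~\ref{condition:14} and~\ref{condition:15} are invoked to guarantee a strictly higher attainable revenue. This strict improvement is the desired contradiction.

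The hard part will be the clearing-feasibility step in the second part: showing rigorously that strict aggregate slack leaves a positive residual mass of \emph{high-willingness} demanders available to an infinitesimal deviator at the revenue-maximizing price, rather than having the high-willingness tail exhausted by the cheaper bulk orders. Establishing this requires unwinding the order-book clearing dynamics of Definition~\ref{definition:competitiveequilibrium} (the functions $P_i^{(\cdot)}$ and $p_{\text{high}}^{(\cdot)}$) in the continuum limit to confirm proportional consumption of demand. Once that residual-demand fact is secured, both parts reduce to monotonicity of $\tilde{D}$ together with the revenue conditions, and the remaining bookkeeping is routine.
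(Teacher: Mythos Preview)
Your first paragraph is essentially the paper's argument: if no positive-measure bid has $\max(\tilde D(r))\ge Q$, then clearable volume is at most $\max(\tilde D(r_*))<Q$, contradicting Lemma~\ref{lemma:nounmacthedsupply}. That is fine.

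For the second part you have the same end-game as the paper (a single-mediator deviation to a higher-revenue price), but you leave open exactly the step that matters and label it the ``hard part''. The paper does not try to trace residual high-willingness demand through the clearing dynamics at all. Instead it makes one structural observation: if $\lambda(\{m:\max(\tilde D(B_{\text{demand}}(m)))\le Q\})=0$, then none of the sets $A_i,B_i$ in Definition~\ref{definition:competitiveequilibrium} can be triggered at a finite bid price, so the ``possible highest price'' collapses to $\bar r^{(B_{\text{supply}},B_{\text{demand}})}=\sup\{r:\max(\tilde D(r))>0\}$. Consequently $e_1^{(B_{\text{supply}},B_{\text{demand}})}(r)=\hat p_{\text{demand}}(r)$ for every $r$ below that supremum, and a single mediator deviating to any such $r'$ earns exactly $\hat p_{\text{demand}}(r')-c$. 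The Nash condition then forces $\hat p_{\text{demand}}(r)\ge\hat p_{\text{demand}}(r')$ for all $r\in\text{Range}(B_{\text{demand}})$ and all $r'$, i.e.\ the range sits in $\argmax \hat p_{\text{demand}}$.

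So the gap in your write-up is real but closable: rather than ``unwinding the order-book dynamics to confirm proportional consumption'', you should argue directly that strict aggregate slack at every bid forces $\bar r$ to the top of the demand curve, which makes $e_1=\hat p_{\text{demand}}$ on the whole relevant range. That single line replaces your residual-demand analysis and makes the invocation of Conditions~\ref{condition:14} and~\ref{condition:15} unnecessary.
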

            \begin{proof}
                Assume that $\left(B_{\text{supply}}, B_{\text{demand}}\right)$ is a competitive strategy. The first claim follows directly from the lemma~\ref{lemma:nounmacthedsupply}.
    
                If $\lambda\left(\left\{m \in [0,1): \max\left(\tilde{D}\left(B_{\text{demand}}(m)\right)\right) \le Q\left(B_{\text{supply}}\right)\right\}\right) = 0$, then possible highest price $\bar{r}^{\left(B_{\text{supply}}, B_{\text{demand}}\right)} = $ $\sup\left(\left\{r \in \mathbb{R}: \max\left(\tilde{D}(r)\right) > 0\right\}\right)$, so the weighted revenue function $e_1^{\left(B_{\text{supply}}, \tilde{B}_{\text{demand}}\right)} = \hat{p}_{\text{demand}}$. Since there is no profitable deviation of a single agent, for any $r \in \text{Range}\left(B_{\text{demand}}\right)$, for any $r' \in \mathbb{R}$, we have that $\hat{p}_{\text{demand}}(r) \ge  \hat{p}_{\text{demand}}(r')$. Thus, $\text{Range}\left(B_{\text{demand}}\right) \subseteq
                    \argmax_{r \in \mathbb{R}} \hat{p}_{\text{demand}}(r)$.
            \end{proof}

            Although these lemmas are from the consideration of the demand-side strategy, they also restrict that the market supply price (as well as supply quantity) cannot be too high.
    
            \begin{lemma}
            \label{lemma:pricemaxmizer}
                (Demand Prices as Conditional Maximizers)
                A well-behaved competitive market will have $\left(B_{\text{supply}}, B_{\text{demand}}\right)$ being a competitive strategy with $Q\left(B_{\text{supply}}\right) > 0$ only if
                \begin{align*}
                    \left\{\hat{p}_{\text{demand}}(r)\right\}_{r \in \text{Range}\left(B_{\text{demand}}\right)}
                    \subseteq
                    \left\{
                        \sup\left(\left\{\hat{p}_{\text{demand}}(r):  r\in \mathbb{R} \land  \max\left(\tilde{D}(r)\right)  
                        \ge Q\left(B_{\text{supply}}\right) \right\}\right)
                    \right\}
                    \cap \left[c\left(B_{\text{supply}}\right),\infty\right)
                    \text{.}
                \end{align*}
            \end{lemma}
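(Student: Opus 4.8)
The plan is to read off the containment from the impossibility of a profitable single-mediator deviation, bootstrapping from the two preceding lemmas. Fix a competitive strategy $\left(B_{\text{supply}}, B_{\text{demand}}\right)$ with $Q\left(B_{\text{supply}}\right) > 0$, so that $\bar{M}\left(B_{\text{supply}}\right) = [0,1)$ and every mediator holds the same differential volume. Two claims must be established: that each attained revenue lies in $\left[c\left(B_{\text{supply}}\right), \infty\right)$, and that the attained revenues collapse to the single value $V^{*} := \sup\left\{\hat{p}_{\text{demand}}(r') : \max\left(\tilde{D}(r')\right) \ge Q\left(B_{\text{supply}}\right)\right\}$. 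The first claim is immediate from Lemma~\ref{lemma:nounmacthedsupply}, which already gives $e_1^{\left(B_{\text{supply}}, B_{\text{demand}}\right)}(r) = \hat{p}_{\text{demand}}(r) \ge c\left(B_{\text{supply}}\right)$ for every $r \in \text{Range}\left(B_{\text{demand}}\right)$, and simultaneously yields $Q_{\text{resale}}^{\left(B_{\text{supply}}, B_{\text{demand}}\right)} = 0$; the latter forces the demand measure to assign no mass strictly above the possible highest price $\bar{r}^{\left(B_{\text{supply}}, B_{\text{demand}}\right)}$ and forces clearing at that price to be complete.

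For the second claim I would exploit that a single mediator is a Lebesgue-null set admitting no nontrivial partition in $\mathcal{G}$, so the betrayal-free-collusion-free requirement degenerates into ordinary no-single-agent improvement. Because one mediator has measure zero, perturbing its demand bid leaves $\hat{\mu}_{\text{demand}}$, and hence $\bar{r}^{\left(B_{\text{supply}}, B_{\text{demand}}\right)}$, $p_{\text{high}}^{\left(B_{\text{supply}}, B_{\text{demand}}\right)}$, and every aggregate quantity, unchanged; moreover $R^{\left(B_{\text{supply}}, \cdot\right)}_{\text{resale}} = \emptyset$ throughout since $\lambda\left(\bar{M}\left(B_{\text{supply}}\right)\right) = 1 \neq 0$, so every deviation stays in the no-resale class. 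The mediator's lexicographic utility is then an increasing function of $e_1^{\left(B_{\text{supply}}, B_{\text{demand}}\right)}(r') - c\left(B_{\text{supply}}\right)$ alone. Using completeness of clearing one has $e_1^{\left(B_{\text{supply}}, B_{\text{demand}}\right)}(r') = \hat{p}_{\text{demand}}(r')$ for $r' \le \bar{r}^{\left(B_{\text{supply}}, B_{\text{demand}}\right)}$ and $e_1 = 0$ above it; since $c\left(B_{\text{supply}}\right) > 0$ by Condition~\ref{condition:10}, quoting above $\bar{r}$ is dominated. No-improvement at almost every used price $r$ therefore gives $\hat{p}_{\text{demand}}(r) = \sup\left\{\hat{p}_{\text{demand}}(r') : r' \le \bar{r}^{\left(B_{\text{supply}}, B_{\text{demand}}\right)}\right\}$.

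It remains to match the feasible clearing region with the demand-side region, i.e. to show $\bar{r}^{\left(B_{\text{supply}}, B_{\text{demand}}\right)} = \sup\left\{r : \max\left(\tilde{D}(r)\right) \ge Q\left(B_{\text{supply}}\right)\right\} =: r^{*}$. Since $-\tilde{D}$ is maximal monotone and bounded (Corollary~\ref{corollary:demandmaximalmonotone}), $\left\{r : \max\left(\tilde{D}(r)\right) \ge Q\left(B_{\text{supply}}\right)\right\}$ is a left-unbounded interval with right endpoint $r^{*}$. The inequality $\bar{r} \ge r^{*}$ follows from Lemma~\ref{lemma:sandwich}: in its first case a positive measure of used prices satisfy $\max\left(\tilde{D}\right) \le Q$, hence sit at or above $r^{*}$, while all used prices lie at or below $\bar{r}$. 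For $\bar{r} \le r^{*}$ I would argue that a positive mass of quotes strictly above $r^{*}$ cannot fully clear: the demand available above $r^{*}$ is strictly below $Q$, and even after the random low-price-first matching consumes some high-willingness demanders, the orders quoted above $r^{*}$ are left with residual volume, contradicting $Q_{\text{resale}}^{\left(B_{\text{supply}}, B_{\text{demand}}\right)} = 0$. Left-continuity of $\hat{p}_{\text{demand}}$ (Condition~\ref{condition:15}) ensures the supremum over the interval is insensitive to whether the endpoint $r^{*}$ is attained, so $V = V^{*}$. In the degenerate second case of Lemma~\ref{lemma:sandwich} every used price already globally maximizes $\hat{p}_{\text{demand}}$, and since those prices clear $Q$ units they are feasible, so the global maximum coincides with $V^{*}$ and the conclusion is direct.

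The routine portions are the cost bound and the reduction of the mediator's utility comparison to $e_1 - c$, both of which follow from the prior lemmas and the continuum measure-zero observation. The main obstacle is the order-book clearing analysis underlying $\bar{r} = r^{*}$: one must verify rigorously that an infinitesimal deviating mediator quoting any $r' < r^{*}$ is matched with certainty (so its revenue is genuinely $\hat{p}_{\text{demand}}(r')$ and not a clearing-probability discount of it), and that positive mass above $r^{*}$ necessarily strands supply despite the randomized, low-price-first matching draining high-willingness demanders. Reconciling the marginal price $\bar{r}$ --- the interplay of $p_{\text{high}}$, the completeness forced by $Q_{\text{resale}} = 0$, and the left-continuity of $\hat{p}_{\text{demand}}$ --- is the delicate step.
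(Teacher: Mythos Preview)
Your opening moves match the paper's: both use Lemma~\ref{lemma:nounmacthedsupply} to get $e_1(r)=\hat p_{\text{demand}}(r)\ge c\left(B_{\text{supply}}\right)$ on the range, and both use the measure-zero single-mediator deviation to force all used revenues to coincide at a common value $p$. Where you diverge is in the second half. You try to finish with single-agent deviations alone, which forces you to identify $\bar r^{\left(B_{\text{supply}},B_{\text{demand}}\right)}$ with $r^{*}$ via an order-book clearing argument; you correctly flag this as the delicate step, and indeed the inequality $\bar r\ge r^{*}$ from Lemma~\ref{lemma:sandwich} is not as immediate as you suggest (a used price with $\max(\tilde D(r))=Q$ need not lie at or above $r^{*}$), while the converse inequality requires tracking how random low-price matching depletes high-willingness demanders.

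The paper bypasses all of this by invoking the betrayal-free-\emph{collusion} side of the equilibrium notion rather than single-agent deviations. If $p<V^{*}$, pick $r'$ with $\hat p_{\text{demand}}(r')>p$ and $\max(\tilde D(r'))\ge Q\left(B_{\text{supply}}\right)$, and let the entire set $G=[0,1)$ deviate to quoting $r'$. Under the deviated profile the new demand measure has all mass at $r'$ and $\max(\tilde D(r'))\ge Q$, so the new $\bar r$ is at least $r'$ and every mediator clears with $e_1=\hat p_{\text{demand}}(r')>p$; the deviation is betrayal-free because any sub-coalition's further deviation again cannot move $e_1$ (maximal monotonicity of $-\tilde D$ keeps $\bar r\ge r'$). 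This gives the contradiction in one stroke, without ever locating $\bar r$ relative to $r^{*}$. So your approach is not wrong in principle, but it trades the one-line collusion argument for a clearing analysis whose details you have not fully pinned down; the paper's route is both shorter and avoids precisely the obstacle you identify.
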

            \begin{proof}
                Assume that $\left(B_{\text{supply}}, B_{\text{demand}}\right)$ is a competitive strategy.
                Since a deviation of a demand price quote of a single mediator will not change the weighted revenue function $e_1^{\left(B_{\text{supply}}, B_{\text{demand}}\right)}$. From the the lemma~\ref{lemma:nounmacthedsupply}, $\left\{\hat{p}_{\text{demand}}(r)\right\}_{r \in \text{Range}\left(B_{\text{demand}}\right)} = \left\{e_1^{\left(B_{\text{supply}}, B_{\text{demand}}\right)}(r)\right\}_{r \in \text{Range}\left(B_{\text{demand}}\right)}$ is a singleton whose element, denoted as $p$, is higher or equal to the supply cost $c\left(B_{\text{supply}}\right)$. 
    
                If the equality in the theorem statement is not held, there exists some $r'$ such that $\hat{p}_{\text{demand}}(r') > p$ and $\max\left(\tilde{D}(r')\right)\ge Q\left(B_{\text{supply}}\right)$, we can consider a possible collusion of using $\tilde{B}_{\text{demand}}$ such that $\text{Range}\left(\tilde{B}_{\text{demand}}\right) = \{r'\}$, and get that, for any $m \in [0,1)$, $U^{\left(B_{\text{supply}}, \tilde{B}_{\text{demand}}\right)}(m) \succ_2 U^{\left(B_{\text{supply}}, B_{\text{demand}}\right)}(m)$. This collusion is betrayal-free, because for any $G \in \mathcal{G}$, the joint deviation will not change the weighted revenue function $e_1^{\left(B_{\text{supply}}, \tilde{B}_{\text{demand}}\right)}$, because of the maximal monotonicity of the negative of the real demand correspondence $\tilde{D}$ and that $\max\left(\tilde{D}(r')\right)\ge Q\left(B_{\text{supply}}\right)$, making $\bar{r}^{\left(B_{\text{supply}}, \tilde{B}_{\text{demand}}\right)} \ge r'$. This contradicts with the criteria for an equilibrium strategy.
            \end{proof}

            These $2$ lemmas characterize what a demand side strategy can be if a supply side strategy supports an equilibrium. Although these are necessary conditions from considering a lack of deviation for a single mediator or for all mediators, we will later see that this is also a sufficient condition.

            \begin{definition}
                For a well-behaved competitive market, for any $\left(\rho,q\right) \in \mathbb{R} \times [0,1]$ with $q\max\left(\tilde{S}(\rho)\right) > 0$, we define $\tilde{\mathcal{B}}_{\text{demand}}^{(\rho, q)}$ to be a collection containing every $g \in \mathcal{B}_{\text{demand}}^{[0,1)}$ such that, by denoting $f \in \mathcal{B}_{\text{supply}}$ such that $\text{Range}\left(f\right) = (q,\rho)$, 
                \begin{enumerate}
                    \item $\bar{r}^{\left(f, g\right)} \ge \sup\left(\text{Range}\left(f\right)\right)$
                    \item If $\bar{r}^{\left(f,g\right)} \in \text{Range}\left(g\right)$, then $p_{\text{high}}^{\left(f,g\right)} = 1$;
                    \item 
                    $\left\{\hat{p}_{\text{demand}}(r)\right\}_{r \in \text{Range}\left(f\right)}
                    \subseteq
                    \left\{
                        \sup\left(\left\{\hat{p}_{\text{demand}}(r):  r\in \mathbb{R} \land  \max\left(\tilde{D}(r)\right)  
                        \ge Q\left(f\right) \right\}\right)
                    \right\}
                    \cap \left[c\left(f\right),\infty\right)$;
                    \item At least one of the following holds:
                    \begin{itemize}
                        \item $\text{Range}\left(g\right) \subseteq \argmax_{r \in \mathbb{R}} \hat{p}_{\text{demand}}(r)$, \textbf{or}
                        \item 
                        $\bar{r}^{\left(f,g\right)} \le \sup\left(\text{Range}\left(g\right)\right)$ \textbf{and}, if $\bar{r}^{\left(f,g\right)} \not\in \text{Range}\left(g\right)$, then $p_{\text{high}}^{\left(f,g\right)} = 0$.
                    \end{itemize}
                \end{enumerate}
            \end{definition}

            \begin{corollary}
            \label{corollary:11}
                (Demand Side Necessary Condition)
                
                A well-behaved competitive market will have $\left(B_{\text{supply}}, B_{\text{demand}}\right)$ with $Q\left(B_{\text{supply}}\right) > 0$ being a competitive strategy only if 
                \begin{align*}
                    B_{\text{demand}} 
                    \in
                    \tilde{\mathcal{B}}_{\text{demand}}^{(\bar{\rho}\left(B_{\text{supply}}\right), \bar{q}\left(B_{\text{supply}}\right))}
                    \text{.}
                \end{align*}
            \end{corollary}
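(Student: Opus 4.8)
The plan is to recognize that Corollary~\ref{corollary:11} is a pure repackaging of the three preceding necessary conditions (Lemmas~\ref{lemma:nounmacthedsupply},~\ref{lemma:sandwich}, and~\ref{lemma:pricemaxmizer}), so that the only real work is aligning the notation of the definition of $\tilde{\mathcal{B}}_{\text{demand}}^{(\rho,q)}$ with those lemmas. First I would use the Equal Supply Bid requirement (item 0 of Definition~\ref{definition:competitiveequilibrium}) to conclude that $B_{\text{supply}}$ is a constant map, hence $\text{Range}\left(B_{\text{supply}}\right)$ is the single tuple $\left(\bar{q}\left(B_{\text{supply}}\right),\bar{\rho}\left(B_{\text{supply}}\right)\right)$. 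Taking $f \in \mathcal{B}_{\text{supply}}$ with $\text{Range}(f) = \left(\bar{q}\left(B_{\text{supply}}\right),\bar{\rho}\left(B_{\text{supply}}\right)\right)$ then yields $f = B_{\text{supply}}$ as functions, and since we restrict to competitive equilibria whose winning set is all of $[0,1)$ we have $\bar{M}(f) = [0,1)$. Consequently every derived object is unchanged: $Q(f) = Q\left(B_{\text{supply}}\right)$, $c(f) = c\left(B_{\text{supply}}\right)$, and the order-book quantities $\bar{r}^{(f,B_{\text{demand}})}$, $p_{\text{high}}^{(f,B_{\text{demand}})}$, $Q_{\text{resale}}^{(f,B_{\text{demand}})}$ agree with those computed for $\left(B_{\text{supply}},B_{\text{demand}}\right)$. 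This reduces the claim to verifying, for $g = B_{\text{demand}}$, the four membership conditions listed in the definition of $\tilde{\mathcal{B}}_{\text{demand}}^{(\bar{\rho}(B_{\text{supply}}),\bar{q}(B_{\text{supply}}))}$.

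I would then check the conditions in blocks. Conditions (1) and (2) together encode $Q_{\text{resale}}^{(f,g)} = 0$, which is the first conclusion of Lemma~\ref{lemma:nounmacthedsupply}. Expanding the definition of $Q_{\text{resale}}$ as the sum $\bigl(1 - p_{\text{high}}^{(f,g)}\bigr)\hat{\mu}_{\text{demand}}^{(f,g)}\bigl(\{\bar{r}^{(f,g)}\}\bigr) + \hat{\mu}_{\text{demand}}^{(f,g)}\bigl((\bar{r}^{(f,g)},\infty)\bigr)$ and using that both summands are non-negative, vanishing of the total forces each term to vanish. Since $Q\left(B_{\text{supply}}\right) > 0$ and each winning mediator places its entire share of scarcity at its quoted price, the support of $\hat{\mu}_{\text{demand}}^{(f,g)}$ is the closure of $\text{Range}(g)$; hence $\hat{\mu}_{\text{demand}}^{(f,g)}\bigl((\bar{r}^{(f,g)},\infty)\bigr) = 0$ gives condition (1) (no quoted price exceeds the clearing price), and $\bigl(1-p_{\text{high}}^{(f,g)}\bigr)\hat{\mu}_{\text{demand}}^{(f,g)}\bigl(\{\bar{r}^{(f,g)}\}\bigr) = 0$ gives condition (2) (if $\bar{r}^{(f,g)}$ is itself quoted, it must clear fully). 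Condition (3) is a verbatim transcription of Lemma~\ref{lemma:pricemaxmizer}, using $Q(f) = Q\left(B_{\text{supply}}\right)$ and $c(f) = c\left(B_{\text{supply}}\right)$ and the fact, from Lemma~\ref{lemma:nounmacthedsupply}, that $e_1 = \hat{p}_{\text{demand}}$ on $\text{Range}(g)$.

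Condition (4) is where the bookkeeping becomes genuine analysis, and I expect it to be the main obstacle. Its first disjunct, $\text{Range}(g) \subseteq \argmax_{r}\hat{p}_{\text{demand}}(r)$, is literally the second alternative of Lemma~\ref{lemma:sandwich}. The difficulty is the other disjunct, which must be extracted from the measure statement $\lambda\bigl(\{m : \max(\tilde{D}(B_{\text{demand}}(m))) \le Q(B_{\text{supply}})\}\bigr) > 0$ of Lemma~\ref{lemma:sandwich}. The plan is to unwind the clearing recursion defining $\bar{r}^{(f,g)}$ and $p_{\text{high}}^{(f,g)}$ (the functions $R_i$, $P_i$, and the running masses $\bar{P}_i$): a positive measure of mediators quoting at prices where real demand does not exceed $Q\left(B_{\text{supply}}\right)$ means the cumulative matched demand reaches the supply level $Q\left(B_{\text{supply}}\right)$ no higher than the top quoted price, giving $\bar{r}^{(f,g)} \le \sup\left(\text{Range}(g)\right)$; and when $\bar{r}^{(f,g)}$ fails to be an actually quoted price, the clearing must exhaust before any mass can accumulate there, forcing $p_{\text{high}}^{(f,g)} = 0$. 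Combined with condition (1), which already forces $\bar{r}^{(f,g)} \ge \sup\left(\text{Range}(g)\right)$, this pins the clearing at the top quote exactly. The delicate points are handling boundary ties in the order-book matching and the left-continuity of $\hat{p}_{\text{demand}}$ (Condition~\ref{condition:15}) together with the maximal monotonicity of $\tilde{D}$ (Corollary~\ref{corollary:demandmaximalmonotone}); once these are invoked, each defining condition of $\tilde{\mathcal{B}}_{\text{demand}}^{(\bar{\rho}(B_{\text{supply}}),\bar{q}(B_{\text{supply}}))}$ follows, establishing the membership and hence the corollary.
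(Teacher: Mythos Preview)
Your proposal is correct and takes essentially the same approach as the paper: the paper's proof is the single line ``This follows from the lemmas~\ref{lemma:nounmacthedsupply},~\ref{lemma:sandwich},~\ref{lemma:pricemaxmizer},'' and your plan is precisely to unpack that citation by matching each clause of the definition of $\tilde{\mathcal{B}}_{\text{demand}}^{(\rho,q)}$ to the corresponding lemma. Your extra care with condition~(4), translating the measure-theoretic statement of Lemma~\ref{lemma:sandwich} into the $\bar{r}^{(f,g)}$/$p_{\text{high}}^{(f,g)}$ language via the clearing recursion, is more detail than the paper supplies but is the right way to make the one-line proof honest; note also that the occurrences of $\text{Range}(f)$ in clauses~(1) and~(3) of the definition are evidently typos for $\text{Range}(g)$, which you have already silently corrected.
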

            \begin{proof}
                This follows from the lemmas~\ref{lemma:nounmacthedsupply},\ref{lemma:sandwich},\ref{lemma:pricemaxmizer}.
            \end{proof}

        \paragraph{Supply Side}

            We will consider a criteria that allows a supply side strategy to support a competitive equilibrium.

            It suffices to only consider the supply-side strategy that change the market supply price.

            \begin{lemma} (Not Higher Supply Price? I)
            \label{lemma:nothighersupply I}
                A well-behaved competitive market will have $\left(B_{\text{supply}}, B_{\text{demand}}\right)$ being a competitive strategy only if, for any $\left(\rho',q'\right) 
                \in \mathbb{R} \times [0,1]$ with $(\rho',q') \succ_2 \left(\bar{\rho}\left(B_{\text{supply}}\right),\bar{q}\left(B_{\text{supply}}\right)_1\right)$, there does not exist any $r\in \mathbb{R}$ such that
                \begin{align*}
                    \hat{p}_{\text{demand}}(r) > 
                    \hat{c}(q',\rho')
                    \text{,}
                \end{align*}
                and
                \begin{align*}
                    \max\left(\tilde{D}(r)\right) \ge (1-q')\min\left(\tilde{S}(\rho')\right) + q'\max\left(\tilde{S}(\rho')\right) > 0 
                    \text{.}
                \end{align*}
            \end{lemma}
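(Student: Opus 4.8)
The plan is to establish the statement contrapositively: assuming $(B_{\text{supply}}, B_{\text{demand}})$ is a competitive strategy, I suppose toward a contradiction that some $(\rho',q') \succ_2 (\bar\rho(B_{\text{supply}}),\bar q(B_{\text{supply}}))$ and some $r\in\mathbb{R}$ satisfy the two displayed inequalities, and then exhibit a single mediator who deviates profitably on the supply side, violating clause~2 of Definition~\ref{definition:competitiveequilibrium}. The economic content is exactly the observation made before the definition: a lone mediator who outbids the common supply price becomes a monopolist over a positive mass of scarcity, and positive \emph{mass} utility lexicographically dominates the merely differential utility every mediator earns in the full-measure equilibrium.

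Concretely, clause~0 forces $\lvert\text{Range}(B_{\text{supply}})\rvert=1$, so in equilibrium every mediator submits the same supply bid. Let a single mediator $0\in[0,1)$ switch to the bid $(\rho',q')$ while all others keep the equilibrium bid, producing $f_1\in\mathcal{B}_{\text{supply}}^{(B_{\text{supply}},\{0\})}$. Since $(\rho',q')$ lexicographically exceeds the common bid, mediator $0$ is the unique top bidder (strictly higher price, or the same top price with strictly higher residual proportion), so $\bar M(f_1)=\{0\}$, $\bar\rho(f_1)=\rho'$, and $\bar q(f_1)=q'$. Hence the deviator acquires the whole residual volume $Q(f_1)=(1-q')\min(\tilde S(\rho'))+q'\max(\tilde S(\rho'))$, positive by hypothesis, at unit cost $c(f_1)=\hat c(q',\rho')$. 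Now let mediator $0$ quote the single demand price $r$. Because $\max(\tilde D(r))\ge Q(f_1)$, the entire volume matches at $r$, so the possible highest price equals $r$, the clearing probability $p_{\text{high}}=1$, the resalable volume $Q_{\text{resale}}=0$, no resale opens, and $e_1(r)=\hat p_{\text{demand}}(r)$. The deviator therefore earns the positive utility mass $Q(f_1)\,[\hat p_{\text{demand}}(r)-\hat c(q',\rho')]>0$, since its winning set $\{0\}$ is finite and it absorbs a positive mass of scarcity, whereas in equilibrium the winning set is all of $[0,1)$ and each mediator obtains only differential utility with zero utility mass. Thus the first (mass) coordinate of the deviator's payoff is strictly positive while that of the equilibrium payoff is zero, giving $U^{(f_1,g_1)}(0)\succ_2 U^{(B_{\text{supply}},B_{\text{demand}})}(0)$; and since a singleton cannot be split into two members of $\mathcal{G}$, clause~2 admits no betrayal response for $G=\{0\}$, so the deviation is unopposed and the competitive-strategy requirement fails.

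The delicate point, which I expect to be the main obstacle, is certifying that the deviator's demand quote is a genuine element of the subsequent-equilibrium set $\mathcal{D}_{\text{demand}}^{(f_1)}$: it must be non-negative-valued, induce no profitable resale, and not be strictly improvable by the monopolist. The first two properties hold by construction, using $\hat p_{\text{demand}}(r)>\hat c(q',\rho')$ and $Q_{\text{resale}}=0$. For the third I would pass from $r$ to the deviator's optimal full-clearing price by maximizing $\hat p_{\text{demand}}$ over the feasible set $F=\{r'\in\mathbb{R}:\max(\tilde D(r'))\ge Q(f_1)\}$. By Corollary~\ref{corollary:demandmaximalmonotone} the map $r'\mapsto\max(\tilde D(r'))$ is non-increasing with maximal-monotone limit behaviour, so $F$ is a closed left-ray $(-\infty,r^\ast]$, and combined with the left-continuity of $\hat p_{\text{demand}}$ from Condition~\ref{condition:15} this yields attainment of a best full-clearing price whose revenue is at least $\hat p_{\text{demand}}(r)>\hat c(q',\rho')$. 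That optimal quote lies in $\mathcal{D}_{\text{demand}}^{(f_1)}$ and still delivers strictly positive utility mass, so the contradiction persists; the strict positivity $\hat c(q',\rho')>0$ supplied by Condition~\ref{condition:10} is precisely what prevents a degenerate zero-profit escape and keeps the mass coordinate strictly above the equilibrium value.
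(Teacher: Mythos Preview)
Your proposal is correct and follows exactly the paper's approach: a single mediator $0$ deviates to the higher supply bid $(q',\rho')$, becomes the sole winner of a positive mass of scarcity, quotes the single demand price $r$, clears fully with no resale, and earns strictly positive mass utility, which lexicographically beats the zero-mass equilibrium payoff and contradicts the competitive-strategy requirement for the singleton $G=\{0\}$. Your third paragraph, which worries about certifying $g_1\in\mathcal{D}_{\text{demand}}^{(f_1)}$, is actually more careful than the paper's own proof, which simply takes $\tilde{B}_{\text{demand}}\colon\{0\}\to\{r\}$ and asserts the contradiction without that verification.
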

            \begin{proof}
                Assume the setting in the lemma statement but that there exists some $r \in \mathbb{R}$ such that $\hat{p}_{\text{demand}}(r) > \hat{c}(q',\rho')$ and $\max\left(\tilde{D}(r)\right) \ge (1-q')\min\left(\tilde{S}(\rho')\right) + q'\max\left(\tilde{S}(\rho')\right) > 0$. 
                We consider a deviation bid $\tilde{B}_{\text{supply}} \in \mathcal{B}_{\text{supply}}^{\left(B_{\text{supply}}, \{0\}\right)}$ such that $\tilde{B}_{\text{supply}}(0) = (q',\rho')$, making $\bar{M}\left(\tilde{B}_{\text{supply}}\right) = \{0\}$, and $Q\left(\tilde{B}_{\text{supply}}\right) = 0$. By selecting $\tilde{B}_{\text{demand}}: \{0\} \to \{r\}$, the resale quantity $Q_{\text{resale}}^{\left(\tilde{B}_{\text{supply}}, \tilde{B}_{\text{demand}}\right)} = 0$, making $\left(U^{\left(\tilde{B}_{\text{supply}}, \tilde{B}_{\text{demand}}\right)}(0)\right)_1 = Q\left(\tilde{B}_{\text{supply}}\right) \left(\hat{p}_{\text{demand}}(r) - \hat{c}(q',\rho')\right) > 0$. This suggests that the mediator $0$ can deviate to such strategy, contradicting with the assumption. 
            \end{proof}

            This lemma suggests that the supply price (as well as the acquired supply volume) cannot be too low, since a mediator can deviate and become a ``second-stage monopoly".

            \begin{lemma} (Not Higher Supply Price? II)
            \label{lemma:nothighersupply II}
                A well-behaved competitive market will have $\left(B_{\text{supply}}, B_{\text{demand}}\right)$ being a competitive strategy only if, for any $\left(\rho',q'\right) 
                \in \mathbb{R} \times [0,1]$ with $(\rho',q') \succ_2 \left(\bar{\rho}\left(B_{\text{supply}}\right),\bar{q}\left(B_{\text{supply}}\right)_1\right)$, for all $(q,\rho) \in \text{}
                \left(\left(B_{\text{supply}}\right)_2,\left(B_{\text{supply}}\right)_1\right)$, there does not exist any $(r, r') \in \mathbb{R}^2$ such that
                \begin{align*}
                    \hat{p}_{\text{demand}}(r) &\ge \hat{c}(q',\rho')
                    \text{,}\\
                    \hat{p}_{\text{demand}}(r') &> \hat{c}(q',\rho')
                    \text{,}
                \end{align*}
                and
                \begin{align*}
                    \max\left(\tilde{D}(r)\right) > (1-q')\min\left(\tilde{S}(\rho')\right) + q'\max\left(\tilde{S}(\rho')\right) > 0 
                    \text{.}
                \end{align*}
            \end{lemma}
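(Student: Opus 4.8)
The plan is to argue by contradiction in the style of Lemma~\ref{lemma:nothighersupply I}: I assume that $(B_{\text{supply}}, B_{\text{demand}})$ is a competitive strategy while some $(r,r')$ with the displayed properties exists, and I produce a deviation by the single mediator $0$ to the supply bid $\tilde B_{\text{supply}} \in \mathcal B_{\text{supply}}^{(B_{\text{supply}},\{0\})}$ with $\tilde B_{\text{supply}}(0) = (q',\rho')$. Because $(\rho',q') \succ_2 (\bar\rho(B_{\text{supply}}),\bar q(B_{\text{supply}})_1)$, mediator $0$ becomes the unique top bidder, so $\bar M(\tilde B_{\text{supply}}) = \{0\}$, $\lambda(\bar M) = 0$, and it acquires $Q(\tilde B_{\text{supply}}) = (1-q')\min(\tilde S(\rho')) + q'\max(\tilde S(\rho')) = Q' > 0$ units at unit cost $\hat c(q',\rho')$. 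Since a singleton $\{0\}$ admits no splitting into $G',\{0\}-G' \in \mathcal G$, the betrayal escape clauses of Definition~\ref{definition:competitiveequilibrium} are vacuous for this deviator, so it suffices to exhibit a demand-side response giving mediator $0$ a strictly positive payoff. The argument then splits on whether the strictly profitable price $r'$ also carries enough demand, i.e.\ on the sign of $\max(\tilde D(r')) - Q'$.

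In the easy case $\max(\tilde D(r')) \ge Q'$ I would have mediator $0$ quote the single demand price $r'$. All $Q'$ units clear (demand at $r'$ at least matches supply), so $Q_{\text{resale}}^{(\tilde B_{\text{supply}},\tilde B_{\text{demand}})} = 0$, the response lies in $\mathcal C_{\text{demand}}^{(\{0\})}$, and by the continuum representative-sampling description of the double auction from Section~\ref{section:competitivemarket} the realized per-unit revenue is $\hat p_{\text{demand}}(r')$. Mediator $0$'s payoff is therefore $Q'(\hat p_{\text{demand}}(r') - \hat c(q',\rho')) > 0$, strictly positive by the strict profitability at $r'$, which contradicts the No Supply Bid Deviation requirement of Definition~\ref{definition:competitiveequilibrium}.

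The substantive case is $\max(\tilde D(r')) < Q'$; since then $\max(\tilde D(r)) > Q' > \max(\tilde D(r'))$, maximal monotonicity of the negative real demand (Corollary~\ref{corollary:demandmaximalmonotone}) forces $r < r'$. I again have mediator $0$ quote $r'$, but now $Q' - \max(\tilde D(r')) > 0$ units go unmatched, resale is enabled (the winning set has zero measure), and the response falls in $\mathcal B_{\text{demand}}^{(\{0\})} - \mathcal C_{\text{demand}}^{(\{0\})}$, so the binding constraint is the Monopoly with Optimistic Resale condition demanding $\sup_{d \in D^{(\tilde B_{\text{supply}},\tilde B_{\text{demand}})}} u^{(\tilde B_{\text{supply}},\tilde B_{\text{demand}})}(d) \le 0$. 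I would show this fails. Because $\hat p_{\text{demand}}(r) \ge \hat c(q',\rho')$ and $r$ lies strictly below the initially matched price $r' = \bar r^{(\tilde B_{\text{supply}},\tilde B_{\text{demand}})}$, the weighted resale revenue obeys $e_2(r) = \hat p_{\text{demand}}(r) \ge \hat c(q',\rho')$, so the resale cascade becomes profitable at some threshold price $z^\ast \le r$, and the reachable demand level $d = \max(\tilde D(z^\ast)) \ge \max(\tilde D(r)) > Q'$ then belongs to $D^{(\tilde B_{\text{supply}},\tilde B_{\text{demand}})}$. Evaluating the $d > Q'$ branch of $u$ with $\tilde B_{\text{demand}}(0) = r'$ gives $u(d) = \max(\tilde D(r'))\,\frac{d-Q'}{\,d-\max(\tilde D(r'))\,}\,(\hat p_{\text{demand}}(r') - \hat c(q',\rho'))$, in which every factor is positive, so $u(d) > 0$ and the supremum is positive, contradicting the Optimistic Resale condition.

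I expect the main obstacle to be this third step: confirming that a demand level strictly above $Q'$ is genuinely reached in the optimistic resale set $D^{(\tilde B_{\text{supply}},\tilde B_{\text{demand}})}$. This requires tracking the left-continuous resale cascade — precisely the phenomenon isolated by the commonly-known-value pseudo-equilibrium of Subsubsection~\ref{subsubsection:commonlyknownvalue} — and here it leans essentially on Condition~\ref{condition:15} (demand left-continuity) to guarantee that $e_2$ crosses $\hat c$ at a threshold price no higher than $r$, together with Condition~\ref{condition:14} and Corollary~\ref{corollary:demandmaximalmonotone} to control the monotone behaviour of $\hat p_{\text{demand}}$ and of the reachable demand. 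The role-division of the two prices is the conceptual heart, and explains why this lemma is separate from Lemma~\ref{lemma:nothighersupply I}: $r'$ supplies the strictly positive margin inside $u$, while $r$ supplies a resale demand level exceeding the acquired volume, and it is exactly the strictness of $\max(\tilde D(r)) > Q'$, rather than the weak coverage used in Lemma~\ref{lemma:nothighersupply I}, that makes the optimistic resale strictly profitable.
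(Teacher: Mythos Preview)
Your proposal is correct and follows essentially the same route as the paper: deviate by a single mediator to $(q',\rho')$, quote the strictly profitable price $r'$, and then either clear everything (reducing to the Lemma~\ref{lemma:nothighersupply I} situation, which you handle directly rather than by citation) or trigger resale and use Condition~\ref{condition:15} to place a demand level $d\ge\max(\tilde D(r))>Q'$ in $D^{(\tilde B_{\text{supply}},\tilde B_{\text{demand}})}$, where the $d>Q'$ branch of $u$ is strictly positive. The paper's only cosmetic difference is that it invokes Lemma~\ref{lemma:nothighersupply I} for your ``easy case'' and appeals to the monotonicity of $u$ rather than evaluating $u(d)$ directly; your treatment is equivalent and arguably more self-contained.
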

            \begin{proof}
                Assume the setting in the lemma statement but that there exists some $r \in \mathbb{R}$ such that $\hat{p}_{\text{demand}}(r) \ge 
                \hat{c}(q',\rho')$ and $\max\left(\tilde{D}(r)\right) > (1-q')\min\left(\tilde{S}(\rho')\right) + q'\max\left(\tilde{S}(\rho')\right) > 0$, and there exists some $r' \in \mathbb{R}$ such that $\hat{p}_{\text{demand}}(r') > \hat{c}(q',\rho')$. 
                We consider a deviation bid $\tilde{B}_{\text{supply}} \in \mathcal{B}_{\text{supply}}^{\left(B_{\text{supply}}, \{0\}\right)}$ such that $\tilde{B}_{\text{supply}}(0) = (q',\rho')$, making $\bar{M}\left(\tilde{B}_{\text{supply}}\right) = \{0\}$, and select $\tilde{B}_{\text{demand}}: \{0\} \to \{r'\}$. If $Q_{\text{resale}}^{\left(\tilde{B}_{\text{supply}}, \tilde{B}_{\text{demand}}\right)} = 0$, the lemma~\ref{lemma:nothighersupply I} will be contradicted, so $Q_{\text{resale}}^{\left(\tilde{B}_{\text{supply}}, \tilde{B}_{\text{demand}}\right)} > 0$, and $r< r' = \bar{r}^{\left(\tilde{B}_{\text{supply}}, \tilde{B}_{\text{demand}}\right)}$. From the condition~\ref{condition:15} (demand left-continuity), there exists some $x < r'$ such that $e_2^{\left(\tilde{B}_{\text{supply}}, \tilde{B}_{\text{demand}}\right)}(x) > 
                c\left(\tilde{B}_{\text{supply}}\right)$, so $\tilde{B}_{\text{demand}} \in \tilde{B}^{\left(\tilde{B}_{\text{supply}}\right)} - \tilde{C}^{\left(\tilde{B}_{\text{supply}}\right)}$. The condition~\ref{condition:15} (demand left-continuity) also suggests that there exists some $d \ge \max\left(\tilde{D}(r)\right)$ such that $d \in D^{\left(\tilde{B}_{\text{supply}}, \tilde{B}_{\text{demand}}\right)}$. Since $u^{\left(\tilde{B}_{\text{supply}}, \tilde{B}_{\text{demand}}\right)}$ is non-decreasing, $\sup_{d \in D^{\left(\tilde{B}_{\text{supply}}, \tilde{B}_{\text{demand}}\right)}} u^{\left(\tilde{B}_{\text{supply}}, \tilde{B}_{\text{demand}}\right)}(d) \ge u^{\left(\tilde{B}_{\text{supply}}, \tilde{B}_{\text{demand}}\right)}\left(\max\left(\tilde{D}(r)\right)\right) > 0$. Thus, the mediator $0$ will profit from this deviation.
            \end{proof}
            
            This lemma further suggests that the supply price cannot be too low. Next, we can consider an additional lemma restricting the market supply price to be not too high in addition to the result we have already got form the lemma~\ref{lemma:sandwich}. 

            \begin{lemma} (Not Lower Supply Price?)
            \label{lemma:notlowersupply}
                A well-behaved competitive market will have $\left(B_{\text{supply}}, B_{\text{demand}}\right)$ being a competitive strategy only if, for any $\left(\rho',q'\right) 
                \in \mathbb{R} \times [0,1]$ with $\left(\bar{\rho}\left(B_{\text{supply}}\right),\bar{q}\left(B_{\text{supply}}\right)_1\right) \succ_2 (\rho',q')$, by defining $\tilde{B}_{\text{supply}}: [0,1) \to \left\{(q',\rho')\right\}$, \textbf{if} there exists a demand-side strategy $\tilde{B}_{\text{demand}} \in \tilde{\mathcal{B}}_{\text{demand}}^{(\rho', q')}$ such that
                \begin{align*}
                    \left(c\left(\tilde{B}_{\text{supply}}\right) - \hat{p}_{\text{demand}}\left(\tilde{B}_{\text{demand}}(0)\right)\right)Q\left(\tilde{B}_{\text{supply}}\right)
                    >
                    0
                    \text{,}
                \end{align*} \textbf{then} there exists some $(\rho'',q'') \succ_2 (\rho',q')$ such that at least one of the following holds:
                \begin{itemize}
                    \item There exists some $r \in \mathbb{R}$ such that $\hat{p}_{\text{demand}}(r) > \hat{c}(q'',\rho'')$ and $\max\left(\tilde{D}(r)\right) \ge (1-q'')\min\left(\tilde{S}(\rho'')\right) + q''\max\left(\tilde{S}(\rho'')\right)$, \textbf{or}
                    \item There exists some $(r, r') \in \mathbb{R}^2$ such that $\hat{p}_{\text{demand}}(r) \ge \hat{c}(q'',\rho'')$, $\hat{p}_{\text{demand}}(r') > \hat{c}(q'',\rho'')$, and $\max\left(\tilde{D}(r)\right) > (1-q'')\min\left(\tilde{S}(\rho'')\right) + q''\max\left(\tilde{S}(\rho'')\right)$.
                \end{itemize}
            \end{lemma}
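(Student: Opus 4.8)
The plan is to prove this necessary condition by contradiction, taking the equilibrium supply price itself (or a price just above $(\rho',q')$) as the witness for the conclusion and reading conditions~1 and~2 through Lemmas~\ref{lemma:nothighersupply I} and~\ref{lemma:nothighersupply II} as ``there is a price at which a single mediator can profitably deviate to a second-stage monopoly.'' Concretely, I would assume $\left(B_{\text{supply}},B_{\text{demand}}\right)$ is a competitive strategy, fix $(\rho',q')\prec_2\left(\bar\rho\left(B_{\text{supply}}\right),\bar q\left(B_{\text{supply}}\right)_1\right)$ together with a witness $\tilde B_{\text{demand}}\in\tilde{\mathcal B}_{\text{demand}}^{(\rho',q')}$ making the hypothesis hold, and suppose toward a contradiction that no $(\rho'',q'')\succ_2(\rho',q')$ satisfies either alternative of the conclusion.

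First I would extract what the failure of the conclusion says at the equilibrium price. Since $\left(\bar\rho\left(B_{\text{supply}}\right),\bar q\left(B_{\text{supply}}\right)_1\right)\succ_2(\rho',q')$, the assumed failure applies there: negating the first alternative forces $\hat p_{\text{demand}}(r)\le c\left(B_{\text{supply}}\right)$ for every $r$ with $\max\left(\tilde D(r)\right)\ge Q\left(B_{\text{supply}}\right)$. On the other hand, Lemma~\ref{lemma:pricemaxmizer} (through Corollary~\ref{corollary:11}) says the equilibrium demand prices realize $\hat p_{\text{demand}}=\sup\left\{\hat p_{\text{demand}}(r):\max\left(\tilde D(r)\right)\ge Q\left(B_{\text{supply}}\right)\right\}\ge c\left(B_{\text{supply}}\right)$, while Lemma~\ref{lemma:nounmacthedsupply} forbids unmatched supply. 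Combining these pins the equilibrium to exact break-even, $\hat p_{\text{demand}}(r)=c\left(B_{\text{supply}}\right)$ on the range of $B_{\text{demand}}$, and simultaneously rules out the strict pair of the second alternative at the equilibrium volume.

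Next I would compare the lower bid $(\rho',q')$ with the equilibrium. Using the maximal monotonicity and boundedness of $\tilde S$ from Corollary~\ref{corollary:supplymaximalmonotone}, the lexicographic inequality $(\rho',q')\prec_2\left(\bar\rho,\bar q\right)$ yields $Q\left(\tilde B_{\text{supply}}\right)\le Q\left(B_{\text{supply}}\right)$; hence $\left\{r:\max\left(\tilde D(r)\right)\ge Q\left(\tilde B_{\text{supply}}\right)\right\}\supseteq\left\{r:\max\left(\tilde D(r)\right)\ge Q\left(B_{\text{supply}}\right)\right\}$, so the best attainable demand revenue at $(\rho',q')$ is at least the equilibrium value $c\left(B_{\text{supply}}\right)$. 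But the hypothesis states this best revenue is strictly below $\hat c(q',\rho')=c\left(\tilde B_{\text{supply}}\right)$, forcing $c\left(\tilde B_{\text{supply}}\right)>c\left(B_{\text{supply}}\right)$, i.e. a strictly larger per-unit cost at the lower bid. I would then show this cost inversion is incompatible with Conditions~\ref{condition:10} and~\ref{condition:11} (No Free Supply and Supply Monotone Trend) applied across $[\rho',\bar\rho]$ at the relevant volume level, and that the slack it exposes produces an intermediate $(\rho'',q'')\succ_2(\rho',q')$ at which the monopoly-with-resale deviation of Lemmas~\ref{lemma:nothighersupply I} and~\ref{lemma:nothighersupply II} is strictly profitable, built from $D^{\left(\tilde B_{\text{supply}},\tilde B_{\text{demand}}\right)}$, the monotone $u^{\left(\tilde B_{\text{supply}},\tilde B_{\text{demand}}\right)}$, and the left-continuity of $\hat p_{\text{demand}}$ (Condition~\ref{condition:15}) exactly as in those proofs. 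This contradicts the assumed failure of the conclusion.

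I expect the main obstacle to be the bookkeeping of the residual-proportion coordinate $q$ inside the lexicographic comparison, together with the break-even and partial-match boundary cases: when the candidate highest demand price clears only partially (so $p_{\text{high}}^{\left(\tilde B_{\text{supply}},\tilde B_{\text{demand}}\right)}<1$), the revenue comparison must be routed through the resale objects $e_2^{\left(\tilde B_{\text{supply}},\tilde B_{\text{demand}}\right)}$ and $Q_{\text{resale}}^{\left(\tilde B_{\text{supply}},\tilde B_{\text{demand}}\right)}$ and the left-continuity argument rather than through $\hat p_{\text{demand}}$ directly, mirroring Lemma~\ref{lemma:nothighersupply II}. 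Making the strict versus weak inequalities line up so that \emph{exactly} one of the two alternatives is produced is the delicate step; the remainder is monotonicity of $\tilde S$ and $\tilde D$ and the already-established demand-side necessary conditions.
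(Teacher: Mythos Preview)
Your approach diverges from the paper's at a structural level. The paper does not try to manufacture an intermediate $(\rho'',q'')$ at which a single-agent monopoly deviation is profitable. Instead it argues that the collusion of \emph{all} mediators to $(\tilde B_{\text{supply}},\tilde B_{\text{demand}})$ is itself a betrayal-free profitable deviation, contradicting the ``No Supply Bid Deviation'' clause of Definition~\ref{definition:competitiveequilibrium}. The assumed failure of the conclusion is used there only to rule out supply-side \emph{betrayals} of that collusion: any subgroup bidding a higher supply price would become a monopoly at some $(\rho'',q'')\succ_2(\rho',q')$, and the failure (read through the proofs of Lemmas~\ref{lemma:nothighersupply I} and~\ref{lemma:nothighersupply II}) says no such monopoly is profitable; demand-side betrayals are dispatched separately via Condition~\ref{condition:15}. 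Note also that the paper's proof implicitly reads the hypothesis as strictly positive profit at $(\rho',q')$, i.e.\ $\hat p_{\text{demand}}(\tilde B_{\text{demand}}(0))>c(\tilde B_{\text{supply}})$; under the inequality as literally printed the hypothesis is vacuous, since item~3 in the definition of $\tilde{\mathcal B}_{\text{demand}}^{(\rho',q')}$ already forces $\hat p_{\text{demand}}\ge c$ on the range of any member.

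Your own route has a genuine gap at the step invoking Conditions~\ref{condition:10} and~\ref{condition:11} against the ``cost inversion'' $c(\tilde B_{\text{supply}})>c(B_{\text{supply}})$. Condition~\ref{condition:11} gives monotonicity of $\bar p_{\text{supply}},\ubar p_{\text{supply}}$ in $\rho$ \emph{only when the supply volume is held fixed}; since $Q(\tilde B_{\text{supply}})$ and $Q(B_{\text{supply}})$ generally differ, nothing in the standing conditions forbids $\hat c(q',\rho')>\hat c(\bar q,\bar\rho)$ with $(\rho',q')\prec_2(\bar\rho,\bar q)$. Hence the inversion is not itself a contradiction, and you have not explained how it produces the intermediate $(\rho'',q'')$ the conclusion asks for. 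The resale objects $u^{(\tilde B_{\text{supply}},\tilde B_{\text{demand}})}$ and $D^{(\tilde B_{\text{supply}},\tilde B_{\text{demand}})}$ you appeal to live at $(\rho',q')$, not at any candidate $(\rho'',q'')$, so invoking them ``exactly as in those proofs'' does not close the argument; the obstacle is not the $q$-bookkeeping you flag but the missing mechanism linking a per-unit cost comparison across distinct volumes to a monopoly-profitable point.
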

            \begin{proof}
                Assume the setting to be true except for the statement after the word ``then" to be false. We will get that everyone will benefit from the deviation of the strategy $\left(B_{\text{supply}}, B_{\text{demand}}\right)$ to the strategy $\left(\tilde{B}_{\text{supply}}, \tilde{B}_{\text{demand}}\right)$. 
                
                First, we will show that there does not exists a betrayal-free collusion demand-side strategy. 
                
                If there exists some $G \in \mathcal{G}$, and there exists some $g \in \mathcal{B}_{\text{demand}}^{\left(\tilde{B}_{\text{supply}}, \tilde{B}_{\text{demand}}, G\right)}$ such that, by denoting $\bar{G} = \left\{m \in G: \left(U^{\left(\tilde{B}_{\text{supply}}, g\right)}(m)\right)_{2} > \left(U^{\left(\tilde{B}_{\text{supply}}, \tilde{B}_{\text{demand}}\right)}(m)\right)_{2}\right\}$, $I_G\left(\bar{G}\right) =1$. Therefore, $\lambda(G) = \lambda\left(\bar{G}\right) > 0$, $\bar{r}^{\left(\tilde{B}_{\text{supply}}, g\right)} \in \text{support}\left(\mu^{\left(\tilde{B}_{\text{supply}}, g\right)}\right)$ and $p_{\text{high}}^{\left(\tilde{B}_{\text{supply}}, g\right)} \in \text{support}\left(\mu^{\left(\tilde{B}_{\text{supply}}, g\right)}\right) \in (0,1)$. From the condition~\ref{condition:15} (demand left-continuity), there exists some $\bar{r}' < \bar{r}^{\left(\tilde{B}_{\text{supply}}, g\right)}$ such that $e_1^{\left(\tilde{B}_{\text{supply}}, g\right)}\left(\bar{r}'\right) > e_1^{\left(\tilde{B}_{\text{supply}}, g\right)}\left(\bar{r}^{\left(\tilde{B}_{\text{supply}}, g\right)}\right)$. Denote $M = \left\{m \in G: g(m) = \bar{r}^{\left(\tilde{B}_{\text{supply}}, g\right)} \right\}$. Thus, there exists a betrayal of $G' \in \mathcal{G}$ and $G' \subseteq M$ with $Q\left(\tilde{B}_{\text{supply}}\right)\lambda(G') = \max\left(\tilde{D}\left(\bar{r}'\right)\right)$ by deviating to quoting a demand price at $\bar{r}'$. Thus, $\lambda\left(M-G'\right)>0$, and the payoff for every $m \in M-G'$ is negative.
                
                Next, we will show that there will not be a profitable deviation of supply-side bid. 
                
                If there exists $G \in \mathcal{G} - \{[0,1)\}$, there exists $f_1 \in \mathcal{B}_{\text{supply}}^{\left(\tilde{B}_{\text{supply}}, G\right)}$, and there exists $g_1 \in \mathcal{C}^{(f_1)}_{\text{demand}}$ such that, by denoting $\bar{G} = \left\{m \in G:  U^{(f_1, g_1)}(m) \succ_2 U^{\left(\tilde{B}_{\text{supply}}, \tilde{B}_{\text{demand}}\right)}(m) \right\}$,  $I_G\left(\bar{G}\right) = 1$. Thus, there exists some $\ubar{r} \in \text{support}\left(\hat{\mu}_{\text{demand}}^{(f_1,g_1)}\right)$ such that $\hat{p}_{\text{demand}}(r) > 0$. Therefore, there will also be a profitable deviation in supply-side strategy by a monopoly that gives a non-zero utility, by choosing the same supply price, the same supply residual ratio, and the demand price being $\ubar{r}$. This suggests that we can restrict the consideration upon the supply-side deviation of a single agent. However, from the proofs of the lemmas~\ref{lemma:nothighersupply I},\ref{lemma:nothighersupply II}, there does not exists such deviation.

                Thus, we will get that $\left(B_{\text{supply}}, B_{\text{demand}}\right)$ is not a competitive strategy.
            \end{proof}

            These lemmas will restrict the quantity that an acquired supply could be. Moreover, this will be a sufficient condition.

        \subsubsection{Sufficient Conditions}

            We consider the main theorem of this paper that provides a matching sufficient and necessary conditions for a competitive equilibrium.

            \begin{theorem}
            \label{theorem:mainTheorem}
            (Necessary and Sufficient Condition)
                For any well-behaved competitive market, by defining $\tilde{P}$ to be a set containing every $(\rho,q) \in \mathbb{R} \times [0,1]$ such that [there exists some $r\in \mathbb{R}$ such that $\hat{p}_{\text{demand}}(r) > \hat{c}(q,\rho)$ and $\max\left(\tilde{D}(r)\right) \ge (1-q)\min\left(\tilde{S}(\rho)\right) + q\max\left(\tilde{S}(\rho)\right) > 0$] or [there exists some $(r, r') \in \mathbb{R}^2$ such that $\hat{p}_{\text{demand}}(r) \ge \hat{c}(q,\rho)$, $\hat{p}_{\text{demand}}(r') > \hat{c}(q,\rho)$, and $\max\left(\tilde{D}(r)\right) > (1-q)\min\left(\tilde{S}(\rho)\right) + q\max\left(\tilde{S}(\rho)\right) > 0$], 
                a tuple $\left(B_{\text{supply}}, B_{\text{demand}}\right)$ is a competitive strategy if and only if
                \begin{align*}
                    \max\left(\tilde{D}(r)\right) > (1-q')\min\left(\tilde{S}(\rho')\right) + q'\max\left(\tilde{S}(\rho')\right) > 0 
                    \text{.}
                \end{align*}
                
                \begin{enumerate}
                    \item $\left\vert \text{Range}\left(B_{\text{supply}}\right) \right\vert = 1$ and $B_{\text{demand}} \in \mathcal{B}^{[0,1)}$;
                    
                    \item If $Q\left(B_{\text{supply}}\right) > 0$, then $B_{\text{demand}}
                    \in
                    \tilde{\mathcal{B}}_{\text{demand}}^{(\bar{\rho}\left(B_{\text{supply}}\right), \bar{q}\left(B_{\text{supply}}\right))}$;
                    
                    \item There does not exist $\left(\rho,q\right) \in \tilde{P}$ such that $(\rho,q) \succ_2 \left(\bar{\rho}\left(B_{\text{supply}}\right),\bar{q}\left(B_{\text{supply}}\right)_1\right)$;

                    \item If there exists some $\left(\rho,q\right) \in \left\{
                    \left(\rho',q'\right) \in \mathbb{R} \times [0,1]: \tilde{\mathcal{B}}^{(\rho',q')}_{\text{demand}} \ne \emptyset
                    \right\}$ such that $(\rho,q) \succ_2 \left(\bar{\rho}\left(B_{\text{supply}}\right),\bar{q}\left(B_{\text{supply}}\right)_1\right)$, $q\max\left(\tilde{S}(\rho)\right) > 0$, and $\hat{c}(q,\rho) > \sup_{r: \max\left(\tilde{D}(r)\right) \ge (1-q)\min\left(\tilde{S}(\rho)\right)+q\max\left(\tilde{S}(\rho)\right)} \hat{p}_{\text{demand}}(r)$
                    , then there exists some $\left(\rho'',q''\right) \in \tilde{P}$ such that $\left(\rho'',q''\right) \succ_2 \left(\rho',q'\right) $.
                \end{enumerate}
            \end{theorem}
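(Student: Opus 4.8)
The plan is to prove the two implications separately, organizing the whole argument around a single observation: the set $\tilde{P}$ is exactly the collection of supply-side bids $(\rho,q)$ at which a single mediator, acting as a ``second-stage monopoly,'' could guarantee strictly positive mass utility. Its type-A disjunct is precisely the hypothesis blocked by Lemma~\ref{lemma:nothighersupply I}, and its type-B disjunct is precisely the hypothesis blocked by Lemma~\ref{lemma:nothighersupply II}. With this dictionary fixed, the necessity direction is essentially an assembly of the lemmas already proved, while the sufficiency direction amounts to verifying each clause of Definition~\ref{definition:competitiveequilibrium} in turn. (I read the stray inequality displayed just after ``if and only if'' as a typesetting artifact and take the statement to be the four enumerated clauses.)

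For necessity, suppose $(B_{\text{supply}}, B_{\text{demand}})$ is a competitive strategy. Clause~1 is immediate: $|\text{Range}(B_{\text{supply}})| = 1$ is requirement~(0.) of the definition, and $B_{\text{demand}} \in \mathcal{B}_{\text{demand}}^{([0,1))}$ is forced by admissibility. Clause~2 is exactly Corollary~\ref{corollary:11}. Clause~3 follows by contraposition from Lemmas~\ref{lemma:nothighersupply I} and~\ref{lemma:nothighersupply II}: any $(\rho,q) \succ_2 (\bar{\rho}(B_{\text{supply}}), \bar{q}(B_{\text{supply}}))$ lying in $\tilde{P}$ realizes one of the two blocked hypotheses, so no such point can exist. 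Clause~4 is the restatement of Lemma~\ref{lemma:notlowersupply}, the ``if/then'' structure carrying over once one matches the two bulleted alternatives of that lemma with membership in $\tilde{P}$.

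For sufficiency, assume Clauses~1--4 and verify the four equilibrium requirements. Requirement~(0.) is Clause~1. For requirement~(1.), that $B_{\text{demand}} \in \mathcal{D}_{\text{demand}}^{(B_{\text{supply}})}$, the non-negative-outcome part follows from Clause~2, since membership in $\tilde{\mathcal{B}}_{\text{demand}}^{(\bar{\rho},\bar{q})}$ encodes $\hat{p}_{\text{demand}}(r) \ge c(B_{\text{supply}})$ along the range; the absence of a betrayal-free demand-side collusion is argued as in Lemma~\ref{lemma:pricemaxmizer}: a collusion strictly raising every participant's differential utility must concentrate extra mass at the top matched price, whereupon left-continuity (Condition~\ref{condition:15}) and the maximal monotonicity of $\tilde{D}$ (Corollary~\ref{corollary:demandmaximalmonotone}) furnish a betraying subgroup that undercuts the price and leaves a positive-measure remainder worse off. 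Requirements~(2.) and~(3.) rule out supply-side deviations: Clause~3 eliminates every strictly higher bid because no point of $\tilde{P}$ sits above the equilibrium, and Clause~4 together with the necessity lemmas eliminates strictly lower bids, since any profitable lower bid would force a still-higher bid in $\tilde{P}$, contradicting Clause~3. The monopoly-with-optimistic-resale requirement~(3.) is handled by showing $\sup_{d} u^{(f,g)}(d) \le 0$ for single-agent deviations, again using the emptiness of the relevant $\tilde{P}$-region above the equilibrium.

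I expect the main obstacle to be the betrayal-free collusion-free bookkeeping inside requirement~(2.). Unlike the single-agent deviations controlled directly by the necessity lemmas, a collusion by a positive-measure set $G$ of mediators can shift the aggregate matched demand, so one must show that whenever such a collusion strictly benefits all of $G$, some measurable $G' \subseteq G$ with $G', G - G' \in \mathcal{G}$ can betray — quoting a marginally lower demand price to capture the residual demand guaranteed by Corollary~\ref{corollary:demandmaximalmonotone} — while $G - G'$ retains positive measure and is left no better off. The device that reduces this to the established lemmas is to convert a profitable collusive supply deviation into an equivalent single-mediator monopoly deviation, exactly as in the proof of Lemma~\ref{lemma:notlowersupply}; making that reduction airtight uniformly over all $G \in \mathcal{G}$, and in particular over the zero-measure resale branch governed by the pseudo-equilibrium and the function $u^{(f,g)}$ on $D^{(f,g)}$, is where the genuine care is required.
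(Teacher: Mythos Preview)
Your proposal is correct and follows essentially the same route as the paper: necessity is assembled from Corollary~\ref{corollary:11} and Lemmas~\ref{lemma:nothighersupply I},~\ref{lemma:nothighersupply II},~\ref{lemma:notlowersupply}, and sufficiency is handled by rerunning the betrayal-construction argument from the proof of Lemma~\ref{lemma:notlowersupply}. Your write-up is considerably more explicit than the paper's two-line proof, and your identification of the stray displayed inequality as a typesetting artifact is correct.
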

            \begin{proof}
                This is a necessary condition because of the corollary~\ref{corollary:11} and the lemmas~\ref{lemma:nothighersupply I},\ref{lemma:nothighersupply II},\ref{lemma:notlowersupply}. The proof for the sufficiency is equivalent to the proof for the lemma\ref{lemma:notlowersupply}.
            \end{proof}

    \subsection{Graphical Approach}
    \label{subsection:graph}

        From the definition~\ref{definition:competitiveequilibrium}, the collection of all competitive market equilibria of a well-behaved competitive market $\circled{C}^{ \left(\mathcal{A}_{\text{supply}}, \mathcal{A}_{\text{demand}},
            \bar{\Omega}_{\text{supply}}, \bar{\Omega}_{\text{demand}}, w,\omega,\tilde{w},\tilde{\omega}\right)}
            \left(
                \hat{\Pi}_{\text{supply}}, \hat{\Pi}_{\text{demand}}
            \right)$ can be fully determined by
        \begin{itemize}
            \item [1.)] The real supply correspondence $\tilde{S}: \mathbb{R} \rightrightarrows \mathbb{R}^+_0$;
            \item [2.)] The real demand correspondence $\tilde{D}: \mathbb{R} \rightrightarrows \mathbb{R}^+_0$;
            \item [3.)] The supply cost functions $\bar{p}_{\text{supply}}: \mathbb{R} \to \mathbb{R}^+_0$ and $\ubar{p}_{\text{supply}}: \mathbb{R} \to \mathbb{R}^+_0$;
            \item [4.)] The demand revenue function $\hat{p}_{\text{demand}}: \mathbb{R} \to \mathbb{R}$.
        \end{itemize}

        We will introduce a graphical method to provide qualitative analysis on some qualitative characteristics of interest, such as the existence of a competitive market equilibrium, the uniqueness of the competitive market equilibrium, the multiplicity of demand prices\footnote{From the market mechanism, the supply price is unique or $-\infty$.}, the existence of demand rationing equilibrium, and the existence of hard demand rationing. Some quantitative result, such as the total transfer of scarcity from the suppliers through the mediators to the demander.

        \subsubsection{Demand Graph}

            \begin{definition}
                (Highest Revenue)
                For a well-behaved competitive market, the supremum revenue $p^*_{\text{demand}}$ is defined such that
                \begin{align*}
                    p^*_{\text{demand}} = \sup_{r \in \mathbb{R}}\left(\hat{p}_{\text{demand}}(r)\right)
                    \text{.}
                \end{align*}
            \end{definition}

            \begin{definition}
                (Demand Graph)
                For a well-behaved competitive market, the demand plotting function $\hat{d}: \mathbb{R} \to \mathbb{R} \times \left[0, D_{\text{max}}\right]$ is defined such that, for any $r\in\mathbb{R}$,
                \begin{align*}
                    \hat{d}(r) = \left(\hat{p}_{\text{demand}}(r), \max\left(\tilde{D}\left(r\right)\right)\right)
                    \text{,}
                \end{align*}
                and the demand graph $D$ is defined such that
                \begin{align*}
                    D = \text{Range}\left(\hat{d}\right) - \{(0,0)\}
                    \text{.}
                \end{align*}
            \end{definition}

            \begin{corollary}
                $D \cap \left(\mathbb{R} \times \{0\}\right) = \emptyset$.
            \end{corollary}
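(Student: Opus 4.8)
The plan is to reduce the claim to the single implication that whenever $\max\left(\tilde{D}(r)\right) = 0$ one also has $\hat{p}_{\text{demand}}(r) = 0$, so that the only horizontal-axis point produced by $\hat{d}$ is $(0,0)$, which is precisely what the definition of $D$ removes. Concretely, I would argue by contradiction: suppose some $(p,0) \in D$. By the definition $D = \text{Range}\left(\hat{d}\right) - \{(0,0)\}$ there is an $r \in \mathbb{R}$ with $\hat{d}(r) = \left(\hat{p}_{\text{demand}}(r), \max\left(\tilde{D}(r)\right)\right) = (p,0)$ and $(p,0) \ne (0,0)$; in particular $\max\left(\tilde{D}(r)\right) = 0$.

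First I would unpack $\max\left(\tilde{D}(r)\right)$ using Definition~\ref{definition:realdemand}, which gives $\max\left(\tilde{D}(r)\right) = \mathbb{E}_{\left(\eta_0,\eta_1\right) \sim \hat{\Pi}_{\text{demand}}}\left[\ubar{v}\left(\eta_0,\eta_1\right)\mathbf{1}_{\bar{r}\left(\eta_0,\eta_1\right) \ge r}\right]$. Since $\ubar{v}$ takes values in $\mathbb{R}^+$ (strictly positive), the integrand $\ubar{v}\,\mathbf{1}_{\bar{r} \ge r}$ is non-negative, so its expectation vanishing forces $\ubar{v}\,\mathbf{1}_{\bar{r} \ge r} = 0$ $\hat{\Pi}_{\text{demand}}$-almost surely, and hence $\hat{\Pi}_{\text{demand}}\left(\left\{\left(\eta_0,\eta_1\right): \bar{r}\left(\eta_0,\eta_1\right) \ge r\right\}\right) = 0$; that is, the event $\{\bar{r} \ge r\}$ is $\hat{\Pi}_{\text{demand}}$-null.

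Next I would feed this into the demand-revenue formula of Definition~\ref{definition:demandrevenue}. Both the numerator $\mathbb{E}\left[\eta_1 \bar{\omega}^{\left(\eta_0,\eta_1\right)}(r)\mathbf{1}_{\bar{r} \ge r}\right]$ and the denominator limit target $\mathbb{E}\left[\ubar{v}\,\mathbf{1}_{\bar{r} \ge r}\right]$ integrate against the same indicator $\mathbf{1}_{\bar{r} \ge r}$, which is zero off a null set. The denominator target equals $\max\left(\tilde{D}(r)\right) = 0$ by the previous step, and the numerator integrand is supported on the null set $\{\bar{r} \ge r\}$, so its expectation is $0$ as well. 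Therefore $\hat{p}_{\text{demand}}(r) = \lim_{x \to 0^+} 0/x = 0$, giving $p = 0$ and $(p,0) = (0,0)$, contradicting $(p,0) \ne (0,0)$.

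The argument is essentially bookkeeping on the definitions; the only place demanding care is the numerator, where $\bar{\omega}^{\left(\eta_0,\eta_1\right)}(r)$ is a priori only defined on $\left(-\infty, \bar{r}\left(\eta_0,\eta_1\right)\right]$, i.e. exactly where $r \le \bar{r}\left(\eta_0,\eta_1\right)$. The main (mild) obstacle is thus to justify that multiplication by the $\hat{\Pi}_{\text{demand}}$-almost-everywhere-zero indicator legitimately annihilates this term in the expectation even though the factor is formally undefined precisely on the complement. This is resolved by the standard convention that the product is taken to be $0$ wherever the indicator is $0$, so that the integrand is a genuine $\hat{\Pi}_{\text{demand}}$-measurable function vanishing almost everywhere and hence with zero integral.
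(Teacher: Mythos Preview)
Your proof is correct and follows the same approach as the paper, which simply states that if $\max\left(\tilde{D}(r)\right) = 0$ then $\hat{p}_{\text{demand}}(r) = 0$ by the definition of real demand; you have just made explicit the measure-theoretic bookkeeping (the indicator $\mathbf{1}_{\bar{r} \ge r}$ being $\hat{\Pi}_{\text{demand}}$-almost-surely zero, hence annihilating both numerator and denominator) that the paper leaves implicit.
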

            \begin{proof}
                For any $r \in \mathbb{R}$, if $\max\left(\tilde{D}(r)\right) = 0$, then $\hat{p}_{\text{demand}}(r) = 0$ by the definition~\ref{definition:realdemand}.
            \end{proof}
            
            \begin{corollary}
            \label{corollary:demandpricefinding}
                (Demand Price Finding)
                There exists a unique injective function, denoted as the demand price finding function $\hat{r}: D \to \mathbb{R}$ such that, for any $(p,d) \in D$,
                \begin{align*}
                    (p,d) = \hat{d}\left(\hat{r}(p,d)\right)
                    \text{.}
                \end{align*}
                We will define the function $\hat{r}$ as such.
            \end{corollary}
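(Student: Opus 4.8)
The plan is to reduce the entire statement to a single injectivity fact: that the demand plotting function $\hat{d}$ is injective on the set $\{r \in \mathbb{R} : \max(\tilde{D}(r)) > 0\}$. Once that is established, the right-inverse $\hat{r}$ exists by surjectivity onto its own range, is uniquely forced, and is automatically injective. The only genuine input is the condition~\ref{condition:14} (Demand Monotone Trend), together with the preceding corollary that excises all zero-demand points from $D$.

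First I would record that every $(p,d) \in D$ satisfies $d > 0$. Indeed $D = \text{Range}(\hat{d}) - \{(0,0)\}$, and by the preceding corollary (whose proof notes that $\max(\tilde{D}(r)) = 0$ forces $\hat{p}_{\text{demand}}(r) = 0$ via the definition~\ref{definition:realdemand}), any $r$ with $\max(\tilde{D}(r)) = 0$ gives $\hat{d}(r) = (0,0) \notin D$. Hence each $(p,d) \in D$ arises as $\hat{d}(r)$ for some $r$ with $\max(\tilde{D}(r)) = d > 0$, and it suffices to control preimages lying in the positive-demand region.

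Next I would prove uniqueness of the preimage. Suppose $\hat{d}(a) = \hat{d}(b) = (p,d)$ with $d > 0$ and $a \ne b$, say $a < b$; then $\max(\tilde{D}(a)) = \max(\tilde{D}(b)) = d > 0$ and $\hat{p}_{\text{demand}}(a) = \hat{p}_{\text{demand}}(b) = p$. Applying the condition~\ref{condition:14} with the ordering $b > a$ gives $\hat{p}_{\text{demand}}(b) > \hat{p}_{\text{demand}}(a)$, contradicting equality of the first coordinates. Thus each $(p,d) \in D$ has exactly one $\hat{d}$-preimage, which I define to be $\hat{r}(p,d)$; existence of at least one preimage is immediate since $(p,d) \in \text{Range}(\hat{d})$. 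Uniqueness of $\hat{r}$ as a function follows because any $\hat{r}$ obeying $\hat{d}(\hat{r}(p,d)) = (p,d)$ must send $(p,d)$ to a preimage, and preimages are unique; injectivity of $\hat{r}$ is immediate, since $\hat{r}(p_1,d_1) = \hat{r}(p_2,d_2)$ yields $(p_1,d_1) = \hat{d}(\hat{r}(p_1,d_1)) = (p_2,d_2)$ upon applying $\hat{d}$.

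The step I expect to be the crux—though it is conceptually light rather than computationally heavy—is recognizing that the monotone-trend argument breaks at $d = 0$, where the condition~\ref{condition:14} offers no strict separation and the map $\hat{d}$ genuinely fails to be injective (many prices can give zero demand and zero revenue). The role of the preceding corollary is precisely to remove that degenerate fiber from $D$, so the remaining structure of the proof is the clean two-line contradiction above; no appeal to the continuity or maximal-monotonicity of $\tilde{D}$ beyond positivity is needed.
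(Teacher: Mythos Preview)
Your proposal is correct and follows essentially the same approach as the paper: both reduce the claim to the injectivity of $\hat d$ on the positive-demand region and obtain that injectivity directly from condition~\ref{condition:14}. Your write-up is simply more explicit about why this reduction suffices and about the ensuing uniqueness and injectivity of $\hat r$, whereas the paper compresses these routine consequences into a single ``it suffices'' line.
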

            \begin{proof}
                It suffices to show that, for any $r_1, r_2 \in \mathbb{R}$, if $r_1 \ne r_2$, and $\max\left(\tilde{D}\left(r_1\right)\right) = \max\left(\tilde{D}\left(r_2\right)\right) > 0$, then $\hat{p}_{\text{demand}}\left(r_1\right) \ne \hat{p}_{\text{demand}}\left(r_2\right)$. This follows directly from the condition~\ref{condition:14} (demand monotone trend).
            \end{proof}

            This suggests that, even when the graph is represented in the revenue space, which is the same as the cost space, and the volume space, we are still equipped with the demand price finding function $\hat{r}$ to trace which demand price is needed to achieve such revenue and demand volume (as long as the demand volume is non-zero).

            \begin{corollary}
                For any $(p,d), (p',d') \in D$, if $(-d',p') \succ_2 (-d,p)$, then $\hat{r}(-d',p') > \hat{r}(-d,p)$.
            \end{corollary}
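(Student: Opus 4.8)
The plan is to reduce the statement to monotonicity of the two coordinate maps that define $\hat d$, relying on the injectivity already secured in Corollary~\ref{corollary:demandpricefinding}. Write $r_1 = \hat r(p,d)$ and $r_2 = \hat r(p',d')$, so that the defining identities $(p,d) = \hat d(r_1)$ and $(p',d') = \hat d(r_2)$ give $\hat p_{\text{demand}}(r_1)=p$, $\max(\tilde D(r_1))=d$, $\hat p_{\text{demand}}(r_2)=p'$, and $\max(\tilde D(r_2))=d'$. The goal then becomes $r_2 > r_1$, which I would establish case by case, in each case assuming $r_2 \le r_1$ and deriving a contradiction.

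Next I would unpack the strict lexicographic hypothesis $(-d',p') \succ_2 (-d,p)$ into its two exhaustive cases. In the first case $-d' > -d$, i.e. $d' < d$: since Corollary~\ref{corollary:demandmaximalmonotone} makes the negative of $\tilde D$ maximal monotone, the map $r \mapsto \max(\tilde D(r))$ is non-increasing, so $r_2 \le r_1$ would force $d' = \max(\tilde D(r_2)) \ge \max(\tilde D(r_1)) = d$, contradicting $d' < d$. In the second case $d' = d$ and $p' > p$: both points lie in $D$, which by the preceding corollary is disjoint from $\mathbb R \times \{0\}$, hence $d = d' > 0$; moreover $r_2 = r_1$ is impossible because it would give $\hat d(r_2) = \hat d(r_1)$ and thus $p' = p$, so $r_2 \le r_1$ means $r_1 > r_2$. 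With $\max(\tilde D(r_1)) = \max(\tilde D(r_2)) > 0$ this lets me invoke Condition~\ref{condition:14} (demand monotone trend), which yields $\hat p_{\text{demand}}(r_1) > \hat p_{\text{demand}}(r_2)$, i.e. $p > p'$, contradicting $p' > p$. Both cases force $r_2 > r_1$.

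Every ingredient is already in hand, so I do not anticipate a genuine obstacle; the only points demanding care are bookkeeping ones. I must keep straight that the lexicographic order is taken on $(-d,p)$, so its first comparison is by \emph{decreasing} demand volume, and match that orientation to the \emph{non-increasing} direction of $\max(\tilde D(\cdot))$ and to the increasing direction of $\hat p_{\text{demand}}$ along level sets of the demand volume (the latter being exactly the content of Condition~\ref{condition:14}, whose inequality direction I would verify carefully). I would also confirm that strictness of the hypothesis, together with injectivity of $\hat r$, rules out the degenerate equality $r_1 = r_2$ in each branch.
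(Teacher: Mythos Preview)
Your proof is correct and follows essentially the same approach as the paper, which simply says the result follows directly from the proof of Corollary~\ref{corollary:demandpricefinding}. You have carefully spelled out the two lexicographic cases and correctly invoked the non-increasing behavior of $r\mapsto\max(\tilde D(r))$ together with Condition~\ref{condition:14}, which is exactly the content underlying that earlier corollary's proof.
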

            \begin{proof}
                The proof follows directly from the proof for the corollary~\ref{corollary:demandpricefinding}.
            \end{proof}

            \begin{definition}
                (Augmented Demand Graph)
                For a well-behaved competitive market, the augmented demand graph is $A_D \subseteq \mathbb{R}_0^+ \times \left(0, D_{\text{max}}\right]$ uniquely defined such that
                \begin{align*}
                    A_{D} = 
                    \bigcup_{(p,d) \in D}
                    \left(
                        \left((-\infty, p] \cap \mathbb{R}^+_0 \right) \times (0,d]\right)
                    \text{.}
                \end{align*}
            \end{definition}

            \begin{definition}
                (Vertical Border Demand Graph)
                For a well-behaved competitive market, the vertical border demand graph $V_D^{(0)} \subseteq A_{D}$ is defined such that
                \begin{align*}
                    V_D^{(0)} = 
                    \left(\bigcup_{(p,d) \in A_D^{(0)}} 
                        \left\{
                        \left(
                            \sup\left(\left\{p':(p',d) \in A_D\right\}\right)
                            , d
                        \right)
                        \right\}
                    \right) \cap A_D
                    \text{.}
                \end{align*}
            \end{definition}

            \begin{definition}
                (Farthest Vertical Border Demand Graph)
                For a well-behaved competitive market, the farthest vertical border demand graph $V_D^{(1)} \subseteq V_{D}^{(0)}$ is defined such that
                \begin{align*}
                    V_D^{(1)} = 
                    \left(
                        \left(-\infty, p^*_{\text{demand}}\right]
                        \times \bigcup_{(p,d) \in D: p = p^*_{\text{demand}}} [0,d]
                    \right)
                    \cap
                    V_D^{(0)}
                    \text{.}
                \end{align*}
            \end{definition}

            \begin{definition}
                (Sharp Vertical Border Demand Graph)
                For a well-behaved competitive market, the sharp vertical border demand graph $V_D^{(2)} \subseteq V_{D}^{(0)}$ is defined such that
                \begin{align*}
                    V_D^{(2)} = 
                    V_D^{(0)} - V_D^{(1)}
                    -
                    \bigcup_{(p,d) \in V_D^{(0)}} 
                        \left\{\left(p,
                        \sup\left(\left\{d':(p,d') \in V_D^{(0)} \right\}\right)\right)\right\}
                    \text{.}
                \end{align*}
            \end{definition}

            \begin{definition}
                (Admissible Vertical Border Demand Graph)
                For a well-behaved competitive market, the admissible vertical border demand graph $V_D^{(3)} \subseteq V_{D}^{(0)}$ is defined such that
                \begin{align*}
                    V_D^{(3)} = 
                    \left(
                    \left(
                        \bigcup_{(p,d) \in D} 
                        \left(
                            \{p\}  \times
                            [d,\infty)
                        \right)
                    \right)
                    \cap 
                    V_D^{(0)}
                    \right) \cup V_D^{(1)} \cup \{(0,0)\}
                    \text{.}
                \end{align*}
            \end{definition}

            We get that, for any point $(p,s) \in V_D^{(3)} - V_D^{(1)} - \{(0,0)\}$, there exists a demand price measure that satisfies the demand revenue is the same across the support of the measure and the resale volume for a demand-side bid strategy $B_{\text{demand}}$ that supports the measure will be $0$.

            Moreover, for any point $(p,s) \in V_D^{(1)}$, there exists a demand price measure that satisfies the demand revenue is the same across the support of the measure and the highest possible demand price volume for a demand-side bid strategy $B_{\text{demand}}$ that supports the measure will be higher or equal to the supremum of the measure support.
            
            \begin{definition}
                (Demand Price Measures Finding)
                For a well-behaved competitive market, the demand price measures finding correspondence $M_{\text{demand}}: V_D^{(3)} \rightrightarrows \mathcal{M}\left(\mathbb{R}\right)$ is defined such that, $M_{\text{demand}}(0,0)$ only contains a measure $\mu$ with $\mu\left(\mathbb{R}\right) = 0$, and, for any $(p,s) \in V_D^{(3)} - \{(0,0)\}$, $M_{\text{demand}}(p,s)$ is a collection of every measure $\mu$ upon $\mathcal{B}\left(\mathbb{R}\right)$ such that
                \begin{align*}
                    \text{support}\left(\mu\right) \in W^{\left(\left\{\hat{r}(p,d) : (p,d) \in D\right\}\right)}
                    \text{,}
                \end{align*}
                \begin{align*}
                    \mu\left(\mathbb{R}\right) = s\text{,}
                \end{align*}
                and
                \begin{align*}
                    \sum_{r \in \text{support}\left(\mu\right)} \frac{\mu\left(\{r\}\right)}{\max\left(\tilde{D}(r)\right)}
                    \in \left[\mathbf{1}_{(p,s) \not\in V^{(1)}_d} , 1\right]
                    \text{.}
                \end{align*}
            \end{definition}

            \begin{corollary}
                For any $(p,s) \in V_D^{(3)}$, $M_{\text{demand}}(p,s) \ne \emptyset$.
            \end{corollary}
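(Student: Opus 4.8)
The plan is to construct, for each $(p,s) \in V_D^{(3)}$, an explicit finitely-supported measure lying in $M_{\text{demand}}(p,s)$, handling the three pieces of the union $V_D^{(3)} = \left(\bigcup_{(p,d)\in D}(\{p\}\times[d,\infty)) \cap V_D^{(0)}\right) \cup V_D^{(1)} \cup \{(0,0)\}$ separately. The point $(0,0)$ is immediate, since $M_{\text{demand}}(0,0)$ is defined to contain the zero measure. For the remaining points I would first record the structural fact that $V_D^{(3)} - \{(0,0)\} \subseteq V_D^{(0)} \subseteq A_D$ (both the first piece and $V_D^{(1)}$ are intersected with $V_D^{(0)}$), so every such $(p,s)$ has $s > 0$ together with $p = \sup\{q : (q,e) \in D,\ e \ge s\}$.

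The key fact to extract from the definitions is that every $(p,s) \in V_D^{(0)}$ admits a demand-graph point at the same revenue with demand at least $s$. Indeed, since $(p,s) \in A_D$ there is some $(p'',d'') \in D$ with $p \le p''$ and $s \le d''$; because $d'' \ge s$ the value $p''$ lies in the set whose supremum is $p$, forcing $p'' \le p$ and hence $p'' = p$. Thus $(p,d^+) \in D$ with $d^+ := d'' \ge s$, and $r^+ := \hat{r}(p,d^+)$ is a price achieving revenue exactly $p$ with $\max(\tilde{D}(r^+)) = d^+$. I expect this unwinding of $A_D$, $V_D^{(0)}$ and the attained-supremum condition to be the main (if modest) obstacle; everything afterward is a short explicit construction.

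For $(p,s) \in V_D^{(1)}$ the clearing constraint is merely $\sum_{r}\mu(\{r\})/\max(\tilde{D}(r)) \in [0,1]$, so I would simply take $\mu = s\,\delta_{r^+}$: its support $\{r^+\}$ is finite, hence quasi-well-ordered and contained in the price set $\{\hat{r}(p,d):(p,d)\in D\}$, its mass is $s$, and the clearing sum is $s/d^+ \in (0,1] \subseteq [0,1]$, so $\mu \in M_{\text{demand}}(p,s)$. For a point $(p,s)$ in the first piece but not in $V_D^{(1)}$, the indicator forces the clearing sum to equal $1$ exactly; here the defining condition of the first piece supplies a second point $(p,d^-) \in D$ with $d^- \le s$, giving $d^- \le s \le d^+$. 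Writing $r^- := \hat{r}(p,d^-)$, I would set $f^+ = (s-d^-)/(d^+-d^-)$ and $f^- = 1-f^+$ (both in $[0,1]$, and collapsing to the single-price construction $\mu = s\,\delta_{r^+}$ when $d^- = d^+ = s$), and define $\mu = f^+ d^+\,\delta_{r^+} + f^- d^-\,\delta_{r^-}$.

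Finally I would verify this $\mu$ meets every requirement: its total mass is $f^+ d^+ + f^- d^- = s$, its clearing sum is $f^+ + f^- = 1$, and its two-point support is finite, hence quasi-well-ordered and contained in $\{\hat{r}(p,d):(p,d)\in D\}$, so $\mu \in M_{\text{demand}}(p,s)$. Since the three cases exhaust $V_D^{(3)}$, this establishes $M_{\text{demand}}(p,s) \ne \emptyset$ for every $(p,s) \in V_D^{(3)}$.
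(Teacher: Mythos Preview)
Your proposal is correct and follows essentially the same approach as the paper: an explicit construction of a one- or two-point measure via a case split on $V_D^{(3)}$. You are in fact more careful than the paper's own proof—you explicitly handle $(0,0)$ (which the paper's third case silently includes but whose construction does not apply to it), and you verify from the definitions of $A_D$ and $V_D^{(0)}$ that the needed points $(p,d^+),(p,d^-)\in D$ with $d^-\le s\le d^+$ actually exist, whereas the paper simply asserts their existence.
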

            \begin{proof}
                If $(p,s) \in D$, we have that $M_{\text{demand}}(p,s)$ contains a measure $\mu$ such that $\mu\left(\left\{\hat{r}\left(p,s\right)\right\}\right) = \mu\left(\mathbb{R}\right) = s$.

                If $(p,s) \in V_D^{(1)}$, there exists some $d \ge s$ such that $(p,d) \in D$, and we have that $M_{\text{demand}}(p,s)$ contains a measure $\mu$ such that $\mu\left(\left\{\hat{r}\left(p,d\right)\right\}\right) = \mu\left(\mathbb{R}\right) = s$.

                If $(p,s) \in V_D^{(3)} - V_D^{(1)} - D$, there exist some $d_1 > s$ and some $d_2 < s$ such that $(p,d_1), (p,d_2) \in D$. We have that $M_{\text{demand}}(p,s)$ contains a measure $\mu$ such that $\mu\left(\left\{\hat{r}(p,d_1)\right\}\right) = \frac{d_1\left(s-d_2\right)}{d_1-d_2}$, $\mu\left(\left\{\hat{r}(p,d_2)\right\}\right) = \frac{d_2\left(d_1-s\right)}{d_1-d_2}$, and $\mu\left(\mathbb{R}\right) =s$.
            \end{proof}

        \subsubsection{Supply Graph}

            \begin{definition}
                (Supply Graph)
                For a well-behaved competitive market, the supply plotting function $\hat{g}: \mathbb{R} \times [0,1] \to \mathbb{R}^+_0 \times \left[0, S_{\text{max}}\right]$ is defined such that, for any $\rho \in \mathbb{R}$, $q \in [0,1]$,
                \begin{align*}
                    \hat{g}(\rho,q) =
                    \lim_{x \to \left((1-q)\min\left(\tilde{S}(\rho)\right)
                        + q\max\left(\tilde{S}(\rho)\right)\right)^+}
                        \left(
                         \frac{(1-q)\ubar{p}_{\text{supply}}(\rho)\min\left(\tilde{S}(\rho)\right)
                        + q\bar{p}_{\text{supply}}(\rho)\max\left(\tilde{S}(\rho)\right)}{x}
                        ,
                        x
                        \right)
                    \text{,}
                \end{align*}
                and the supply graph $G$ is defined such that
                \begin{align*}
                    G = 
                    \text{Range}\left(\hat{g}\right)
                    - \{(0,0)\}
                    \text{.}
                \end{align*}
            \end{definition}
        
            \begin{corollary}
            \label{corollary:no0supply}
                $G \cap \left(\mathbb{R} \times \{0\}\right) = \emptyset$.
            \end{corollary}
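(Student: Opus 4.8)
The plan is to mirror the proof of the analogous demand corollary ($D \cap (\mathbb{R}\times\{0\}) = \emptyset$): I will show that any value of $\hat{g}$ whose volume coordinate is zero must in fact be the excluded point $(0,0)$, so that no point with zero volume survives in $G = \text{Range}(\hat{g}) - \{(0,0)\}$. First I would record that the second (volume) coordinate of $\hat{g}(\rho,q)$ is exactly $v(\rho,q) := (1-q)\min(\tilde{S}(\rho)) + q\max(\tilde{S}(\rho))$, since the limit is taken as $x \to v(\rho,q)^+$ and the second slot of the limand is $x$ itself. Thus a point of $\text{Range}(\hat{g})$ lies on $\mathbb{R}\times\{0\}$ if and only if it arises from some $(\rho,q)$ with $v(\rho,q) = 0$.

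Next, suppose $v(\rho,q) = 0$. Because $\tilde{S}$ takes values in closed compact subintervals of $\mathbb{R}^+_0$ (Corollary~\ref{corollary:supplymaximalmonotone}), both summands $(1-q)\min(\tilde{S}(\rho)) \ge 0$ and $q\max(\tilde{S}(\rho)) \ge 0$ are nonnegative, and a sum of two nonnegative reals vanishes only if each vanishes; hence $(1-q)\min(\tilde{S}(\rho)) = 0$ and $q\max(\tilde{S}(\rho)) = 0$. Factoring the price coefficients out of the numerator of the first coordinate of $\hat{g}$, this gives $(1-q)\ubar{p}_{\text{supply}}(\rho)\min(\tilde{S}(\rho)) = \ubar{p}_{\text{supply}}(\rho)\cdot 0 = 0$ and $q\bar{p}_{\text{supply}}(\rho)\max(\tilde{S}(\rho)) = \bar{p}_{\text{supply}}(\rho)\cdot 0 = 0$, so the whole numerator equals $0$. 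Therefore $\hat{g}(\rho,q) = \lim_{x\to 0^+}(0/x,\, x) = (0,0)$.

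Finally, since $(0,0)$ is removed in the definition of $G$, no element of $G$ can have a zero volume coordinate, which yields $G \cap (\mathbb{R}\times\{0\}) = \emptyset$. I expect essentially no obstacle here: the argument is a short algebraic computation resting only on the nonnegativity of the values of $\tilde{S}$, exactly paralleling the demand case. The one place that warrants a moment's care is the degenerate limit $\lim_{x\to 0^+} N/x$ appearing in the first coordinate when $v(\rho,q)=0$, but this is harmless precisely because the numerator $N$ has already been shown to equal $0$, so the quotient is identically $0$ for every $x>0$ and the limit is $0$.
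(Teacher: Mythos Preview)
Your proposal is correct and takes essentially the same approach as the paper, which simply records that the result ``follows from the definition.'' You have merely unpacked that one-line justification into the explicit algebraic computation showing that any $(\rho,q)$ with vanishing volume coordinate forces $\hat{g}(\rho,q)=(0,0)$, which is then excluded from $G$; nothing further is needed.
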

            \begin{proof}
                This follows from the definition.
            \end{proof}

            Since the supply is determined by both supply price $\rho$ and the residual quantity $q$, there can be multiple points $(p,s) \in G$ that can be supported by multiple tuples of supply price and residual quantity $(\rho, q)$. In fact, there exists uncountably many $(p,s) \in G$ that can be supported by uncountably many tuples of supply price and residual quantity $(\rho, q)$, because there exist uncountably many $\rho \in \mathbb{R}$ such that $\left\vert \tilde{S}(\rho) \right\vert = 1$ from that the correspondence $\tilde{S}(\rho)$ is bounded and maximal monotone (as shown in the corollary~\ref{corollary:supplymaximalmonotone}).

            Still, by restricting the value of residual quantity $q$ to be more specific, we get that we can trace back point $(p,s) \in G - \{(0,0)\}$ to a unique tuple of supply price and residual quantity $(\rho, q)$.

            \begin{corollary}
            \label{corollary:supplypricefinding}
                (Supply Price Finding)
                There exists a unique injective function, denoted as the supply price finding function $\hat{\rho}: G \to \mathbb{R}$ such that, for any $(p,s) \in G$,
                \begin{align*}
                    (p,s) \in
                    \left\{\hat{g}\left(\hat{\rho}(p,s), q\right)\right\}_{q \in [0,1]}\text{.}
                \end{align*}
            \end{corollary}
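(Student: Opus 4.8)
The plan is to mirror the proof of Corollary~\ref{corollary:demandpricefinding}: existence of a realising supply price is immediate because $G = \text{Range}(\hat{g}) - \{(0,0)\}$, so by definition every $(p,s) \in G$ equals $\hat{g}(\rho,q)$ for some $(\rho,q)$; the entire content is single-valuedness, namely that if $\hat{g}(\rho_1,q_1) = \hat{g}(\rho_2,q_2) = (p,s) \neq (0,0)$ then $\rho_1 = \rho_2$. This uniqueness of the realising price is exactly what turns the implicit relation into a function $\hat{\rho}$. Throughout I would record two standing facts: every point of $G$ has $s > 0$ by Corollary~\ref{corollary:no0supply}, and the second coordinate of $\hat{g}(\rho,q)$ is precisely the convex combination $s = (1-q)\min(\tilde{S}(\rho)) + q\max(\tilde{S}(\rho)) \in \tilde{S}(\rho)$.

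First I would localise $\rho$ using the volume coordinate alone. Setting $I(s) = \{\rho \in \mathbb{R} : s \in \tilde{S}(\rho)\}$, I would use the maximal monotonicity of $\tilde{S}$ (Corollary~\ref{corollary:supplymaximalmonotone}) to show that $I(s)$ is a closed interval $[a,b]$, that any $\rho < a$ has $\max(\tilde{S}(\rho)) < s$ and any $\rho > b$ has $\min(\tilde{S}(\rho)) > s$ (so no convex combination at such $\rho$ can realise volume $s$), and, crucially, that every $\rho$ in the interior $(a,b)$ satisfies $\tilde{S}(\rho) = \{s\}$. The same monotonicity shows that when $a < b$ the left endpoint attains $s = \max(\tilde{S}(a))$, reached at $q=1$, and the right endpoint attains $s = \min(\tilde{S}(b))$, reached at $q=0$. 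Consequently both $\rho_1,\rho_2$ realising $(p,s)$ must lie in $I(s)$, and it remains only to separate points of $I(s)$ by their cost coordinate; if $I(s)$ is a single point, uniqueness is already automatic.

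The key step is to show the cost coordinate at fixed volume $s$ is strictly increasing along $I(s)$. For an interior $\rho$ the degeneracy $\tilde{S}(\rho) = \{s\}$ forces $\bar{p}_{\text{supply}}(\rho) = \ubar{p}_{\text{supply}}(\rho)$ by Corollary~\ref{corollary:equalsupplyae}, and a direct computation shows the first coordinate of $\hat{g}(\rho,q)$ collapses to this common value independently of $q$; at the endpoints the first coordinate is $\bar{p}_{\text{supply}}(a)$ (via $q=1$) or $\ubar{p}_{\text{supply}}(b)$ (via $q=0$). Writing $c_s$ for this cost-at-volume-$s$ map on $I(s)$, I would invoke Condition~\ref{condition:11} (supply monotone trend) together with $s>0$: for interior $\rho < \rho'$ at level $s$ we have $\max(\tilde{S}(\rho)) = \max(\tilde{S}(\rho')) = s$, giving $\bar{p}_{\text{supply}}(\rho) < \bar{p}_{\text{supply}}(\rho')$, and comparing an interior $\rho$ to the endpoints (matching $\max(\tilde{S})$ against $a$ and $\min(\tilde{S})$ against $b$) yields $\bar{p}_{\text{supply}}(a) < c_s(\rho) = \ubar{p}_{\text{supply}}(\rho) < \ubar{p}_{\text{supply}}(b)$. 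Hence $c_s$ is strictly increasing, so injective, and $c_s(\rho_1) = p = c_s(\rho_2)$ forces $\rho_1 = \rho_2$; combined with the localisation this defines $\hat{\rho}(p,s)$ as the unique realising price.

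The main obstacle I anticipate is the endpoint bookkeeping on $I(s)$: the endpoint cost values come from the extreme prices $\bar{p}_{\text{supply}}(a)$ and $\ubar{p}_{\text{supply}}(b)$ rather than from the common interior value, so I must glue the strict inequalities furnished separately by Condition~\ref{condition:11} for $\bar{p}_{\text{supply}}$ and for $\ubar{p}_{\text{supply}}$ with the interior identity $\bar{p}_{\text{supply}} = \ubar{p}_{\text{supply}}$ from Corollary~\ref{corollary:equalsupplyae}, while also verifying that the degenerate case $a=b$ (where $s$ may sit in the interior of $\tilde{S}(a)$) is handled trivially. As in Corollary~\ref{corollary:demandpricefinding}, once the realising price is shown unique the desired function $\hat{\rho}$ is obtained as the corresponding inverse assignment.
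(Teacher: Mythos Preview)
Your proposal is correct and follows essentially the same approach as the paper: localise the realising price to the closed interval $X = \{\rho : s \in \tilde{S}(\rho)\}$ via the maximal monotonicity of $\tilde{S}$ (Corollary~\ref{corollary:supplymaximalmonotone}), observe that the interior degeneracy $\tilde{S}(\rho) = \{s\}$ forces $\bar{p}_{\text{supply}} = \ubar{p}_{\text{supply}}$ there, and invoke Condition~\ref{condition:11} to obtain strict monotonicity of the cost coordinate along $X$, which yields uniqueness of the realising $\rho$. The paper phrases the final step as a two-case split on whether $p = \bar{p}_{\text{supply}}(\inf X)$ rather than packaging it as a single strictly increasing map $c_s$, but the content is identical.
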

            \begin{proof}
                Consider any $(p,s) \in G$. From the corollaries~\ref{corollary:supplymaximalmonotone},\ref{corollary:no0supply}, the set $X = \left\{\rho \in \mathbb{R}:s \in \tilde{S}(\rho)\right\}$ is a closed interval with $\inf(X) \in \mathbb{R}$. By denoting $A = \left\{x \in X: (p,s) \in \left\{\hat{g}\left(\rho, q\right)\right\}_{q \in [0,1]}\right\}$, it suffices to show that $\vert A \vert \le 1$. Since $A \subseteq X$, we can focus on the case when $\vert X \vert > 1$. We have that $\max\left(\tilde{S}\left(\inf\left(X\right)\right)\right) = s$, and $\left\{\tilde{S}(x)\right\}_{x \in \text{int}(X)} = \{\{s\}\}$. Thus, $\text{int}(X) \subseteq \left\{x \in \mathbb{R}: \ubar{p}_{\text{supply}}(x) = \bar{p}_{\text{supply}}(x)\right\}$, the function $\ubar{p}_{\text{supply}}$ is increasing on $X -\left\{ \inf\left(X\right)\right\}$, and the function $\bar{p}_{\text{supply}}$ is increasing on $X -\left\{ \sup\left(X\right)\right\}$, because if the condition~\ref{condition:11} (supply monotone trend). Therefore, if $p = \bar{p}_{\text{supply}}\left(\inf\left(X\right)\right)$, then $A = \{\inf\left(X\right)\}$, and, if $p \ne \bar{p}_{\text{supply}}\left(\inf\left(X\right)\right)$, then $A = \left\{x \in X - \left\{ \inf\left(X\right)\right\}: p = \ubar{p}_{\text{supply}}(x)\right\}$, which can have at most $1$ element.
            \end{proof}

            \begin{definition}
                (Supply Residual Ratio Finding)
                For a well-behaved competitive market,
                the supply residual ratio finding function $\hat{q}: G - \{0\} \to [0,1]$ such that, for any $(p,s) \in G$ with $s > 0$,
                \begin{align*}
                    \hat{q}(p,s)
                    =
                    1 - \lim_{x \to 
                    \left(\max\left(\tilde{S}\left(\hat{\rho}(p,s)\right)\right) - \min\left(\tilde{S}\left(\hat{\rho}(p,s)\right)\right)\right)^+} \frac{\max\left(\tilde{S}\left(\hat{\rho}(p,s)\right)\right) - s}{x}
                    \text{.}
                \end{align*}
            \end{definition}

            This restricts that, any time when the total residual volume is $0$, the residual ratio selected will be $1$.
            
            \begin{corollary}
                For any $(p,s) \in G$ with $s > 0$,
                \begin{align*}
                    (p,s) =
                    \hat{g}\left(\hat{\rho}(p,s), \hat{q}(p,s)\right)
                    \text{.}
                \end{align*}
            \end{corollary}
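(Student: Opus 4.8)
The plan is to fix $(p,s)\in G$ with $s>0$, write $\rho^{*}=\hat{\rho}(p,s)$, and reduce everything to the pair of defining properties of $\hat\rho$ and $\hat q$. By Corollary~\ref{corollary:supplypricefinding} there is some $q_0\in[0,1]$ with $\hat{g}(\rho^{*},q_0)=(p,s)$, so it suffices to prove $\hat{g}(\rho^{*},\hat q(p,s))=\hat{g}(\rho^{*},q_0)$. I would first record the simplification that, because $s>0$, the limit $\lim_{x\to s^{+}}(\cdots)/x$ in the definition of $\hat g$ is just division by the (constant) second coordinate; hence, writing $m=\min(\tilde S(\rho^{*}))$ and $M=\max(\tilde S(\rho^{*}))$, the map $q\mapsto\hat g(\rho^{*},q)$ has second coordinate $(1-q)m+qM$ and first coordinate $\bigl((1-q)\ubar p_{\text{supply}}(\rho^{*})m+q\bar p_{\text{supply}}(\rho^{*})M\bigr)/\bigl((1-q)m+qM\bigr)$. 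This turns the claim into verifying that $q^{*}:=\hat q(p,s)$ produces the same two coordinates as $q_0$.

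I would then split into two cases according to whether $\tilde S(\rho^{*})$ is a nondegenerate interval. If $M>m$, the limit in the definition of $\hat q$ is taken as $x\to(M-m)^{+}$ with $M-m>0$, so it evaluates to $\frac{M-s}{M-m}$ and $q^{*}=1-\frac{M-s}{M-m}=\frac{s-m}{M-m}$. Since $q_0$ satisfies the second-coordinate equation $(1-q_0)m+q_0M=s$, which over $[0,1]$ has the unique solution $\frac{s-m}{M-m}$ when $M>m$, I conclude $q^{*}=q_0$ and the identity holds immediately.

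The delicate case is $M=m$, i.e.\ $\tilde S(\rho^{*})=\{s\}$ a singleton. Here the limit in $\hat q$ is $x\to 0^{+}$, and I would note that its numerator $M-s$ vanishes because $s=(1-q_0)m+q_0M=M$, so $\frac{M-s}{x}\equiv 0$ for $x>0$ and the limit is $0$, giving $q^{*}=1$. The first coordinate could in principle depend on $q$, so the main obstacle is to rule this out: I would invoke Corollary~\ref{corollary:equalsupplyae}, which guarantees $\bar p_{\text{supply}}(\rho^{*})=\ubar p_{\text{supply}}(\rho^{*})$ precisely because $\max(\tilde S(\rho^{*}))=\min(\tilde S(\rho^{*}))$. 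With the two cost functions coinciding at $\rho^{*}$, the first coordinate of $\hat g(\rho^{*},q)$ collapses to $\bar p_{\text{supply}}(\rho^{*})$ independently of $q$, and the second coordinate is $M=s$ for every $q$; hence $\hat g(\rho^{*},q^{*})=\hat g(\rho^{*},q_0)=(p,s)$. Combining the two cases yields the corollary, with the only real work being the degenerate-interval case and its appeal to the almost-everywhere equality of $\bar p_{\text{supply}}$ and $\ubar p_{\text{supply}}$.
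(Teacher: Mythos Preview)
Your proof is correct and follows essentially the same two-case split as the paper, distinguishing whether $\tilde S(\rho^{*})$ is a nondegenerate interval or a singleton. Your write-up is in fact more careful than the paper's: in the degenerate case the paper simply asserts that $A=[0,1]$, leaving implicit the fact that $\ubar p_{\text{supply}}(\rho^{*})=\bar p_{\text{supply}}(\rho^{*})$ when $\max(\tilde S(\rho^{*}))=\min(\tilde S(\rho^{*}))$, whereas you explicitly invoke Corollary~\ref{corollary:equalsupplyae} to justify why the first coordinate of $\hat g(\rho^{*},q)$ is independent of $q$ in that case.
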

            \begin{proof}
                Consider any $(p,s) \in G$ with $s > 0$. Denote $A = \left\{q \in [0,1]: (p,s) = \hat{g}\left(\hat{\rho}(p,s), \hat{q}(p,s)\right)\right\}$.
                
                If $\left\vert \tilde{S}\left(\hat{\rho}(p,s)\right) \right\vert = 1$, then, $A = [0,1] \ni \hat{q}(p,s)$. 

                If $\left\vert \tilde{S}\left(\hat{\rho}(p,s)\right) \right\vert > 1$, then, 
                $A = \left\{q \in [0,1]: (1-q) \min\left(\tilde{S}\left(\hat{\rho}(p,s)\right)\right) + q \max\left(\tilde{S}\left(\hat{\rho}(p,s)\right) \right) = s\right\} = \{\hat{q}(p,s)\}$.
            \end{proof}

            \begin{corollary}
            \label{corollary:19}
                For any $(p,s), (p',s') \in S$, if $(s',p') \succ_2 (s,p)$, then $\left(\hat{\rho}(s',p'), \hat{q}(s',p')\right) > \left(\hat{\rho}(s,p), \hat{q}(s,p)\right)$.
            \end{corollary}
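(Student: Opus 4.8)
The plan is to reduce the statement to a single monotonicity property of the supply plotting map $\hat{g}$ and then invert it through the finding functions $\hat{\rho}$ and $\hat{q}$. Writing $(p,s)=\hat{g}(\rho,q)$ and $(p',s')=\hat{g}(\rho',q')$ with $(\rho,q)=(\hat{\rho}(p,s),\hat{q}(p,s))$ and $(\rho',q')=(\hat{\rho}(p',s'),\hat{q}(p',s'))$, the preceding finding corollaries guarantee that $\hat{\rho},\hat{q}$ genuinely invert $\hat{g}$ on $G$ (recall that every point of $G$ has $s>0$ by Corollary~\ref{corollary:no0supply}, so $\hat{q}$ is defined there). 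The key claim I would isolate first is that $\hat{g}$ is order preserving into the volume-first order of the statement: $(\rho_1,q_1)\succcurlyeq_2(\rho_2,q_2)$ implies $(s_1,p_1)\succcurlyeq_2(s_2,p_2)$. Granting this, the corollary follows by contraposition using totality of the lexicographic order: if $(\rho',q')\succ_2(\rho,q)$ fails, then $(\rho,q)\succcurlyeq_2(\rho',q')$, whence $(s,p)\succcurlyeq_2(s',p')$, contradicting the hypothesis $(s',p')\succ_2(s,p)$.

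To prove the monotonicity of $\hat{g}$ I would split on the first coordinate, writing $V(\rho,q)=(1-q)\min(\tilde{S}(\rho))+q\max(\tilde{S}(\rho))$ for the volume, which is exactly the second output of $\hat{g}$. If $\rho_1>\rho_2$, the maximal monotonicity of $\tilde{S}$ (Corollary~\ref{corollary:supplymaximalmonotone}) gives $V(\rho_2,q_2)\le\max(\tilde{S}(\rho_2))\le\min(\tilde{S}(\rho_1))\le V(\rho_1,q_1)$, so $s_2\le s_1$; if $\rho_1=\rho_2$ and $q_1\ge q_2$, then $V$ is non-decreasing in $q$, again giving $s_2\le s_1$ with equality precisely when $\tilde{S}(\rho_1)$ is a singleton. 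Thus in every case $s_1\ge s_2$, and it remains to compare the cost coordinate in the boundary situation $s_1=s_2=:s$.

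The main obstacle is exactly this equal-volume case, and it is where Condition~\ref{condition:11} (supply monotone trend) enters, mirroring the delicate part of the proof of Corollary~\ref{corollary:supplypricefinding}. If $\rho_1=\rho_2$, equality of volumes forces $\tilde{S}(\rho_1)$ to be a singleton, so $\ubar{p}_{\text{supply}}=\bar{p}_{\text{supply}}$ there and $p_1=p_2$. If $\rho_1>\rho_2$, monotonicity forces $\max(\tilde{S}(\rho_2))=\min(\tilde{S}(\rho_1))=s$ and $\tilde{S}(\rho)=\{s\}$ for every $\rho\in(\rho_2,\rho_1)$, so that $p_2=\bar{p}_{\text{supply}}(\rho_2)$ and $p_1=\ubar{p}_{\text{supply}}(\rho_1)$ (in each case the forced value of $q$ at the interval endpoints selects these branches, and on a singleton the two cost functions agree). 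Choosing any intermediate $\rho_m\in(\rho_2,\rho_1)$ and applying the $\max$-clause of Condition~\ref{condition:11} to $(\rho_m,\rho_2)$ and the $\min$-clause to $(\rho_1,\rho_m)$ yields $p_2=\bar{p}_{\text{supply}}(\rho_2)<\ubar{p}_{\text{supply}}(\rho_m)<\ubar{p}_{\text{supply}}(\rho_1)=p_1$. Hence $p_1\ge p_2$ throughout, establishing $(s_1,p_1)\succcurlyeq_2(s_2,p_2)$ and completing the monotonicity of $\hat{g}$. The only real care needed is the endpoint bookkeeping for the interval $X=\{\rho:s\in\tilde{S}(\rho)\}$ and the forced values of $q$ at $\inf X$ and $\sup X$, exactly as handled in the proof of Corollary~\ref{corollary:supplypricefinding}.
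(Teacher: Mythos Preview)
Your proposal is correct and follows essentially the same route the paper indicates: it invokes the maximal monotonicity of $\tilde{S}$ (Corollary~\ref{corollary:supplymaximalmonotone}) for the volume comparison and reuses the interval-endpoint argument from the proof of Corollary~\ref{corollary:supplypricefinding}, with Condition~\ref{condition:11} handling the equal-volume case. The only imprecision is the clause ``equality precisely when $\tilde{S}(\rho_1)$ is a singleton'' in the $\rho_1=\rho_2$ case---equality also holds when $q_1=q_2$---but since $p_1=p_2$ in either subcase, this does not affect the argument.
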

            \begin{proof}
                The proof follows directly from the corollary~\ref{corollary:supplymaximalmonotone} and the proof for the corollary~\ref{corollary:supplypricefinding}.
            \end{proof}

            \begin{definition}
                (Augmented Supply Graph)
                For a well-behaved competitive market, the augmented supply graph $A_S \subseteq \mathbb{R} \times \left[0, S_{\text{max}}\right]$ is defined such that
                \begin{align*}
                    A_S =
                        \bigcup_{(p,s) \in S} 
                        \left(
                            [p,\infty)  \times
                            \{s\}
                        \right)
                    \text{.}
                \end{align*}
            \end{definition}

            We can see that, for any point $(p,s) \in S$, we can find a supply price measure such that there exists some supply-side strategy $B_{\text{supply}}$ that supports such supply price measure and satisfies $\left(c\left(B_{\text{supply}}\right),Q\left(B_{\text{supply}}\right)\right) = (p,s)$.

            Moreover, there exists a unique supply price measure, following from the corollary~\ref{corollary:supplypricefinding}.
            
            \begin{definition}
                (Supply Price Measure Finding)
                For a well-behaved competitive market, the supply price measure finding function supply graph $m_{\text{supply}}: S \cup \{(0,0)\} \to \mathcal{M}\left(\mathbb{R}\right)$ is defined such that, for any $(p,s) \in S \cup \{(0,0)\}$,
                \begin{align*}
                    \left(m_{\text{supply}}(p,s)\right)\left(\mathbb{R}\right) = s\text{,}
                \end{align*}
                and
                \begin{align*}
                    \text{support} \left(m_{\text{supply}}(p,s)\right) =
                    \begin{cases}
                        \left\{\hat{\rho}\left(p,s\right)\right\} &\text{ if } (p,s) \ne (0,0)\\
                        \emptyset &\text{ if } (p,s) = (0,0)
                    \end{cases}
                    \text{.}
                \end{align*}
            \end{definition}

            \paragraph{Supply Side Necessary Conditions}

            We get that the necessary conditions on the supply side lemmas in the subsubsection~\ref{subsubsection:necessary} can be represented more concisely in a graphical wording.

            For a competitive strategy $\left(B_{\text{supply}}, B_{\text{demand}}\right)$ with $Q\left(B_{\text{supply}}\right) > 0$, we will have that by denoting $(p_0, s_0) = \left(c\left(B_{\text{supply}}\right), Q\left(B_{\text{supply}}\right)\right) \in S$,
            \begin{enumerate}
                \item The lemma~\ref{lemma:pricemaxmizer} suggests that $\left\{p \in [p_0, \infty): (p,s_0) \in A_D\right\}$ is a closed compact interval;
                
                \item The lemma~\ref{lemma:sandwich} further suggests that
                $\left(\sup\left(\left\{p : (p,s_0) \in A_D\right\}\right), s_0\right) \in V^{(3)}_D$;

                \item The lemma~\ref{lemma:nothighersupply I} suggests that, for any $(p,s) \in S$, if $(s,p) \succ_2 (s_0,p_0)$, then $\lambda\left(\left\{p' \in [p, \infty): (p',s) \in A_D\right\}\right) = 0$;

                \item The lemma~\ref{lemma:nothighersupply II} further suggests that, for any $(p,s) \in S$, if $(s,p) \succ_2 (s_0,p_0)$, then $(p,s) \not\in V_D^{(2)}$;

                \item The lemma~\ref{lemma:notlowersupply} suggests that, for any $(p,s) \in S$ with $s < s_0$, if $\lambda\left(\left\{p' \in [p, \infty): (p',s) \in A_D\right\}\right) > 0$, then there exists some $(p',s') \in S$ such that 
                \begin{itemize}
                    \item $(s',p') \succ_2 (s,p)$;
                    \item $\lambda\left(\left\{p'' \in [p', \infty): (p'',s') \in A_D\right\}\right) > 0$, \textbf{or} $(p,s) \in V_D^{(2)}$.
                \end{itemize}
            \end{enumerate}

        \subsubsection{Graphical Algorithm}

            We consider the following algorithm (the algorithm~\ref{alg:algorithm}).\\
            \SetEndCharOfAlgoLine{}
            \begin{algorithm}[H]
                \caption{Competitive Equilibria Finding}\label{alg:algorithm}
                \KwData{$A_D, V_D^{(2)}, V_D^{(3)}, S, A_S, M_{\text{demand}}, m_{\text{supply}}$}
                \KwResult{a collection $E^{(1)} \subseteq S$}
                $B \gets \left(A_D \cap A_S\right) \cup \{(0,0)\}$\\
                $\ubar{Q} \gets \sup\left(
                \{0\} \cup \left\{d\right\}_{(p,d) \in V_D^{(2)} \cap S} \cup \left\{d : \lambda\left(\left\{p: (p,d) \in B\right\}\right) > 0\right\}
                \right)$ \Comment*[r]{$\ubar{Q}$ can be $0$}
                $\bar{v} \gets \max\left(\{0\} \cup \left\{p: \left(p,\ubar{Q}\right) \in B \cap V_D^{(3)}\right\}\right)$\\
                $\ubar{S}^{(0)} \gets \left\{p: \left(p,\ubar{Q}\right) \in S \cup \{(0,0)\}\right\}
                \cap \left[0,\bar{v}\right]$\\
                $\ubar{v} \gets \min\left(
                    \{\infty\} \cup 
                    \left(
                    \left\{
                        \sup\left(\{0\} \cup \left(\ubar{S}^{(0)} - \left\{\bar{v}\right\}\right) \right), \bar{v}
                    \right\} \cap \ubar{S}^{(0)}
                    \right)
                    \right)$\Comment*[r]{$\ubar{v}$ can be $\infty$}
                \eIf{$\ubar{v} < \bar{v}$}{
                    \Return $\left\{m_{\text{supply}}\left(\ubar{v}, \ubar{Q}\right)\right\} \times M_{\text{demand}}\left(\bar{v}, \ubar{Q}\right)$ 
                }{
                    $E \gets 
                    V_D^{(3)} \cap S \cap \left(\mathbb{R}^+ \times \left[\ubar{Q},\infty\right)\right)$\Comment*[r]{$\E$ can be $\emptyset$}
                    \Return $\bigcup_{e \in E} \left(
                        \left\{m_{\text{supply}}\left(e\right)\right\} \times M_{\text{demand}}\left(e\right)
                    \right)$
                }
            \end{algorithm}

            This algorithm will return the collection of every competitive equilibrium.

            \begin{theorem} (Graphical Solutions for Equilibrium)
            \label{theorem:2}
                For any competitive market, the algorithm~\ref{alg:algorithm} will return the collection of every competitive equilibrium.
            \end{theorem}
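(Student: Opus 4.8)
The plan is to prove the theorem by reducing it to the necessary-and-sufficient characterization of Theorem~\ref{theorem:mainTheorem}, showing that the set of measure pairs returned by Algorithm~\ref{alg:algorithm} coincides exactly with the set of $(\mu_{\text{supply}}, \mu_{\text{demand}})$ arising from competitive strategies in the sense of Definition~\ref{definition:competitiveequilibrium}. First I would fix a clean parametrization: by the supply price finding corollary~\ref{corollary:supplypricefinding} (together with the residual-ratio machinery) every competitive strategy $(B_{\text{supply}}, B_{\text{demand}})$ is determined, at the level of \emph{outcomes}, by the single cost-volume point $(p_0, s_0) = \bigl(c(B_{\text{supply}}), Q(B_{\text{supply}})\bigr) \in S \cup \{(0,0)\}$ together with a demand measure, since $\mu_{\text{supply}} = m_{\text{supply}}(p_0, s_0)$ is then forced uniquely and $\mu_{\text{demand}} = \hat{\mu}^{(B_{\text{supply}}, B_{\text{demand}})}$. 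The whole argument then becomes: identify which pairs $\bigl((p_0,s_0), \mu_{\text{demand}}\bigr)$ satisfy Theorem~\ref{theorem:mainTheorem}, and verify the algorithm enumerates precisely those, checking soundness and completeness together by equality of sets.

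The second step is to translate the four conditions of Theorem~\ref{theorem:mainTheorem} into statements purely about the graph objects, using the graphical restatements of Lemmas~\ref{lemma:sandwich}, \ref{lemma:pricemaxmizer}, \ref{lemma:nothighersupply I}, \ref{lemma:nothighersupply II}, and~\ref{lemma:notlowersupply} collected in the Supply Side Necessary Conditions paragraph. The demand-side requirement $B_{\text{demand}} \in \tilde{\mathcal{B}}_{\text{demand}}^{(\bar\rho, \bar q)}$ (Corollary~\ref{corollary:11}) translates, via the definitions of the vertical-border demand graphs and of $M_{\text{demand}}$, into the two clauses that the cost-volume point lie on $V_D^{(3)}$ and that the demand measure belong to $M_{\text{demand}}$ at that point. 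I would then check that the three-way split inside $M_{\text{demand}}$ (an interior point of $D$, a point of $V_D^{(1)}$, and a convex-combination point of $V_D^{(3)} - V_D^{(1)} - D$) exactly reproduces the admissible demand-bid measures, so that $\{m_{\text{supply}}(e)\} \times M_{\text{demand}}(e)$ is the full fiber of equilibrium measures over a fixed admissible supply point $e$.

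The third step is to show that the scalar quantities computed by the algorithm implement the supply-side constraints. I would argue that $\ubar{Q}$ is exactly the supremum of volumes at which a profitable upward supply deviation exists: the set $\{d : (p,d) \in V_D^{(2)} \cap S\}$ captures the $\tilde{P}$-membership flagged by Lemma~\ref{lemma:nothighersupply II} (whose graphical form is $(p,s) \in V_D^{(2)}$), while $\{d : \lambda(\{p : (p,d) \in B\}) > 0\}$ captures the positive-width overlap of $A_D \cap A_S$ flagged by Lemma~\ref{lemma:nothighersupply I}. Hence condition~3 of Theorem~\ref{theorem:mainTheorem} forces every equilibrium volume to satisfy $s_0 \geq \ubar{Q}$. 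The else-branch then returns $V_D^{(3)} \cap S \cap (\mathbb{R}^+ \times [\ubar{Q}, \infty))$, which by Step~2 is precisely the set of admissible supply points, each paired with its demand fiber; I would verify that the residual sandwich of Lemma~\ref{lemma:sandwich} is automatically enforced by intersecting with $V_D^{(3)}$ and that Lemma~\ref{lemma:notlowersupply} imposes no further cut in this regime.

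The hard part will be the $\ubar{v} < \bar{v}$ branch, which is the demand-rationing boundary case: here the supply graph has a vertical jump at volume $\ubar{Q}$, so acquiring exactly $\ubar{Q}$ units admits an interval of possible unit costs, and the equilibrium must pin the \emph{cost} at the lower value $\ubar{v}$ while the demand side is read off at the border cost $\bar{v}$. I would have to show this is exactly the configuration in which Lemma~\ref{lemma:notlowersupply} fails to rule out the low-volume point, namely that the premise ``$\bigl(c(\tilde B_{\text{supply}}) - \hat{p}_{\text{demand}}(\cdot)\bigr) Q > 0$'' holds yet no profitable higher $(\rho'',q'')$ exists, and that the returned pair $\{m_{\text{supply}}(\ubar{v}, \ubar{Q})\} \times M_{\text{demand}}(\bar{v}, \ubar{Q})$ is both an equilibrium and the \emph{only} one of volume $\ubar{Q}$. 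Getting the strict-versus-weak inequalities right — matching $\succ_2$ against the open/closed distinctions between $\min \tilde{S}$ and $\max \tilde{S}$, among $V_D^{(1)}$, $V_D^{(2)}$, $V_D^{(3)}$, and between the two supply-cost functions $\bar{p}_{\text{supply}}, \ubar{p}_{\text{supply}}$ — is where essentially all the real bookkeeping lives, and is the step most likely to demand a careful exhaustive case analysis to close.
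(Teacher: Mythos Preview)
Your proposal is correct and follows essentially the same route as the paper's own proof: both reduce to Theorem~\ref{theorem:mainTheorem}, identify the set $\tilde{P}$ graphically as $\bigl(V_D^{(2)}\cap S\bigr)\cup\bigl\{(p',d)\in S:\lambda(\{p:(p,d)\in B\})>0\bigr\}$ (your Step~3 does exactly this via Lemmas~\ref{lemma:nothighersupply I} and~\ref{lemma:nothighersupply II}), use Corollary~\ref{corollary:19} to conclude every equilibrium volume is at least $\ubar{Q}$, and then split on whether $\ubar{v}<\bar{v}$ (unique positive-profit equilibrium) or not (zero-profit equilibria enumerated by $E=V_D^{(3)}\cap S\cap(\mathbb{R}^+\times[\ubar{Q},\infty))$). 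Your plan is considerably more fleshed out than the paper's sketch---in particular your explicit treatment of the demand-measure fiber $M_{\text{demand}}$ and your flagging of the strict/weak-inequality bookkeeping go beyond what the paper writes down---but the architecture is identical.
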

            \begin{proof}
                W define the set $\tilde{P}$ as in the theorem~\ref{theorem:mainTheorem}.
                
                The set $\left(V_D^{(2)} \cap S\right) \cup \left(\left\{(p',d) \in S : \lambda\left(\left\{p: (p,d) \in B\right\}\right) > 0\right\}\right) = \tilde{P}$. From the corollary~\ref{corollary:19} and the the theorem~\ref{theorem:mainTheorem}, we will have that, for any equilibrium supply bid strategy $B_{\text{supply}} \in \mathcal{B}_{\text{supply}}$, $\bar{Q}\left(B_{\text{supply}}\right) \ge \ubar{Q}$.

                Note that $\ubar{v} < \bar{v}$ if and only if there does not exists any $\left(\rho', q'\right) \succ_2 \left(\hat{\rho}\left(\ubar{v}, \ubar{Q}\right), \hat{q}\left(\ubar{v}, \ubar{Q}\right)\right)$ and $\left(\rho', q'\right) \in \tilde{P}$.

                Thus, if $\ubar{v} < \bar{v}$, by the theorem~\ref{theorem:mainTheorem}, the equilibrium market supply price is unique and yields non-zero profit.

                Otherwise, every equilibrium (if exists some) gives zero profit. Thus, the market supply price and residual supply ratio will be such that the cost and supply volume is $(p,s) \in S \cap V^{(3)}_d$ with $s \ge \bar{Q}$ by the theorem~\ref{theorem:mainTheorem}.
            \end{proof}

            We can also see that if there exists an equilibrium where every mediator gets a positive differential utility then the competitive equilibrium is unique.This result is not easy to see in the theorem~\ref{theorem:mainTheorem}, but is obvious from the graphical algorithm~\ref{alg:algorithm}.
            
\part{Applications}
\label{part:application}

\section{Trading Market}
\label{section:trading}

    In a trading market, the scarcity in the system is the physical good. The suppliers are the producers, the mediators are the retails, and the demanders are the consumers.

    \paragraph{Supply Side}

        The supplier contract $\left(v,\rho,s,m\right) \in \mathbb{R}^+ \times \mathbb{R} \times \mathcal{S} \times \mathcal{M}$
        dictates that 
        \begin{itemize}
            \item the producer $s$ has to provide $v$ amount of physical good to the retail $m$;
            \item the retail $m$ has to pay $v \rho$ to the producer $s$.
        \end{itemize}
        
        The characteristics space for the producers is $\bar{\Omega}_{\text{supply}} = \mathbb{R}^+ \times \{(x,x)\}_{x \in \mathbb{R}^+}$. If a producer has a characteristics of $\left(h_0, h_1, v\right) \in \bar{\Omega}_{\text{supply}}$, such producer can produce $h_1$ amount of scarcity at the cost of $h_0$ per unit, and can bid the supply volume in the market upto $v$, which is the same as $h_1$ in this case. 
        
        The set of admissible actions for the producers is $\mathcal{A}_{\text{supply}} = \{\text{``produce"}, \text{``not produce"}\}$.

        The utility per unit for the producer function $w: \mathbb{R} \times \mathcal{A}_{\text{supply}} \times \mathbb{R}^+_0 \times \mathbb{R} \times [0,1] \to \mathbb{R} \cup \{-\infty\}$ and the utility per unit for the retail function $\tilde{w}: \mathbb{R} \times \mathcal{A}_{\text{supply}} \times \mathbb{R}^+_0 \times \mathbb{R} \times [0,1] \to \mathbb{R}$ are such that,  for any $h_0 \in \mathbb{R}$, $\alpha \in \mathcal{A}_{\text{supply}}$, $x \in \mathbb{R}^+_0$, $\rho \in \mathbb{R}$, $z \in [0,1]$,
        \begin{align*}
            w\left(h_0, \alpha, x, \rho, z\right)
            &=
            \begin{cases}
                \rho x - h_0 \mathbf{1}_{\alpha = \text{``produce"}}
                & \text{ if } x \le \mathbf{1}_{\alpha = \text{``produce"}}\\
                -\infty
                & \text{ if } x > \mathbf{1}_{\alpha = \text{``produce"}}
            \end{cases}
            \text{,}
        \end{align*}
        and
        \begin{align*}
            \tilde{w}\left(h_0, \alpha, x, \rho, z\right)
            &=
            -\rho x \mathbf{1}_{\alpha = \text{``produce"}}
            \text{.}
        \end{align*}

    \paragraph{Demand Side}
        The demander contract $\left(v,r,d,m\right) \in \mathbb{R}^+ \times \mathbb{R} \times \mathcal{D} \times \mathcal{M}$
        dictates that 
        \begin{itemize}
            \item the retail $m$ has to provide $v$ amount of physical good to the consumer $d$;
            \item the consumer $d$ has to pay $r v$ to the retail $m$.
        \end{itemize}
        
        The characteristics space for the consumers is $\bar{\Omega}_{\text{demand}} = \left(\mathbb{R}^+\right)^2$. If a consumer has a characteristics of $\left(\eta_0, \eta_1\right) \in \bar{\Omega}_{\text{demand}}$, such producer consume only a bulk of volume $\eta_1$ and gain the utility (money equivalence) of $\eta_0$ per unit. The set of admissible actions for the consumers is $\mathcal{A}_{\text{demand}} = \{\text{``consume"}, \text{``not consume"}\}$.

        The utility per unit for the consumer function $\omega: \mathbb{R} \times \mathcal{A}_{\text{demand}} \times \mathbb{R}^+_0 \times \mathbb{R} \times [0,1] \to \mathbb{R}  \cup \{-\infty\}$ and the utility per unit for the retail function $\tilde{w}: \mathbb{R} \times \mathcal{A}_{\text{producer}} \times \mathbb{R}^+_0 \times \mathbb{R} \times [0,1] \to \mathbb{R}$ are such that, for any $\eta_0 \in \mathbb{R}$, $\alpha \in \mathcal{A}_{\text{demand}}$, $x \in \mathbb{R}^+_0$, $r \in \mathbb{R}$, $z \in [0,1]$,
        \begin{align*}
            \omega\left(\eta_0, \alpha, x, r, z\right)
            &=
            \begin{cases}
                \eta_0 \mathbf{1}_{\alpha = \text{``consume"}}
                -
                r x
                & \text{ if } x \ge \mathbf{1}_{\alpha = \text{``consume"}}\\
                -\infty
                & \text{ if } x < \mathbf{1}_{\alpha = \text{``consume"}}
            \end{cases}
            \text{,}
        \end{align*}
        and
        \begin{align*}
            \tilde{\omega}\left(\eta_0, \alpha, x, r, z\right)
            &=
            r x
            \text{.}
        \end{align*}

    \subsection{Competitive Market Condition}

        \paragraph{Individual Level Conditions}

        We can see that, under dominant strategy, each producer $\left(h_0, h_1, v\right) \in \bar{\Omega}_{\text{supply}}$ will place a bid of volume $h_1$ and the lowest willing to take supply price of $h_0$, which is the production cost. The producer will choose to produce if and only if the contract is created.

        Similarly, each consumer $(\eta_0, \eta_1) \in \bar{\Omega}_{\text{demand}}$ will place a bid of volume $\eta_1$ and the highest willing to take demand price of $\eta_0$. The consumer will choose to consume, which is allowed since the contracted volume will be $\eta_1$, if and only if the contract is created.

        Thus, the conditions~\ref{condition:1},\ref{condition:2},\ref{condition:4},\ref{condition:5},\ref{condition:7},\ref{condition:8} are satisfied.

        \paragraph{Aggregated Level Conditions}

        By enforcing that the distribution of the producers characteristics $\hat{\Pi}_{\text{supply}}$ be such that $\mathbb{E}_{x \sim \hat{\Pi}_{\text{supply}}}\left[x_2\right] \in \mathbb{R}^+$, meaning that the expected amount of scarcity that can be produced is finite, then we will get that the condition~\ref{condition:3} is satisfied.

        The cost at each given supply price $\rho \in \mathbb{R}$ will then be $\rho$ if the supply volume is positive, and $0$ otherwise. Thus, the conditions~\ref{condition:9},\ref{condition:11} are satisfied.

        Moreover, since there is no producer that can produce at zero or negative cost, to acquire some amount of the scarcity, the retail has to offer a positive supply price. This satisfies the condition~\ref{condition:10} (no free supply).

        Similarly, by enforcing that the distribution of the consumers characteristics $\hat{\Pi}_{\text{demand}}$ be such that $\mathbb{E}_{x \sim \hat{\Pi}_{\text{demand}}}\left[x_2\right] \in \mathbb{R}^+$, meaning that the expected amount of possible consumption is finite, then we will get that the condition~\ref{condition:6} is satisfied.

        The revenue at each given demand price $r \in \mathbb{R}$ will then be $r$ if the demand volume is positive, and $0$ otherwise. Thus, the conditions~\ref{condition:13},\ref{condition:14},\ref{condition:15} are satisfied.

        Thus, a competitive trading market with $\left(\hat{\Pi}_{\text{supply}}, \hat{\Pi}_{\text{demand}}\right)$ with $\mathbb{E}_{x \sim \hat{\Pi}_{\text{supply}}}[x_2], \mathbb{E}_{x \sim \hat{\Pi}_{\text{demand}}}[x_2] \in \mathbb{R}^+$ will be a well-behaved competitive market.

    \subsection{Equilibrium Analysis}
    \label{subsection:tradeanalysis}

        It can be seen that
        \begin{align*}
            S
            =
            \bigcup_{\rho \in \mathbb{R}}
            \left(\rho \times \left(\tilde{S} - \{0\}\right) 
            \right)
            \text{.}
        \end{align*}

        This suggests that, for any supply price $\rho$ that can acquire some supply, the cost per unit of supply equals to the supply price itself.

        Similarly, we have that
        \begin{align*}
            D
            =
            \left\{\left(r, \max\left(\tilde{D}(r)\right) 
            \right)\right\}_{r \in \mathbb{R}}
            \cap \left(\mathbb{R} \times \mathbb{R}^+\right)
            \text{,}
        \end{align*}
        meaning that, for any demand price, where the sale could occur with non-zero volume, the revenue per unit is the same as the demand price.

        Thus, if an equilibrium exists, then it is unique, and the welfare is the same as the welfare obtained from any Walrasian equilibrium with the supply price equals the demand price.

        \subsubsection{Example}
        \label{subsubsection:example}

            Consider the case when $\bar{\Omega}_{\text{demand}} = \{1,2\} \times \{1\}$, and $\hat{\Pi}_{\text{demand}}$ be such that $\hat{\Pi}_{\text{demand}}\left(\{(1,1)\}\right) = 0.6$.

            \begin{figure}
                \begin{subfigure}{.5\textwidth}
                  \centering
                  \includegraphics[width=.9\linewidth]{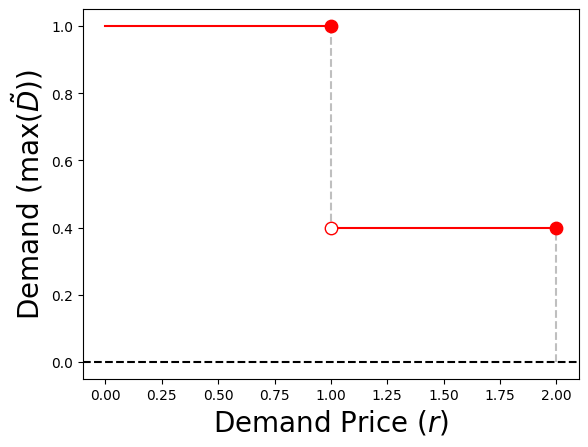}
                  \caption{Maximum Real Demand}
                \end{subfigure}%
                \begin{subfigure}{.5\textwidth}
                  \centering
                  \includegraphics[width=.9\linewidth]{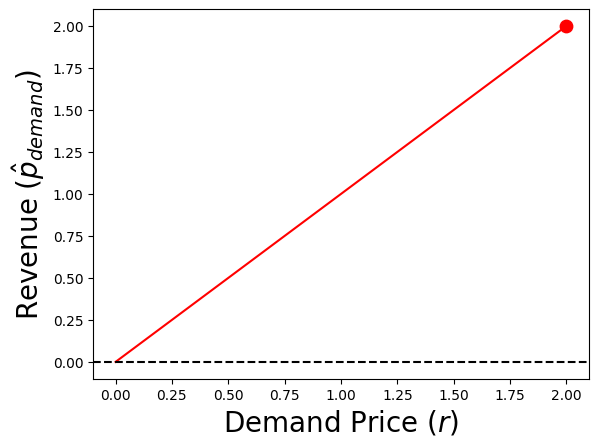}
                  \caption{Demand Revenue per unit}
                \end{subfigure}
                \caption{Trading Market: Example maximum real demand $\max\left(\tilde{D}(r)\right)$ and demand revenue per unit $\hat{p}_{\text{demand}}(r)$ as a function of demand price $r$ over the region where demand is non-zero used in the equilibrium analysis example (subsubsection~\ref{subsubsection:example}).}
            \end{figure}

            \begin{figure}
                \begin{subfigure}{.5\textwidth}
                  \centering
                  \includegraphics[width=.9\linewidth]{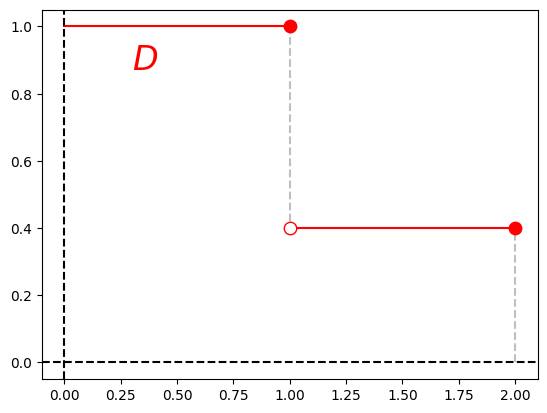}
                  \caption{$D$}
                \end{subfigure}%
                \begin{subfigure}{.5\textwidth}
                  \centering
                  \includegraphics[width=.9\linewidth]{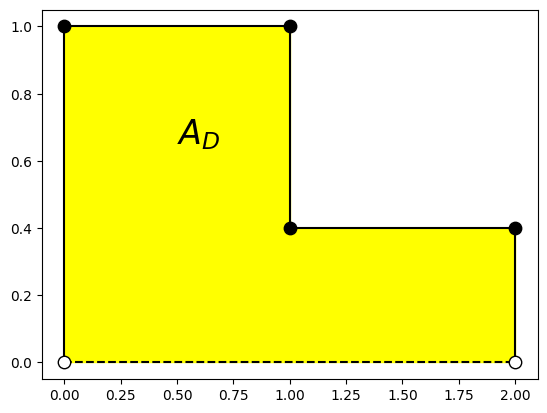}
                  \caption{$A_D$}
                \end{subfigure}
                \begin{subfigure}{.5\textwidth}
                  \centering
                  \includegraphics[width=.9\linewidth]{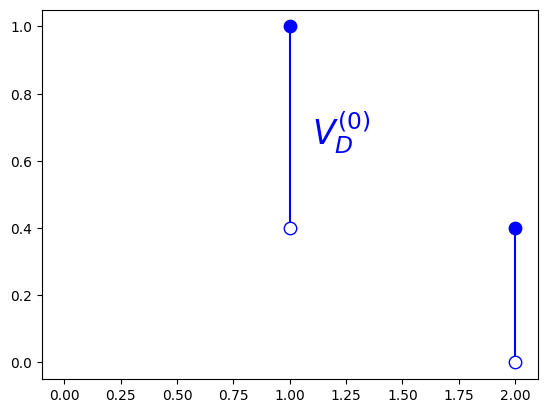}
                  \caption{$V_D^{(0)}$}
                \end{subfigure}%
                \begin{subfigure}{.5\textwidth}
                  \centering
                  \includegraphics[width=.9\linewidth]{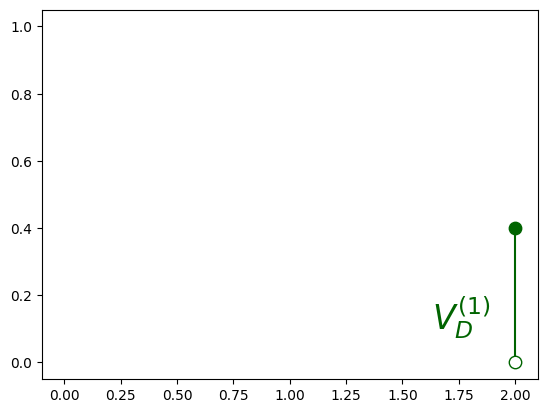}
                  \caption{$V_D^{(1)}$}
                \end{subfigure}
                \begin{subfigure}{.5\textwidth}
                  \centering
                  \includegraphics[width=.9\linewidth]{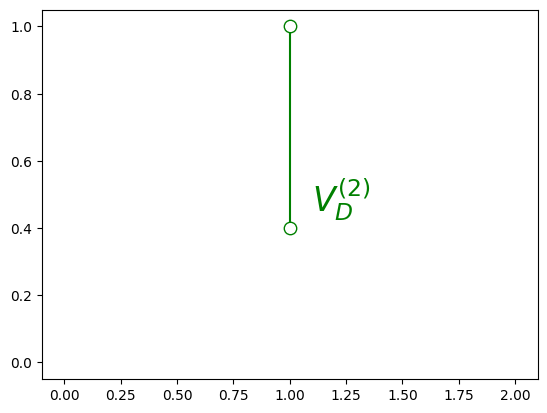}
                  \caption{$V_D^{(2)}$}
                \end{subfigure}%
                \begin{subfigure}{.5\textwidth}
                  \centering
                  \includegraphics[width=.9\linewidth]{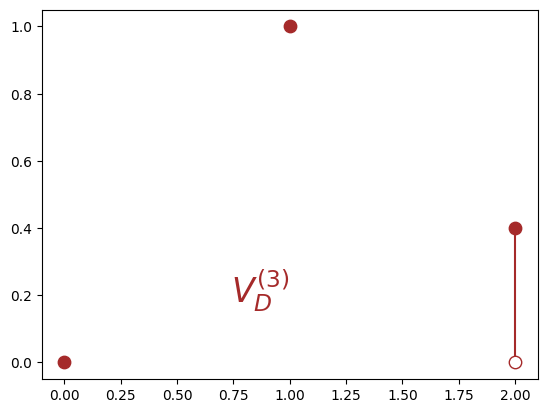}
                  \caption{$V_D^{(3)}$}
                \end{subfigure}
                \caption{Trading Market: The example construction of the demand graph $D$, the augmented demand graph $A_D$, the vertical border demand graph $V_D^{(0)}$, the farthest vertical border demand graph $V_D^{(1)}$, the sharp vertical border demand graph $V_D^{(2)}$, and the admissible vertical border demand graph $V_D^{(3)}$ used in the equilibrium analysis example (subsubsection~\ref{subsubsection:example}). Note that $D$ looks similar to the maximum real demand.}
            \end{figure}

            Consider the case when, for some $v \in \mathbb{R}^+$, $\bar{\Omega}_{\text{supply}} = \left(0,\frac{5}{4}\right] \times \{v\} \times \{v\}$, and $\hat{\Pi}_{\text{supply}}$ be such that, for any $A \in \mathcal{B}\left(\left(0,\frac{5}{4}\right]\right)$,
            \begin{align*}
                \hat{\Pi}_{\text{supply}}
                \left(
                    A \times \{v\} \times \{v\}
                \right)
                = 
                \frac{4 \lambda(A)}{5}
                \text{.}
            \end{align*}

            It can be seen that, if $v=0.8$, then there does not exist any equilibrium. If $v=0.43$, then there exist a unique equilibrium, and such equilibrium does not entail rationing. If $v=0.24$, there exist a unique equilibrium, and such equilibrium entails rationing.

            \begin{figure}
                \centering
                \includegraphics[width=.8\linewidth]{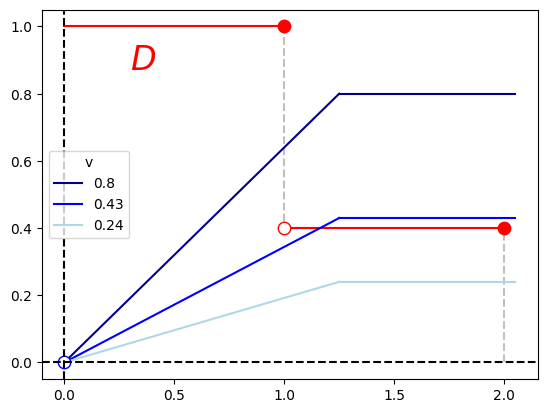}
                \caption{Trading Market: The example used in the equilibrium analysis example (subsubsection~\ref{subsubsection:example}) with $v$ chosen to be $0.8$, $0.43$, or $0.24$.}
            \end{figure}

\section{Credit Market}
\label{section:credit}

    In a credit market, the scarcity in the system is the loanable fund. The suppliers are the depositors, the mediators are the banks, and the demanders are the entrepreneurs. There exists two time periods, denoted as $t_1$ and $t_2$ with $t_1 < t_2$. The supply price is denoted as the deposit interest rate, and the demand price is denoted as the loan interest rate.
    
    \paragraph{Supply Side}

        The supplier (deposit) contract $\left(v,\rho,s,m\right) \in \mathcal{R}^+ \times \mathcal{R} \times \mathcal{S} \times \mathcal{M}$
        dictates that 
        \begin{itemize}
            \item the depositor $s$ has to provide $v$ amount of loanable fund to the bank $m$ at time $t_1$;
            \item the bank $m$ has to pay $(1+\rho)v$ to the depositor $s$ at time $t_2$.
        \end{itemize}
        
        The characteristics space for the producers is $\bar{\Omega}_{\text{supply}} = \mathbb{R}^+ \times \{(x,x)\}_{x \in \mathbb{R}^+}$. If a producer has a characteristics of $\left(h_0, h_1, v\right) \in \bar{\Omega}_{\text{supply}}$, such producer can deposit $h_1$ amount of loanable fund at the average cost\footnote{The cost can be a gas price or an opportunity cost} of $h_0$, and can bid the supply volume in the market upto $v$, which is the same as $h_1$ in this case. 
        
        The set of admissible actions for the producers is $\mathcal{A}_{\text{supply}} = \{\text{``deposit"}, \text{``not deposit"}\}$.

        The utility per unit for the depositor function $w: \mathbb{R} \times \mathcal{A}_{\text{supply}} \times \mathbb{R}^+_0 \times \mathbb{R} \times [0,1] \to \mathbb{R} \cup \{-\infty\}$ and the utility per unit for the bank function $\tilde{w}: \mathbb{R} \times \mathcal{A}_{\text{supply}} \times \mathbb{R}^+_0 \times \mathbb{R} \times [0,1] \to \mathbb{R}$ are such that,  for any $h_0 \in \mathbb{R}$, $\alpha \in \mathcal{A}_{\text{supply}}$, $x \in \mathbb{R}^+_0$, $\rho \in \mathbb{R}$, $z \in [0,1]$,
        \begin{align*}
            w\left(h_0, \alpha, x, \rho, z\right)
            &=
            \begin{cases}
                (1+\rho) x - (1+h_0) \mathbf{1}_{\alpha = \text{``deposit"}}
                & \text{ if } x \le \mathbf{1}_{\alpha = \text{``deposit"}}\\
                -\infty
                & \text{ if } x > \mathbf{1}_{\alpha = \text{``deposit"}}
            \end{cases}
            \text{,}
        \end{align*}
        and
        \begin{align*}
            \tilde{w}\left(h_0, \alpha, x, \rho, z\right)
            &=
            - (1+\rho) x \mathbf{1}_{\alpha = \text{``deposit"}}
            \text{.}
        \end{align*}

    \paragraph{Demand Side}
        The demander contract $\left(v,r,d,m\right) \in \mathbb{R}^+ \times \mathbb{R} \times \mathcal{D} \times \mathcal{M}$
        dictates that 
        \begin{itemize}
            \item the bank $m$ has to provide a principal value $v$ of loanable fund to the entrepreneur $d$ at time $t_1$;
            \item the entrepreneur $d$ has to pay back the principal and the interest $(1+r) v$ or as much as one has to the bank $m$.
        \end{itemize}
        
        The characteristics space for the entrepreneurs is $\bar{\Omega}_{\text{demand}} = \left((0,1) \times \Omega\right) \times \mathbb{R}^+$ for some $n \in \mathbb{N}$. If an entrepreneur has a characteristics of $\left(\eta_0, \eta_1\right) \in \bar{\Omega}_{\text{demand}}$, by denoting $\left(\eta_{0,1}, \eta_{0,2}\right) = \eta_0$ with $\eta_{0,1} \in (0,1)$ and $\eta_{0,2} \in \Omega$, such entrepreneur will have a single project requiring a budget of $\eta_1$ to invest. Currently, the entrepreneur has an initial wealth of $\eta_{0,1}\eta_1 < \eta_1$, so an additional loanable fund of at least $(1-\eta_{0,1})\eta_1 > 0$ is required for the investment to be an option.
        
        Let consider a collection of random variables $\left(X^{\left(\omega\right)}\right)_{\omega \in \Omega}$ such that, for each $\omega \in \Omega$, $X^{\left(\omega\right)} \in \Delta\left(\mathbb{R}^+_0\right)$, and $\mathbb{E}_{x \sim X^{\left(\omega\right)}}[x] > 1$, meaning that each possible project has positive net profit.

        The entrepreneur has a knowledge that the project payoff would be distributed according to $\eta_1 X^{\left(\eta_{0,2}\right)}$, making it a random non-negative variable. Thus, if the entrepreneur acquires a loanable fund of size $x \eta_1 \ge (1-\eta_{0,1})\eta_1$ and decides to invest, the final wealth if investment occurs is a random non-negative variable
        \begin{align*}
            F_{\text{invest}} = \left[x \eta_1 - (1-\eta_{0,1})\eta_1\right] 
            + \eta_1 X^{\left(\eta_{0,2}\right)}
            \text{,}
        \end{align*}
        while the final wealth if investment does not occur is a positive number $f_{\text{not invest}} = x\eta_1 + \eta_{0,1} \eta_1$. After the final wealth is realized to be a non-negative number $f$, the payment of $\min\left(\left\{f, (1+r)x\eta_1\right\}\right)$ has to be given to the bank.

        The set of admissible actions for the consumers is $\mathcal{A}_{\text{demand}} = \{\text{``invest"}, \text{``not invest"}\}$.

        The utility per unit for the entrepreneur function $\omega: \left((0,1) \times \Omega\right) \times \mathcal{A}_{\text{demand}} \times \mathbb{R}^+_0 \times \mathbb{R} \times [0,1]  \to \mathbb{R} \cup \{-\infty\}$ and the utility per unit for the bank function $\tilde{w}: \left((0,1) \times \Omega\right) \times \mathcal{A}_{\text{producer}} \times \mathbb{R}^+_0 \times \mathbb{R} \times [0,1] \to \mathbb{R}$ are such that, for any $\eta_0 \in (0,1) \times \Omega$, $\alpha \in \mathcal{A}_{\text{demand}}$, $x \in \mathbb{R}^+_0$, $r \in \mathbb{R}$, $z \in [0,1]$,
        \begin{align*}
            \omega\left(\eta_0, \alpha, x, r, z\right)
            &=
            \begin{cases}
                \max\left(\left\{
                    0
                    ,\eta_{0,1} + \mathbf{1}_{\alpha = \text{``invest"}}\left(\text{Quantile}_{X^{\left(\eta_{0,2}\right)}}(z) - 1\right)-rx
                \right\}\right)
                & \text{ if } x + \eta_{0,1} \ge \mathbf{1}_{\alpha = \text{``invest"}}\\
                -\infty
                & \text{ if } x + \eta_{0,1}  < \mathbf{1}_{\alpha = \text{``invest"}}
            \end{cases}
            \text{,}
        \end{align*}
        and
        \begin{align*}
            \tilde{\omega}\left(\eta_0, \alpha, x, r, z\right)
            &=
            \min\left(\left\{
                x + \eta_{0,1} + \mathbf{1}_{\alpha = \text{``invest"}}\left(\text{Quantile}_{X^{\left(\eta_{0,2}\right)}}(z) - 1\right)
                ,(1+r)x
            \right\}\right)
            \text{.}
        \end{align*}

    \subsection{Competitive Market Condition}

        \paragraph{Individual Level Conditions}

        We can see that, under dominant strategy, each depositor $\left(h_0, h_1, v\right) \in \bar{\Omega}_{\text{supply}}$ will place a bid of volume $h_1$ and the lowest willing to take supply price of $h_0$. The depositor will choose to deposit if and only if the contract is created.

        Similarly, each entrepreneur $(\eta_0, \eta_1) \in \bar{\Omega}_{\text{demand}}$ will place a bid of volume $\left(1 - \eta_{0,1}\right)\eta_1$ and the highest willing to take demand price of $\bar{r}(\eta_0)$, where the function $\bar{r}: \Omega_{\text{demand}} \to \mathbb{R}$ is such that, for any $\eta_0 \in \Omega$,
        \begin{align*}
            0 = \mathbb{E}_{y \sim X^{\left(\eta_{1,1}\right)}}\left[
            \max\left(\left\{-\eta_{0,1}, y -r \left(1-\eta_{0,1}\right) \eta_{0,1}\right\}\right) \right]
            \text{.}
        \end{align*}
        
        The entrepreneur will choose to invest if and only if the contract is created.

        Thus, the conditions~\ref{condition:1},\ref{condition:2},\ref{condition:4},\ref{condition:5},\ref{condition:7},\ref{condition:8} are satisfied.

        \paragraph{Aggregated Level Conditions}
            
            Similar to the behavior of the supply in a  trading market, by assuming that the distribution of the producers characteristics $\hat{\Pi}_{\text{supply}}$ be such that $\mathbb{E}_{x \sim \hat{\Pi}_{\text{supply}}}\left[x_2\right] \in \mathbb{R}^+$, we will get that the condition~\ref{condition:3}.\ref{condition:9},\ref{condition:10},\ref{condition:11} are satisfied. 

            Similarly, by assuming that the distribution of the consumers characteristics $\hat{\Pi}_{\text{demand}}$ be such that $\mathbb{E}_{x \sim \hat{\Pi}_{\text{demand}}}\left[x_2\right] \in \mathbb{R}^+$, we will get that the condition~\ref{condition:6} is satisfied.

            The finiteness on the loss and gain for the loan contract of finite size and a loan interest rate $r\in \mathbb{R}$ ensures that the condition~\ref{condition:13} is satisfied.

            The proposition~\ref{proposition:randomfunction2} ensures that the condition~\ref{condition:14} is satisfied. We can also further show that the condition~\ref{condition:15} is satisfied.

            Thus, a competitive credit market with $\left(\hat{\Pi}_{\text{supply}}, \hat{\Pi}_{\text{demand}}\right)$ with $\mathbb{E}_{x \sim \hat{\Pi}_{\text{supply}}}[x_2], \mathbb{E}_{x \sim \hat{\Pi}_{\text{demand}}}[x_2] \in \mathbb{R}^+$ will be a well-behaved competitive market.

    \subsection{Equilibrium Analysis}

        In this subsection, we will restrict the attention when, for any $x,y \in \Omega_{\text{demand}}$ then the random variable $X^{(y)}$ is first-order stochastically dominated by, or first-order stochastically dominates $X^{(x)}$, which is introduced in the paper ``Credit Rationing in Markets with Imperfect Information", Stiglitz \& Wiess, 1981~\cite{stiglitz}. 

        Although the structure between $\tilde{D}$ and $D$ are not apparent as in a competitive trading market case, we can see that, for this type of competitive credit markets, the equilibrium always exists, but it may not be unique. In the graphical approach, it is trivial to show that the set $\left\{d \in \mathbb{R}_0^+ : V_D^{(3)} \cap \left(\mathbb{R} \times \{d\}\right) \ne \emptyset \right\}$ is a connected set. This condition will not be held in a competitive trading as can be seen in the example used in the subsubsection~\ref{subsubsection:example}.

        The proof used in the paper ``Credit Rationing in Markets with Imperfect Information", Stiglitz \& Wiess, 1981~\cite{stiglitz} requires the existence of a Walrasian equilibrium (as defined in the paper). However, our graphical approach can show that, without the existence of a Walrasian equilibrium, the competitive equilibrium still exist.
        
        Other results in the literature can also be reformulated into the graphical result. The paper ``On the Possibility of Credit Rationing in the Stiglitz-Weiss Model", Arnold \& Riley, 2009~\cite{arnold} shows that, if there exists some $K \in \mathbb{R}^+$ such that $X^{(x)} \le K$ surely for every $x \in \Omega_{\text{demand}}$, then $V_D^{(1)} \subseteq \bigcup_{r \in \mathbb{R}}\left(\left\{\hat{p}_{\text{demand}}(r)\right\} \times \tilde{D}(r)\right)$. The paper ``Unbounded returns and the possibility of credit rationing: A note on the Stiglitz–Weiss and Arnold–Riley models", Lu \& Rong, 2018~\ref{lu} shows an example where, if there does not exits such $K \in \mathbb{R}^+$, then it is possible that $V_D^{(1)} \not\subseteq \bigcup_{r \in \mathbb{R}}\left(\left\{\hat{p}_{\text{demand}}(r)\right\} \times \tilde{D}(r)\right)$.

        Another important result is that we can easily construct a case when there could be an equilibrium with (countably) infinitely many demand price $r$'s observed in the market (with non-zero probability), which would be shown in the subsubsection~\ref{subsubsection:infinite}. Moreover, we can make the support of the demand price not compact. This, therefore, suggests that the uniformly bounded distribution in the paper~\cite{arnold} has to be violated.

        \subsubsection{Example}
        \label{subsubsection:examplecredit}

            We consider the case when $\bar{\Omega}_{\text{demand}} = \left\{\frac{1}{2}\right\} \times \left\{1,2\right\} \times \left\{2\right\}$, the distribution $\hat{\Pi}_{\text{demand}}$ be such that $\hat{\Pi}_{\text{demand}}\left(\{\left(\frac{1}{2},1,2\right)\}\right) = 0.95$, and $\left(X_{\omega}\right)_{\omega \in \{1,2\}}$ be such that
            \begin{itemize}
                \item $X_1$ is a non-negative random variable with its support being $\{0,10\}$ and $\mathbb{P}\left(X_1 = 10\right) = 0.4$;
                \item $X_2$ is a non-negative random variable with its support being $\{0,20\}$ and $\mathbb{P}\left(X_1 = 20\right) = 0.2$.
            \end{itemize}
            
            \begin{figure}
                \begin{subfigure}{.5\textwidth}
                  \centering
                  \includegraphics[width=.9\linewidth]{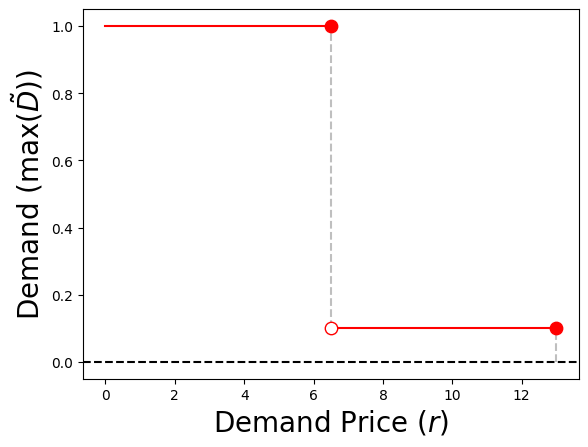}
                  \caption{Maximum Real Demand}
                \end{subfigure}%
                \begin{subfigure}{.5\textwidth}
                  \centering
                  \includegraphics[width=.9\linewidth]{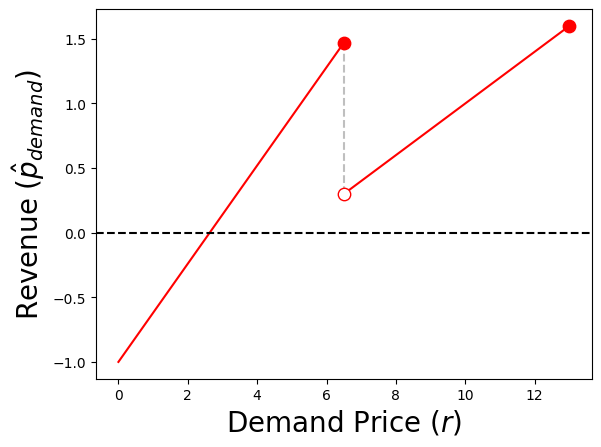}
                  \caption{Demand Revenue per unit}
                \end{subfigure}
                \caption{Credit Market: Example maximum real demand $\max\left(\tilde{D}(r)\right)$ and demand revenue per unit $\hat{p}_{\text{demand}}(r)$ as a function of demand price $r$ over the region where demand is non-zero used in the equilibrium analysis example (subsubsection~\ref{subsubsection:examplecredit}).}
            \end{figure}

            It can be seen that the set $\left\{d \in \mathbb{R}_0^+ : V_D^{(3)} \cap \left(\mathbb{R} \times \{d\}\right) \ne \emptyset \right\}$ is $\left[0,1\right]$. Together with the maximal monotonicity of the real supply $\tilde{S}$, this suggests that an equilibrium always exists.
            
            \begin{figure}
                \begin{subfigure}{.5\textwidth}
                  \centering
                  \includegraphics[width=.9\linewidth]{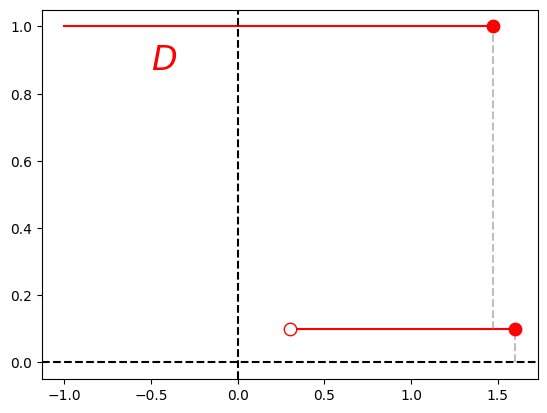}
                  \caption{$D$}
                \end{subfigure}%
                \begin{subfigure}{.5\textwidth}
                  \centering
                  \includegraphics[width=.9\linewidth]{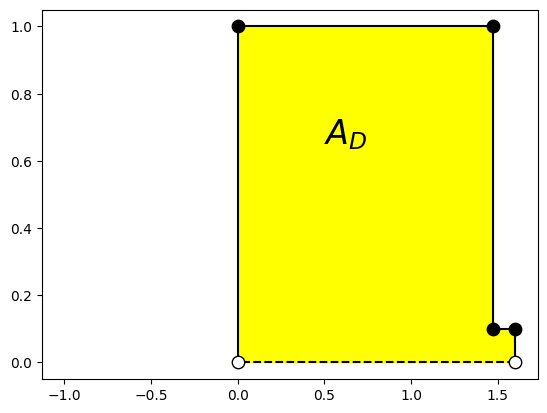}
                  \caption{$A_D$}
                \end{subfigure}
                \begin{subfigure}{.5\textwidth}
                  \centering
                  \includegraphics[width=.9\linewidth]{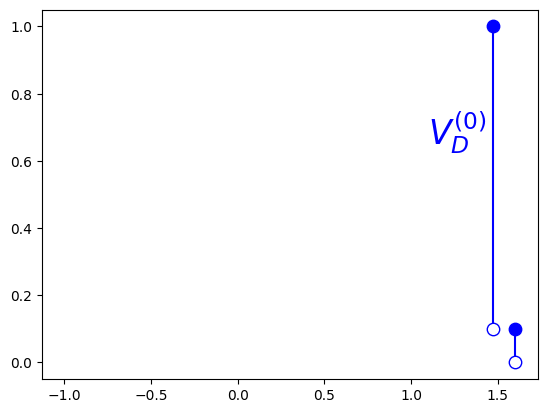}
                  \caption{$V_D^{(0)}$}
                \end{subfigure}%
                \begin{subfigure}{.5\textwidth}
                  \centering
                  \includegraphics[width=.9\linewidth]{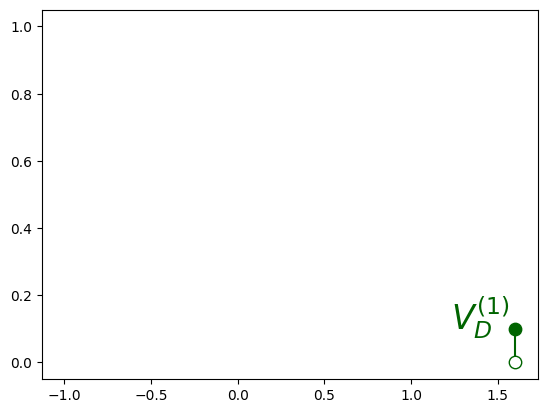}
                  \caption{$V_D^{(1)}$}
                \end{subfigure}
                \begin{subfigure}{.5\textwidth}
                  \centering
                  \includegraphics[width=.9\linewidth]{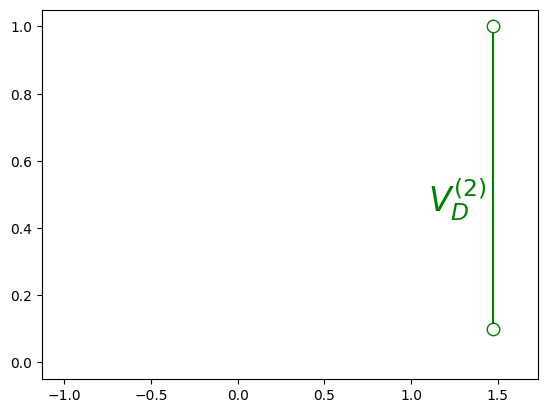}
                  \caption{$V_D^{(2)}$}
                \end{subfigure}%
                \begin{subfigure}{.5\textwidth}
                  \centering
                  \includegraphics[width=.9\linewidth]{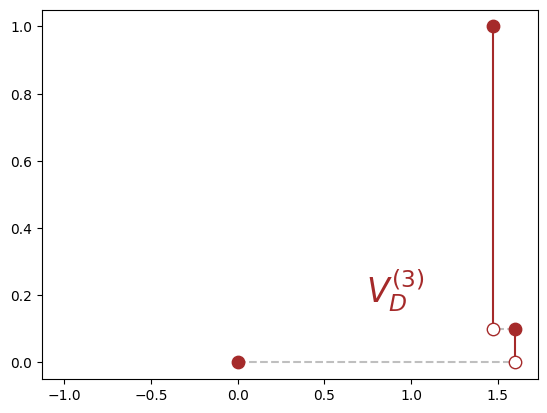}
                  \caption{$V_D^{(3)}$}
                \end{subfigure}
                \caption{Credit Market: The example construction of the demand graph $D$, the augmented demand graph $A_D$, the vertical border demand graph $V_D^{(0)}$, the farthest vertical border demand graph $V_D^{(1)}$, the sharp vertical border demand graph $V_D^{(2)}$, and the admissible vertical border demand graph $V_D^{(3)}$ used in the equilibrium analysis example (subsubsection~\ref{subsubsection:examplecredit}).}
            \end{figure}

            Since there are only $2$ types of entrepreneurs in this example market, the paper~\cite{arnold} 
            suggests that $V_D^{(1)} \subseteq \bigcup_{r \in \mathbb{R}}\left(\left\{\hat{p}_{\text{demand}}(r)\right\} \times \tilde{D}(r)\right)$.

        \subsubsection{Infinite Demand Price Support Example}
        \label{subsubsection:infinite}

            We consider the case when $\bar{\Omega}_{\text{demand}} = \left\{\frac{1}{2}\right\} \times \mathbb{N} \times \left\{2\right\}$, the distribution $\hat{\Pi}_{\text{demand}}$ such that, for any $i \in \mathbb{N}$,
            \begin{align*}
                \hat{\Pi}_{\text{demand}}\left(\left\{\left(\frac{1}{2},1,2\right)\right\}\right)
                =
                2^{-i}
                \text{,}
            \end{align*}
            and the collection $\left(X_{i}\right)_{i \in \mathbb{N}}$ be such that, for any $i \in \mathbb{N}$, the random variable $X_i$ has a support of $\left\{0, 3 \cdot 2^{i}\right\}$, and
            \begin{align*}
                \mathbb{P}\left(X_i = 3 \cdot 2^{i}\right)
                =
                2^{-i}
                \text{.}
            \end{align*}

            We will then have that $D$ will consist of countably infinitely many horizontal straight lines. However, the graphical analysis will still hold since
            \begin{itemize}
                \item The augmented demand graph
                \begin{align*}
                    A_D = \left[0, \frac{2}{3}\right] \times \left(0,1\right]\text{;}
                \end{align*}
                \item The vertical border demand graph
                \begin{align*}
                    V_D^{(0)} = \left\{ \frac{2}{3}\right\} \times \left(0,1\right]\text{;}
                \end{align*}
                \item The farthest vertical border demand graph
                \begin{align*}
                    V_D^{(1)} = \left\{ \frac{2}{3}\right\} \times \left(0,1\right]\text{;}
                \end{align*}
                \item The sharp vertical border demand graph
                \begin{align*}
                    V_D^{(2)} = \emptyset\text{;}
                \end{align*}
                \item The admissible vertical border demand graph
                \begin{align*}
                    V_D^{(3)} = \left(\left\{ \frac{2}{3}\right\} \times \left(0,1\right]\right) \cup \{(0,0)\} \text{.}
                \end{align*}
            \end{itemize}

            Similarly, consider the case when, for some $v \in \mathbb{R}^+$, $\bar{\Omega}_{\text{supply}} = \left(0,\frac{2}{3}\right] \times \{v\} \times \{v\}$, and $\hat{\Pi}_{\text{supply}}$ be such that, for any $A \in \mathcal{B}\left(\left(0,\frac{2}{3}\right]\right)$,
            \begin{align*}
                \hat{\Pi}_{\text{supply}}
                \left(
                    A \times \{v\} \times \{v\}
                \right)
                = 
                \frac{3 \lambda(A)}{2}
                \text{.}
            \end{align*}

            We will then have that
            \begin{itemize}
                \item If $v \ge 1$, then the equilibrium uniquely exists, with the same $\hat{\mu}_{\text{demand}}$ such that the support is $\{4\}$ and $\hat{\mu}_{\text{demand}}(\{4\}) = 1$. This suggests that there will only be a single observable demand price (interest rate) being $4$ in the market;
                \item If $v \in (0,1) $, then the equilibrium exists and there exist uncountably infinitely many equilibria. Moreover, there always exist uncountably infinitely many equilibria such that the supports of $\hat{\mu}_{\text{demand}}$'s are the same and are
                \begin{align*}
                    \left\{2^{i+1}\right\}_{i=1}^{\infty}
                    \text{.}
                \end{align*}
                This suggests that it is possible to see countably infinitely many demand price (interest rate) in a single equilibrium.
            \end{itemize}

            With a construction similar to that being used in the paper~\cite{lu}, it can be easily shown that the occurrence of having (countably) infinitely many demand prices in the support (with non-zero probability of observing) can also be supported even when the distribution of distribution the projects payoff is continuous (with respect to some distance metric of distribution).

{\small
\bibliographystyle{unsrt}
\bibliography{egbib}}

\begin{thebibliography}{10}

\bibitem{webb}
David~De Meza and David~C. Webb.
\newblock Too much investment: A problem of asymmetric information.
\newblock {\em The Quarterly Journal of Economics}, 102(2):281--292, 1987.

\bibitem{mahoney}
Neale Mahoney and E~Glen Weyl.
\newblock Imperfect competition in selection markets.
\newblock {\em Review of Economics and Statistics}, 99(4):637--651, 2017.

\bibitem{arrow1954}
Kenneth~J Arrow and Gerard Debreu.
\newblock Existence of an equilibrium for a competitive economy.
\newblock {\em Econometrica: Journal of the Econometric Society}, pages 265--290, 1954.

\bibitem{scarf1962analysis}
Herbert Scarf.
\newblock An analysis of markets with a large number of participants.
\newblock {\em recent advances in Game Theory}, 13:127--55, 1962.

\bibitem{debreu1963theorem}
Gerard Debreu.
\newblock On a theorem of scarf.
\newblock {\em The Review of Economic Studies}, 30(3):177--180, 1963.

\bibitem{aumann1964}
Robert~J Aumann.
\newblock Markets with a continuum of traders.
\newblock {\em Econometrica: Journal of the Econometric Society}, pages 39--50, 1964.

\bibitem{wald1951some}
Abraham Wald.
\newblock On some systems of equations of mathematical economics.
\newblock {\em Econometrica: Journal of the Econometric Society}, pages 368--403, 1951.

\bibitem{mckenzie1959existence}
Lionel~W McKenzie.
\newblock On the existence of general equilibrium for a competitive market.
\newblock {\em Econometrica: journal of the Econometric Society}, pages 54--71, 1959.

\bibitem{akerlof1970lemon}
George~A. Akerlof.
\newblock The market for "lemons": Quality uncertainty and the market mechanism.
\newblock {\em The Quarterly Journal of Economics}, 84(3):488--500, 1970.

\bibitem{wilson1979equilibrium}
Charles~A Wilson.
\newblock Equilibrium and adverse selection.
\newblock {\em The American Economic Review}, 69(2):313--317, 1979.

\bibitem{wilson1980nature}
Charles Wilson.
\newblock The nature of equilibrium in markets with adverse selection.
\newblock {\em The Bell Journal of Economics}, pages 108--130, 1980.

\bibitem{jaffee1976}
Dwight~M. Jaffee and Thomas Russell.
\newblock Imperfect information, uncertainty, and credit rationing.
\newblock {\em The Quarterly Journal of Economics}, 90(4):651--666, 1976.

\bibitem{stiglitz}
Joseph~E. Stiglitz and Andrew Weiss.
\newblock Credit rationing in markets with imperfect information.
\newblock {\em The American Economic Review}, 71(3):393--410, 1981.

\bibitem{riley}
John~G. Riley.
\newblock Credit rationing: A further remark.
\newblock {\em The American Economic Review}, 77(1):224--227, 1987.

\bibitem{rothschild1978equilibrium}
Michael Rothschild and Joseph Stiglitz.
\newblock Equilibrium in competitive insurance markets: An essay on the economics of imperfect information.
\newblock In {\em Uncertainty in economics}, pages 257--280. Elsevier, 1978.

\bibitem{hess1984imperfect}
James~D Hess.
\newblock Imperfect information and credit rationing: comment.
\newblock {\em The Quarterly Journal of Economics}, 99(4):865--868, 1984.

\bibitem{jaffee1984imperfect}
Dwight~M Jaffee and Thomas Russell.
\newblock Imperfect information, uncertainty, and credit rationing: A reply.
\newblock {\em The Quarterly Journal of Economics}, 99(4):869--872, 1984.

\bibitem{aumann1968values}
Robert~J Aumann and Lloyd~S Shapley.
\newblock Values ofnon-atomic games. i: The axiomatic approach.
\newblock {\em The RAND Corporation RM-5468-PR}, 1968.

\bibitem{nash1950non}
John~F Nash et~al.
\newblock Non-cooperative games.
\newblock 1950.

\bibitem{nash1950equilibrium}
John~F Nash~Jr.
\newblock Equilibrium points in n-person games.
\newblock {\em Proceedings of the national academy of sciences}, 36(1):48--49, 1950.

\bibitem{arnold}
Lutz~G Arnold and John~G Riley.
\newblock On the possibility of credit rationing in the stiglitz-weiss model.
\newblock {\em American Economic Review}, 99(5):2012--2021, 2009.

\bibitem{lu}
Hengheng Lu and Kang Rong.
\newblock Unbounded returns and the possibility of credit rationing: A note on the stiglitz--weiss and arnold--riley models.
\newblock {\em Journal of Mathematical Economics}, 75:67--70, 2018.

\end{thebibliography}

\end{document}